\newif\ifshowpageframe
\newif\ifInlineTypeCheck
\newlength{\FigEqHRuleSpacingBefore}
\newlength{\FigEqHRuleSpacingAfter}
\NewDocumentCommand{\FigHRule}{}{\vspace{\FigEqHRuleSpacingBefore}{\color{lightgray}\hrule}\vspace{\FigEqHRuleSpacingAfter}}
\newlength{\ProofReworkGapSpacingBefore}
\newlength{\ProofReworkGapSpacingAfter}
\NewDocumentCommand{\ProofReworkGap}{s}{\vspace{\ProofReworkGapSpacingBefore}{\color{red}{\scriptsize Reworking...}\linebreak\hrule}\vspace{\ProofReworkGapSpacingAfter}\IfBooleanT{#1}{\noindent}}
\newlength{\ArrayLineSpacing}
\newlength{\ArrayTallLineSpacing}
\newlength{\ArrayExtraTallLineSpacing}
\newlength{\ArrayLTSLineSpacing}
\newlength{\ArrayLTSPremiseLineSpacing}
\newlength{\ArrayLTSPremiseTallLineSpacing}
\newlength{\ArrayLTSConclusionLineSpacing}
\newlength{\ArrayLTSConclusionTallLineSpacing}
\newlength{\ArrayRTLWrapping}
\NewDocumentCommand{\ArrWrap}{}{\hspace{\ArrayRTLWrapping}}
\newlength{\ArrayMultiLineSpacing}
    \let\oldifdraft\ifdraft
    \let\ifdraft\relax
    \let\ifdraft\oldifdraft
    \let\subseteq\relax%
    \let\proof\relax%
    \let\oldnot\not%
    \let\not\relax%
    \let\olddegree\degree%
    \let\degree\relax%
    \let\not\oldnot%
    \let\degree\olddegree%
\let\emptyset\varnothing%
    \let\eyesymbol\faEye
\let\st\relax%
\numberwithin{equation}{section}
\numberwithin{exa}{section}
\numberwithin{defi}{section}
\numberwithin{lem}{section}
\numberwithin{prop}{section}
\numberwithin{thm}{section}
\numberwithin{cor}{section}
\numberwithin{rem}{section}
\Crefname{thm}{Theorem}{Theorems}
\Crefname{lem}{Lemma}{Lemmas}
\Crefname{cor}{Corollary}{Corollaries}
\Crefname{prop}{Proposition}{Propositions}
\Crefname{asm}{Assumption}{Assumptions}
\Crefname{defi}{Definition}{Definitions}
\Crefname{conv}{Convention}{Conventions}
\Crefname{conj}{Conjecture}{Conjectures}
\Crefname{fact}{Fact}{Factos}
\Crefname{algo}{Algorithm}{Algorithms}
\Crefname{pty}{Property}{Properties}
\Crefname{clm}{Claim}{Claims}
\Crefname{nota}{Notation}{Notations}
\Crefname{exa}{Example}{Examples}
\Crefname{exas}{List of Examples}{Lists of Examples}
\Crefname{rem}{Remark}{Remarks}
\Crefname{rems}{List of Remarks}{Lists of Remarks}
\Crefname{prob}{Problem}{Problems}
\Crefname{probs}{List of Problems}{Lists of Problems}
\Crefname{oprob}{Open Problem}{Open Problems}
\Crefname{oprobs}{List of Open Problems}{Lists of Open Problems}
\Crefname{obs}{Observation}{Obserations}
\Crefname{obss}{List of Observations}{Lists of Observations}
\Crefname{qu}{Question}{Questions}
\Crefname{qus}{List of Questions}{Lists of Questions}
\newif\ifdraftmode
\newif\ifshownotes
\newif\ifshowAppendixTOC
\newif\ifshowdepends
\newif\ifshowUnusedLemmas
\long\def\IsUnused#1{\ifshowUnusedLemmas#1\else\fi}
\NewDocumentCommand{\highlight}{O{red} m}{{\color{#1}#2}}
\let\note\relax
\let\oldmarginpar\marginpar
\renewcommand{\marginpar}[1]{%
    \leavevmode%
    \oldmarginpar{#1}%
    \ignorespacesafterend\ignorespaces%
}
\newcounter{notecounter}
\NewDocumentCommand{\NoteContentsLine}{O{black} m m}{%
    \phantomsection%
    \addcontentsline{toc}{paragraph}{%
        \texorpdfstring{%
            \color{#1}\llap{(\thenotecounter)\ $\Rightarrow$\ }%
            \parbox[t]{0.8\linewidth}{#3: #2}%
        }{#3}%
    }%
}
\definecolor{hlbackground}{rgb}{1.0, 0.7, 0.6}
\definecolor{hlbackgroundwhite}{rgb}{1,1,1}
\NewDocumentCommand{\notemark}{s O{blue}}{%
    \IfBooleanF{#1}{\refstepcounter{notecounter}}%
    \ifvmode
        \smash{\clap{\raisebox{4pt}{%
        \tikz[baseline=(noteID.base), font=\tiny, scale=1]{%
        \node[inner sep=0pt,outer sep=0pt,color=#2] (noteID) {(\ref{mn:\thenotecounter})};
        \node[fill=white,fit=(noteID),rounded corners=0.5pt,inner sep=0pt] {};%
        \node[inner sep=0pt,outer sep=0pt,color=#2] at (noteID) {(\ref{mn:\thenotecounter})};
            }%
        }}}%
    \else
        \smash{\clap{\raisebox{7.5pt}{%
        \tikz[baseline=(noteID.base), font=\tiny, scale=1]{%
        \node[inner sep=0pt,outer sep=0pt,color=#2] (noteID) {(\ref{mn:\thenotecounter})};
        \node[fill=white,fit=(noteID),rounded corners=0.5pt,inner sep=0pt] {};%
        \node[inner sep=0pt,outer sep=0pt,color=#2] at (noteID) {(\ref{mn:\thenotecounter})};
            }%
        }}}%
    \fi%
}
\newlength{\jeroenlen}
\NewDocumentCommand{\notelist}{s o o G{\dots} O{black}}{%
    \settowidth{\jeroenlen}{\smash{\llap{\WrapMathMode{^{(\thenotecounter)}}\IfValueT{#2}{\bfseries #2:}}}}%
    \ifshownotes{{%
        \refstepcounter{notecounter}%
        \highlight[#5]{%
            \begin{description}[leftmargin=\jeroenlen,labelwidth=0pt,labelsep=0pt]%
                \item[\smash{\llap{\WrapMathMode{^{(\thenotecounter)}}\IfValueT{#2}{\bfseries #2:\IfValueT{#3}{~}}}}]%
                \IfValueT{#3}{\smash{\rlap{#3}}}%
                \IfBooleanTF{#1}{
                    \begin{itemize}[leftmargin=1.5em,labelsep=.5em]#4\end{itemize}%
                }{
                    \begin{enumerate}[leftmargin=1.5em,labelsep=.5em]#4\end{enumerate}%
                }%
            \end{description}%
        }%
        \ifshowAppendixTOC{{%
            \NoteContentsLine[#5]{#2: #3}{note list}%
        }}\else\fi%
    }}\else\fi%
}
\NewDocumentCommand{\note}{s t| O{black} G{\dots} o t+}{%
    \ifshownotes{{%
        \IfBooleanTF{#1}{
            \IfBooleanTF{#2}{
                \hspace*{\fill}\par\noindent%
                \highlight[#3]{%
                    \IfValueT{#5}{\llap{\bfseries #5:}~}{#4}%
                }%
                \hspace*{\fill}\par%
            }{%
                \hfill%
                \highlight[#3]{%
                    \IfValueT{#5}{{\bfseries #5:}~}{#4}%
                }%
            }%
        }{
            \IfBooleanTF{#6}{%
                \marginpar{%
                    \highlight[#3]{%
                        \scriptsize%
                        \smash{\llap{\WrapMathMode{^{\hyperlink{toc}{(\thenotecounter)}}}}}%
                        \IfValueT{#5}{{\bfseries #5:}~}%
                        {#4}\hspace*{\fill}%
                    }%
                }\label{mn:\thenotecounter}%
            }{%
                \refstepcounter{notecounter}%
                \marginpar{%
                    \highlight[#3]{%
                        \scriptsize%
                        \smash{\llap{\WrapMathMode{^{\hyperlink{toc}{(\thenotecounter)}}}}}%
                        \IfValueT{#5}{{\bfseries #5:}~}%
                        {#4}\hspace*{\fill}%
                    }%
                }\label{mn:\thenotecounter}%
            }%
        }%
   }}\else\fi%
}
\NewDocumentCommand{\NewAuthorNote}{s O{blue} m m o g}{%
    \IfValueT{#6}{%
        \NewDocumentCommand{#6}{s o G{\dots} O{#2}}{%
            \begingroup%
            \setcounter{tocdepth}{5}%
            \IfBooleanTF{##1}{%
                \notelist*[\IfValueTF{#5}{\highlight[#5]{#4}}{\highlight[#2]{#4}}][##2]{##3}[##4]%
            }{%
                \notelist[\IfValueTF{#5}{\highlight[#5]{#4}}{\highlight[#2]{#4}}][##2]{##3}[##4]%
            }%
            \endgroup%
        }%
    }%
    %
    \NewDocumentCommand{#3}{s t| G{\dots} g O{#2} e| O{hlbackground} t+}{%
        \ifshownotes{{\begingroup%
        \setcounter{tocdepth}{5}%
        \IfBooleanTF{##1}{%
            \IfBooleanF{##8}{\refstepcounter{notecounter}}%
            \IfBooleanTF{##2}{%
                \IfBooleanTF{##8}{%
                    \note*|[##5]{##3\IfValueT{##4}{\ \allowbreak##4}}[%
                        \smash{\llap{\hyperlink{toc}{\WrapMathMode{^{(\thenotecounter)}}}}}%
                        \IfBooleanTF{#1}{
                            \IfValueTF{#5}{\highlight[#5]{#4}}{\highlight[#2]{#4}}%
                        }{{#4}}%
                    ]+%
                }{%
                    \note*|[##5]{##3\IfValueT{##4}{\ \allowbreak##4}}[%
                        \smash{\llap{\hyperlink{toc}{\WrapMathMode{^{(\thenotecounter)}}}}}%
                        \IfBooleanTF{#1}{
                            \IfValueTF{#5}{\highlight[#5]{#4}}{\highlight[#2]{#4}}%
                        }{{#4}}%
                    ]%
                }%
            }{%
                \IfBooleanTF{##8}{%
                    \note*[##5]{##3\IfValueT{##4}{\ \allowbreak##4}}[%
                        \smash{\llap{\hyperlink{toc}{\WrapMathMode{^{(\thenotecounter)}}}}}%
                        \IfBooleanTF{#1}{
                            \IfValueTF{#5}{\highlight[#5]{#4}}{\highlight[#2]{#4}}%
                        }{{#4}}%
                    ]+%
                }{%
                    \note*[##5]{##3\IfValueT{##4}{\ \allowbreak##4}}[%
                        \smash{\llap{\hyperlink{toc}{\WrapMathMode{^{(\thenotecounter)}}}}}%
                        \IfBooleanTF{#1}{
                            \IfValueTF{#5}{\highlight[#5]{#4}}{\highlight[#2]{#4}}%
                        }{{#4}}%
                    ]%
                }%
            }%
        }{%
            \IfBooleanTF{##8}{%
                \note[##5]{##3\IfValueT{##4}{\ \allowbreak##4}}[%
                    \IfBooleanTF{#1}{
                        \IfValueTF{#5}{\highlight[#5]{#4}}{\highlight[#2]{#4}}%
                    }{{#4}}%
                ]+%
            }{%
                \note[##5]{##3\IfValueT{##4}{\ \allowbreak##4}}[%
                    \IfBooleanTF{#1}{
                        \IfValueTF{#5}{\highlight[#5]{#4}}{\highlight[#2]{#4}}%
                    }{{#4}}%
                ]%
            }%
        }%
        \ifshowAppendixTOC{{%
            \NoteContentsLine[##5]{##3\IfValueT{##4}{~[\dots]}}{\texorpdfstring{%
                \IfBooleanTF{#1}{
                    \IfValueTF{#5}{\highlight[#5]{#4}}{\highlight[#2]{#4}}%
                }{{#4}}}{#4}%
            }%
        }}\else\fi%
        \IfValueT{##6}{%
            \sethlcolor{##7}%
            \hl{##6}%
        }%
        \IfBooleanF{##1}{\IfBooleanF{##8}{\notemark*}}%
        \endgroup}}\else\fi%
    }%
    %
}
\NewDocumentCommand{\todo}{s m O{blue}}{\IfBooleanTF{#1}{\note*[#3]{#2}[ToDo]}{\note[#3]{#2}}\NoteContentsLine[#3]{#2}{ToDo}}
\NewDocumentCommand{\TODO}{s m O{red}}{\IfBooleanTF{#1}{\note*[#3]{#2}[ToDo]}{\note[#3]{#2}}\NoteContentsLine[#3]{#2}{Urgent}}
\NewDocumentCommand{\checkme}{s m O{magenta}}{\IfBooleanTF{#1}{\note*[#3]{#2}[Check me]}{\note[#3]{#2}}\NoteContentsLine[#3]{#2}{Check me}}
\NewDocumentCommand{\anno}{m O{cyan}}{\note[#2]{#1}}
\NewDocumentCommand{\updateme}{m O{green}}{\note[#2]{update: #1}\NoteContentsLine[#2]{#1}{Update}}
\NewDocumentCommand{\jnote}{s m O{orange}}{\IfBooleanTF{#1}{\note*[#3]{#2}[Jonah]\NoteContentsLine[#3]{#2}{Jonah note}}{\note[#3]{#2}[\highlight[blue]{J.P}]}\NoteContentsLine[#3]{#2}{Jonah note}}
\NewDocumentCommand{\lnote}{s m O{orange}}{\IfBooleanTF{#1}{\note*[#3]{#2}[Laura]\NoteContentsLine[#3]{#2}{Laura note}}{\note[#3]{#2}[\highlight[blue]{L.B}]}\NoteContentsLine[#3]{#2}{Laura note}}
\NewDocumentCommand{\anote}{s m O{orange}}{\IfBooleanTF{#1}{\note*[#3]{#2}[Andy]\NoteContentsLine[#3]{#2}{Andy note}}{\note[#3]{#2}[\highlight[blue]{A.K}]}\NoteContentsLine[#3]{#2}{Andy note}}
\NewDocumentCommand{\mnote}{s m O{orange}}{\IfBooleanTF{#1}{\note*[#3]{#2}[Maurizio]\NoteContentsLine[#3]{#2}{Maurizio note}}{\note[#3]{#2}[\highlight[blue]{M.M}]}\NoteContentsLine[#3]{#2}{Maurizio note}}
\noindent \highlight{(this table of contents will be removed automatically when i turn my notes macro off - Jonah.)}%
\definecolor{lightblue}{RGB}{179, 246, 255}
\definecolor{lightpink}{RGB}{255, 181, 230}
\definecolor{codepurple}{rgb}{0.58,0,0.82}
\newcommand{\REVISEDII}[1]{{\color{codepurple}{#1}}}
\let\le\relax%
\let\ne\relax%
\let\ae\relax
\let\xrightarrow\relax%
\newcommand{\tcolorChanged}[0]{green!10}
\let\oldinfer\infer%
\RenewDocumentCommand{\infer}{o m g}{%
    \IfValueTF{#1}{
        \oldinfer[{{#1}}]{#2}{\IfValueTF{#3}{#3}{}}%
    }{
        \oldinfer{#2}{\IfValueTF{#3}{#3}{}}%
    }%
}
\newcolumntype{M}[1]{>{\centering\arraybackslash}m{#1}}
\NewDocumentCommand{
\LoadTypeChecking
\clearpage


\noindent
\begin{tabularx}{\textwidth}[t]{l @{\hspace{2rc} X}}
    \textbf{Process Macros} %
     & %
    \textbf{Output} %
    \\\hline%
    \\[-2ex]\hline%
    \\[-2ex]
    cell 1
     & %
    cell 2
\end{tabularx}

\noindent
\begin{tabularx}{\textwidth}[t]{l @{\hspace{2ex}} X}
    \textbf{Process Macro Examples} %
     & %
    \textbf{Output} %
    \\\hline%
    \\[-2ex]\hline%
    \\[-2ex]
    \Verb|\On{p}~\Recv{l}[v]:{P}_{i}[I]~\After<{e}:{Q}| %
     & %
    \hfill $\begin{array}[t]{@{}l@{}}\mbox{}\\[0.5ex]\mathllap{\On{p}~\Recv{l}[v]:{P}_{i}[I]~\After<{e}:{Q}}\end{array}$ %
    \\\\[-0.75ex]\hline
    \\[-1.5ex]%
    \Verb|\On{p}| %
     & %
    $\On{p}$
    \\\multicolumn{2}{@{\hspace{8ex}} l}{%
        \begin{tabular}[t]{l @{\hspace{2ex}} c @{\hspace{2ex}} l}
            $\mathtt{\#1}$
             & \Verb|{p}|
             & Role name \Verb|p|.
        \end{tabular}%
    }%
    \\\\[-1ex]\hline%
    \\[-1.5ex]%
    \Verb|\Recv{l}[v]:{P}_{i}[I]| %
     & %
    $\Recv{l}[v]:{P}_{i}[I]$ %
    \\[-1.5ex]\\%
    \Verb/\Recv|{\Branch{l}[v]:{P}}_{i}/ %
     & %
    $\Recv|{\Branch{l}[v]:{P}}_{i}$ %
    \\[-1.5ex]\\%
    \Verb/\Recv*|{\Branch{l}:{P}_1,\Branch{l}:{Q}_2}/ %
     & %
    $\Recv*|{\Branch{l}:{P}_1,\Branch{l}:{Q}_2}$ %
    \\[-1.5ex]\\
    \Verb|\Recv^{\diamond\e}{l}:{P}_{j}| %
     & %
    $\Recv^{\diamond\e}{l}:{P}_{j}$ %
    \\[-1.5ex]\\%
    \Verb|\Recv*<{e}{l}:{Q}_{j}[K]| %
     & %
    $\Recv*<{e}{l}:{Q}_{j}[K]$ %
    \\[-1.5ex]\\%
    \multicolumn{2}{@{\hspace{8ex}} l}{%
        \begin{tabular}[t]{l @{\hspace{2ex}} c @{\hspace{2ex}} l}
            $\mathtt{\#1}$
             & \Verb|*|
             & Hides set notation.
            \\
            $\mathtt{\#2}$
             & \Verb|<{e}|
             & Old-style timeout superscript \Verb|e|.
            \\
            $\mathtt{\#3}$
             & \Verb/|/
             & Use arg. $\mathtt{\#4}$ as list of \Verb|\Branch|.
            \\
            $\mathtt{\#4}$
             & \Verb|{l}|
             & Message label \Verb|l|.
            \\
            $\mathtt{\#5}$
             & \Verb|[v]|
             & Optional message value \Verb|v|.
            \\
            $\mathtt{\#6}$
             & \Verb|:{P}|
             & Continuation process \Verb|P|.
            \\
            $\mathtt{\#7}$
             & \Verb|_{i}|
             & Element index \Verb|i|.
            \\
            $\mathtt{\#8}$
             & \Verb|[I]|
             & Set name (default: \Verb|\uppercase|$\mathtt{\{\#7\}}$).
        \end{tabular}%
    }%
    \\\\[-1ex]\hline%
    \\[-1.5ex]%
    \Verb|\After<{e}:{Q}| %
     & %
    $\After<{e}:{Q}$ %
    \\[-1.5ex]\\%
    \multicolumn{2}{@{\hspace{8ex}} l}{%
        \begin{tabular}[t]{l @{\hspace{2ex}} c @{\hspace{2ex}} l}
            $\mathtt{\#1}$
             & \Verb|<{e}|
             & Timeout value \Verb|e|.
            \\
            $\mathtt{\#2}$
             & \Verb|:{Q}|
             & Timeout process \Verb|Q|.
        \end{tabular}
    }%
    \\\\[-1ex]\hline%
    \\[-1.5ex]%
    \Verb|\Branch{l}[v]:{P}_{i}| %
     & %
    $\Branch{l}[v]:{P}_{i}$ %
    \\[-1.5ex]\\%
    \Verb|\Branch{l}:{P}| %
     & %
    $\Branch{l}:{P}$ %
    \\[-1.5ex]\\%
    \multicolumn{2}{@{\hspace{8ex}} l}{%
        \begin{tabular}[t]{l @{\hspace{2ex}} c @{\hspace{2ex}} l}
            $\mathtt{\#1}$
             & \Verb|{l}|
             & Message label \Verb|l|.
            \\
            $\mathtt{\#2}$
             & \Verb|[v]|
             & Optional message variable \Verb|v|.
            \\
            $\mathtt{\#3}$
             & \Verb|:{P}|
             & Branch continuation \Verb|P|.
            \\
            $\mathtt{\#4}$
             & \Verb|_{i}|
             & Branch index \Verb|i|.
        \end{tabular}
    }%
    \\\\[-1ex]\hline%
    \\[-2ex]\hline%
    \\[-2ex]
    \Verb|\On{p}~\Send{l}[v].{P}| %
     & %
    $\On{p}~\Send{l}[v].{P}$ %
    \\[-1.5ex]\\%
    \multicolumn{2}{@{\hspace{8ex}} l}{%
        \begin{tabular}[t]{l @{\hspace{2ex}} c @{\hspace{2ex}} l}
            $\mathtt{\#1}$
             & \Verb|{p}|
             & Message label \Verb|l|.
            \\
            $\mathtt{\#2}$
             & \Verb|[v]|
             & Optional message value \Verb|v|.
        \end{tabular}%
    }%
    \\\\[-1ex]\hline%
    \\[-2ex]\hline%
\end{tabularx}
\UnloadTypeChecking
 }{}{
\LoadTypeChecking
\clearpage


\noindent
\begin{tabularx}{\textwidth}[t]{l @{\hspace{2rc} X}}
    \textbf{Process Macros} %
     & %
    \textbf{Output} %
    \\\hline%
    \\[-2ex]\hline%
    \\[-2ex]
    cell 1
     & %
    cell 2
\end{tabularx}

\noindent
\begin{tabularx}{\textwidth}[t]{l @{\hspace{2ex}} X}
    \textbf{Process Macro Examples} %
     & %
    \textbf{Output} %
    \\\hline%
    \\[-2ex]\hline%
    \\[-2ex]
    \Verb|\On{p}~\Recv{l}[v]:{P}_{i}[I]~\After<{e}:{Q}| %
     & %
    \hfill $\begin{array}[t]{@{}l@{}}\mbox{}\\[0.5ex]\mathllap{\On{p}~\Recv{l}[v]:{P}_{i}[I]~\After<{e}:{Q}}\end{array}$ %
    \\\\[-0.75ex]\hline
    \\[-1.5ex]%
    \Verb|\On{p}| %
     & %
    $\On{p}$
    \\\multicolumn{2}{@{\hspace{8ex}} l}{%
        \begin{tabular}[t]{l @{\hspace{2ex}} c @{\hspace{2ex}} l}
            $\mathtt{\#1}$
             & \Verb|{p}|
             & Role name \Verb|p|.
        \end{tabular}%
    }%
    \\\\[-1ex]\hline%
    \\[-1.5ex]%
    \Verb|\Recv{l}[v]:{P}_{i}[I]| %
     & %
    $\Recv{l}[v]:{P}_{i}[I]$ %
    \\[-1.5ex]\\%
    \Verb/\Recv|{\Branch{l}[v]:{P}}_{i}/ %
     & %
    $\Recv|{\Branch{l}[v]:{P}}_{i}$ %
    \\[-1.5ex]\\%
    \Verb/\Recv*|{\Branch{l}:{P}_1,\Branch{l}:{Q}_2}/ %
     & %
    $\Recv*|{\Branch{l}:{P}_1,\Branch{l}:{Q}_2}$ %
    \\[-1.5ex]\\
    \Verb|\Recv^{\diamond\e}{l}:{P}_{j}| %
     & %
    $\Recv^{\diamond\e}{l}:{P}_{j}$ %
    \\[-1.5ex]\\%
    \Verb|\Recv*<{e}{l}:{Q}_{j}[K]| %
     & %
    $\Recv*<{e}{l}:{Q}_{j}[K]$ %
    \\[-1.5ex]\\%
    \multicolumn{2}{@{\hspace{8ex}} l}{%
        \begin{tabular}[t]{l @{\hspace{2ex}} c @{\hspace{2ex}} l}
            $\mathtt{\#1}$
             & \Verb|*|
             & Hides set notation.
            \\
            $\mathtt{\#2}$
             & \Verb|<{e}|
             & Old-style timeout superscript \Verb|e|.
            \\
            $\mathtt{\#3}$
             & \Verb/|/
             & Use arg. $\mathtt{\#4}$ as list of \Verb|\Branch|.
            \\
            $\mathtt{\#4}$
             & \Verb|{l}|
             & Message label \Verb|l|.
            \\
            $\mathtt{\#5}$
             & \Verb|[v]|
             & Optional message value \Verb|v|.
            \\
            $\mathtt{\#6}$
             & \Verb|:{P}|
             & Continuation process \Verb|P|.
            \\
            $\mathtt{\#7}$
             & \Verb|_{i}|
             & Element index \Verb|i|.
            \\
            $\mathtt{\#8}$
             & \Verb|[I]|
             & Set name (default: \Verb|\uppercase|$\mathtt{\{\#7\}}$).
        \end{tabular}%
    }%
    \\\\[-1ex]\hline%
    \\[-1.5ex]%
    \Verb|\After<{e}:{Q}| %
     & %
    $\After<{e}:{Q}$ %
    \\[-1.5ex]\\%
    \multicolumn{2}{@{\hspace{8ex}} l}{%
        \begin{tabular}[t]{l @{\hspace{2ex}} c @{\hspace{2ex}} l}
            $\mathtt{\#1}$
             & \Verb|<{e}|
             & Timeout value \Verb|e|.
            \\
            $\mathtt{\#2}$
             & \Verb|:{Q}|
             & Timeout process \Verb|Q|.
        \end{tabular}
    }%
    \\\\[-1ex]\hline%
    \\[-1.5ex]%
    \Verb|\Branch{l}[v]:{P}_{i}| %
     & %
    $\Branch{l}[v]:{P}_{i}$ %
    \\[-1.5ex]\\%
    \Verb|\Branch{l}:{P}| %
     & %
    $\Branch{l}:{P}$ %
    \\[-1.5ex]\\%
    \multicolumn{2}{@{\hspace{8ex}} l}{%
        \begin{tabular}[t]{l @{\hspace{2ex}} c @{\hspace{2ex}} l}
            $\mathtt{\#1}$
             & \Verb|{l}|
             & Message label \Verb|l|.
            \\
            $\mathtt{\#2}$
             & \Verb|[v]|
             & Optional message variable \Verb|v|.
            \\
            $\mathtt{\#3}$
             & \Verb|:{P}|
             & Branch continuation \Verb|P|.
            \\
            $\mathtt{\#4}$
             & \Verb|_{i}|
             & Branch index \Verb|i|.
        \end{tabular}
    }%
    \\\\[-1ex]\hline%
    \\[-2ex]\hline%
    \\[-2ex]
    \Verb|\On{p}~\Send{l}[v].{P}| %
     & %
    $\On{p}~\Send{l}[v].{P}$ %
    \\[-1.5ex]\\%
    \multicolumn{2}{@{\hspace{8ex}} l}{%
        \begin{tabular}[t]{l @{\hspace{2ex}} c @{\hspace{2ex}} l}
            $\mathtt{\#1}$
             & \Verb|{p}|
             & Message label \Verb|l|.
            \\
            $\mathtt{\#2}$
             & \Verb|[v]|
             & Optional message value \Verb|v|.
        \end{tabular}%
    }%
    \\\\[-1ex]\hline%
    \\[-2ex]\hline%
\end{tabularx}
\UnloadTypeChecking
 }
\newcounter{proof}
\newcounter{tmkcount}
\tikzset{
    use tikzmark/.style={
            remember picture,
            overlay,
            execute at end picture={
                    \stepcounter{tmkcount}
                },
        },
    tikzmark suffix={-\thetmkcount}
}
\setlist[description]{itemsep=0.5ex,labelsep=2ex,parsep=0ex,listparindent=0ex}%
\setlist[itemize]{topsep=0.5pt,itemsep=-2pt,parsep=4pt}%
\setlist[itemize,2]{topsep=5pt}%
\setlist[enumerate]{topsep=1pt,itemsep=1.5pt,parsep=4pt}%
\definecolor{codegreen}{rgb}{0,0.6,0}
\definecolor{codegray}{rgb}{0.5,0.5,0.5}
\definecolor{codepurple}{rgb}{0.58,0,0.82}
\definecolor{codered}{RGB}{236, 68, 109}
\definecolor{codeyellow}{RGB}{242, 167, 64}
\definecolor{codeorange}{RGB}{219, 133, 29}
\definecolor{codegrey}{RGB}{79, 95, 95}
\definecolor{codeblue}{RGB}{27, 118, 204}
\lstdefinestyle{Erlang}{
    language=Erlang,
    rangeprefix=\%\%\ ,
    rangesuffix=\ \%\%,
    includerangemarker=false,
    deletekeywords={error,?,ok,!,->,self,spawn,{,},[,],(,),exit},
    morekeywords={fun,?,!},
    keywordstyle=\color{codered},
    classoffset=1,
    morekeywords={ok,ko,start,data,stop,a,b,label,normal,timeout,x5,nb_p,x_geq_5,timer_x5,timer_x0,done},
    keywordstyle=\color{codeorange},
    classoffset=2,
    morekeywords={spawn,self,MODULE,NextState,OtherState,loop,main,exit,loop_s1,get_data,SomeData,MONITORED,MONITOR_SPEC,s0,s1,s2,s3,s4,s5,s6,s7,s8,s9,s10,S,get_payload_b,get_payload_a,get_timer,set_timer,stopping,save_msg,get_msg,get_msgs,D,E,send_before,is_integer,is_atom,is_pid,nonblocking_payload,nonblocking_selection,read_timer,exit,start_timer,error,perform_action,P1,P2,P0,P3,Q0,Q1,Q2,Q3,sleep,uniform_real,foo,bar,handle_timeout},
    keywordstyle=\color{codegreen},
    classoffset=3,
    morekeywords={{,},(,),[,]},
    keywordstyle=\color{codegrey},
    classoffset=4,
    morekeywords={stub,erlang,interleave,timer,rand},
    keywordstyle=\color{codeblue},
    classoffset=5,
    morekeywords={->},
    keywordstyle=\bfseries\color{codegrey},
    classoffset=0
}
\NewDocumentCommand{\erl}{m}{\lstinline[style=Erlang,breaklines=false,escapechar=£]{#1}}
\NewDocumentCommand{\Mon}{G{\PrcCalcP} e_ t' G{1}}{\WrapMathMode{%
        \text{\eyesymbol}
        (\IfValueTF{#2}{%
            {#1}\IfBooleanT{#3}{\RepeatChar{#4}{'}}_{#2}%
        }{%
            {#1}\IfBooleanT{#3}{\RepeatChar{#4}{'}}%
        })%
    }
}
\NewDocumentCommand{\MonOLD}{G{\PrcCalcP} e_ t' G{1}}{\WrapMathMode{%
        \IfValueTF{#2}{%
            \overset{\scalebox{0.5}{\text{\eyesymbol}}}{#1}
            \IfBooleanT{#3}{\RepeatChar{#4}{'}}_{#2}%
        }{%
            \overset{\scalebox{0.5}{\text{\eyesymbol}}}{#1}
            \IfBooleanT{#3}{\RepeatChar{#4}{'}}%
        }%
    }
}
\DeclareMathOperator{\Entails}{\vdash}
\DeclareMathOperator{\TypedBy}{\blacktriangleright}
\DeclareMathOperator{\Compat}{\bot}
\let\Diam\diamond
\DeclareMathOperator{\MathOpBimplies}{\Longleftrightarrow}
\NewDocumentCommand{\Bimplies}{}{\WrapMathMode{\MathOpBimplies}}
\DeclareMathOperator{\MathOpTrue}{\mathtt{true}}
\NewDocumentCommand{\True}{}{\WrapMathMode{\MathOpTrue}}
\DeclareMathOperator{\MathOpFalse}{\mathtt{false}}
\NewDocumentCommand{\False}{}{\WrapMathMode{\MathOpFalse}}
\DeclareMathOperator{\UTU}{\overset{\scriptscriptstyle\mathtt{unfold}}{\equiv}}
\NewDocumentCommand{\utu}{}{\WrapMathMode{\UTU}}
\DeclareMathOperator{\NUTU}{\overset{\scriptscriptstyle\mathtt{unfold}}{\phantom{\equiv}\mathllap{\smash{\not\equiv}}}}
\DeclareMathOperator{\MathOpSetNatural}{\mathbb{N}}
\NewDocumentCommand{\SetNatural}{}{\WrapMathMode{\MathOpSetNatural}}
\DeclareMathOperator{\MathOpSetRealZ}{\mathbb{R}_{\geq 0}}
\NewDocumentCommand{\SetRealZ}{}{\WrapMathMode{\MathOpSetRealZ}}
\DeclareMathOperator{\MathOpSetOfClocks}{\mathbb{X}}
\NewDocumentCommand{\SetOfClocks}{}{\WrapMathMode{\MathOpSetOfClocks}}
\DeclareMathOperator{\MathOpSetOfTimers}{\mathbb{T}}
\NewDocumentCommand{\SetOfTimers}{}{\WrapMathMode{\MathOpSetOfTimers}}
\NewDocumentCommand{\RecTypeEnv}{e_ t' G{1}}{\WrapMathMode{%
\IfValueTF{#1}{%
    {A}\IfBooleanT{#2}{\RepeatChar{#3}{'}}_{#1}%
}{%
    {A}\IfBooleanT{#2}{\RepeatChar{#3}{'}}%
}%
}}
\NewDocumentCommand{\FutureEnv}{e_ t' G{1}}{\WrapMathMode{%
\IfValueTF{#1}{%
    {\gamma}\IfBooleanT{#2}{\RepeatChar{#3}{'}}_{#1}%
}{%
    {\gamma}\IfBooleanT{#2}{\RepeatChar{#3}{'}}%
}%
}}
\NewDocumentCommand{\Comm}{}{\WrapMathMode{\mbox{\tiny{$\square$}}}}
\NewDocumentCommand{\TypeSend}{}{\WrapMathMode{!}}
\NewDocumentCommand{\TypeRecv}{}{\WrapMathMode{?}}
\DeclareMathOperator{\MathOpPrcSend}{\vartriangleleft}
\NewDocumentCommand{\PrcSend}{}{\WrapMathMode{\MathOpPrcSend}}
\DeclareMathOperator{\MathOpPrcRecv}{\vartriangleright}
\NewDocumentCommand{\PrcRecv}{}{\WrapMathMode{\MathOpPrcRecv}}
\NewDocumentCommand{\RefTypeCheck}{}{\Cref{fig:type_checking,fig:type_checking_communication}}
\def\st{\WrapMathMode{\text{s.t.}}}
\NewDocumentCommand{\wt}{s}{\WrapMathMode*{\IfBooleanTF{#1}{\texttt{wt}}{\emph{well-typed}}}}
\NewDocumentCommand{\mc}{s}{\WrapMathMode*{\IfBooleanTF{#1}{\emph{mixed-choice}}{mixed-choice}}}
\NewDocumentCommand{\tc}{s}{\WrapMathMode*{\IfBooleanTF{#1}{\texttt{tc}}{\emph{type checking}}}}
\NewDocumentCommand{\wf}{s}{\WrapMathMode*{\IfBooleanTF{#1}{\texttt{wf}}{\emph{well-formed}}}}
\NewDocumentCommand{\wfness}{}{\WrapMathMode*{\emph{well-formedness}}}
\NewDocumentCommand{\wtness}{}{\WrapMathMode*{\emph{well-typedness}}}
\NewDocumentCommand{\live}{}{\WrapMathMode*{\emph{live}}}
\NewDocumentCommand{\compat}{}{\WrapMathMode*{\emph{compatible}}}
\NewDocumentCommand{\bal}{}{\WrapMathMode*{\emph{balanced}}}
\NewDocumentCommand{\fbal}{}{\WrapMathMode*{\emph{fully-balanced}}}
\NewDocumentCommand{\balness}{}{\WrapMathMode*{\emph{balancedness}}}
\NewDocumentCommand{\fbalness}{}{\WrapMathMode*{\emph{fully-balancedness}}}
\NewDocumentCommand{\del}{}{\WrapMathMode*{\emph{delayable}}}
\NewDocumentCommand{\dual}{}{\WrapMathMode*{\emph{dual}}}
\NewDocumentCommand{\bt}{}{\WrapMathMode*{\emph{base-type}}}
\NewDocumentCommand{\IFthenelse}{}{\WrapMathMode{\mathtt{if}\text{-}\mathtt{then}\text{-}\mathtt{else}}}
\NewDocumentCommand{\recvafter}{}{\WrapMathMode{\mathtt{receive}\text{-}\mathtt{after}}}
\NewDocumentCommand{\Bal}{}{\WrapMathMode{\text{Bal}}}
\NewDocumentCommand{\ST}{}{\WrapMathMode{\text{s.t.\ }}}
\NewDocumentCommand{\fe}{s t= t| t! t? t'}{%
    \IfBooleanTF{#2}{
        \IfBooleanTF{#4}{
            \WrapMathMode{\stackrel{!}{\IfBooleanT{#3}{\not}\Rightarrow}}%
        }{\IfBooleanTF{#5}{
            \WrapMathMode{\stackrel{?}{\IfBooleanT{#3}{\not}\Rightarrow}}%
        }{
            \WrapMathMode{\stackrel{\Comm\IfBooleanT{#6}{'}}{\IfBooleanT{#3}{\not}\Rightarrow}}%
        }}
    }{
        \IfBooleanTF{#1}{\WrapMathMode{\mathtt{fe}}%
        }{\WrapMathMode*{\emph{future-enabled}}}%
    }%
}
\NewDocumentCommand{\le}{s}{\IfBooleanTF{#1}{\WrapMathMode{\mathtt{le}}}{\WrapMathMode*{\emph{latest-enabled}}}}
\NewDocumentCommand{\ne}{s}{\IfBooleanTF{#1}{\WrapMathMode{\mathtt{ne}}}{\WrapMathMode*{\emph{next-enabled}}}}
\NewDocumentCommand{\ae}{s}{\IfBooleanTF{#1}{\WrapMathMode{\mathtt{ae}}}{\WrapMathMode*{\emph{after-enabled}}}}
\newcommand*\circled[1]{\tikz[baseline=(char.base)]{
    \node[shape=circle,draw,inner sep=0.7pt] (char) {\WrapMathMode{#1}};}}
\NewDocumentCommand{\FormatRecVarStyle}{m g}{\WrapMathMode{\IfValueTF{#2}{\mathbf{\uppercase{\scriptstyle #1}}_{{#2}}}{\mathbf{\uppercase{\scriptstyle #1}}}}}
\NewDocumentCommand{\FormatRecCheckStyle}{m g}{\WrapMathMode{\IfValueTF{#2}{\boldsymbol{#1}_{{#2}}}{\boldsymbol{#1}}}}
\NewDocumentCommand{\Rule}{s o m o t*}{\WrapMathMode{%
        \IfBooleanTF{#1}{\left\lfloor}{\left[}
        {%
            \IfValueT{#2}{{\IfBooleanTF{#5}{\mathtt{#2}}{#2}}\text{-}}
            \mathtt{{#3}}
            \IfValueT{#4}{\text{-}{\IfBooleanTF{#5}{\mathtt{#4}}{#4}}}
        }%
        \IfBooleanTF{#1}{\right\rfloor}{\right]}
    }%
}
\NewDocumentCommand{\Dual}{s m e_ t' G{1}}{\WrapMathMode{%
        \IfBooleanTF{#1}{%
        {%
            \hphantom{%
                \hspace{2.5pt}
                \IfValueTF{#3}{%
                    {#2}\IfBooleanT{#4}{\RepeatChar{#5}{'}}_{#3}%
                }{%
                    {#2}\IfBooleanT{#4}{\RepeatChar{#5}{'}}%
                }%
                \hspace{2.5pt}
            }%
            \mathllap{%
                \IfValueTF{#3}{%
                    \overline{{#2}\IfBooleanT{#4}{\RepeatChar{#5}{'}}_{#3}}%
                }{%
                    \overline{{#2}\IfBooleanT{#4}{\RepeatChar{#5}{'}}}%
                }%
                \hphantom{\hspace{2.5pt}}
            }%
        }%
        }{%
            \IfValueTF{#3}{%
                \overline{{#2}\IfBooleanT{#4}{\RepeatChar{#5}{'}}_{#3}{\mathstrut}}%
            }{%
                \overline{{#2}\IfBooleanT{#4}{\RepeatChar{#5}{'}}{\mathstrut}}%
            }%
        }%
    }%
}
\NewDocumentCommand{\Past}{s G{\GeneralTheoryConstraints} e_ t' G{1} tV e_}{\WrapMathMode{%
        \IfValueTF{#3}{%
            \downarrow{%
                \IfBooleanT{#6}{%
                    \IfBooleanTF{#1}{%
                        \bigvee_{\IfValueTF{#7}{#7}{i \in I}}%
                    }{%
                        \bigvee\limits_{\IfValueTF{#7}{#7}{i \in I}}%
                    }%
                }#2%
                \IfBooleanT{#4}{\RepeatChar{#5}{'}}_{#3}%
            }%
        }{%
            \downarrow{%
                \IfBooleanT{#6}{%
                    \bigvee\limits_{\IfValueTF{#7}{#7}{i \in I}}%
                }#2%
                \IfBooleanT{#4}{\RepeatChar{#5}{'}}%
            }%
        }%
    }%
}
\setlist[itemize]{leftmargin=4ex,itemsep=1ex}
\setlist[enumerate]{itemsep=0.75ex}
\newlist{claims}{enumerate}{1}
\setlist[claims,1]{label={\itshape{Claim~\arabic*.}},ref={\arabic*},topsep=1ex,itemsep=1ex,itemindent=0pt,leftmargin=0pt,left=0pt,wide,labelindent=0pt,parsep=\parskip}
\newlist{caseanalysis}{enumerate}{4}
\setlist[caseanalysis,1]{label={\itshape{Case~\arabic*.}},ref={\arabic*},topsep=1ex,itemsep=1ex,parsep=0.5ex,leftmargin=5ex,itemindent=4ex,left=0ex,listparindent=4ex}
\setlist[caseanalysis,2]{label={\itshape{\roman*}.},ref={\arabic{caseanalysisi}.\roman*},topsep=0.5ex,itemsep=0.75ex,parsep=0.5ex,leftmargin=4ex,itemindent=0ex,left=-2ex,listparindent=4ex}
\setlist[caseanalysis,3]{label={\itshape{\alph*}.},ref={\arabic{caseanalysisi}.\roman{caseanalysisii}.\alph*},topsep=0.5ex,itemsep=0.75ex,parsep=0.5ex,leftmargin=4ex,itemindent=0ex,left=-2ex,listparindent=4ex}
\setlist[caseanalysis,4]{label={\Roman*.},ref={\arabic{caseanalysisi}.\roman{caseanalysisii}.\alph{caseanalysisiii}.\Roman*},topsep=0.5ex,itemsep=0.75ex,parsep=0.5ex,leftmargin=4ex,itemindent=0ex,left=-2ex,listparindent=4ex}
\xpatchcmd{\proof}{\topsep6\p@\@plus6\p@\relax}{}{}{}
\let\exaCopy\exa%
\let\endexaCopy\endexa%
\NewCommandCopy{\proofqedsymbol}{\qedsymbol}
\newcommand{\exampleqedsymbol}{$\triangle$}
\NewDocumentCommand{\xqed}{s O{$\triangle$} O{1ex}}{%
    \IfBooleanTF{#1}{%
        \leavevmode\unskip\penalty9999 \hbox{}\nobreak\hfill\quad%
        \smash{%
            \raisebox{
                -#3%
            }{#2}}%
    }{%
        \leavevmode\unskip\penalty9999 \hbox{}\nobreak\hfill\quad\hbox{#2}%
    }%
}
\NewDocumentCommand{\ManualExaQED}{s}{%
    \IfBooleanTF{#1}{
        \unskip\ignorespacesafterend\xqed*[.]
    }{\unskip\ignorespacesafterend\xqed[\exampleqedsymbol]}
}
\NewDocumentEnvironment{exaDenoted}{o}%
{%
\IfValueTF{#1}{\exaCopy[#1]}{\exaCopy}%
}%
{%
\xqed[\exampleqedsymbol]%
\endexaCopy%
}%
\let\exa\exaDenoted
\let\endexa\endexaDenoted
\NewDocumentCommand{\NewRestateEnvironment}{s m m}{
    \IfBooleanTF{#1}{%
        \NewDocumentEnvironment{#2}{s m o +b}{%
            \IfValueTF{#3}{%
                \IfBooleanTF{##1}{
                    \begin{restatable*}[##3]{#3}{##2}
                    ##4
                    \end{restatable*}%
                }{
                    \begin{restatable}[##3]{#3}{##2}
                    ##4
                    \end{restatable}%
                }
            }{
                \IfBooleanTF{##1}{
                    \begin{restatable*}{#3}{##2}
                    ##4
                    \end{restatable*}%
                }{
                    \begin{restatable}{#3}{##2}
                    ##4
                    \end{restatable}%
                }
            }
        }{}
    }{%
        \NewDocumentEnvironment{#2}{s m +b}{%
            \IfBooleanTF{##1}{
                \begin{restatable*}{donothing}{##2}
                \begin{#3}
                ##3
                \end{#3}
                \end{restatable*}%
            }{
                \begin{restatable}{donothing}{##2}
                \begin{#3}
                ##3
                \end{#3}
                \end{restatable}%
            }
        }{\unskip\ignorespacesafterend}
    }%
}
\crefname{claimsi}{claim}{claims}
\Crefname{claimsi}{Claim}{Claims}
\crefname{pty}{property}{properties}
\Crefname{pty}{Property}{Properties}
\crefname{caseanalysisi}{case}{cases}
\Crefname{caseanalysisi}{Case}{Cases}
\crefname{caseanalysisii}{case}{cases}
\Crefname{caseanalysisii}{Case}{Cases}
\crefname{caseanalysisiii}{case}{cases}
\Crefname{caseanalysisiii}{Case}{Cases}
\crefname{caseanalysisiv}{case}{cases}
\Crefname{caseanalysisiv}{Case}{Cases}
\Crefname{equation}{Eq}{Eqs}
\Crefname{lemma}{Lemma}{Lemmas}
\Crefname{itm}{}{}
\Crefname{item}{}{}
\Crefname{cond}{Condition}{Conditions}
\Crefname{case}{Case}{Cases}
\NewDocumentCommand{\citeST}{}{%
    \cite{%
        Bettini2008,%
        Honda2008,%
        Vasconcelos2012,%
        Yoshida2007%
    }%
}
\NewDocumentCommand{\citeStructCongruence}{}{%
    \cite{%
        Honda2008,%
        Milner1999,%
        Pierce2002,%
        Vasconcelos2012,%
        Yoshida2007%
    }%
}
\ProvideDocumentCommand{\l}{}{}
\ProvideDocumentCommand{\Constraints}{}{}
\ProvideDocumentCommand{\t}{}{}
\ProvideDocumentCommand{\Msg}{}{}
\ProvideDocumentCommand{\Nat}{}{}
\ProvideDocumentCommand{\Bool}{}{}
\ProvideDocumentCommand{\String}{}{}
\ProvideDocumentCommand{\None}{}{}
\ProvideDocumentCommand{\FN}{}{}
\ProvideDocumentCommand{\CN}{}{}
\ProvideDocumentCommand{\Dom}{}{}
\NewDocumentCommand{\GeneralTheoryConstraints}{e_ t' G{1} tV e_}{\WrapMathMode{%
\IfValueTF{#1}{%
            \IfBooleanT{#4}{\bigvee\limits_{\IfValueTF{#5}{#5}{i \in I}}}
            {\delta}\IfBooleanT{#2}{\RepeatChar{#3}{'}}_{#1}%
        }{%
            \IfBooleanT{#4}{\bigvee\limits_{\IfValueTF{#5}{#5}{i \in I}}}
            {\delta}\IfBooleanT{#2}{\RepeatChar{#3}{'}}%
        }%
    }%
}
\NewDocumentCommand{\GeneralTheoryl}{e_ t' G{1}}{\WrapMathMode{%
        \IfValueTF{#1}{%
            {l}\IfBooleanT{#2}{\RepeatChar{#3}{'}}_{#1}%
        }{%
            {l}\IfBooleanT{#2}{\RepeatChar{#3}{'}}%
        }%
    }%
}
\NewDocumentCommand{\GeneralTheoryt}{s e_ t' G{1}}{\WrapMathMode{%
        \IfValueTF{#2}{%
            {t}\IfBooleanT{#3}{\RepeatChar{#4}{'}}_{#2}%
        }{%
            {t}\IfBooleanT{#3}{\RepeatChar{#4}{'}}%
        }%
        \IfBooleanT{#1}{\in\SetRealZ}%
    }%
}
\NewDocumentCommand{\GeneralTheoryMsg}{e_ t' G{1}}{\WrapMathMode{%
        \IfValueTF{#1}{%
            {m}\IfBooleanT{#2}{\RepeatChar{#3}{'}}_{#1}%
        }{%
            {m}\IfBooleanT{#2}{\RepeatChar{#3}{'}}%
        }%
    }%
}
\NewDocumentCommand{\GeneralTheoryNat}{}{\WrapMathMode{\mathtt{Nat}}}
\NewDocumentCommand{\GeneralTheoryBool}{}{\WrapMathMode{\mathtt{Bool}}}
\NewDocumentCommand{\GeneralTheoryString}{}{\WrapMathMode{\mathtt{String}}}
\NewDocumentCommand{\GeneralTheoryNone}{}{\WrapMathMode{\mathtt{None}}}
\NewDocumentCommand{\GeneralTheoryFN}{G{\STypeS}}{\WrapMathMode{%
        \mathtt{fn}({#1})%
    }%
}
\NewDocumentCommand{\GeneralTheoryCN}{G{\STypeS}}{\WrapMathMode{%
        \mathtt{cn}({#1})%
    }%
}
\NewDocumentCommand{\GeneralTheoryDom}{G{\TypeCheckSession}}{\WrapMathMode{%
        \text{dom}({#1})%
    }%
}
\NewDocumentCommand{\LoadGeneralTheory}{}{%
    %
    \let\Templ\l%
    \let\TempConstraints\Constraints%
    \let\Tempt\t%
    \let\TempMsg\Msg%
    \let\TempNat\Nat%
    \let\TempBool\Bool%
    \let\TempString\String%
    \let\TempNone\None%
    \let\TempFN\FN%
    \let\TempCN\CN%
    \let\TempDom\Dom%
    %
    \let\l\GeneralTheoryl%
    \let\Constraints\GeneralTheoryConstraints%
    \let\t\GeneralTheoryt%
    \let\Msg\GeneralTheoryMsg%
    \let\Nat\GeneralTheoryNat%
    \let\Bool\GeneralTheoryBool%
    \let\String\GeneralTheoryString%
    \let\None\GeneralTheoryNone%
    \let\FN\GeneralTheoryFN%
    \let\CN\GeneralTheoryCN%
    \let\Dom\GeneralTheoryDom%
}
\NewDocumentCommand{\UnloadGeneralTheory}{}{%
    %
    \let\l\Templ%
    \let\Constraints\TempConstraints%
    \let\t\Tempt%
    \let\Msg\TempMsg%
    \let\Nat\TempNat%
    \let\Bool\TempBool%
    \let\String\TempString%
    \let\None\TempNone%
    \let\FN\TempFN%
    \let\CN\TempCN%
    \let\Dom\TempDom%
}
\ProvideDocumentCommand{\Type}{}{}
\ProvideDocumentCommand{\S}{}{}
\ProvideDocumentCommand{\D}{}{}
\ProvideDocumentCommand{\Resets}{}{}
\ProvideDocumentCommand{\C}{}{}
\ProvideDocumentCommand{\DualChoice}{}{}
\ProvideDocumentCommand{\Choice}{}{}
\ProvideDocumentCommand{\Option}{}{}
\ProvideDocumentCommand{\T}{}{}
\ProvideDocumentCommand{\lT}{}{}
\ProvideDocumentCommand{\Del}{}{}
\ProvideDocumentCommand{\End}{}{}
\NewDocumentCommand{\STypeType}{m e_ t' G{1}}{\WrapMathMode{%
        \IfValueTF{#2}{%
            {#1}\IfBooleanT{#3}{\RepeatChar{#4}{'}}_{#2}%
        }{%
            {#1}\IfBooleanT{#3}{\RepeatChar{#4}{'}}%
        }%
    }%
}
\NewDocumentCommand{\STypeS}{e_ t' G{1}}{\WrapMathMode{%
\STypeType{S}_{#1}'{\IfBooleanTF{#2}{#3}{0}}%
}%
}
\NewDocumentCommand{\STypeD}{e_ t' G{1}}{\WrapMathMode{%
\overline{\STypeType{S}_{#1}'{\IfBooleanTF{#2}{#3}{0}}\mathstrut}%
}%
}
\NewDocumentCommand{\STypeResets}{e_ t' G{1}}{\WrapMathMode{%
        \IfValueTF{#1}{%
            {\lambda}\IfBooleanT{#2}{\RepeatChar{#3}{'}}_{#1}%
        }{%
            {\lambda}\IfBooleanT{#2}{\RepeatChar{#3}{'}}%
        }%
    }%
}%
\NewDocumentCommand{\STypeC}{s e_ o t' G{1} t|}{\WrapMathMode{%
\IfBooleanTF{#1}{\mathtt{C}\IfBooleanT{#4}{\RepeatChar{#5}{'}}\IfValueT{#2}{_{#2}}}{
\IfValueTF{#2}{%
\left\{{
        {\mathtt{C}}\IfBooleanT{#4}{\RepeatChar{#5}{'}}_{#2}
    }\right\}_{{#2}\in\IfValueTF{#3}{#3}{\uppercase{#2}}}%
}{
\left\{{%
        {\mathtt{C}}\IfBooleanT{#4}{\RepeatChar{#5}{'}}
    }\right\}%
}%
}%
}%
}
\NewDocumentCommand{\STypeDualChoice}{e_ o}{%
    \WrapMathMode{%
        \begin{array}[t]\{{@{} l @{}}\}%
            {\Dual{\Comm}\IfValueT{#1}{_{#1}}}\,%
            {\GeneralTheoryl\IfValueT{#1}{_{#1}}}%
            {\left\langle{\STypeT\IfValueT{#1}{_{#1}}}\right\rangle}%
            {\left({{\GeneralTheoryConstraints\IfValueT{#1}{_{#1}}},%
            {\STypeResets\IfValueT{#1}{_{#1}}}}\right)}.%
            {\Dual{\S}\IfValueT{#1}{_{#1}}}%
        \end{array}%
        \IfValueT{#1}{_{\lowercase{#1}\in\IfValueTF{#2}{\uppercase{#2}}{\uppercase{#1}}}}%
    }%
}
\NewDocumentCommand{\STypeChoice}{s t| m o g O{\STypeResets} e. e_ o}{\WrapMathMode{%
        \IfBooleanTF{#2}{
            \IfBooleanTF{#1}{
                \IfValueTF{#7}{%
                    \begin{array}[t]{@{}l@{}}%
                        \CSVListToArrayLines{#3}%
                    \end{array}%
                }{%
                    \begin{array}[c]{@{}l@{}}%
                        \CSVListToArrayLines{#3}%
                    \end{array}%
                }%
                \IfValueT{#5}{_{\!\!\textstyle {\vspace{-25pt} #5}}}
            }{
                \IfValueTF{#7}{%
                    \begin{array}[t]\{{@{}l@{}}\}%
                        \CSVListToArrayLines{#3}%
                    \end{array}%
                }{%
                    \begin{array}[c]\{{@{}l@{}}\}%
                        \CSVListToArrayLines{#3}%
                    \end{array}%
                }%
                \IfValueT{#5}{_{\!\!\textstyle {\vspace{-25pt} #5}}}
            }%
            \IfValueT{#8}{_{\smash{\mathrlap{\lowercase{#8}\in\uppercase{#9}}}}}%
        }{
            \IfBooleanTF{#1}{
                \begin{array}[t]{@{}l@{}}%
                    {\Comm\IfValueT{#8}{_{#8}}}\,%
                    {#3\IfValueT{#8}{_{#8}}}%
                    \IfValueT{#4}{\left\langle{#4\IfValueT{#8}{_{#8}}}\right\rangle}%
                    {\left({{\IfValueTF{#5}{#5}{\GeneralTheoryConstraints}\IfValueT{#8}{_{#8}}},%
                    {#6\IfValueT{#8}{_{#8}}}}\right)}.%
                    {\IfValueTF{#7}{#7}{\STypeS}\IfValueT{#8}{_{#8}}}%
                \end{array}%
            }{
                \begin{array}[t]\{{@{} l @{}}\}%
                    {\Comm\IfValueT{#8}{_{#8}}}\,%
                    {#3\IfValueT{#8}{_{#8}}}%
                    \IfValueT{#4}{\left\langle{#4\IfValueT{#8}{_{#8}}}\right\rangle}%
                    {\left({{\IfValueTF{#5}{#5}{\GeneralTheoryConstraints}\IfValueT{#8}{_{#8}}},%
                    {#6\IfValueT{#8}{_{#8}}}}\right)}.%
                    {\IfValueTF{#7}{#7}{\STypeS}\IfValueT{#8}{_{#8}}}%
                \end{array}%
                \IfValueT{#8}{_{\lowercase{#8}\in\IfValueTF{#9}{#9}{\uppercase{#8}}}}%
            }%
        }%
    }%
}
\NewDocumentCommand{\STypeOption}{t! t? t- m o G{\True} o e. e_}{\WrapMathMode{%
\IfBooleanTF{#1}{
    \IfBooleanTF{#2}{
            \Comm\IfValueT{#9}{_{#9}}%
    }{
            \SendSym\IfValueT{#9}{_{#9}}%
    }%
}{
    \IfBooleanTF{#2}{
            \RecvSym\IfValueT{#9}{_{#9}}%
    }{
            \Comm\IfValueT{#9}{_{#9}}%
    }%
}%
{#4\IfValueT{#9}{_{#9}}}%
{\IfValueT{#5}{\left\langle{#5\IfValueT{#9}{_{#9}}}\right\rangle}}%
\left({#6\IfValueT{#9}{_{#9}}},{\IfValueTF{#7}{\IfBooleanTF{#3}{#7}{\left\{#7\right\}}}{\emptyset}\IfValueT{#9}{_{#9}}}\right)%
\IfValueT{#8}{
    .{#8\IfValueT{#9}{_{#9}}}%
}%
}%
}
\NewDocumentCommand{\STypeT}{t< t> e_ t' G{1}}{\WrapMathMode{%
        \IfBooleanT{#1}{\left\langle}%
        \IfValueTF{#3}{%
            {T}\IfBooleanT{#4}{\RepeatChar{#5}{'}}_{#3}%
        }{%
            {T}\IfBooleanT{#4}{\RepeatChar{#5}{'}}%
        }%
        \IfBooleanT{#2}{\right\rangle}%
    }%
}
\NewDocumentCommand{\STypelT}{t< t> e_ t' G{1}}{\WrapMathMode{%
        \IfValueTF{#3}{%
            {\GeneralTheoryl}\IfBooleanT{#4}{\RepeatChar{#5}{'}}_{#3}%
        }{%
            {\GeneralTheoryl}\IfBooleanT{#4}{\RepeatChar{#5}{'}}%
        }%
        \IfBooleanT{#1}{\left\langle}%
        \IfValueTF{#3}{%
            {\STypeT}\IfBooleanT{#4}{\RepeatChar{#5}{'}}_{#3}%
        }{%
            {\STypeT}\IfBooleanT{#4}{\RepeatChar{#5}{'}}%
        }%
        \IfBooleanT{#2}{\right\rangle}%
    }%
}
\NewDocumentCommand{\STypeDel}{g t' G{1}}{\WrapMathMode{%
    \begin{array}[c]({@{}l@{}})%
        \IfValueTF{#1}{%
            #1%
        }{%
            \GeneralTheoryConstraints\IfBooleanT{#2}{\RepeatChar{#3}{'}},~\STypeS\IfBooleanT{#2}{\RepeatChar{#3}{'}}%
        }%
    \end{array}%
    }%
}
\NewDocumentCommand{\STypeEnd}{}{\WrapMathMode{\mathtt{end}}}
\ProvideDocumentCommand{\Val}{}{}
\ProvideDocumentCommand{\Reset}{}{}
\ProvideDocumentCommand{\Que}{}{}
\ProvideDocumentCommand{\Cfg}{}{}
\NewDocumentCommand{\STypeVal}{e_ t' G{1}}{\WrapMathMode{%
        \IfValueTF{#1}{%
            {\nu}\IfBooleanT{#2}{\RepeatChar{#3}{'}}_{#1}%
        }{%
            {\nu}\IfBooleanT{#2}{\RepeatChar{#3}{'}}%
        }%
    }%
}
\NewDocumentCommand{\STypeReset}{G{\STypeResets} t- t> t0 e_}{\WrapMathMode{%
        \left[{{#1\IfValueT{#5}{_{#5}}}\mapsto0}\right]%
    }%
}
\NewDocumentCommand{\STypeQue}{e_ t' G{1}}{\WrapMathMode{%
        \IfValueTF{#1}{%
            \mathbf{\mathtt{M}}\IfBooleanT{#2}{\RepeatChar{#3}{'}}_{#1}%
        }{%
            \mathbf{\mathtt{M}}\IfBooleanT{#2}{\RepeatChar{#3}{'}}%
        }%
    }%
}
\NewDocumentCommand{\STypeCfg}{s G{\STypeVal} E,{\STypeS} t; g o e_ t' G{1}}{\WrapMathMode{%
\IfBooleanTF{#1}{
    \IfValueTF{#7}{
        {\IfBooleanTF{#4}{\mathbf{S}}{\mathbf{s}}}\IfBooleanT{#8}{\RepeatChar{#9}{'}}_{#7}%
    }{
        {\IfBooleanTF{#4}{\mathbf{S}}{\mathbf{s}}}\IfBooleanT{#8}{\RepeatChar{#9}{'}}%
    }%
}{
    (%
        {#2\IfBooleanT{#8}{\RepeatChar{#9}{'}}\IfValueT{#7}{_{#7}},}%
        {~#3\IfValueTF{#7}{\IfBooleanT{#8}{\RepeatChar{#9}{'}}_{#7}}{\IfBooleanT{#8}{\RepeatChar{#9}{'}}}\IfBooleanT{#4}{,}}%
        \IfBooleanT{#4}{
            \IfValueTF{#5}{
                {~#5\IfValueTF{#7}{\IfBooleanT{#8}{\RepeatChar{#9}{'}}_{#7}}{\IfBooleanT{#8}{\RepeatChar{#9}{'}}}}%
            }{
                {~\emptyset}%
            }%
            \IfValueT{#6}{
                ;~{#6}%
            }%
        }%
    )%
}%
}%
}
\NewDocumentCommand{\LoadSType}{s}{%
    \IfBooleanT{#1}{\LoadGeneralTheory}%
    %
    \let\TempType\STypeType%
    \let\TempS\S%
    \let\TempD\D%
    \let\TempResets\Resets%
    \let\TempC\C%
    \let\TempDualChoice\DualChoice%
    \let\TempChoice\Choice%
    \let\TempOption\Option%
    \let\TempT\T%
    \let\TemplT\lT%
    \let\TempDel\Del%
    \let\TempEnd\End%
    \let\TempVal\Val%
    \let\TempReset\Reset%
    \let\TempQue\Que%
    \let\TempCfg\Cfg%
    %
    \let\Type\STypeType%
    \let\S\STypeS%
    \let\D\STypeD%
    \let\Resets\STypeResets%
    \let\C\STypeC%
    \let\Choice\STypeChoice%
    \let\DualChoice\STypeDualChoice%
    \let\Option\STypeOption%
    \let\T\STypeT%
    \let\lT\STypelT%
    \let\Del\STypeDel%
    \let\End\STypeEnd%
    \let\Val\STypeVal%
    \let\Reset\STypeReset%
    \let\Que\STypeQue%
    \let\Cfg\STypeCfg%
}
\NewDocumentCommand{\UnloadSType}{s}{%
    \IfBooleanT{#1}{\UnloadGeneralTheory}%
    %
    \let\Type\TempType%
    \let\S\TempS%
    \let\D\TempD%
    \let\Resets\TempResets%
    \let\C\TempC%
    \let\Choice\TempChoice%
    \let\DualChoice\TempDualChoice%
    \let\Option\TempOption%
    \let\T\TempT%
    \let\lT\TemplT%
    \let\Del\TempDel%
    \let\End\TempEnd%
    \let\Val\TempVal%
    \let\Reset\TempReset%
    \let\Que\TempQue%
    \let\Cfg\TempCfg%
}
\NewDocumentCommand{\SType}{s m}{%
    \LoadGeneralTheory%
    \LoadSType%
    \IfBooleanTF{#1}{\WrapMathMode*{#2}}{\WrapMathMode{#2}}%
    \UnloadSType%
    \UnloadGeneralTheory%
}
\NewDocumentEnvironment{stype}{s o}{
    \LoadGeneralTheory%
    \LoadSType%
    \IfBooleanT{#1}{\begin{equation*}}%
            \IfBooleanF{#1}{\begin{equation}}%
                    }{
                    \IfBooleanF{#1}{\end{equation}}%
            \IfBooleanT{#1}{\end{equation*}}%
    \UnloadSType%
    \UnloadGeneralTheory\!%
}
\ProvideDocumentCommand{\Prc}{}{}
\ProvideDocumentCommand{\P}{}{}
\ProvideDocumentCommand{\Q}{}{}
\ProvideDocumentCommand{\Role}{}{}
\ProvideDocumentCommand{\p}{}{}
\ProvideDocumentCommand{\q}{}{}
\ProvideDocumentCommand{\b}{}{} 
\ProvideDocumentCommand{\On}{}{}
\ProvideDocumentCommand{\Send}{}{}
\ProvideDocumentCommand{\Recv}{}{}
\ProvideDocumentCommand{\B}{}{}
\ProvideDocumentCommand{\Branch}{}{}
\ProvideDocumentCommand{\After}{}{}
\ProvideDocumentCommand{\v}{}{}
\ProvideDocumentCommand{\lv}{}{}
\ProvideDocumentCommand{\e}{}{}
\ProvideDocumentCommand{\Cond}{}{}
\ProvideDocumentCommand{\Duration}{}{}
\ProvideDocumentCommand{\n}{}{}
\ProvideDocumentCommand{\h}{}{}
\ProvideDocumentCommand{\If}{}{}
\ProvideDocumentCommand{\Then}{}{}
\ProvideDocumentCommand{\Else}{}{}
\ProvideDocumentCommand{\Delay}{}{}
\ProvideDocumentCommand{\Set}{}{}
\ProvideDocumentCommand{\Def}{}{}
\ProvideDocumentCommand{\Var}{}{}
\ProvideDocumentCommand{\Rec}{}{}
\ProvideDocumentCommand{\Scope}{}{}
\ProvideDocumentCommand{\Term}{}{}
\ProvideDocumentCommand{\Err}{}{}
\ProvideDocumentCommand{\Stuck}{}{}
\ProvideDocumentCommand{\Time}{}{}
\ProvideDocumentCommand{\Wait}{}{}
\ProvideDocumentCommand{\NEQ}{}{}
\ProvideDocumentCommand{\WF}{}{}
\ProvideDocumentCommand{\FT}{}{}
\ProvideDocumentCommand{\FN}{}{}
\ProvideDocumentCommand{\FPV}{}{}
\ProvideDocumentCommand{\FQ}{}{}
\ProvideDocumentCommand{\Single}{}{}
\ProvideDocumentCommand{\NumSessions}{}{}
\ProvideDocumentCommand{\pq}{}{}
\ProvideDocumentCommand{\qp}{}{}
\NewDocumentCommand{\PrcCalcPrc}{m e_ t' G{1}}{\WrapMathMode{%
      \IfValueTF{#2}{%
         {#1}\IfBooleanT{#3}{\RepeatChar{#4}{'}}_{#2}%
      }{%
         {#1}\IfBooleanT{#3}{\RepeatChar{#4}{'}}%
      }%
   }%
}
\NewDocumentCommand{\PrcCalcP}{e_ t' G{1}}{%
\PrcCalcPrc{P}_{#1}'{\IfBooleanTF{#2}{#3}{0}}%
}
\NewDocumentCommand{\PrcCalcQ}{e_ t' G{1}}{%
\PrcCalcPrc{Q}_{#1}'{\IfBooleanTF{#2}{#3}{0}}%
}
\NewDocumentCommand{\PrcCalcRole}{m e_ t' G{1}}{\WrapMathMode{%
      \IfValueTF{#2}{%
         \lowercase{#1}\IfBooleanT{#3}{\RepeatChar{#4}{'}}_{#2}%
      }{%
         \lowercase{#1}\IfBooleanT{#3}{\RepeatChar{#4}{'}}%
      }%
   }%
}
\NewDocumentCommand{\PrcCalcp}{e_ t' G{1}}{%
\PrcCalcRole{p}_{#1}'{\IfBooleanTF{#2}{#3}{0}}%
}
\NewDocumentCommand{\PrcCalcq}{e_ t' G{1}}{%
\PrcCalcRole{q}_{#1}'{\IfBooleanTF{#2}{#3}{0}}%
}
\NewDocumentCommand{\PrcCalcpq}{e_ t' G{1}}{\WrapMathMode{%
        \IfValueTF{#1}{%
            {\PrcCalcp\PrcCalcq}\IfBooleanT{#2}{\RepeatChar{#3}{'}}_{#1}%
        }{%
            {\PrcCalcp\PrcCalcq}\IfBooleanT{#2}{\RepeatChar{#3}{'}}%
        }%
    }%
}
\NewDocumentCommand{\PrcCalcqp}{e_ t' G{1}}{\WrapMathMode{%
        \IfValueTF{#1}{%
            {\PrcCalcq\PrcCalcp}\IfBooleanT{#2}{\RepeatChar{#3}{'}}_{#1}%
        }{%
            {\PrcCalcq\PrcCalcp}\IfBooleanT{#2}{\RepeatChar{#3}{'}}%
        }%
    }%
}
\NewDocumentCommand{\PrcCalcb}{e_ t' G{1}}{%
\PrcCalcRole{b}_{#1}'{\IfBooleanTF{#2}{#3}{0}}%
}
\NewDocumentCommand{\PrcCalcOn}{m}{\WrapMathMode{#1}}
\NewDocumentCommand{\PrcCalcSend}{m o}{\WrapMathMode{%
      \,\SendSym*%
      \begin{array}[t]{@{}l@{}}{#1}\IfValueT{#2}{\left(#2\right)}\end{array}%
   }%
}
\NewDocumentCommand{\PrcCalcRecv}{s e^ t| m o t: g e_ o}{\WrapMathMode{%
      \IfValueTF{#2}{^{#2}\,}{~}%
      \RecvSym*%
      \IfBooleanTF{#3}{
         \IfBooleanTF{#1}{
            \begin{array}[t]{@{}l@{}}%
               \CSVListToArrayLines{#4}%
            \end{array}%
         }{
            \begin{array}[t]\{{@{}l@{}}\}%
               \CSVListToArrayLines{#4}%
            \end{array}%
         }%
        \IfValueT{#8}{_{\smash{\mathrlap{\lowercase{#8}\in\IfValueTF{#9}{#9}{\uppercase{#8}}}}}}%
      }{
         \IfBooleanTF{#1}{
            \begin{array}[t]{@{}l@{}}%
               {#4\IfValueT{#8}{_{#8}}}%
               {\IfValueT{#5}{\left({#5\IfValueT{#8}{_{#8}}}\right)}}:%
               \IfValueT{#7}{#7\IfValueT{#8}{_{#8}}}%
            \end{array}%
         }{
            \begin{array}[t]\{{@{} l @{}}\}%
               {#4\IfValueT{#8}{_{#8}}}%
               {\IfValueT{#5}{\left({#5\IfValueT{#8}{_{#8}}}\right)}}:%
               \IfValueT{#7}{#7\IfValueT{#8}{_{#8}}}%
            \end{array}%
            \IfValueT{#8}{_{\lowercase{#8}\in\IfValueTF{#9}{#9}{\uppercase{#8}}}}%
         }%
      }%
   }%
}
\NewDocumentCommand{\PrcCalcB}{s e_ o t' G{1} t|}{\WrapMathMode{%
      \IfBooleanTF{#1}{\mathtt{B}\IfBooleanT{#4}{\RepeatChar{#5}{'}}\IfValueT{#2}{_{#2}}}{
         \IfValueTF{#2}{%
            \left\{{
               {\mathtt{B}}\IfBooleanT{#4}{\RepeatChar{#5}{'}}_{#2}
            }\right\}_{{#2}\in\IfValueTF{#3}{#3}{\uppercase{#2}}}%
         }{
            \left\{{%
               {\mathtt{B}}\IfBooleanT{#4}{\RepeatChar{#5}{'}}
            }\right\}%
         }%
      }%
   }%
}
\NewDocumentCommand{\PrcCalcBranch}{m o e: e_}{\WrapMathMode{%
      {#1\IfValueT{#4}{_{#4}}}%
         {\IfValueT{#2}{\left({#2\IfValueT{#4}{_{#4}}}\right)}}:%
      {#3\IfValueT{#4}{_{#4}}}%
   }%
}
\NewDocumentCommand{\PrcCalcAfter}{t' t< t= G{\infty} e: t*}{\WrapMathMode{%
   \IfBooleanTF{#6}{
      \begin{array}[t]{@{}l@{}}
         \mathtt{after}%
         \\[0.5ex]\mbox{}\quad{#5}%
      \end{array}
   }{
      \mathtt{after}~{#5}%
      }%
   }%
}
\NewDocumentCommand{\PrcCalcv}{s e_ t' G{1}}{\WrapMathMode{%
      \IfValueTF{#2}{%
         {v}\IfBooleanT{#3}{\RepeatChar{#4}{'}}_{#2}%
      }{%
         {v}\IfBooleanT{#3}{\RepeatChar{#4}{'}}%
      }%
   }%
}
\NewDocumentCommand{\PrcCalclv}{s e_ t' G{1}}{\WrapMathMode{%
      \IfValueTF{#2}{%
        {\GeneralTheoryl}\IfBooleanT{#3}{\RepeatChar{#4}{'}}_{#2}%
        {\PrcCalcv}\IfBooleanT{#3}{\RepeatChar{#4}{'}}_{#2}%
      }{%
        {\GeneralTheoryl}\IfBooleanT{#3}{\RepeatChar{#4}{'}}%
        {\PrcCalcv}\IfBooleanT{#3}{\RepeatChar{#4}{'}}%
      }%
   }%
}
\NewDocumentCommand{\PrcCalce}{s e_ t' G{1}}{\WrapMathMode{%
      \IfValueTF{#2}{%
         {e}\IfBooleanT{#3}{\RepeatChar{#4}{'}}_{#2}%
      }{%
         {e}\IfBooleanT{#3}{\RepeatChar{#4}{'}}%
      }%
      \IfBooleanT{#1}{\in\SetRealZ}%
   }%
}
\NewDocumentCommand{\PrcCalcCond}{e_ t' G{1}}{\WrapMathMode{%
      \IfValueTF{#1}{%
         {c}\IfBooleanT{#2}{\RepeatChar{#3}{'}}_{#1}%
      }{%
         {c}\IfBooleanT{#2}{\RepeatChar{#3}{'}}%
      }%
   }%
}
\NewDocumentCommand{\PrcCalcDuration}{e_ t' G{1}}{\WrapMathMode{%
      \IfValueTF{#1}{%
         {d}\IfBooleanT{#2}{\RepeatChar{#3}{'}}_{#1}%
      }{%
         {d}\IfBooleanT{#2}{\RepeatChar{#3}{'}}%
      }%
   }%
}
\NewDocumentCommand{\PrcCalcn}{s e_ t' G{1}}{\WrapMathMode{%
      \IfValueTF{#2}{%
         {n}\IfBooleanT{#3}{\RepeatChar{#4}{'}}_{#2}%
      }{%
         {n}\IfBooleanT{#3}{\RepeatChar{#4}{'}}%
      }%
      \IfBooleanT{#1}{\in\SetRealZ}%
   }%
}
\NewDocumentCommand{\PrcCalch}{s e_ t' G{1}}{\WrapMathMode{%
      \IfValueTF{#2}{%
         {h}\IfBooleanT{#3}{\RepeatChar{#4}{'}}_{#2}%
      }{%
         {h}\IfBooleanT{#3}{\RepeatChar{#4}{'}}%
      }%
   }%
}
\NewDocumentCommand{\PrcCalcIf}{s m}{\WrapMathMode{%
      \mathtt{if}\IfBooleanTF{#1}{~#2}{\left({#2}\right)}%
   }%
}
\NewDocumentCommand{\PrcCalcThen}{t: t* m t* t~}{\WrapMathMode{%
      \IfBooleanTF{#4}{
         \begin{array}[t]{@{\hspace{-4ex}} l @{\hspace{4ex}} l@{}}
            & \mathrlap{\mathtt{then}\IfBooleanT{#1}{:}}
            \\[0.5ex]
            \mathrlap{#3}
         \end{array}
      }{
         \IfBooleanTF{#2}{
            \mathrlap{\mathtt{then}\IfBooleanTF{#1}{:}{~}{#3}}
         }{%
            \mathtt{then}\IfBooleanTF{#1}{:}{~}{#3}
            \IfBooleanT{#5}{~}
         }%
      }%
   }%
}
\NewDocumentCommand{\PrcCalcElse}{t: t* m t*}{\WrapMathMode{
      \IfBooleanTF{#2}{
         \IfBooleanTF{#4}{
            \begin{array}[t]{@{\hspace{-4ex}} l @{\hspace{4ex}} l@{}}
               \\[0.5ex]
               \\[0.5ex]
               \mathrlap{\mathtt{else}\IfBooleanT{#1}{:}}
               \\[0.5ex]
               & {#3}
            \end{array}
         }{
            \begin{array}[t]{@{}l@{}}
               \\[0.5ex]
               \mathtt{else}\IfBooleanTF{#1}{:}{~}{#3}
            \end{array}
         }%
      }{
         \mathtt{else}\IfBooleanTF{#1}{:}{~}{#3}
      }
   }
}
\NewDocumentCommand{\PrcCalcDelay}{g}{\WrapMathMode{%
      \mathtt{delay}\left({\IfValueTF{#1}{#1}{\delta}}\right)%
   }%
}
\NewDocumentCommand{\PrcCalcSet}{m}{\WrapMathMode{%
      \mathtt{set}\,\circled{#1}\,%
   }%
}
\NewDocumentCommand{\PrcCalcDef}{G{\PrcCalcRec*=\PrcCalcP} E:{\PrcCalcQ} t*}{\WrapMathMode{%
      \IfBooleanTF{#3}{
         \begin{array}[t]{@{}l@{}}
            \mathtt{def\ }{#1}~\mathtt{in}:%
            \\[0.5ex]%
            \mbox{}\quad{#2}%
         \end{array}
      }{
         \mathtt{def\ }{#1}~\mathtt{in\ }
         {#2}%
      }
   }%
}
\NewDocumentCommand{\PrcCalcVar}{}{\WrapMathMode{X}}
\NewDocumentCommand{\PrcCalcRec}{s G{\vec{v};\vec{r}} t| G{\vec{T};\FormatRecCheckStyle{\theta};\FormatRecCheckStyle{\Delta}} e_}{\WrapMathMode{%
      \IfBooleanTF{#3}{
         {\PrcCalcVar\!:({#4})}%
      }{
            \IfValueTF{#5}{%
            \PrcCalcVar_{#5}
                }{%
                \PrcCalcVar
                }%
            \IfBooleanTF{#1}{
                ({#2})%
                }{%
                    \langle{#2}\rangle%
                }%
      }%
   }%
}
\NewDocumentCommand{\PrcCalcScope}{m G{\PrcCalcP}}{\WrapMathMode{%
({\nu}{#1})~{#2}%
}%
}
\NewDocumentCommand{\PrcCalcTerm}{}{\WrapMathMode{\texttt{0}}}
\NewDocumentCommand{\PrcCalcErr}{}{\WrapMathMode{\mathtt{error}}}
\NewDocumentCommand{\PrcCalcStuck}{}{\WrapMathMode{\mathtt{stuck}}}
\NewDocumentCommand{\PrcCalcTime}{G{\PrcCalcP} E_{t} t' G{1}}{\WrapMathMode{\Phi_{#2\IfBooleanT{#3}{\RepeatChar{#4}{'}}}\left({#1}\right)}}
\NewDocumentCommand{\PrcCalcWait}{G{\PrcCalcP} t0}{\WrapMathMode{\mathtt{Wait}\left({#1}\right)\IfBooleanT{#2}{=\emptyset}}}
\NewDocumentCommand{\PrcCalcNEQ}{G{\PrcCalcP} t0}{\WrapMathMode{\mathtt{NEQ}\left({#1}\right)\IfBooleanT{#2}{=\emptyset}}}
\NewDocumentCommand{\PrcCalcSingle}{G{\PrcCalcP} t0}{\WrapMathMode{{\substack{\mathtt{Single}\\\mathtt{Session}}}\left({#1}\right)\IfBooleanT{#2}{=\emptyset}}}
\NewDocumentCommand{\PrcCalcNumSessions}{D(){\PrcCalcP} t1}{\WrapMathMode{{\mathtt{Num}}\left({#1}\right)\IfBooleanT{#2}{=1}}}
\NewDocumentCommand{\PrcCalcWF}{G{\PrcCalcP} t1}{\WrapMathMode{\mathtt{wf}\left({#1}\right)\IfBooleanT{#2}{=\MathOpTrue}}}
\NewDocumentCommand{\PrcCalcFT}{G{\PrcCalcP} t1}{\WrapMathMode{\mathtt{ft}\left({#1}\right)\IfBooleanT{#2}{=\emptyset}}}
\NewDocumentCommand{\PrcCalcFN}{G{\PrcCalcP} t1}{\WrapMathMode{\mathtt{fn}\left({#1}\right)\IfBooleanT{#2}{=\emptyset}}}
\NewDocumentCommand{\PrcCalcFPV}{G{\PrcCalcP} t1}{\WrapMathMode{\mathtt{fpv}\left({#1}\right)\IfBooleanT{#2}{=\emptyset}}}
\NewDocumentCommand{\PrcCalcFQ}{G{\PrcCalcP} t0}{\WrapMathMode{\mathtt{fq}\left({#1}\right)\IfBooleanT{#2}{=\emptyset}}}
\ProvideDocumentCommand{\Timers}{}{}
\ProvideDocumentCommand{\PCfg}{}{}
\NewDocumentCommand{\PrcCalcTimers}{e_ t' G{1} d() o}{\WrapMathMode{%
      \IfValueTF{#1}{%
         {\theta}\IfBooleanT{#2}{\RepeatChar{#3}{'}}_{#1}%
      }{%
         {\theta}\IfBooleanT{#2}{\RepeatChar{#3}{'}}%
      }%
      \IfValueT{#4}{
         (#4)%
      }%
      \IfValueT{#5}{
         \left[{#5\mapsto 0}\right]%
      }%
   }%
}
\NewDocumentCommand{\PrcCalcPCfg}{O{\PrcCalcTimers} e_ t' G{1} t, G{\PrcCalcP} e_ t' G{1}}{\WrapMathMode{%
      ({%
         \IfBooleanTF{#5}{
            \IfValueTF{#2}{
               {#1}\IfBooleanT{#3}{\RepeatChar{#4}{'}}_{#2}%
            }{
               {#1}\IfBooleanT{#3}{\RepeatChar{#4}{'}}%
            },\,
            \IfValueTF{#7}{
               {#6}\IfBooleanT{#8}{\RepeatChar{#9}{'}}_{#7}%
            }{
               {#6}\IfBooleanT{#8}{\RepeatChar{#9}{'}}%
            }%
         }{
            \IfValueTF{#7}{
               \IfValueTF{#2}{
                  {#1}\IfBooleanTF{#5}{\IfBooleanT{#3}{\RepeatChar{#4}{'}}}{\IfBooleanT{#8}{\RepeatChar{#9}{'}}}_{#2}%
               }{
                  {#1}\IfBooleanTF{#5}{\IfBooleanT{#3}{\RepeatChar{#4}{'}}}{\IfBooleanT{#8}{\RepeatChar{#9}{'}}}%
               },\,
               \IfValueTF{#7}{
                  {#6}\IfBooleanT{#8}{\RepeatChar{#9}{'}}_{#7}%
               }{
                  {#6}\IfBooleanT{#8}{\RepeatChar{#9}{'}}%
               }%
            }{%
               \IfBooleanTF{#8}{
                  \IfValueTF{#2}{
                     {#1}\IfBooleanTF{#5}{\IfBooleanT{#3}{\RepeatChar{#4}{'}}}{\IfBooleanT{#8}{\RepeatChar{#9}{'}}}_{#2}%
                  }{
                     {#1}\IfBooleanTF{#5}{\IfBooleanT{#3}{\RepeatChar{#4}{'}}}{\IfBooleanT{#8}{\RepeatChar{#9}{'}}}%
                  },\,
                  \IfValueTF{#7}{
                     {#6}\IfBooleanT{#8}{\RepeatChar{#9}{'}}_{#7}%
                  }{
                     {#6}\IfBooleanT{#8}{\RepeatChar{#9}{'}}%
                  }%
               }{
                  \IfValueTF{#2}{
                     {#1}\IfBooleanT{#3}{\RepeatChar{#4}{'}}_{#2},
                     {#6}\IfBooleanT{#3}{\RepeatChar{#4}{'}}_{#2}%
                  }{
                     {#1}\IfBooleanT{#3}{\RepeatChar{#4}{'}},
                     {#6}\IfBooleanT{#3}{\RepeatChar{#4}{'}}%
                  }%
               }%
            }%
         }%
      })%
   }%
}
\NewDocumentCommand{\LoadPrcCalc}{s}{%
   \IfBooleanT{#1}{\LoadGeneralTheory}%
   %
   \let\TempPrc\Prc%
   \let\TempP\P%
   \let\TempQ\Q%
   \let\TempRole\Role%
   \let\Tempp\p%
   \let\Tempq\q%
   \let\Tempb\b%
   \let\TempOn\On%
   \let\TempSend\Send%
   \let\TempRecv\Recv%
   \let\TempB\B%
   \let\TempBranch\Branch%
   \let\TempAfter\After%
   \let\Tempv\v%
   \let\Tempe\e%
   \let\TempCond\Cond%
   \let\TempDuration\Duration%
   \let\Tempn\n%
   \let\Temph\h%
   \let\TempIf\If%
   \let\TempThen\Then%
   \let\TempElse\Else
   \let\TempDelay\Delay%
   \let\TempSet\Set%
   \let\TempDef\Def%
   \let\TempVar\Var%
   \let\TempRec\Rec%
   \let\TempScope\Scope%
   \let\TempTerm\Term%
   \let\TempErr\Err%
   \let\TempStuck\Stuck%
   \let\TempTime\Time%
   \let\TempWait\Wait%
   \let\TempNEQ\NEQ%
   \let\TempWF\WF%
   \let\TempFT\FT%
   \let\TempFN\FN%
   \let\TempFPV\FPV%
   \let\TempFQ\FQ%
   \let\TempSingle\Single%
   \let\TempNumSessions\NumSessions%
   \let\TempTimers\Timers%
   \let\TempPCfg\PCfg%
   \let\Temppq\pq%
   \let\Tempqp\qp%
   \let\Templv\lv%
   %
   \let\Prc\PrcCalcPrc%
   \let\P\PrcCalcP%
   \let\Q\PrcCalcQ%
   \let\Role\PrcCalcRole%
   \let\p\PrcCalcp%
   \let\q\PrcCalcq%
   \let\b\PrcCalcb%
   \let\On\PrcCalcOn%
   \let\Send\PrcCalcSend%
   \let\Recv\PrcCalcRecv%
   \let\B\PrcCalcB%
   \let\Branch\PrcCalcBranch%
   \let\After\PrcCalcAfter%
   \let\v\PrcCalcv%
   \let\e\PrcCalce%
   \let\Cond\PrcCalcCond%
   \let\Duration\PrcCalcDuration%
   \let\n\PrcCalcn%
   \let\h\PrcCalch%
   \let\If\PrcCalcIf%
   \let\Then\PrcCalcThen%
   \let\Else\PrcCalcElse%
   \let\Delay\PrcCalcDelay%
   \let\Set\PrcCalcSet%
   \let\Def\PrcCalcDef%
   \let\Var\PrcCalcVar%
   \let\Rec\PrcCalcRec%
   \let\Scope\PrcCalcScope%
   \let\Term\PrcCalcTerm%
   \let\Err\PrcCalcErr%
   \let\Stuck\PrcCalcStuck%
   \let\Time\PrcCalcTime%
   \let\Wait\PrcCalcWait%
   \let\NEQ\PrcCalcNEQ%
   \let\WF\PrcCalcWF%
   \let\FT\PrcCalcFT%
   \let\FN\PrcCalcFN%
   \let\FPV\PrcCalcFPV%
   \let\FQ\PrcCalcFQ%
   \let\Single\PrcCalcSingle%
   \let\NumSessions\PrcCalcNumSessions%
   \let\Timers\PrcCalcTimers%
   \let\PCfg\PrcCalcPCfg%
   \let\pq\PrcCalcpq%
   \let\qp\PrcCalcqp%
   \let\lv\PrcCalclv%
}
\NewDocumentCommand{\UnloadPrcCalc}{s}{%
    \IfBooleanT{#1}{\UnloadGeneralTheory}%
    %
    \let\Prc\TempPrc%
    \let\P\TempP%
    \let\Q\TempQ%
    \let\Role\TempRole%
    \let\p\Tempp%
    \let\q\Tempq%
    \let\b\Tempb%
    \let\On\TempOn%
    \let\Send\TempSend%
    \let\Recv\TempRecv%
    \let\B\TempB%
    \let\Branch\TempBranch%
    \let\After\TempAfter%
    \let\v\Tempv%
    \let\e\Tempe%
    \let\Cond\TempCond%
    \let\Duration\TempDuration%
    \let\n\Tempn%
    \let\h\Temph%
    \let\If\TempIf%
    \let\Then\TempThen%
    \let\Else\TempElse%
    \let\Delay\TempDelay%
    \let\Set\TempSet%
    \let\Def\TempDef%
    \let\Var\TempVar%
    \let\Rec\TempRec%
    \let\Scope\TempScope%
    \let\Term\TempTerm%
    \let\Err\TempErr%
    \let\Stuck\TempStuck%
    \let\Time\TempTime%
    \let\Wait\TempWait%
    \let\NEQ\TempNEQ%
    \let\WF\TempWF%
    \let\FT\TempFT%
    \let\FN\TempFN%
    \let\FPV\TempFPV%
    \let\FQ\TempFQ%
    \let\Single\TempSingle%
    \let\NumSessions\TempNumSessions%
    \let\Timers\TempTimers%
    \let\PCfg\TempPCfg%
    \let\pq\Temppq%
    \let\qp\Tempqp%
    \let\lv\Templv%
}
\NewDocumentCommand{\PCalc}{s m}{%
   \LoadGeneralTheory%
   \LoadPrcCalc%
   \IfBooleanTF{#1}{\WrapMathMode*{#2}}{\WrapMathMode{#2}}%
   \UnloadPrcCalc%
   \UnloadGeneralTheory%
}
\NewDocumentEnvironment{pcalc}{s}{
   \LoadGeneralTheory%
   \LoadPrcCalc%
   \IfBooleanT{#1}{\begin{equation*}}%
         \IfBooleanF{#1}{\begin{equation}}%
               }{
               \IfBooleanF{#1}{\end{equation}}%
         \IfBooleanT{#1}{\end{equation*}}%
   \UnloadPrcCalc%
   \UnloadGeneralTheory\!%
}
\NewDocumentCommand{\SendSym}{s}{\WrapMathMode{%
        {\IfBooleanTF{#1}{\PrcSend}{\TypeSend}\mathstrut}%
    }%
}
\NewDocumentCommand{\RecvSym}{s}{\WrapMathMode{%
        {\IfBooleanTF{#1}{\PrcRecv}{\TypeRecv}\mathstrut}%
    }%
}
\NewDocumentCommand{\Trans}{s t| e:}{\WrapMathMode{%
        \IfValueTF{#3}{
            \IfBooleanTF{#2}{
                \,{{\xnrightarrow{\CSVListToSpacedLabels{#3}}}}\,%
            }{
                \,{\xrightarrow{\CSVListToSpacedLabels{#3}}}\,%
            }%
        }{
            \IfBooleanTF{#2}{
                {\not\rightarrow}%
            }{
                {\rightarrow}%
            }%
        }%
        \IfBooleanT{#1}{^{*}}
    }%
}
\NewDocumentCommand{\Reduce}{t- t~ t| t* e_ e^}{\WrapMathMode{%
    \,%
        \IfBooleanTF{#1}{
            \IfBooleanT{#3}{\not}\rightharpoonup%
        }{
            \IfBooleanTF{#2}{
                \IfBooleanT{#3}{\not}\rightsquigarrow%
            }{
                \IfBooleanT{#3}{\not}\longrightarrow%
            }%
        }%
        \IfBooleanTF{#4}{
            ^{*}\IfValueT{#5}{_{\hspace{-10pt}\mathstrut{#5}}}%
        }{
            \IfValueTF{#6}{
                ^{#6}\IfValueT{#5}{_{\hspace{-10pt}\mathstrut{#5}}}%
            }{%
                ^{\mbox{}}\IfValueT{#5}{_{\hspace{-10pt}\mathstrut{#5}}}%
            }%
        }%
    \,%
    }%
}
\ProvideDocumentCommand{\Session}{}{}
\ProvideDocumentCommand{\Gam}{}{}
\ProvideDocumentCommand{\Ses}{}{}
\NewDocumentCommand{\TypeCheckSession}{e_ t' G{1}}{\WrapMathMode{%
        \IfValueTF{#1}{%
            {\Delta}\IfBooleanT{#2}{\RepeatChar{#3}{'}}_{#1}%
        }{%
            {\Delta}\IfBooleanT{#2}{\RepeatChar{#3}{'}}%
        }%
    }%
}
\NewDocumentCommand{\TypeCheckGam}{e_ t' G{1} d() o}{\WrapMathMode{%
        \IfValueTF{#1}{%
        {\Gamma}\IfBooleanT{#2}{\RepeatChar{#3}{'}}_{#1}%
        }{%
        {\Gamma}\IfBooleanT{#2}{\RepeatChar{#3}{'}}%
        }%
        \IfValueT{#4}{
        (#4)%
        }%
        \IfValueT{#5}{
        \left[{#5\mapsto 0}\right]%
        }%
    }%
}
\NewDocumentCommand{\TypeCheckSes}{e_ t' G{1}}{\WrapMathMode{%
        \IfValueTF{#1}{%
            {\mathtt{S}}\IfBooleanT{#2}{\RepeatChar{#3}{'}}_{#1}%
        }{%
            {\mathtt{S}}\IfBooleanT{#2}{\RepeatChar{#3}{'}}%
        }%
    }%
}
\NewDocumentCommand{\LoadTypeChecking}{}{%
    \LoadGeneralTheory%
    \LoadSType%
    \LoadPrcCalc%
    \let\TempSession\Session%
    \let\TempGam\Gam%
    \let\TempSes\Ses%
    \let\Session\TypeCheckSession%
    \let\Gam\TypeCheckGam%
    \let\Ses\TypeCheckSes%
}
\NewDocumentCommand{\UnloadTypeChecking}{}{%
    \UnloadGeneralTheory%
    \UnloadSType%
    \UnloadPrcCalc%
    \let\Session\TempSession%
    \let\Gam\TempGam%
    \let\Ses\TempSes%
}
\NewDocumentCommand{\TypeCheck}{s m}{%
    \LoadTypeChecking%
    \IfBooleanTF{#1}{\WrapMathMode*{#2}}{\WrapMathMode{#2}}%
    \UnloadTypeChecking%
}
\NewDocumentEnvironment{typecheck}{s}{
    \LoadTypeChecking%
    \IfBooleanT{#1}{\begin{equation*}}%
            \IfBooleanF{#1}{\begin{equation}}%
                    }{
                    \IfBooleanF{#1}{\end{equation}}%
            \IfBooleanT{#1}{\end{equation*}}%
    \UnloadTypeChecking%
}
\def\middot{\textperiodcentered~}
\keywords{Session types \middot Mixed-choice \middot Timeouts \middot \texorpdfstring{$\pi$}{pi}{-}calculus}
\renewcommand{\subsubsection}{%
  \@startsection{subsubsection}{3}%
  \z@{0pt}{-.5em}%
  {\normalfont\itshape}%
}
\renewcommand{\subsection}{%
  \@startsection{subsection}{2}%
  \z@{0pt}{-.5em}%
  {\normalfont\bfseries}%
}
\let\ltxxlabel\ltx@label
\begin{document}

\title[Timeout Asynchronous Session Types]{%
  Timeout Asynchronous Session Types:\texorpdfstring{\\}{\ }
  Safe Asynchronous Mixed-Choice for Timed Interactions}

\titlecomment{This work is an extended version of~\cite{Pears2023} that was accepted at {COORDINATION} 2023.}

\thanks{This work has been partially supported by EPSRC project EP/T014512/1 (STARDUST) and the BehAPI project funded by the EU H2020 RISE under the Marie Sklodowska-Curie action (No: 778233).
  We thank Simon Thompson
  for his insightful comments on an early version of this work.}

\author[J.~Pears]{Jonah Pears\lmcsorcid{0000-0003-4492-4072}}[a]
\author[L.~Bocchi]{Laura Bocchi\lmcsorcid{0000-0002-7177-9395}}[a]
\author[M.~Murgia]{Maurizio Murgia\lmcsorcid{0000-0001-7613-621X}}[b]
\author[A.~King]{Andy King\lmcsorcid{0000-0001-5806-4822}}[a]

\address{University of Kent, Canterbury, UK}
\email{%
  j.pears@kent.ac.uk,~%
  l.bocchi@kent.ac.uk,~%
  a.m.king@kent.ac.uk%
}

\address{Gran Sasso Science Institute, L'Aquila, Italy}
\email{%
  maurizio.murgia@gssi.it%
}

\def\middot{\textperiodcentered~}
\begin{abstract}
  Mixed-choice has long been barred from models of asynchronous communication since it compromises the decidability of key properties of communicating finite-state machines.
  Session types inherit this restriction, which precludes them from fully modelling timeouts -- a core property of web and cloud services.
  To address this deficiency, we present (binary) Timeout Asynchronous Session Types ({TOAST}) as an extension to (binary) asynchronous timed session types, that permits mixed-choice.
  {TOAST} deploys timing constraints to regulate the use of mixed-choice so as to preserve communication safety.
  %
  %
  We provide a new behavioural semantics for {TOAST} which guarantees progress in the presence of mixed-choice.
  Building upon {TOAST}, we provide a calculus featuring process timers which is capable of modelling timeouts using a $\mathtt{receive\text{-}after}$ pattern, much like Erlang,
  and capture the correspondence with TOAST specifications via a type system for which we prove subject reduction.

  \keywords{Session types \middot Mixed-choice \middot Timeouts \middot $\pi$-calculus}
\end{abstract}

\maketitle

\section{Introduction}\label{sec:introduction}

\LoadTypeChecking

Mixed-choice is an inherent feature of models of communications such as communicating finite-state machines (CFSM)~\cite{Brand1983:CFSM} where actions are classified as either send or receive. In this setting, a state of a machine is said to be mixed if there exist both a sending action and a receiving action from that state.
When considering an asynchronous model of communication, deadlock freedom is undecidable in general~\cite{Gouda1984} but can be guaranteed in presence of three sufficient and decidable conditions: determinism, compatibility, and \emph{absence} of mixed-states~\cite{Gouda1984,Denielou2013}. Intuitively, determinism means that it is not possible, from a state, to reach two different states with the same kind of action, and compatibility requires that for each send action of one machine, the rest of the system can eventually perform a complementary receive action.

In the pursuit of deadlock freedom mixed-choice has been cast out,
even though this curtails
the
descriptive capabilities
of CFSM and its derivatives.
Despite the rapid evolution of session types,
even to the point of deployment in
Java~\cite{Hu2008},
Python~\cite{Neykova2013:spy,Neykova2013b:go_dynamic},
Rust~\cite{Lagaillardie2020},
{F\#}~\cite{Neykova2018:fsharp}
and {Go}~\cite{Castro2019},
thus far mixed-choice has only been introduced into the
synchronous binary setting~\cite{Vasconcelos2020}.
In fact, the exclusion of mixed-choice pervades work on asynchronous communication which guarantee of deadlock freedom,
both for
communicating timed automata~\cite{Bocchi2015,Krcal2006:CTA}
and session types~\cite{Bettini2008,Carbone2008,Honda2008,Yoshida2007}.
Determinism and the absence of mixed-choice is baked into the very syntax of session types
(the correspondence between session types and
CFSM is explained in~\cite{Denielou2013}).

Timed session types~\cite{Bartoletti2014,Bocchi2019,Bocchi2014}, which extend session types with time constraints, inherit the same syntactic restrictions of session types, and hence also rule out mixed-choice.
This is unfortunate since in the timed setting, mixed-choice are a useful abstraction for timeouts.
To illustrate, \Cref{fig:timeout_snippets} shows how
an Erlang~\cite{ErlangOTP} \recvafter\ statement (\ref{fig:timeout_snippets_erlang})
can be rendered as a mixed-state CFSM (\ref{fig:timeout_snippets_automata}):
if neither a \erl{data} nor a \erl{done} message are received within 3 time units from co-party \erl{Q},
then a \erl{timeout} message is sent.

Timeouts are important for handling failure and unexpected delays, for instance, the {{SMTP}} protocol stipulates:
\emph{``An {{SMTP}} client \emph{must} provide a timeout mechanism''}~\cite[Section~4.5.3.2]{RFC5321}.
Mixed-states would allow, for example, states where a server is waiting to \emph{receive} a message from the client and, if nothing is received after a certain amount of time, \emph{send} a notification that ends the session.
Current variants of timed session types allow deadlines to be expressed but cannot, because of the absence of mixed-states, characterise (and verify) the behaviour that should follow a missed deadline, e.g., a restart or retry strategy.
In this paper, we argue that time makes mixed-states more powerful (allowing timeouts to be expressed), while just adding sufficient synchronisation to ensure that mixed-states are safe in an asynchronous semantics (cannot produce deadlocks).

\begin{figure}[t]
  \centering
  \hfill
  \begin{subfigure}{0.4\textwidth}
    \begin{lstlisting}[style=Erlang]
receive 
    {Q, data} -> foo();
    {Q, done} -> bar();
after 3000  -> 
    Q ! {self(), timeout},
    handle_timeout()
end.
\end{lstlisting}
    \caption{Erlang snippet}
    \label{fig:timeout_snippets_erlang}
  \end{subfigure}
  \hfill
  \begin{subfigure}{0.4\textwidth}
    \begin{tikzpicture}[shorten >=1pt,node distance=1.5cm,auto]
      \tikzstyle{every state}=[fill={rgb:black,1;white,10}]

      \node[state] (0) at (0,0) {$ra$};
      \node[state] (1) at (4,1)  {$f$};
      \node[state] (2) at (4,0)  {$b$};
      \node[state] (3) at (4,-1) {$t$};

      \path[->] (0) edge[bend left=25]  node[above, midway] {?data $(x<3)$} (1);

      \path[->] (0) edge  node[above] {?done $(x<3)$} (2);

      \path[->] (0) edge[bend right=25] node[below, midway] {!timeout $(x \geq 3)$} (3);
    \end{tikzpicture}
    \caption{Mixed-state CFSM}
    \label{fig:timeout_snippets_automata}
  \end{subfigure}
  \hfill
  \caption[]{An Erlang snippet and its mixed-state machine representation}
  \label{fig:timeout_snippets}
\end{figure}

\paragraph{Contributions}
This work makes three orthogonal contributions to the theory of (binary) session types, with a focus on improving their descriptive capabilities:
\begin{itemize}
  \item We introduce Timeout Asynchronous Session Types (TOAST) to support timeouts.
        Inspired by asynchronous timed (binary) session types~\cite{Bocchi2019}, TOAST shows how timing constraints provide an elegant solution for guaranteeing the safety of mixed-choice. Technically, we provide a semantics for TOAST and a well-formedness condition. We show that well-formedness is sufficient to guarantee progress for TOAST (which may, instead, get stuck in general).
  \item We provide a new process calculus whose functionality extends to support programming motifs such as the widely used \recvafter\ pattern of Erlang for expressing timeouts.
  \item We introduce timers in our process calculus to structure the counterpart of a timeout, as well as time-sensitive conditional statements, where the selection of a branch may be determined by timers. Time-sensitive conditional statements provide processes with knowledge as to which branch should be followed e.g., in the case of timeout.
  \item We provide an informal discussion on the correspondence between TOAST and the aforementioned primitives of our new process calculus.
  \item We show a formal correspondence between TOAST and process calculus via a typing system, for which we establish Subject Reduction.
\end{itemize}

\noindent For simplicity, we focus on binary sessions.

\paragraph{Outline}
In~\Cref{sec:toast} we begin by introducing Timeout Asynchronous Session Types and present their syntax, semantics,
and discuss mixed-choice and other motivations behind our well-formedness conditions.
We end~\Cref{sec:toast} by discussing our proof of progress for TOAST.

In~\Cref{sec:toast_processes} we present a calculus for processes with timeouts, which are designed to correspond to TOAST, and discuss their syntax, structure and reduction rules.
\Cref{sec:expressiveness} discusses the expressiveness of our types and processes, and provides some examples of the kinds of programs that TOAST is capable of modelling, and vice versa, how to express certain behaviour in TOAST and our process calculus.

We present our typing system in~\Cref{sec:type_checking}.
We first introduce the typing judgements, then discuss each rule in detail and conclude with an example evaluation
using our typing system.
In~\Cref{sec:subject_reduction} we present subject reduction for our typing system.
We conclude our work in~\Cref{sec:related_work,sec:conclusion}, where we discuss related work and summarise our contribution.

\UnloadTypeChecking

\section{Timeout Asynchronous Session Types (TOAST)}\label{sec:toast}
This section presents the syntax, semantics and formation rules for (binary) Timeout Asynchronous Session Types (TOAST),
which are an extension of
(binary) Asynchronous Timed Session Types~\cite{Bocchi2019} with a well-disciplined
(and hence safe)
form of mixed-choice.
%
%

\LoadSType*

\paragraph*{Clocks \& Constraints}
We start with a few preliminary definitions borrowed from timed automata~\cite{Alur1994:TA}. Let \SetOfClocks\ be a finite set of clocks denoted $x$, $y$ and $z$.
A (clock) valuation \Val\ is a map $\Val: \SetOfClocks\to\SetRealZ$.
The initial valuation is \Val_0, where
\(
\Val_0=\{
x\mapsto0 \mid x\in\SetOfClocks
\}
\).
Given a time offset
$\t\in\SetRealZ$
and a valuation \Val,
\(
\Val+\t=\{
x\mapsto\Val(x)+\t \mid x\in\SetOfClocks
\}
\).
Given \Val\ and $\Resets\subseteq\SetOfClocks$,
\(
\Val\Reset0=
\{
0\ \text{if\ }(x\in\Resets)%
\text{\ else\ }\Val(x)\mid x\in\SetOfClocks%
\}
\).
Observe $\Val\Reset{\emptyset}=\Val$.
%
%

A clock constraint $\delta$ takes the following form:
%
%
%
%
\begin{requation}{eqDeltaConstraints}\label{eq:delta_constraints}
  \Constraints ::= \True
  \mid {x>n}
  \mid {x=n}
  \mid {x-y>n}
  \mid {x-y=n}
  \mid {\neg\Constraints}
  \mid {\Constraints_1\land\Constraints_2}
  \qquad (\text{where\ }n\in
  \mathbb{Q}_{\geq0}
  )
\end{requation}

\noindent 
  We write $\Val\models\delta$ to denote that the clock valuations in $\nu$ satisfy the constraints $\delta$.
  Defined formally, 
  \(
  (\Val\models\delta)=
    (\forall
    x\in
    \CN{\delta}
    .\delta\Subst{\nu(x)}{x}
    )
  \)
  %
  to say that, for all clocks in the
  \emph{clock names}\footnote{%
  Given in~\Cref{def:clock_names}.
  }%
  of $\delta$
  we substitute the clock name with the corresponding valuation in $\nu$.
  E.g., 
  $\nu[x\mapsto 3]\models(x>2)$ holds since $(x>2)\Subst{\nu(x)}{x}=(3>2)$.
%
%
%
  We write $\Past$ to denote the \emph{weak past} of 
  $\delta$.

\begin{defi}[Weak Past]\label{def:past_delta}
  We write $\Past$ (the past of $\Constraints$)
  for a constraint $\Constraints'$
  such that $\Val\models\Constraints'$,
  if and only if $\exists \t$
  such that $\Val+\t\models\Constraints$.
\end{defi}

\noindent 
  By~\Cref{def:past_delta}, $\Past$ effectively removes any lower-bounds in $\delta$, preserving the upper-bounds.
  For example,
  $\downarrow(3<x<5)=(x<5)$
  and $\downarrow(x>2)=(\True)$.
  %
  In practice
  $\Past$ allows us to reason on constraints being satisfiable at some point in the future,
  since we only require that the clocks valuation does not exceed the upper-bound of the constraint.

\UnloadSType*

\subsection{Syntax of TOAST}\label{ssec:type_syntax}
\LoadSType*
The syntax of TOAST (or just \emph{types}) is given in~\Cref{eq:type_syntax}.
A type \S\ is either a choice \C_i,
recursive definition ${\mu\alpha.\S}$,
call $\alpha$,
or termination type \End.
\begin{requation}{eqTypeSyntax}\label{eq:type_syntax}
  \begin{array}[c]{lcl @{\qquad} lcl}
    \S & ::=   & \C_i \mid {\mu\alpha.\S} \mid \alpha \mid \End
       & \C*   & ::=                                            & \Choice*{\l}[\T]
    \\[-1ex]\\%
    \T & ::=   & \Del
    \mid \mathtt{Unit} \mid \mathtt{Nat} \mid \mathtt{Bool} \mid \mathtt{String} \mid \dots
       & \Comm & ::=                                            & \SendSym \mid \RecvSym
  \end{array}
\end{requation}

\noindent Type \C_i\ models a choice among options $i$ ranging over a non-empty set $I$. Each option $i$ is a selection/send action if $\Comm=!$, or alternatively a branching/receive action if $\Comm=?$.
An option sends (resp. receives) a label \l\ and a message of a specified data type \T\ is delineated by $\langle\cdot\rangle$.
The sending or receiving action of an option is guarded by a time constraint \Constraints. After the action, the clocks within \Resets\ are reset to 0.
Data types, ranged over by $\T, \T_i, \ldots$
can be higher-order types \Del\ to model session delegation,
$\mathtt{Unit}$ types, or \emph{base types} (e.g., $\mathtt{Nat}$, $\mathtt{Bool}$, $\mathtt{String}$). We omit a $\mathtt{Unit}$ type when the payload is nothing.
%
%
We discuss the structure of delegation types further in~\Cref{sssec:wf}.
%
Labels of the options in a choice are pairwise distinct.
Recursion and termination types are as standard~\citeST.

\paragraph{Remarks on the notation}
One convention is to model the exchange of payloads as a separated action with respect to the communication of branching labels.
In this paper we follow~\cite{Bocchi2015,Yoshida2021}, and model them as unique actions.
When irrelevant we omit the payload, yielding a notation closer to that of timed automata.
%
%

\paragraph{Feasibility \& Junk Types}
Unfortunately, when annotating session types with time constraints one may obtain protocols that are infeasible, as shown in~\Cref{exa:junk_types}.
This is a known problem, which has been addressed by providing additional conditions or constraints on timed session types, for example compliance~\cite{Bartoletti2017}, feasibility~\cite{Bocchi2014}, interaction enabling~\cite{Bocchi2015}, and well-formedness~\cite{Bocchi2019}.
(We address feasibility in~\Cref{sssec:wf}.)

\begin{exa}[Junk Types]\label{exa:junk_types}
  Consider type \S\ (defined below) that describes first waiting 3 seconds, then receiving $a$, and finally a mixed-choice between sending $b$ and receiving $c$.
  %
  \[
    \begin{array}[c]{@{}l@{\ }c@{\ }l@{}}
      \S & = & %
      \Option?{a}{x>3}.{\Choice|{
      \Option!{b}{y=2}.{\End}{,},
      \Option?{c}{2<x<5}.{\End}
      }}
    \end{array}
  \]

  %
  \noindent Initially all clocks are 0.
  After $a$ is received all clocks hold values greater than 3, 
  since time passes evenly over all clocks.
  %
    Therefore, $b$ is never able to be sent since $(y>3)$
    and $c$ is only \emph{sometimes} able to be sent
    (i.e., if $a$ is received when $(x<5)$ which constraint $(x>3)$ does nothing to guarantee).
    %
    Types with unsatisfiable constraints are called \emph{junk types}~\cite{Bocchi2019}.
    %
    %
    %
    %
  \[
    \begin{array}[c]{@{}l@{\ }c@{\ }l@{}}
      \S' & = & %
      \Option?{a}{x>3}[x].{\Choice|{
      \Option!{b}{y=2}.{\End}{,},
      \Option?{c}{2<x<5}.{\End}
      }}
    \end{array}
  \]

  \noindent 
    Above we show an amended $\S$ which now resets clock $x$ after receiving $a$ in order to guarantee that receiving $c$ is always feasible.
    By further amending $\S'$ such that clock $y$ is also reset after receiving $a$ then sending $b$ would also become feasible.
  %
\end{exa}

\subsection{Semantics of TOAST}\label{ssec:semantics}
We define the semantics
using three layers:
configurations 
{(\Cfg*)},
configurations with queues $\Que$ that model asynchronous interactions 
{(\Cfg*;)},
and systems that model the parallel composition of configurations with queues.
The semantics, for all layers, are defined over the transition labels $\ell$,
formally defined along with $\Que$, \Cfg*\ and \Cfg*;\  in~\Cref{eq:lts_labels}. 
\begin{requation}{eqTypeSemanticsLabels}\label{eq:lts_labels}
  %
  \ell  ::=  {\Comm\Msg}\mid\t\mid\tau%
  %
  \qquad%
  %
  \Msg  ::=  {\l\left\langle{\T}\right\rangle}%
  %
  \qquad%
  \Que  ::=  {\emptyset \mid \Msg;\,\Que}%
  \qquad
  \Cfg*  ::=  \Cfg
  \qquad
  \Cfg*;{\Que}  ::=  \Cfg;{\Que}
\end{requation}

\noindent 
  where
  transition label
  $\ell$
  is
  either
  an interaction $(\Comm\Msg)$,
  time-step ($t\in\SetRealZ$)
  or silent action $(\tau)$,
  and $\Msg$ is a message and $\Que$ is a FIFO message queue.
  Communication directions
  $\Comm$
  are the same as in~\Cref{eq:type_syntax},
  where $!$ denotes a send/output action
  and $?$ a receive/input action.
%


\UnloadSType*

\NewDocumentCommand{\TypeSemanticsAct}{}{\WrapMathMode{%
\infer[\Rule{act}]{
\Cfg,{\Choice{\l}[\T]_i}%
\Trans:{\Comm_j\Msg}%
\Cfg{\Val\Reset_j},{\S_j}%
}{
\Val\models\delta_j%
\quad%
\Msg={\lT<>_j}%
\quad%
j\in I%
}%
}%
}

\NewDocumentCommand{\TypeSemanticsUnfold}{}{\WrapMathMode{%
    \infer[\Rule{unfold}]{
      \Cfg,{\mu\alpha.\S}%
      \Trans:{\ell}%
      \Cfg'%
    }{
      \Cfg,{\S\Subst{{\mu\alpha.\S}}{\alpha}}%
      \Trans:{\ell}%
      \Cfg'%
    }%
  }%
}

\NewDocumentCommand{\TypeSemanticsTick}{}{\WrapMathMode{%
    \Cfg\Trans:{\t}\Cfg{\Val+\t}%
    \quad%
    \Rule{tick}%
  }%
}

\NewDocumentCommand{\TypeSemanticsSend}{}{\WrapMathMode{%
\infer[\Rule{send}]{
\Cfg;{\Que}%
\Trans:{!\Msg}%
\Cfg{\Val'},{\S'};{\Que}%
}{
\Cfg%
\Trans:{!\Msg}%
\Cfg'%
}%
}%
}

\NewDocumentCommand{\TypeSemanticsRecv}{}{\WrapMathMode{%
\infer[\Rule{recv}]{
\Cfg;{\Msg;\,\Que}%
\Trans:{\tau}%
\Cfg{\Val'},{\S'};{\Que}%
}{
\Cfg%
\Trans:{?\Msg}%
\Cfg'%
}%
}%
}

\NewDocumentCommand{\TypeSemanticsTime}{}{\WrapMathMode{%
\infer[\Rule{time}]{
\Cfg;{\Que}%
\Trans:{\t}%
\Cfg{\Val'},{\S'};{\Que}%
}{
\begin{array}[t]{@{}c @{\quad} l@{}}
  %
  \Cfg\Trans:{\t}\Cfg{\Val'},{\S'}%
   & \text{(configuration)} %
  %
  \\[\ArrayLTSPremiseLineSpacing]%
  (\Cfg~\text{is \fe*})%
  \implies%
  (\Cfg{\Val'},{\S'}~\text{is \fe*})%
   & \text{(persistency)}   %
  %
  \\[\ArrayLTSPremiseTallLineSpacing]%
  \hphantom{\forall \t'<\t:\Cfg{\Val+\t'};{\Que}\Trans|:{\tau}}%
  \mathllap{\smash{\forall \t'<\t:\Cfg{\Val+\t'};{\Que}\Trans|:{\tau}}}%
   & \text{(urgency)}       %
\end{array}
}%
}%
}

\NewDocumentCommand{\TypeSemanticsQue}{}{\WrapMathMode{%
\Cfg;{\Que}\Trans:{?\Msg}\Cfg;{\Que;\,\Msg}%
\quad%
\Rule{que}%
}%
}

\NewDocumentCommand{\TypeSemanticsComL}{}{\WrapMathMode{%
\infer[\Rule{com}[l]*]{
{\Cfg*;{\Que}_1\mid\Cfg*;{\Que}_2}%
\Trans:{\tau}%
{\Cfg*;{\Que}_1'\mid\Cfg*;{\Que}_2'}%
}{
\Cfg*;{\Que}_1\Trans:{!\Msg}\Cfg*;{\Que}_1'%
\quad%
\Cfg*;{\Que}_2\Trans:{?\Msg}\Cfg*;{\Que}_2'%
}%
}%
}

\NewDocumentCommand{\TypeSemanticsComR}{}{\WrapMathMode{%
\infer[\Rule{com}[r]*]{
{\Cfg*;{\Que}_1\mid\Cfg*;{\Que}_2}%
\Trans:{\tau}%
{\Cfg*;{\Que}_1'\mid\Cfg*;{\Que}_2'}%
}{
\Cfg*;{\Que}_1\Trans:{?\Msg}\Cfg*;{\Que}_1'%
\quad%
\Cfg*;{\Que}_2\Trans:{!\Msg}\Cfg*;{\Que}_2'%
}%
}%
}

\NewDocumentCommand{\TypeSemanticsParL}{}{\WrapMathMode{%
\infer[\Rule{par}[l]*]{
{\Cfg*;{\Que}_1\mid\Cfg*;{\Que}_2}%
\Trans:{\tau}%
{\Cfg*;{\Que}_1'\mid\Cfg*;{\Que}_2}%
}{
\Cfg*;{\Que}_1\Trans:{\tau}\Cfg*;{\Que}_1'%
}%
}%
}

\NewDocumentCommand{\TypeSemanticsParR}{}{\WrapMathMode{%
\infer[\Rule{par}[r]*]{
{\Cfg*;{\Que}_1\mid\Cfg*;{\Que}_2}%
\Trans:{\tau}%
{\Cfg*;{\Que}_1\mid\Cfg*;{\Que}_2'}%
}{
\Cfg*;{\Que}_2\Trans:{\tau}\Cfg*;{\Que}_2'%
}%
}%
}

\NewDocumentCommand{\TypeSemanticsWait}{}{\WrapMathMode{%
\infer[\Rule{wait}]{
{\Cfg*;{\Que}_1\mid\Cfg*;{\Que}_2}%
\Trans:{\t}%
{\Cfg*;{\Que}_1'\mid\Cfg*;{\Que}_2'}%
}{
\Cfg*;{\Que}_1\Trans:{\t}\Cfg*;{\Que}_1'%
\quad%
\Cfg*;{\Que}_2\Trans:{\t}\Cfg*;{\Que}_2'%
}%
}%
}

\LoadSType*

\subsubsection{Configurations}\label{sssec:cfg_tuple}
A configuration \Cfg*\ is a pair \Cfg.
The semantics for configurations are defined by a Labelled Transition System (LTS) over configurations, the labels in~\Cref{eq:lts_labels}
and the rules given
in~\Cref{eq:type_semantics_tuple}.
A transition $\Cfg\Trans:{\t,\Comm\Msg}\Cfg'$ indicates $\Cfg\Trans:{\t}\Cfg*'{2}\Trans:{\Comm\Msg}\Cfg'$, where \Cfg*'{2}\ is some intermediate configuration.
%
%
Changes from~\cite{Bocchi2019} are \tcbox[on line, size=title, boxsep=0mm, left=2pt, right=2pt, colback=\tcolorChanged, colframe=black!0]{highlighted}.
%
%
\begin{requation}{eqTypeSemanticsTuple}\label{eq:type_semantics_tuple}
  %
  \begin{array}[c]{@{}c@{}}
    \tcboxmath[size=title,boxsep=0mm,colback=\tcolorChanged,colframe=black!0]{\begin{array}[c]{@{}c@{}}\TypeSemanticsAct\end{array}}
    \\[-1ex]\\%
    \begin{array}[c]{@{}c @{\qquad} c}
      \begin{array}[c]{@{}c@{}}\TypeSemanticsUnfold\end{array}
       & %
      \begin{array}[c]{@{}c@{}}\TypeSemanticsTick\end{array}
    \end{array}
  \end{array}
\end{requation}

\noindent 
  By rule $\Rule{act}$ a configuration may perform
  one action $j\in I$, provided the corresponding $\delta_j$ is satisfied by the current valuation of clocks $\Val$.
  The resulting configuration has all clocks in $\Resets_j$ reset to 0.
%
%
Rule \Rule{tick} describes time passing.
Rule \Rule{unfold} unfolds recursive types.

\begin{rdefi}{defFutureEnabledConfigurations}[Future-enabled Configurations]\label{def:cfg_fe}
  \Cfg\ is \fe,
  written either as $\Cfg\fe=$ or $\Cfg\text{\ is\ }\mathtt{fe}$,
  iff $\exists \t$
  such that $\Cfg \Trans:{\t,\Comm\Msg}$ via the rules
  in~\Cref{eq:type_semantics_tuple}.
\end{rdefi}

\subsubsection{Configurations with queues}\label{sssec:cfg_triple}
A configuration with queues \Cfg*;\ is a triple \Cfg;{\Que}, where \Que\ is a FIFO queue of messages which have been received but not yet processed.
A queue takes the form $\Que ::= \emptyset \mid \Msg;\Que$ and thus is either empty, or has a message at its head.
The semantics of configurations with queues is defined by an LTS over the labels in~\Cref{eq:lts_labels} and the rules
in~\Cref{eq:type_semantics_triple}.
The transition $\Cfg*;{\Que}\Trans:{\t,\Comm\Msg}\Cfg*;{\Que}'$ is defined analogously to $\Cfg\Trans:{\t,\Comm\Msg}\Cfg'$.
\begin{requation}{eqTypeSemanticsTriple}\label{eq:type_semantics_triple}
  %
  \begin{array}[c]{@{}c@{}}
    \begin{array}[c]{c @{\qquad} c}
      \begin{array}[c]{@{}c@{}}\TypeSemanticsSend\end{array}
       & %
      \begin{array}[c]{@{}c@{}}\TypeSemanticsRecv\end{array}
    \end{array}
    \\[-1ex]\\%
    \begin{array}[c]{@{}c@{}}\TypeSemanticsQue\end{array}
    \\[-1ex]\\%
    \tcboxmath[size=title,boxsep=0mm,colback=\tcolorChanged,colframe=black!0]{\begin{array}[c]{@{}c@{}}\TypeSemanticsTime\end{array}}
  \end{array}
\end{requation}

\noindent
Rule \Rule{send} emits a message \Msg.
Message reception is handled by two rules: rule \Rule{que} inserts a message at the back of \Que, and rule \Rule{recv} removes a message from the front of \Que.
%


Rule \Rule{time} is for time passing which is formulated in terms of a future-enabled configuration, given in~\Cref{def:cfg_fe}.
The second condition in the premise of rule \Rule{time} $\text{(persistency)}$ ensures the ``latest-enabled'' action is never missed by advancing the clocks.
Notably, this differs from the corresponding behaviour in~\cite{Bocchi2019} which, in the presence of mixed-choice, is too restrictive.
By not allowing a participant to pass its last sending action, the latest-enabled receiving actions can never be reached (i.e., miss sending to receive a timeout). 
The third condition $\text{(urgency)}$ models an urgent semantics, ensuring messages are processed as they arrive.
Urgency is critical for reasoning about progress.
%
%
Later, in~\Cref{exa:weak_persistency} we further discuss the changes made to rule \Rule{time} and illustrate why they are necessary.

\subsubsection{Systems}\label{sssec:cfg_systems} Systems are the parallel composition of two (as we consider binary types) configurations with queues,
written as $\Cfg;{\Que}_1 \mid \Cfg;{\Que}_2$
or $\Cfg*;{\Que}_2 \mid \Cfg*;{\Que}_2$.
The semantics of systems is defined by an {{LTS}} over the labels in~\Cref{eq:lts_labels} and the transition rules
in~\Cref{eq:type_semantics_system}.
%
\begin{requation}{eqTypeSemanticsSystem}\label{eq:type_semantics_system}
  %
  \begin{array}[c]{@{}c @{\quad} c@{\quad} c@{}}
    \begin{array}[c]{@{}c@{}}
      \begin{array}[c]{@{}c@{}}\TypeSemanticsComL\end{array}
      \\[-1ex]\\%
      \begin{array}[c]{@{}c@{}}\TypeSemanticsComR\end{array}
    \end{array}
     &
    \begin{array}[c]{@{}c@{}}
      \begin{array}[c]{@{}c@{}}\TypeSemanticsParL\end{array}
      \\[-1ex]\\%
      \begin{array}[c]{@{}c@{}}\TypeSemanticsParR\end{array}
    \end{array}
     &
    \begin{array}[c]{@{}c@{}}\TypeSemanticsWait\end{array}
  \end{array}
\end{requation}

\noindent 
  Rule \Rule{com}[l]* handles
  $\Cfg*;_1$ asynchronously sending $\Msg$ to the queue of $\Cfg*;_2$;
  rule $\Rule{com}[r]*$ is symmetric.
%
%
Rule \Rule{par}[l]* allows \Cfg*;_1\ to process the message at the head of \Que_1\ via \Rule{recv};
  rule $\Rule{par}[r]*$ is symmetric.
%
By rule \Rule{wait} time passes consistently across systems.

\begin{exa}[Unsafe Mixed-choice]\label{exa:mixed_bad}
  The use of mixed-choice in asynchronous communications may result in infeasible protocols or, more concretely, systems (or types) that get stuck.
  A mixed-choice is considered \emph{unsafe} if actions of different directions compete to be performed (i.e., they are both viable at the same point in time).
  Consider the system $\Cfg{\Val_0},{\S_1}; \mid \Cfg{\Val_0},{\S_2};$, where \S_1\ and \S_2\ are \dual\ and defined as follows:
  \begin{equation}\label{eq:mixed_bad:cfgs}
    \begin{array}[c]{@{}c p{4ex} c@{}}
      \begin{array}[c]{@{}l@{\ }c@{\ }l@{}}
        \S_1 & = &
        \Choice|{
        \Option?{a}{x<5}.{\End},
        \Option!{b}{x=0}.{\S_1'}
        }
      \end{array}
       &  & %
      \begin{array}[c]{@{}l@{\ }c@{\ }l@{}}
        \S_2 & = & %
        \Choice|{
        \Option!{a}{y<5}.{\End},
        \Option?{b}{y=0}.{\S_2'}
        }
      \end{array}
    \end{array}
  \end{equation}

  \noindent In the system $\Cfg*;_1 \mid \Cfg*;_2$,
  it is possible for both $\Cfg{\Val_0},{\S_1}; \Trans:{!b} \Cfg*;_1'$
  and $\Cfg{\Val_0},{\S_2}; \Trans:{!a} \Cfg*;_2'$ to occur at the same time.
  The resulting system $\Cfg{\Val_0},{\S_1'};{a} \mid \Cfg{\Val_0},{\End};{b}$
  is unable to receive either of message $a$ or $b$,
  and \Cfg*;_1'\ may be stuck waiting for interactions
  from \Cfg*;_2'\ indefinitely.
  %
  Even though a deadlock can be avoided if message $a$ is a delegation and \S_1'\ prescribes receiving $a$, since message $b$ is never received by \Cfg*;_2, then~\Cref{eq:mixed_bad:cfgs} still prescribes unsafe behaviour.
  %
\end{exa}

\subsection{Duality, Well-formedness, and Progress (of TOAST)}\label{ssec:dual:wf:progress}

In the untimed scenario, the composition of a well-formed binary type with its dual characterises a \emph{protocol}, which specifies the ``correct'' set of interactions between a party and its co-party.
The dual of a type,
formally defined below,
is obtained by swapping the directions ($!$ or $?$) of each interaction:
\begin{rdefi}{defTypeDuality}[Type Duality]\label{def:type_duality}
  Given a type \S, we define its \emph{dual} type \D\ as follows:
  \[
    \begin{array}[c]{@{}c@{}}
      \begin{array}[c]{@{}c @{\qquad} c @{\qquad} c @{\qquad} c @{\qquad} c@{}}
        \begin{array}[c]{@{}c@{}}
          \Dual{\End}=\End
        \end{array}
         &
        \begin{array}[c]{@{}c@{}}
          \Dual{\alpha}=\alpha
        \end{array}
         &
        \begin{array}[c]{@{}c@{}}
          \Dual{\mu\alpha.\S}=\mu\alpha.\D
        \end{array}
         &
        \begin{array}[c]{@{}c@{}}
          \Dual{?}=!
        \end{array}
         &
        \begin{array}[c]{@{}c@{}}
          \Dual{!}=?
        \end{array}
      \end{array}
      \\[-1ex]\\
      \begin{array}[c]{@{}c@{}}
        \Dual{\Choice{\l}[\T].\S_i}=\DualChoice_i[i]
      \end{array}
    \end{array}
  \]
\end{rdefi}

%
%
\noindent 
Notice that the time constraints on actions are identical for a type and its dual. 
For example, $\Cfg{\Val_0},{\Option!{a}{x=3}.{\S'}} \mid \Cfg{\Val_0},{\Option?{a}{x=3}.{\D'}}$ is clearly a system with dual types and identical clock valuations.
It is crucial to remember that both the clock constraints and valuations of each configuration are on their own local set of clocks, and that `$x$' is not a clock shared by each. For this reason, in our examples we often use different clock names for dual types. 
E.g., $\Cfg{\Val_0},{\Option!{a}{x=3}.{\S'}} \mid \Cfg{\Val_0},{\Option?{a}{y=3}.{\D'}}$.
This is purely a matter of presentation. 
%
%
%
%

Later, in~\Cref{rem:toast:delegation} we shall discuss 
the reason for our
`simple' definition of duality
in regard to recursive types, as opposed to~\cite{Bono2012,Lindley2016}.

\NewDocumentCommand{\TypeWFEnd}{}{\WrapMathMode{%
    \infer[\Rule{end}]{
      \RecTypeEnv;~\True\Entails\End
    }{
    }
  }
}

\NewDocumentCommand{\TypeWFRec}{}{\WrapMathMode{%
    \infer[\Rule{rec}]{
      \RecTypeEnv;~\Constraints\Entails\mu\alpha.\S
    }{
      \RecTypeEnv,{\alpha:\Constraints};~\Constraints\Entails\S
    }
  }
}

\NewDocumentCommand{\TypeWFVar}{}{\WrapMathMode{%
    \infer[\Rule{var}]{
      \RecTypeEnv,{\alpha:\Constraints};~\Constraints\Entails\alpha
    }{
    }
  }
}

\NewDocumentCommand{\TypeWFChoice}{}{\WrapMathMode{%
\infer[\Rule{choice}]{
\RecTypeEnv;~\Past_iV_{i\in I}
\Entails
\Choice{\l}[\T]{\Constraints}[\Resets].{\S}_i
}{%
\begin{array}{@{}c @{\hspace{2ex}} l@{}}
  \forall i\in I:
  \RecTypeEnv;~\FutureEnv_i
  \Entails
  \S_i
  ~\land~
  \Constraints_i\Reset_i\models\FutureEnv_i
   & \text{(feasibility)}
  \\[\ArrayLTSPremiseLineSpacing]
  \forall i,j\in I:
  i \neq j
  \implies
  \Constraints_i\land\Constraints_j\models\False
  ~\vee~
  \Comm_i=\Comm_j
   & \text{(mixed-choice)}
  \\[\ArrayLTSPremiseLineSpacing]
  \forall i\in I:
  \T_i=\Del'
  \implies
  \emptyset;~\FutureEnv'\Entails\S'
  ~\land~
  \Constraints'\models\FutureEnv'
   & \text{(delegation)}
\end{array}
}
}
}

\subsubsection{Well-formedness}\label{sssec:wf}
%
We require a notion of \emph{well-formedness} so that we may later provide guarantees of progress for types with mixed-choice. 
%
%
The 
formation rules for types are given in~\Cref{fig:type_wf_rules}.
Again, since we extend~\cite{Bocchi2019} any rules that differ
are
clearly \tcbox[on line, size=title, boxsep=0mm, left=2pt, right=2pt, colback=\tcolorChanged, colframe=black!0]{highlighted}.
%
Types are evaluated against judgements of the form: $\RecTypeEnv;\Constraints\Entails\S$ where
\RecTypeEnv\ is an environment containing recursive variables 
and \Constraints\ is a constraint over all clocks characterising the times in which state \S\ can be  reached.
%
Formally $\RecTypeEnv::=\emptyset\mid\alpha:\Constraints,\RecTypeEnv$.
%
%
%
%
%
%
\begin{requation}{eqTypeWfRules}\label{fig:type_wf_rules}
  \begin{array}[c]{@{}c@{}}
    \begin{array}[c]{@{}c@{}}
      \tcboxmath[size=title,boxsep=0mm,colback=\tcolorChanged,colframe=black!0]{\begin{array}[c]{@{}c@{}}\TypeWFChoice\end{array}}
    \end{array}
    \\\\[-1ex]
    \begin{array}[c]{@{}c@{}}
      \begin{array}[c]{@{}c@{}}
        \TypeWFEnd
      \end{array}
      \qquad
      \begin{array}[c]{@{}c@{}}
        \TypeWFRec
      \end{array}
      \qquad
      \begin{array}[c]{@{}c@{}}
        \TypeWFVar
      \end{array}
    \end{array}
  \end{array}
\end{requation}

\noindent Rule \Rule{choice} checks well-formedness of choices with three conditions:
the first and third conditions are from the branching and delegation rules in~\cite{Bocchi2019}, respectively,
and 
the second condition is new and critical to ensure progress of mixed-choice,
%
By the first condition $\text{(feasibility)}$ a choice is well-formed with respect to the weakest past among all options ($\Past{\bigvee_{i\in I}\Constraints}_i$) given that each continuation \S_i\ is well-formed
  with respect to an environment $\gamma_i$, which models the corresponding guard updated with resets $\lambda_i$.
  By abuse of notation we write
  $\delta[\lambda\mapsto0]\models\gamma$
  to say $\forall \Val:(\Val\models\delta[\lambda\mapsto0]) \implies (\Val\models\gamma)$, where ($\implies$) is an implication.
This ensures that in every choice there is at least one viable action and, for example, would rule out~\Cref{exa:junk_types}. 
The second condition $\text{(mixed-choice)}$ requires all actions that can occur at the same time to have the same (send/receive) direction.
This condition allows for types modelling timeouts
(discussed later in~\Cref{exa:weak_persistency})
and rules out scenarios as the one in~\Cref{exa:mixed_bad}.
The third condition $\text{(delegation)}$ checks for well-formedness of each delegated session with respect to their corresponding initialisation constraint \Constraints'.
  Similar to the $\text{(feasibility)}$ premise, by abuse of notation we write $\delta'\models\gamma'$ to say $\forall\Val:(\Val\models\delta')\implies(\Val\models\gamma')$.
%
%
%
Rule \Rule{end} ensures termination types are \emph{always} well-formed.
Rule $\Rule{rec}$ associates variable $\alpha$ with
invariant $\delta$ in $A$.
Rule \Rule{var} ensures recursive calls are defined.

\begin{rdefi}{defWellformedConfigurations}[Well-formed Configurations]\label{def:cfg_wf}
  Given {\Cfg}, \S\ is \wf\ against \Val\ if,
  $\exists \Constraints$ such that $\emptyset;\Constraints\Entails\S$
  and $\Val\models\Constraints$.
  A type \S\ is \wf\ if it is \wf\ against \Val_0.
\end{rdefi}

\noindent The rules in~\Cref{fig:type_wf_rules} check that in every reachable state (which includes every possible clock valuation) it is possible to perform the next action immediately or at some point in the future, unless the state is final.
(This is formalised as the progress property in~\Cref{def:type_progress}.)
By these rules,
the type in~\Cref{exa:junk_types,exa:mixed_bad}
would not be well-formed.

\begin{rem}[Delegation of Higher-Order TOAST]\label{rem:toast:delegation}
    Recall~\Cref{eq:type_syntax} in~\Cref{ssec:type_syntax},
    where higher-order types for session delegation are defined as $(\delta, S)$.
    By condition $(\text{delegation})$ in the premise of rule $\Rule{choice}$
    in~\Cref{fig:type_wf_rules},
    any interaction with a delegation payload $T=(\delta',S')$,
    delegation type $S'$ must be \emph{well-formed}
    by $\emptyset;\gamma'\Entails\S'$ where environment $\gamma'$ corresponds to $\delta'$ and the environment for recursive variables
    is empty.
    In short, a
    \emph{well-formed}
    TOAST cannot
    delegate a recursive variable ($\alpha$).
    Furthermore,
    higher-order types
    must be
    \emph{flattened}
    (i.e., finitely represented) and
    TOAST is unable to express
    self-referential recursive higher-order types
    (e.g.,
    $S'=\mu\alpha.!l\left\langle(\delta',S')\right\rangle(\delta,\emptyset).\alpha$).
    This contrasts
    with~\cite{Bernardi2017,Bono2012},
    which allow
    recursive variables to be delegated
    directly,
    and~\cite{Lindley2016}
    which
    discusses self-referential recursive higher-order types.
    We feel this
    is
    orthogonal to the
    primary
    focus of this work.
  \end{rem}

\subsubsection{Progress of Types}\label{sssec:types:progress}
Well-formedness, together with the
persistency and receive urgency featured in rule \Rule{time} of the semantics in~\Cref{eq:type_semantics_triple}
ensures that the composition of a well-formed type \S\ with its dual \D\ enjoys progress.
A system enjoys progress if its configurations with queues can continue communicating until reaching the end of the protocol; formally:
\begin{rdefi}{defTypeProgress}[Type Progress]\label{def:type_progress}
  A configuration with queues \Cfg;{\Que}\ is \emph{final} if $\S\utu\End$ and $\Que=\emptyset$.
  A system $\Cfg*;{\Que}_1 \mid \Cfg*;{\Que}_2$ \emph{satisfies progress} if,
  for all $\Cfg*;{\Que}_1' \mid \Cfg*;{\Que}_2'$ reachable from $\Cfg*;{\Que}_1 \mid \Cfg*;{\Que}_2$:
  \begin{enumerate}
    \item\label[cond]{cond:type_progress_final} either $\Cfg;{\Que}_1'$ and $\Cfg;{\Que}_2'$ are \emph{final};
    \item\label[cond]{cond:type_progress_live} or, there exists a $\t\in\SetRealZ$ such that $\Cfg;{\Que}_1' \mid \Cfg;{\Que}_2' \Trans:{\t,\tau}$.
  \end{enumerate}
\end{rdefi}

\noindent We give the definition of ($\utu$) in~\Cref{def:up_to_unfolding}.
We write $\S\utu\End$ if \S\ is equivalent to \End, up-to the unfolding of recursive types. (e.g., $\mu\alpha_1\dots\mu\alpha_i.\End\utu\End$.)

\begin{restatable}[Progress of Systems]{thm}{thmTypeProgress}\label{thm:type_progress}
  If \S\ is \wf\ against \Val_0,
  then:
  \newline
  \hspace*{\fill} $\Cfg{\Val_0}; \mid \Cfg{\Val_0},{\D};$ \emph{satisfies progress}. \hspace*{\fill}
\end{restatable}

\noindent The main result of this section is that for a system composed of a \emph{well-formed} \S\ and its dual
$\Cfg,{\S};{\Que}_1 \mid \Cfg,{\D};{\Que}_2$, any state reached is either \emph{final}, or allows for further
communication;
i.e., the system \emph{satisfies progress}.
%
  Progress is critical to ensure, via type-checking, that a protocol implementation does not reach deadlock and is free from communication mismatches.
  (We introduce such a system for type-checking using TOAST
  in~\Cref{sec:type_checking}.)

The main differences from~\cite{Bocchi2019} are not in the formulation of the theory
(e.g., \Cref{def:type_progress} and the statement of~\Cref{thm:type_progress} are basically unchanged)
but in proving
rule \Rule{choice}
is sufficient to ensure progress of asynchronous mixed-choice.
%
%
%
  Additionally, the proof of progress
  in~\cite{Bocchi2019}
  relies on a notion of persistency to a participant does not reach a point where there are no viable actions, and does so by requiring time-steps do not pass beyond the participants latest-enabled receiving action.
  Due to mixed-choice, it is necessary to reformulate (and relax) this condition in the semantic rule
  \Rule{time} in~\Cref{eq:type_semantics_triple}
  to require only the latest-enabled action is not missed.
(See~\Cref{exa:weak_persistency} for a discussion).

\subsubsection{Compatibility}\label{sssec:compatibility}
The proof of~\Cref{thm:type_progress} is in~\Cref{sec:proof_type_progress},
and proceeds by showing that
a property of systems
called \emph{compatibility}
is preserved by transitions;
given formally in~\Cref{def:cfg_compat}.
%
%
%
%
  As typical for binary systems, our proof of progress builds upon a notion of \emph{duality} between participants.
  Since communication is asynchronous, each party may break duality with their co-party.
  Compatibility allows the duality of participants to be broken, only if, doing so does not violate \emph{communication safety}.
  More specifically, compatibility requires that any message that arrives in a message queue
  is expected and therefore, able to be received
  and that the resulting configurations are still \emph{compatible}.
  In practice,
  compatibility allows each party to behave
  independently, regardless the state of the other party, while retaining the essence of \emph{duality}, and guaranteeing that all messages will eventually be received.%

\begin{rdefi}{defCompatibleConfigurations}[Compatibility]\label{def:cfg_compat}
  We write $\Cfg;{\Que}_1 \Compat \Cfg;{\Que}_2$
  iff \Cfg;{\Que}_1\ and \Cfg;{\Que}_2\ are \emph{compatible}.
  Given \Cfg;{\Que}_1\ and \Cfg;{\Que}_2,
  we define \emph{compatibility} as the largest relation satisfying each of the following conditions:
  \begin{enumerate}
    \item\label[cond]{cond:cfg_compat_neq} \(
          (\Que_1 = \emptyset)
          \text{\ or\ }
          (\Que_2 = \emptyset)
          \)
    \item\label[cond]{cond:cfg_compat_recv} \(
          (\Que_1=\Msg;\Que_1')
          \implies
          (\Cfg_1 \Trans:{?\Msg} \Cfg_1'
          \text{\ and\ }
          \Cfg;{\Que}_1' \Compat \Cfg;{\Que}_2)
          \)
    \item\label[cond]{cond:cfg_compat_recv_other} \(
          (\Que_2=\Msg;\Que_2')
          \implies
          (\Cfg_2 \Trans:{?\Msg} \Cfg_2'
          \text{\ and\ }
          \Cfg;{\Que}_1 \Compat \Cfg;{\Que}_2')
          \)
    \item\label[cond]{cond:cfg_compat_dual} \(
          (\Que_1=\emptyset=\Que_2)
          \implies
          (\S_1=\D_2
          \text{\ and\ }
          \Val_1=\Val_2)
          \)
  \end{enumerate}
\end{rdefi}

\noindent Intuitively, $\Cfg*;{\Que}_1$ and $\Cfg*;{\Que}_2$ are compatible, written $\Cfg*;{\Que}_1 \Compat \Cfg*;{\Que}_2$,
if:
(\ref{cond:cfg_compat_neq}) at most one of their queues is non-empty (equivalent to a half duplex automaton),
(\ref{cond:cfg_compat_recv} -- \ref{cond:cfg_compat_recv_other}) a type is always able to process any message that arrives in its queue, and
(\ref{cond:cfg_compat_dual}) if both queues are empty then $\Cfg*;{\Que}_1$ and $\Cfg*;{\Que}_2$ have dual types and same clock valuations.

\begin{exa}[Weak Persistency]\label{exa:weak_persistency}
  In language-based approaches to timed semantics~\cite{Krcal2006:CTA} time actions are always possible, even if they bring the model into a stuck state by preventing available actions. Execution traces are then filtered
  as necessary,
  removing all `bad' traces (defined on the basis of final states).
  In contrast, and to facilitate the reasoning on process behaviour, we adopt a process-based approach, that only allows for actions that characterise \emph{intended} executions of the model 
  (as in~\cite{Bartoletti2018,Bocchi2019}).
  %
  Precisely, we build on the semantics of~\cite{Bartoletti2018} for asynchronous timed automata with mixed-choice,
  where time actions are possible only if they do not disable:
  \begin{enumerate*}[label={(\alph*)},itemjoin={,},itemjoin*={,\ and\ }]
    \item\label[cond]{cond:weak_persistency:le:send}the latest-enabled sending action
    \item\label[cond]{cond:weak_persistency:le:recv}the (latest-enabled) receiving action if the queue is not empty.
  \end{enumerate*}
  This ensures that time actions preserve the viability of at least one action (\emph{weak-persistency}).
  In our scenario, constraint~\ref{cond:weak_persistency:le:send} is too strict. Consider type $\S$ and its dual $\D$ below:
  \begin{equation*}
    \begin{array}[c]{@{}c p{0.5ex} c@{}}
      \begin{array}[c]{@{}l@{\ }c@{\ }l@{}}
        \S & = &
        \Choice|{
        \Option!{data}[\String]{x<3}.{\S'},
        \Option?{timeout}{x\geq4}.{\End}
        }
      \end{array}
       &  & %
      \begin{array}[c]{@{}l@{\ }c@{\ }l@{}}
        \D & = &
        \Choice|{
        \Option?{data}[\String]{y<3}.{\D'},
        \Option!{timeout}{y\geq4}.{\End}
        }
      \end{array}
    \end{array}
  \end{equation*}

  \noindent According to~\ref{cond:weak_persistency:le:send}, it would never be possible for \S\ to take the $timeout$ branch since a time action of $\t\geq 3$ would disable the latest-enabled send ($data$). 
  This is reasonable in a setting where the behaviour of such a mixed-choice has no guarantee of a timeout being received later, such as in~\cite{Bartoletti2018}.
  However, this is not the case in our setting as we can rely on duality 
  to guarantee that \D\ will send $timeout$ in all executions where $y\geq4$.
  %
  Our new \Rule{time} rule -- condition $\text{(persistency)}$ -- implements a more general constraint 
  than~\ref{cond:weak_persistency:le:send}, 
  requiring that one latest-enabled (send or receive) action is preserved.
  Constraint~\ref{cond:weak_persistency:le:recv} remains as condition $\text{(urgency)}$ and, for instance, prevents \D\ from sending $timeout$ if 
  $data$
  is waiting in the queue when $y<3$.
  %
  For example,
  our semantics permit the following sequence of transitions:
  \[
    \Cfg{\Val_0},{\S};{\emptyset}\Trans:{\t=2}
    \Cfg{\Val [x\mapsto 2]},{\S};{\emptyset}\Trans:{!data\langle\String\rangle}
    \Cfg{\Val [x\mapsto 2]},{\S'};{\emptyset}
  \]

  \noindent where $data$ is sent after 2 time units.
  Alternatively,
  by our semantics, the following holds:
  \[
    \Cfg{\Val_0},{\S};{\emptyset}\Trans:{\t=4}
    \Cfg{\Val [x\mapsto 4]},{\S};{\emptyset}\Trans:{?timeout}
    \Cfg{\Val [x\mapsto 4]},{\S};{timeout;\emptyset}\Trans|:{\t'}
    \qquad
    (\t'\in\SetRealZ)
  \]

  \noindent where condition $\text{(urgency)}$ ensures no time can pass
  since a message currently in the queue that can be received via rule \Rule{recv}:
  $\Cfg{\Val [x\mapsto 4]},{\S};{timeout;\emptyset}\Trans:{\tau}\Cfg{\Val [x\mapsto 4]},{\End};{\emptyset}$.%
  %
\end{exa}

\UnloadSType*

\clearpage

\LoadPrcCalc*
\section{A Calculus for Processes with Timeouts}\label{sec:toast_processes}
We present a new calculus for timed processes which extends existing
timed session calculi~\cite{Bocchi2015,Bocchi2019} with
timeouts and time-sensitive conditional statements.
Timeouts are defined on receive actions and may be immediately followed by sending actions, hence providing an instance of mixed-choice -- which is normally not supported.
Time-sensitive conditional statements
(i.e., \IFthenelse\ with conditions on process timers)
provide a natural counter-part to the timeout construct and enhance the expressiveness of the typing system in~\cite{Bocchi2019}.
To better align processes with TOAST, send and select actions have been streamlined by each message consisting of both a label \l\ and some message
variable $\v$,
which is either data or a delegated session; the same holds for receive/branch actions.

Processes are defined by the grammar given in~\Cref{eq:prc_syntax}.
Participants are the \emph{endpoints} of session and are denoted by $p$ and $q$.
Within a (binary) session, endpoints $p$ and $q$ communicate over channels $pq$ and $qp$, where $p$ sends to $q$ over $pq$, and $q$ sends to $p$ over $qp$.
%
%
We
  also
  use
  $a$ and $b$
  as ad-hoc roles,
  with
channels
$ab$ and $ba$,
  when discussing multiple binary sessions.
\begin{requation}{eqProcessSyntax}\label{eq:prc_syntax}
  \begin{array}[c]{@{}l@{\ }c@{\ }l @{\qquad} l@{\ }c@{\ }l @{\quad} l @{}}
    %
    \P,\Q & ::=
          &
    \Set{x}.\P
          &
          & \mid
          & \Scope{\p\q}{\P}
    \\[-2.5ex]\\%
          & \mid
          & \On{\p}\Send{\l}[\v].{\P}
          &
          & \mid
          & \P\mid\Q
    \\[-2.0ex]\\%
          & \mid
          & \On{\p}\Recv^{\e}{\l}[\v]:{\P}_i
          &
          & \mid
          & \Term
    \\[-2.5ex]\\%
          & \mid
          & \On{\p}\Recv^{\Diam\n}{\l}[\v]:{\P}_i~\After:{\Q}
          &
          & \mid
          & \p\q:\h
    \\[-2.5ex]\\%
          & \mid
          &
    \If*{\Cond}~\Then{\P}~\Else{\Q}
          & \e
          & ::=
          & \Diam\n \mid \infty
    & (\n\in \SetRealZ)
    \\[-2.5ex]\\%
          & \mid
          & \Delay{\Duration}.\P
          & \Diam
          & ::=
          & {\hphantom{\Diam\n}\mathllap{<}} \mid ~\leq
    \\[-2.5ex]\\%
          & \mid
          & \Delay{\t}.\P
          & \Cond
          & ::=
          & {\circled{x}\Diam n} \mid {n \Diam \circled{x}} \mid {\circled{x} = n}
    \\[-2.5ex]\\%
          & \mid
          & \Def{\Rec*{\vec{v};\vec{r}}=\P}:{\Q}
          & \Duration
          & ::=
          & {\hphantom{\circled{x}}\mathllap{t}\Diam n} \mid {n \Diam \mathrlap{t}\hphantom{\circled{x}}} \mid {\mathtt{true}}
          & (\t\in \SetRealZ)
    \\[-2.5ex]\\%
          & \mid
          & \Rec
          & \h
          & ::=
          & {\hphantom{\Diam\n}\mathllap{\emptyset}} \mid \h\cdot \l\v
  \end{array}
\end{requation}

\noindent Processes are equipped with  timers that can be set and reset during the process execution.
Let $\SetOfTimers$
be the set of timers
denoted by $\circled{x}$, $\circled{y}$ and $\circled{z}$.
While clocks are used in types to express the time constraints of a protocol (i.e., type), timers are used in processes to acquire awareness of relative time-passing during execution (as timer
concurrency
primitives in languages like Go and Erlang).
Process $\Set{x}.{\P}$ creates a
process timer
$\circled{x}$,
initialises it to $0$ and continues as $\P$.
If a process timer
named
$\circled{x}$ already exists it is reset to $0$.
Note, there is no correspondence between the clocks used in the constraints of types and process timers, regardless if they have matching names/labels.
%

Process \On{\p}\Send{\l}[\v].{\P}\ is the select/send process: it selects label \l\ and sends payload \v\ to endpoint \pq, and continues as \P.
Its counterpart is the branch/receive process \On{\p}\Recv^{\e}{\l}[\v]:{\P}_{i}.
It receives one of the labels \l_i, instantiates \v_i\ with the received payload, and continues as \P_i.
Parameter \e\ is a deadline for the receive action.
%
It can be either $\Diam\n$ or $\infty$.
If \e\ is $\infty$, then the receive action will block until a message is received, waiting potentially forever.
If \e\ is $\Diam\n$, then \n\ is the upper-bound of the receive action and $\Diam$ specifies whether the wait duration is exclusive $(<)$ or inclusive $(\leq)$.
%
  For example, setting $\e$ to $(\leq 3)$ specifies waiting \emph{up-to and including} 3 time units,
  while $(<3)$ specifies waiting \emph{strictly less than} 3 time units.
  Setting $\e$ to $(\leq0)$ models a non-blocking receive action on branches that expect messages to be immediately available to receive.
%
%
%
%
%
  %
    Process $\On{\p}\Recv^{\Diam\n}{\l}[\v]:{\P}_i~\After:{\Q}$
    is the timeout process, an extended version of the branch/receive process, where $\Diam\n$ specifies the receive deadline, after which the timeout process $\Q$ gets triggered.
    We specify timeouts to be $(\Diam\n)$ to tactically ensure timeouts cannot be $\infty$, which would make $\Q$ dead code and would be equivalent to a branch/receive action without a timeout.
%
For simplicity, 
in branch/receive and timeout processes:
(i) we omit the brackets in the case of a single option (i.e., $\left\lvert I \right\rvert = 1$)
and (ii) for options with no payloads we omit $(\v)$.

    Process $\If*{\Cond}~\Then{\P}~\Else{\Q}$ is a conditional statement.
    The condition $\Cond$ is a constraint on process timers, and is notably simpler than those constraints found
    in~\Cref{eq:delta_constraints} for types.
    Process $\Delay{\Duration}.{\P}$ models time passing
    where $\Duration$ models a non-deterministic delay.
    At runtime
    process $\Delay{\Duration}.{\P}$
    is reduced to process $\Delay{\t'}.{\P}$,
    for some
    $\t'\models\Duration\Subst{\t'}{\t}$,
    where the non-deterministic duration $\Duration$ is resolved into a single duration $\t'$.
%

Recursive processes are defined by a process variable \Var\ and
parameters $\vec{v}$ and $\vec{r}$, containing \emph{base type} values and session channels, respectively.
As standard~\cite{Bettini2008,Bocchi2014,Honda2008,Vasconcelos2012,Yoshida2007} the process calculus allows parallel processes $\P\mid\Q$ and scoped processes \Scope{\pq}{\P}\ between endpoints \p\ and \q.
The end process is \Term.
Endpoints communicate over pairs of channels \pq\ and \qp, each with their own unbounded FIFO buffers \h.

\smallskip

\subsection{Well-formed Processes}\label{ssec:wf_processes}
We assume that sessions within a process have already been instantiated, meaning that rather than relying on reduction rules to produce correct session instantiation, we rely on a syntactic well-formedness assumption, as in~\cite{Bocchi2019}.
%
%
%
%
A \emph{well-formed process} $P$
consists of sessions of the form \Scope{\pq}{\bigl(\P'\mid\Q\mid{\qp:\h}\mid{\pq:\h'}\bigr)}\ and can be checked syntactically
by function $\WF$
given in~\Cref{def:func_wf}.
%
%
%
%
%
%
%
\begin{rdefi}{defWfProcesses}[Well-formed Process]\label{def:func_wf}
  The function \WF{\P}\ is defined inductively as:
  \small\[
    \begin{array}[t]{@{}l@{}}
      \WF{\P}
      = \begin{cases}
          \True                                                               &
          \text{if\ }\P\in\begin{array}[c]\{{@{}l@{}}\}
                          \Term,
                          ~\Rec,
                          ~{\qp:\h}
                        \end{array}
          \\[\ArrayExtraTallLineSpacing]
          \WF{\P'} \land \left(\FQ{\P'}=\{\pq,\qp\}\right)                    &
          \text{if\ }\P=\Scope{\pq}{\P'}
          \\[\ArrayTallLineSpacing]
          \WF{\P'} \land \left(\FQ{\P'}=\emptyset\right)                      &
          \text{if\ }\P\in\begin{array}[c]\{{@{}l@{}}\}
                          \On{\p}\Send{\l}[\v].{\P'},
                          ~\Set{x}.\P',
                          \\
                          \Delay{\Duration}.\P',
                          ~\Delay{\t}.\P'
                        \end{array}
          \\[\ArrayExtraTallLineSpacing]
          \WF{\P'} \land \WF{\Q}                                              &
          \begin{array}[t]{@{}l@{}}
          \text{if\ }\P\in\begin{array}[c]\{{@{}l@{}}\}
                            ~\If*{\Cond}~\Then{\P'}~\Else{\Q},
                            \\
                            ~\Def{\Rec*{\vec{v};\vec{r}}=\P'}:{\Q}
                          \end{array}
        \end{array}
          \\[\ArrayExtraTallLineSpacing]
          \WF{\P'} \land \WF{\Q}
          \land \left(\FT{\P'} \cap \FT{\Q}=\emptyset\right)                  &
          \begin{array}[t]{@{}l@{}}
          \text{if\ }\P={\P'\mid\Q}
        \end{array}
          \\[\ArrayExtraTallLineSpacing]
          \bigwedge_{i\in I} \WF{\P_i} \land \left(\FQ{\P_i}=\emptyset\right) &
          \text{if\ }\P =
          \On{\p}\Recv^{\e}{\l}[\v]:{\P}_i
          \\[\ArrayExtraTallLineSpacing]
          \WF{\On{\p}\Recv^{\Diam\n}{\l}[\v]:{\P}_i} \land \WF{\Q}            &
          \text{if\ }\P=\On{\p}\Recv^{\Diam\n}{\l}[\v]:{\P}_i~\After:{\Q}
        \end{cases}
    \end{array}
  \]
\end{rdefi}

\noindent
  where
  function $\FQ$ returns the set of queues present in process $P$ that are \emph{not} contained within the respective scope of their corresponding session,
  and function $\FT$ returns the set of process timers used in process $P$.
  (We relegate the definitions of \FQ\
  and $\FT$
  to the appendix,
  see~\Cref{def:func_fq,def:func_ft}.)
  By~\Cref{def:func_wf},
  a \emph{well-formed} process must:
  (a) not contain any \emph{free queues} other than those belonging to ongoing (binary) sessions contained within \emph{scoped processes} $\Scope{\pq}{\P}$, which should each contain the two queues necessary for participants to exchange messages,
  and (b) not contain any parallel processes that make use of the \emph{same} process timer, i.e., process timers must only be used by a single process.

Hereafter, we assume that a process is \wf, unless stated otherwise.

\smallskip

\subsection{Process Reduction}\label{ssec:process_reduction}

The semantics of processes are given in~\Cref{fig:prc_reduction}, as a reduction relation on pairs of the form \PCfg\ i.e., a processes $P$ with an timer environment $\Timers$ that maps timers to their values in the current state (formally given
below
in~\Cref{def:process_timer_environment}). 

\clearpage

\begin{rdefi}{defTimerEnvironment}[Timer Environment]\label{def:process_timer_environment}
  Recall $\SetOfTimers$ is the
  set of timers,
  ranged over by $\circled{x}$, $\circled{y}$ and $\circled{z}$.
  A \emph{timer environment} $\Timers$ is a linear map $\Timers : \SetOfTimers\mapsto\SetRealZ$,
  from timers to
  valuations, defined:
  $
    \Timers ::= \emptyset \mid {\Timers,~{\circled{x}:n}}%
  $
  where $n\in\SetRealZ$.
  We define
  $\Timers+\t = \{
    \circled{x}\mapsto \Timers(\circled{x}) + \t
    \mid
    \circled{x}\in\SetOfTimers
    \}$
  so that time may pass over $\Timers$,
  and $\Timers[\circled{x}]$ to
  be the map $\Timers[\circled{x}](\circled{y})=\text{\ if\ } (\circled{x}=\circled{y})\text{\ }0\text{\ else\ } \Timers(\circled{y})$.
  We write $\Dom{\Timers}$ for the domain of $\Timers$,
  and $\Timers_1,\Timers_2$ for $\Timers_1\cup\Timers_2$ when $\Dom{\Timers_1}\cap\Dom{\Timers_2}=\emptyset$.
  We write
  $\Timers\models\Cond$
  to denote that
  $\Timers$ satisfies $\Cond$.
  Defined formally,
  $(\Timers\models\Cond)=(\forall\circled{x}\in\FN{\Cond}.\Cond\Subst{\Timers(\circled{x})}{\circled{x}})$.
  %
  %
\end{rdefi}


\NewDocumentCommand{\ProcessReductionReset}{s}{\WrapMathMode{%
        \PCfg[\Timers,\circled{x}:n],{\Set{x}.\P}%
        \!\Reduce-\!%
        \PCfg[\Timers,\circled{x}:0],{\P}%
        \IfBooleanTF{#1}{\hspace{0.75ex}\Rule{Reset}}{\smash{\mathrlap{\hspace{0.75ex}\Rule{Reset}}}\hspace{1.5ex}}
    }%
}

\NewDocumentCommand{\ProcessReductionIfT}{s}{\WrapMathMode{%
        \infer[\IfBooleanTF{#1}{\Rule{If}[T]*}{\smash{\mathrlap{\Rule{If}[T]*}}\hspace{4ex}}]{
            \PCfg,{\If*{\Cond}~\Then{\P}~\Else{\Q}}%
            \Reduce-%
            \PCfg,{\P}%
        }{
            \Timers\models\Cond%
        }%
    }%
}

\NewDocumentCommand{\ProcessReductionIfF}{s}{\WrapMathMode{%
        \infer[\IfBooleanTF{#1}{\Rule{If}[F]*}{\smash{\mathrlap{\Rule{If}[F]*}}\hspace{4ex}}]{
            \PCfg,{\If*{\Cond}~\Then{\P}~\Else{\Q}}%
            \Reduce-%
            \PCfg,{\P}%
        }{
            \Timers\not\models\Cond%
        }%
    }%
}

\NewDocumentCommand{\ProcessReductionRecv}{}{\WrapMathMode{%
        \infer[\Rule{Recv}]{
            \PCfg,{{\On{\p}\Recv^{\e}{\l}[\v]:{\P}_i}\mid{\qp:\lv\cdot\h}}%
            \Reduce-%
            \PCfg,{{\P_j\Subst{\v}{\v_j}}\mid{\qp:\h}}%
        }{
            j\in I%
            \quad%
            \l=\l_j%
        }%
    }%
}

\NewDocumentCommand{\ProcessReductionRecvT}{}{\WrapMathMode{%
        \infer[\Rule{Recv}[T]*]{
            \PCfg,{{\On{\p}\Recv^{\Diam\n}{\l}[\v]:{\P}_i~\After:{\Q}}\mid{\qp:\lv\cdot\h}}%
            \Reduce-%
            \PCfg,{{\P_j\Subst{\v}{\v_j}}\mid{\qp:\h}}%
        }{
            j\in I%
            \quad%
            \l=\l_j%
        }%
    }%
}

\NewDocumentCommand{\ProcessReductionSend}{}{\WrapMathMode{%
\PCfg,{{\On{\p}\Send{\l}[\v].{\P}}\mid{\pq:\h}}%
\Reduce-%
\PCfg,{{\P}\mid{\pq:\h\cdot\lv}}%
\quad \Rule{Send}%
}%
}

\NewDocumentCommand{\ProcessReductionDet}{}{\WrapMathMode{%
        \infer[\Rule{Det}]{
            \PCfg,{\Delay{\Duration}.\P}%
            \Reduce-%
            \PCfg,{\Delay{\t'}.\P}%
        }{
            \t'\models\Duration
            \Subst{\t'}{\t}%
            %
            %
        }%
    }%
}

\NewDocumentCommand{\ProcessReductionDelay}{s}{\WrapMathMode{%
        \PCfg%
        \Reduce~%
        \PCfg[\Timers+\t],{\Time{\P}}%
        \IfBooleanTF{#1}{\hspace{0.75ex} \Rule{Delay}}{\smash{\mathrlap{\hspace{0.75ex} \Rule{Delay}}}\hspace{4ex}}%
    }%
}

\NewDocumentCommand{\ProcessReductionScope}{s}{\WrapMathMode{%
        \infer[\IfBooleanTF{#1}{\Rule{Scope}}{\smash{\mathrlap{\Rule{Scope}}}
        }]{
            \PCfg,{\Scope{\pq}{\P\IfBooleanT{#1}{''}}}%
            \Reduce-%
            \PCfg',{\Scope{\pq}{\P\IfBooleanTF{#1}{'''}{'}}}%
        }{
            \PCfg,{\P\IfBooleanT{#1}{''}}%
            \Reduce-%
            \PCfg',{\P\IfBooleanTF{#1}{'''}{'}}%
        }%
    }%
}

\NewDocumentCommand{\ProcessReductionParL}{s}{\WrapMathMode{%
        \infer[\IfBooleanTF{#1}{\Rule{Par}[L]*}{\smash{\mathrlap{\Rule{Par}[L]*}}\hspace{5ex}}]{
            \PCfg[\Timers_1,\Timers_2],{\P\IfBooleanT{#1}{''}\mid\Q}%
            \Reduce-%
            \PCfg[\Timers_1',\Timers_2],{\P\IfBooleanTF{#1}{'''}{'}\mid\Q}%
        }{
            \PCfg[\Timers_1],{\P\IfBooleanT{#1}{''}}%
            \Reduce-%
            \PCfg[\Timers_1'],{\P\IfBooleanTF{#1}{'''}{'}}%
        }%
    }%
}

\NewDocumentCommand{\ProcessReductionParR}{s}{\WrapMathMode{%
        \infer[\IfBooleanTF{#1}{\Rule{Par}[R]*}{\smash{\mathrlap{\Rule{Par}[R]*}}
        }]{
            \PCfg[\Timers_1,\Timers_2],{\P\mid\Q\IfBooleanT{#1}{''}}%
            \Reduce-%
            \PCfg[\Timers_1,\Timers_2'],{\P\mid\Q\IfBooleanTF{#1}{'''}{'}}%
        }{
            \PCfg[\Timers_2],{\Q\IfBooleanT{#1}{''}}%
            \Reduce-%
            \PCfg[\Timers_2'],{\Q\IfBooleanTF{#1}{'''}{'}}%
        }%
    }%
}

\NewDocumentCommand{\ProcessReductionDef}{s}{\WrapMathMode{%
        \infer[\Rule{Def}]{
            \PCfg,{\Def{\Rec*{\vec{v};\vec{r}}=\P}:{\Q}}%
            \Reduce-%
            \PCfg',{\Def{\Rec*{\vec{v};\vec{r}}=\P}:{\Q'}}%
        }{
            \PCfg,{\Q}%
            \Reduce-%
            \PCfg',{\Q}'%
        }%
    }%
}

\NewDocumentCommand{\ProcessReductionCall}{s}{\WrapMathMode{%
        \hspace{-2ex}%
            \PCfg,{\Def{\Rec*{\vec{v'};\vec{r'}}=\P\IfBooleanT{#1}{''}}:{\Rec\mid\Q}}%
            \Reduce-%
                \PCfg,{\Def{\Rec*{\vec{v'};\vec{r'}}=\P\IfBooleanT{#1}{''}}:{{\P\IfBooleanT{#1}{''}\Subst{\vec{v};\vec{r}}{\vec{v'};\vec{r'}}}\mid\Q}}%
        \IfBooleanTF{#1}{\quad \Rule{Call}}{\smash{\mathrlap{\ \ \Rule{Call}}}}%
    }%
}

\NewDocumentCommand{\ProcessReductionStr}{s}{\WrapMathMode{%
        \infer[\IfBooleanTF{#1}{\Rule{Str}}{\smash{\mathrlap{\Rule{Str}}}\hspace{4ex}}]{
            \PCfg\Reduce\PCfg[\Timers'],{\Q}%
        }{
            \P\equiv\P'%
            \quad%
            \PCfg,'\Reduce\PCfg',{\Q}'%
            \quad%
            \Q\equiv\Q'%
        }%
    }%
}

\begin{rfigure}{figProcessReduction}[t]
    \begin{requation}{eqReduceStandard}\label{eq:prc_reduction_standard}
            %
            \begin{array}[c]{@{}l@{}}
                \begin{array}[c]{@{}c @{\hspace{3.25ex}} c@{}}
                    \begin{array}[c]{@{}c@{}}{\ProcessReductionStr}\end{array}
                    & %
                    \begin{array}[c]{@{}c@{}}{\ProcessReductionScope}\end{array}
                \end{array}
                \\[-0.5ex]\\%
                \begin{array}[c]{@{}c @{\hspace{3.75ex}} c@{}}
                    \begin{array}[c]{@{}c@{}}{\ProcessReductionParL}\end{array}
                    & %
                    \begin{array}[c]{@{}c@{}}{\ProcessReductionParR}\end{array}
                \end{array}
            \end{array}
            %
    \end{requation}
    \FigHRule%
    \begin{requation}{eqReduceComm}\label{eq:prc_reduction_communication}
            %
            \begin{array}[c]{@{}c@{}}
                \begin{array}[c]{@{}c@{}}{\ProcessReductionSend}\end{array}
                \\[-0.5ex]\\%
                \begin{array}[c]{@{}c@{}}{\ProcessReductionRecv}\end{array}
                \\[-0.5ex]\\%
                \begin{array}[c]{@{}c@{}}{\ProcessReductionRecvT}\end{array}
            \end{array}
            %
    \end{requation}
    \FigHRule%
    \begin{requation}{eqReduceRecursion}\label{eq:prc_reduction_recursion}
            %
            \begin{array}[c]{@{}c@{}}
                \begin{array}[c]{@{}c@{}}{\ProcessReductionDef}\end{array}
                \\[-0.5ex]\\%
                \begin{array}[c]{@{}c@{}}{\ProcessReductionCall}\end{array}
            \end{array}
            %
    \end{requation}
    \FigHRule%
    \begin{requation}{eqReduceTime}\label{eq:prc_reduction_time}
            %
        \begin{array}[c]{@{}c@{}}
            \begin{array}[c]{@{}l@{}}
                \begin{array}[c]{@{}c @{\hspace{1.75ex}} c@{}}
                    \begin{array}[c]{@{}c@{}}{\ProcessReductionIfT*}\end{array}
                    & %
                \begin{array}[c]{@{}c@{}}{\ProcessReductionReset}\end{array}
                \end{array}
                \\[-0.5ex]\\%
                \begin{array}[c]{@{}c @{\hspace{3.25ex}} c@{}}
                \begin{array}[c]{@{}c@{}}{\ProcessReductionIfF*}\end{array}
                & %
                \begin{array}[c]{@{}c@{}}{\ProcessReductionDelay}\end{array}
                \end{array}
            \end{array}
                \\[-0.5ex]\\%
                    \begin{array}[c]{@{}c@{}}{\ProcessReductionDet}\end{array}
        \end{array}
            %
    \end{requation}
    \caption{Rules of process reduction}\label{fig:prc_reduction}
\end{rfigure}

\subsubsection{Reduction Rules}\label{sssec:process_reduction_rules}

The reduction relation is defined on two kinds of reduction: instantaneous communication actions (\Reduce-), and time-consuming actions (\Reduce~). We write \Reduce\ to denote a reduction that is either by (\Reduce-) or (\Reduce~).

Rules \Rule{Str}, \Rule{Scope}, \Rule{Par}[L]* and \Rule{Par}[R]* are as standard~\citeST.
The rule for structural congruence \Rule{Str} is the only rule that applies to both instantaneous and time-consuming actions.
Structural equivalence follows as standard~\citeStructCongruence,
with additions from~\cite{Bocchi2014,Bocchi2019}.
  Formally given below:
  \begin{rdefi}{defStructuralCongruence}[Structural Congruence]\label{def:structural_congruence}%
    {\small
      We define
      $\P\equiv\Q$
      as
      the smallest relation on processes:
    }
    {\footnotesize\[
        \begin{array}[c]{@{}l@{}}
          {
          {\P\mid\Term}
          \equiv
          {\P}
          }
          \qquad
          {
            {(P\mid Q)}
            \equiv
            {(Q \mid P)}
          }
          \qquad
          {
            {(P_1\mid P_2)\mid Q}
            \equiv
            {(P_1 \mid Q)\mid P_2}
          }
          \qquad
          {
            {P\mid Q}
            \equiv
            {Q \mid P}
          }
          \qquad
          {
            {\Scope{\pq}{\Term}}
            \equiv
            {\Term}
          }
          \\[-1.5ex]\\
          {
          {\Delay{0}.\P}
          \equiv
          {\P}
          }
          \qquad
          {
            {\Scope{\pq}{\Scope{ab}{\P}}}
            \equiv
            {\Scope{ab}{\Scope{\pq}{\P}}}
          }
          \qquad
          {
            {\Scope{\pq}{\P\mid\Q}}
            \equiv
            {\Scope{\pq}{(\P\mid\Q)}}
            \quad
            \text{if\ }\pq\not\in\FN{\Q}
          }
          \\[-1.5ex]\\
          {
          {\Scope{\pq}{\P}}
          \equiv
          {\Scope{\qp}{\P}}
          }
          \qquad
          {
          {(\Def{\Rec*{\vec{v};\vec{r}}=\P}:{\P'})\mid\Q}
          \equiv
          {\Def{\Rec*{\vec{v};\vec{r}}=\P}:{(\P'\mid\Q)}}
          \quad
          \text{if\ }
          \Rec{\vec{v'};\vec{r'}}\not\in\FPV{\Q}
          }
        \end{array}
      \]}

    \noindent where
    $\FN{\Q}$ and $\FPV{\Q}$ return the set of free names and process variables in $\Q$, respectively.
  \end{rdefi}

\subsubsection{\texorpdfstring{Communication Rules, \Cref{eq:prc_reduction_communication}}{Communication Rules}}
Rules \Rule{Send} and \Rule{Recv} are standard~\cite{Milner1999}.
  By rule $\Rule{Send}$ a process inserts a message $\lv$ into the queue of the other party,
  while by rule $\Rule{Recv}$ a process may remove an expected message from their queue and continue on the corresponding branch.
  Rule $\Rule{Recv}[T]*$ is similar to rule $\Rule{Recv}$ except, as we will discuss shortly
  in~\Cref{ssec:process_time_passing},
  it allows the timeout branch $Q$ to be taken if the process does not receive a message within duration $\e$.
%
(We include rule \Rule{Recv}[T]* for consistency, but hereafter only refer to rule \Rule{Recv}.)

\subsubsection{\texorpdfstring{Recursion Rules, \Cref{eq:prc_reduction_recursion}}{Recursion Rules}}
Both rules \Rule{Def} and \Rule{Call} are unchanged from~\cite{Bocchi2019}.
%
  Rule $\Rule{Def}$ defines a recursive process $P$
  and allows the recursive body $Q$ to reduce.
  Rule $\Rule{Call}$ handles recursive calls of predefined recursive processes and yields a process configuration consisting of the next iteration of the recursive process using the parameters provided by the process variable $X$; 
  the parallel process $Q$ is unaffected.
%
%
%

\subsubsection{\texorpdfstring{Time-sensitive Rules, \Cref{eq:prc_reduction_time}}{Time-sensitive Rules}}
  Rule $\Rule{Reset}$ resets a process timer $\circled{x}$ in the timer environment $\Timers$ to 0.
%
%
%
%
Rule \Rule{If}[T]* selects branch \P\ if time-sensitive condition $\Cond$ holds for \Timers.
Rule \Rule{If}[F]* is symmetric, selecting \Q\ if $\Cond$ does not hold for \Timers.
%
%
%
  Rule $\Rule{Det}$
  determines the exact duration $\t'$ of a non-deterministic runtime delay modelled by $\Duration$.
%
%
Rule \Rule{Delay} handles delay processes, so long as they are defined in \Time\ which, 
in short, returns process \P\ after \t\ units of time have elapsed. 
(We formally define \Time\ below in~\Cref{def:time_passing_func_ef}.)
Additionally, by rule \Rule{Delay} any delay on a process also causes the timers within \Timers\ to elapse accordingly.

%

\UnloadPrcCalc*

\LoadTypeChecking

\subsection{Time Passing}\label{ssec:process_time_passing}
The definition of \Time\ is given in~\Cref{def:time_passing_func_ef}.
The first two cases in~\Cref{def:time_passing_func_ef} model the effect of time passing on branching and timeout processes.
The third case is for time-consuming processes.
The fourth case distributes time passing in parallel compositions and ensures that time passes for all parts of the system equally.
The remaining cases define the processes where time is allowed to pass.
\begin{rdefi}{defTimePassingFunction}[Time Passing Function]\label{def:time_passing_func_ef}
  The time-passing function \Time\ is a partial function only defined for the cases below:
  \footnotesize\[
    \begin{array}[t]{@{}l@{}}
      \Time{\On{\p}\Recv^{\Diam\n}{\l}[\v]:{\P}_i~\After:{\Q}} = %
      \begin{cases}
        %
        \On{\p}\Recv^{\Diam\n-\t}{\l}[\v]:{\P}_i~\After:{\Q}%
         &
        \text{if\ }
        ({\t\Diam\n})
        \\[0.7ex]%
        \Time{\Q}_{\t-\n}%
         & %
        \text{otherwise}
      \end{cases}
      \\[0.55cm]%
      \Time{\On{\p}\Recv^{\e}{\l}[\v]:{\P}_i}
      =
      \begin{cases}
        \On{\p}\Recv^{\e}{\l}[\v]:{\P}_i
         & %
        \text{if\ }
        { \e = \infty}%
        \\[1.2ex]%
        \On{\p}\Recv^{\n-\t}{\l}[\v]:{\P}_i
         &
        \text{if\ }
        { \e = \Diam \n}%
        \text{\ and\ }
        (\t\Diam\n)
      \end{cases}
      \\[0.55cm]%
      \Time{\Delay{\t'}.\P} = %
      \begin{cases}
        \Delay{\t'-\t}.\P%
         & %
        \text{if\ }\t'\geq\t%
        \\[0.ex]%
        \Time{\P}_{\t-\t'}%
         & %
        \text{otherwise}
      \end{cases}
      \\[0.55cm]%
      \Time{\p\q:\h} = \p\q:\h%
      \qquad%
      \Time{\P_1\mid\P_2} = %
      \Time{\P_1} \mid \Time{\P_2}%
      \quad%
      \text{if\ }(\Wait{\P_i}\cap\NEQ{P_j}=\emptyset)%
      \text{\ and\ }
      i\neq j\in\left\{1,2\right\}
      \\[0.3cm]%
      \Time{\Term} = \Term%
      \qquad%
      \Time{\Scope{\p\q}{\P}} = \Scope{\p\q}{\Time{\P}}%
      \qquad
      \Time{\Def{\Rec*{\vec{v};\vec{r}}=\P}:{\Q}} = %
      \Def{\Rec*{\vec{v};\vec{r}}=\P}:{\Time{\Q}}%
    \end{array}
  \]
\end{rdefi}

\noindent The auxiliary functions \Wait\ and \NEQ\ are given in~\Cref{fig:func_wait_neq}.
Together, they indicate which role (if any) is able to receive,
and are used in~\Cref{def:time_passing_func_ef} to ensure
\emph{receive urgency},
  by requiring that there are no processes that are both waiting to receive and have a non-empty queue.
Informally, \Wait\ returns the set of channels on which \P\ is waiting to receive a message,
and \NEQ\ returns the set of endpoints with a non-empty inbound queue.

\begin{exa}[Time Passing Function]\label{exa:time_passing_function}
In~\Cref{def:time_passing_func_ef}
we define $\Time$ as a \emph{partial function}
in order to:
(a) ensure that certain processes are always non-blocking and can never be arbitrarily delayed (e.g., send processes)
and (b) make it easier to differentiate between a process that can not be delayed, and a process that can be delayed for 0.
For example, consider the process:
$P = \mathtt{delay}(5).P' \mid \On{\p}\Send{data}.{P''}$.
Under the current definition of $\Time$, it is clear that the $data$ must be sent prior to any delay occurring, and in this instance, the definition of $\Time$
in~\Cref{def:time_passing_func_ef}
enforces that certain processes, such as send processes, should never be delayed under any circumstances.
(The only exception being a formally defined delay process prefixed to a send process.)
However,
if $\Time$ were to be defined for all cases of $P$
including send processes,
the reduction of our example process $P$ becomes unclear and may be indefinitely reduced via rule $\Rule{Delay}$ for a delay of 0.
Furthermore, by the current definition of
structural congruence in~\Cref{def:structural_congruence},
sending $data$ may be skipped altogether.
Therefore, we tactically define $\Time$ to be only a partial function in order to distinctly group delayable and non-delayable processes.
%
%
%
  %
  Consider the process below:
  \[
    \P = \Scope{\pq}{\bigl(%
      \On{\p}\Recv^{\Diam\n}{\l}[\v]:{\P}_i~\After:{\Q}%
      \mid \qp:\emptyset%
      \mid \Q'\bigr)%
    }%
  \]

  \noindent For a time-consuming action of \t\ to occur on \P\ it is required that \Time\ is defined for all parallel components in \P.
  Suppose $\t = \n+1$ so that we can observe the expiring of the timeout. The evaluation of \Time{\On{\p}\Recv^{\Diam\n}{\l}[\v]:{\P}_{i}~\After:{\Q}}\ results in the evaluation of \Time{\Q}_{1}.
  If \Q\ were e.g., a sending process, then \Time{\Q}_{\t'}\ is not defined for any $\t'>0$.
  Otherwise, if \Time{\Q}_{1} and \Time{\Q'}\ are defined,
  then \t\ may pass and
  \(
  \Time{\P} = \Scope{\pq}{%
    \bigl(\Time{\Q}_{1}
    \mid \qp:\emptyset
    \mid \Time{\Q'}\bigr)
  }
  \).
\end{exa}

\UnloadTypeChecking

\begin{rfigure}{figFuncWaitNEQ}[tb]
  \footnotesize\begin{pcalc}*
    \begin{array}[c]{@{}l@{}}
      \begin{array}[c]{@{}l@{\ }c@{\ }l@{}}
        \Wait & = & \begin{cases}
                      \{\p\}
                       & \text{if\ } \P \in \begin{array}[t]\{{@{}l@{}}\}
                                  \On{\p}\Recv^{\Diam\n}{\l}[\v]:{\P}_i~\After:{\Q},%
                                  ~
                                  \On{\p}\Recv^{\e}{\l}[\v]:{\P}_i
                                \end{array}
                      \\[0.5ex]
                      \Wait{\Q}\setminus\{\p,\q\}
                       & \text{if\ } \P=\Scope{\pq}{\Q}
                      \\[0.5ex]
                      \Wait{\Q}
                       & \text{if\ } \P=\Def{\Rec*{\vec{v};\vec{r}}=\P'}:{\Q}
                      \\[0.5ex]
                      \Wait{\P'}\cup\Wait{\Q}
                       & \text{if\ } \P=\P'\mid\Q
                      \\[0.5ex]
                      \emptyset
                       & \text{otherwise}
                    \end{cases}
        \\\\
        \NEQ  & = & \begin{cases}
                      \{\p\}
                       & \text{if\ } \P=\qp:\h\land\h\neq\emptyset
                      \\[0.5ex]
                      \NEQ{\Q}\setminus\{\p,\q\}
                       & \text{if\ } \P=\Scope{\pq}{\Q}
                      \\[0.5ex]
                      \NEQ{\Q}
                       & \text{if\ } \P=\Def{\Rec*{\vec{v};\vec{r}}=\P'}:{\Q}
                      \\[0.5ex]
                      \NEQ{\P'}\cup\NEQ{\Q}
                       & \text{if\ } \P=\P'\mid\Q
                      \\[0.5ex]
                      \emptyset
                       & \text{otherwise}
                    \end{cases}
      \end{array}
    \end{array}
  \end{pcalc}
  \caption{Definition of \PCalc{\Wait}\ and \PCalc{\NEQ}.}
  \label{fig:func_wait_neq}
\end{rfigure}

\LoadTypeChecking

\section{Expressiveness}\label{sec:expressiveness}

In this section we reflect on the expressiveness of our mixed-choice extension, particularly in regard to~\cite{Bocchi2019}, using examples to illustrate differences.
Given the increase in expressiveness, we discuss how type-checking becomes more interesting with the inclusion of
\recvafter\
and give an intuition on the relationship between the TOAST and processes.

\subsection{Missing deadlines}\label{ssec:missing_deadlines}
The process corresponding to $\Option?{a}.\S$
is merely $\On{\p}\Recv*^{\infty}{a}:{\P}$, which waits to receive $a$ forever.
By way of contrast, $\Option?{b}{x<3}.\S'$, cannot receive when $x\geq 3$, requiring the process to take the form:
\(
\On{\p}\Recv*^{<3}{b}:{\P'}
\).
More generally,  
if we consider:
\(
\{?b~(x<3,~\emptyset).S',~?c~(3< x<5,~\emptyset).{S}''\}
\)
when $(x\geq3)$
then, amending the previous process, \(
\On{\p}\Recv*^{<3}{b}:{\P'}
~\After:{\On{\p}\Recv*^{<2}{c}:{\P'{2}}}
\).

\subsection{Ping-pong protocol}
The example in this section illustrates the usefulness of time-sensitive conditional statements. The $ping$-$pong$ protocol consists of two participants exchanging messages between themselves on receipt of a message from the other~\cite{Lagaillardie2022}.
One interpretation of the protocol is the following:
\[
  \mu\alpha.%
  \begin{array}[c]\{{@{}l@{}}\}
    {!ping(x\leq3,~\{x\})}.
    \begin{array}[c]\{{@{}l@{}}\}
      ?ping(x\leq3,~\{x\}).\alpha \\
      ?pong(x>3,~\{x\}).\alpha
    \end{array} \\[-1ex]\\
    {!pong(x>3,~\{x\})}.
    \begin{array}[c]\{{@{}l@{}}\}
      ?ping(x\leq3,~\{x\}).\alpha \\
      ?pong(x>3,~\{x\}).\alpha
    \end{array}
  \end{array}
\]

\noindent where each participant exchanges the role of sender, either sending $ping$ early, or $pong$ late.
Without time-sensitive conditional statements, the setting in~\cite{Bocchi2019} only allows implementations where the choice between the `$ping$' and the `$pong$' branch are hard-coded.
In presence of non-deterministic delays (e.g., $\Delay{t<6}$), the hard-coded choice can only be for the latest branch to `expire', and the highlighted fragment of the $ping$-$pong$ protocol above could be naively implemented as follows (omitting \Q\ for simplicity):
\[
  \Def{\Rec*{\p}=
  \REVISEDII{%
  \left(
  \right.
  }%
  \Delay{t<6}.\On{\p}\Send{pong}.{\Q}
  \REVISEDII{%
  \left.
  \right)
  }%
  }:{\Rec{r}}
\]

\noindent The choice of sending $ping$ is \emph{always} discarded as it may be unsatisfied in \emph{some} executions.
The calculus in this paper, thanks to the time-awareness deriving from a process timer \circled{y}, allows us to \emph{potentially} honour each branch, as follows:
\[
  \Def{\Rec*{\p}=
  \REVISEDII{%
  \left(
  \right.
  }%
  \Set{y}
  .\Delay{t<6}
  .\If{\circled{y}\leq 3}\Then{\On{\p}\Send{ping}.{\Q}}
  ~\Else{\On{\p}\Send{pong}.{\Q'}}
  \REVISEDII{%
  \left.
  \right)
  }%
  }:{\Rec{r}}
\]

\subsection{Mixed-choice Ping-pong protocol}\label{ssec:mc_ping_pong}
An alternative interpretation of the $ping$-$pong$ protocol can result in an implementation with mixed-choice, as shown below:
\[
  \mu\alpha.%
  \begin{array}[c]\{{@{}l@{}}\}
    ?ping(x\leq3,~\{x\}).
    \begin{array}[c]\{{@{}l@{}}\}
      !pong(x\leq3,~\{x\}).\alpha
      \\ ?timeout(x>3,~\emptyset).end
    \end{array} \\[-1ex]\\
    !pong(x>3,~\{x\}).
    \begin{array}[c]\{{@{}l@{}}\}
      ?ping(x\leq3,~\{x\}).\alpha
      \\ !timeout(x>3,~\emptyset).end
    \end{array} \\
  \end{array}
\]
where `$pings$' are responded by `$pongs$' and vice versa.
Notice that if a timely $ping$ is not received, a $pong$ is sent instead, which if not responded to by a $ping$, triggers a timeout.
Similarly, once a $ping$ has been received, a $pong$ must be sent on time to avoid a timeout.
Such a convoluted protocol can be fully implemented:
\(\Def{\Rec*{\p}=\P}:{\Rec{r}}\) where {\P}:
\[
  \begin{array}[t]{@{}l@{}}%
    \P = 
    \begin{array}[t]{@{}l@{}}%
      \On{\p}\Recv*^{\leq 3}{ping}:{\begin{array}[t]{@{}l@{}}%
                                      \Set{x}%
                                      .\Delay{t\leq 4}
                                      .\If{\circled{x}\leq 3}\Then*{%
                                      \On{\p}\Send{pong}.{\Rec{\p}}%
                                      }\Else*{%
                                      \On{\p}\Recv*^{\infty}{timeout}:{\Term}%
                                      }%
                                    \end{array}}%
      \\[\ArrayExtraTallLineSpacing]
      ~\mathtt{after\ }{%
        \On{\p}\Send{pong}{%
          \begin{array}[t]{@{}l@{}}%
            .\On{\p}\Recv*^{\leq 3}{ping}:{\Rec{\p}}%
            \\~\mathtt{after\ }{%
            \On{\p}\Send{timeout}.{\Term}%
            }%
          \end{array}}%
      }%
    \end{array}%
  \end{array}%
\]

\subsection{Message throttling}\label{ssec:message_throttling}
A real-world application of the previous example is \emph{message throttling}.
The rationale behind message throttling is to cull unresponsive processes, which do not keep up with the message processing tempo set by the system.
This avoids a server from becoming overwhelmed by a flood of incoming messages.
In such a protocol, upon receiving a message, a participant is permitted a grace period to respond before receiving another.
The grace period is specified as a number of unacknowledged messages, rather than a period of time.
Below we present a fully parametric implementation of this behaviour, where $m$ is the maximum number of
unacknowledged messages before a timeout is issued.
\begin{equation}\label{eq:message_throttling_type}
  \begin{array}[c]{@{}l@{\ }c@{\ }l@{\quad}l@{}}
    S_0 & = & \mu\alpha^{0}.!msg(x \geq 3, \{x\}). S_1
    \\[-1ex]\\ %
    S_i & = & \mu\alpha^{i}.
    \begin{array}[c]\{{@{}l@{}}\}
      ?ack(x < 3, \{ x \}). \alpha^{i-1},
      ~!msg(x \geq 3, \{ x \}) . S_{i+1}
    \end{array} 
    %
    \\[-1ex]\\ %
    S_m & = &
    \begin{array}[c]\{{@{}l@{}}\}
      ?ack(x < 3, \{ x \}). \alpha^{m-1},
      ~!tout(x \geq 3,~\emptyset) . \End
    \end{array}
  \end{array}
\end{equation}

\noindent 
where $0<i<m$. Below is the corresponding processes (again, $0<i<m$):
%
%
\begin{equation}\label{eq:message_throttling_process}
  \begin{array}[c]{@{}l@{\ }l@{\ }l@{\ }l@{\ }l@{}}
    \P_0
     & = &
     \mathtt{def\ }\Rec*{\p}_0 &=& \Delay{3}.{\On{\p}\Send{msg}.{\P_1}} \mathtt{\ in\ } \Rec{\p}_0
    \\[-1.5ex]\\ %
    \P_i
     & = &
     \mathtt{def\ }\Rec*{\p}_i &=& \On{\p^{<3}}\Recv*{ack}:{\Rec{\p}_{i-1}}%
     ~\mathtt{after\ }{\On{\p}\Send{msg}.{\P_{i+1}}} \mathtt{\ in\ } \Rec{\p}_i
    \\[-1.5ex]\\ %
    \P_m
     & = &
     \mathrlap{
     \On{\p^{<3}}\Recv*{ack}:{\Rec{\p}_{m-1}}
     ~\mathtt{after\ }{\On{\p}\Send{tout}.{\Term}} 
     }
     %
     %
  \end{array}%
\end{equation}

\noindent 
  We
  now
  concretize this example with the $m=2$ instance
  for type $S$
  and process $P$:
\[
  \begin{array}[c]{@{}l@{\ }c@{\ }l@{}}
    S & = &
    \mu\alpha^0
    .!msg(x\geq3,\{x\})
    .\mu\alpha^1
    .\begin{array}[c]\{{@{}l@{}}\}
       ?ack(x<3,\{x\}).\alpha^0,
       \\
       !msg(x\geq3,\{x\})
       .\begin{array}[c]\{{@{}l@{}}\}
         ?ack(x<3,\{x\}).\alpha^1,
         \\
         !tout(x\geq3,\emptyset).\mathtt{end}
       \end{array}
     \end{array}
    \\[-1.5ex]\\ %
    P
    &
    =
    &
    \begin{array}[t]{@{}l@{}}
            \mathtt{def\ }
        \begin{array}[t]{@{}l@{}}
                \Rec*{\p}_0 = \Delay{3}.\On{\p}\Send{msg}.
            \\[0.5ex]
                \mathtt{def\ }
            \begin{array}[t]{@{}l@{}}
                    \Rec*{\p}_1 =\ %
            \begin{array}[t]{@{}l@{}}
                    \On{\p^{<3}}\Recv*{ack}:{\Rec{\p}_0}
             \\[0.5ex]
                    \mathtt{after\ } 
                        \On{\p}\Send{msg}. 
                        \On{\p}^{<3}\Recv*{ack}:{\Rec{\p}_1}
                        ~\mathtt{after\ }
                        \On{\p}\Send{tout}.{\Term}
            \end{array}
            \end{array}
             \\[0.1ex]
                \mathtt{in\ } \Rec{\p}_1
        \end{array}
         \\[0.75ex]
            \mathtt{in\ } \Rec{\p}_0
    \end{array}
  \end{array}
\]

\noindent The system shown in~\Cref{fig:message_throttling}
also illustrates the instance of $m=2$.
\ManualExaQED

\begin{figure}[b]
  \centering%
  \begin{tikzpicture}[scale=0.215]
\tikzstyle{every node}+=[inner sep=0pt]
\draw [black,fill={rgb:black,1;white,10}] (9,-19.4) circle (3);
\draw (9,-19.4) node {$S_0$};
\draw [black,fill={rgb:black,1;white,10}] (30.1,-19.4) circle (3);
\draw (30.1,-19.4) node {$S_1$};
\draw [black,fill={rgb:black,1;white,10}] (51,-19.4) circle (3);
\draw (51,-19.4) node {$S_2$};
\draw [black,fill={rgb:black,1;white,10}] (72.3,-19.4) circle (3);
\draw [black,fill=white] (72.3,-19.4) circle (2.4);
\draw (72.3,-19.4) node {$\mathtt{end}$};
\draw [black] (11.049,-17.217) arc (130.27609:49.72391:13.15);
\fill [black] (28.05,-17.22) -- (27.76,-16.32) -- (27.12,-17.08);
\draw (19.55,-14.19) node [above] {$!\mathtt{msg}\mbox{ }(x\geq3,\mbox{ }\{x\})$};
\draw [black] (32.343,-17.416) arc (125.41888:54.58112:14.161);
\fill [black] (48.76,-17.42) -- (48.39,-16.55) -- (47.82,-17.36);
\draw (40.55,-14.3) node [above] {$!\mathtt{msg}\mbox{ }(x\geq3,\mbox{ }\{x\})$};
\draw [black] (53.283,-17.462) arc (124.51268:55.48732:14.767);
\fill [black] (70.02,-17.46) -- (69.64,-16.6) -- (69.07,-17.42);
\draw (61.65,-14.36) node [above] {$!\mathtt{tout}\mbox{ }(x\geq3,\mbox{ }\emptyset)$};
\draw [black] (27.75,-21.257) arc (-57.33551:-122.66449:15.193);
\fill [black] (11.35,-21.26) -- (11.75,-22.11) -- (12.29,-21.27);
\draw (19.55,-24.16) node [below] {$?\mathtt{ack}\mbox{ }(x<3,\mbox{ }\{x\})$};
\draw [black] (48.701,-21.319) arc (-56.04512:-123.95488:14.593);
\fill [black] (32.4,-21.32) -- (32.78,-22.18) -- (33.34,-21.35);
\draw (40.55,-24.31) node [below] {$?\mathtt{ack}\mbox{ }(x<3,\mbox{ }\{x\})$};
\end{tikzpicture}
  \caption{Message throttling protocol for $m=2$.}
  \label{fig:message_throttling}
\end{figure}

\UnloadTypeChecking

\LoadTypeChecking

\section{Type Checking TOAST Processes}\label{sec:type_checking}
Processes are validated against types using judgements of the form:
$\Gam;\Timers\Entails\P\TypedBy\Session$.
\begin{requation}{eqTypingJudgements}\label{eq:type_check_environments}
  \begin{array}[c]{@{}l@{\ }c@{\ }l@{\quad}l@{}}
    \Gam     & ::= & \emptyset
    \mid {\Gam,~\v:\T}
    \mid {\Gam,~\Rec|}
    & 
    (\text{where \T\ is a \bt})
    \\[\ArrayExtraTallLineSpacing]
    \Timers  & ::= & \emptyset \mid {\Timers,~{\circled{x}:n}}
    & 
    (\text{recall  Definition~\ref{def:process_timer_environment},\ } \n\in\SetRealZ)
    \\[\ArrayExtraTallLineSpacing]%
    \Session & ::= & \emptyset
    \mid {\Session,~{\p:\Cfg}}
    \mid {\Session,~{\qp:\Que}}
  \end{array}
\end{requation}

\noindent \emph{Variable Environments} \Gam\ map values \v\ to data types \T\ and process variables \Var\ to triples \Rec|, where
$\vec{T}$ is a vector of messages (consisting of labels and data types),
and $\FormatRecCheckStyle{\theta}$ and $\FormatRecCheckStyle{\Delta}$ are sets of timer environments and session environments, respectively and both are possibly infinite.
With respect to usual formulations of session types,
we use the mapping $\Timers$ of timers (denoted $\circled{x}$, $\circled{y}$ and $\circled{z}$)
as a 
timer environment (given in~\Cref{def:process_timer_environment})
to correctly type check
time-sensitive conditional statements.
As usual, \emph{Session Environments} \Session\ map roles \p\ and \q\ to configurations and session endpoints \qp\ and \pq\ to queues \Que.
%
%
  Similarly to the syntax of processes
  in~\Cref{sec:toast_processes},
  we may
  use
  $a$ and $b$
  as ad-hoc roles,
  along with endpoints $ab$ and $ba$,
  during discussions of more than one session.

\subsection{Auxiliary definitions \& notation}

We write \Dom{\Session}\ for the domain of \Session, and $\Session_1,\Session_2$ for $\Session_1 \cup \Session_2$ when $\Dom{\Session_1}\cap\Dom{\Session_2}=\emptyset$.
We define $\Session(\p)=\Cfg$ iff $\p\in\Session$ and $\exists \Session'$ such that $\Session=\Session',{\p:\Cfg}$, 
and define 
$\mathtt{val}(\Val,\S) = \Val$
and 
$\mathtt{type}(\Val,\S) = \S$.
%
Given a time offset $\t\in\SetRealZ$, we define:
\begin{itemize}
  \item 
  $\Session+\t= \{ {(\mathtt{val}(\Session(\p))+\t,~\mathtt{type}(\Session(\p)))} \mid {\p\in\Dom{\Session}} \}$
  \item $\Timers+\t=\{\circled{x}\mapsto \Timers(\circled{x}) + \t \mid \circled{x}\in\SetOfTimers\}$ \hfill (recall \Cref{def:process_timer_environment})
\end{itemize}

\noindent We say that \Session\ is \t-reading if for the duration \t, there exist roles within \Session\ that are able to perform a receiving action.
We say that \Session\ is \emph{not} \t-reading if there are no roles within \Session\ able to receive for the duration of \t. Formally:
%
%
%
\begin{rdefi}{defTReading}[\t-reading \Session]\label{def:t_reading_session}
  \Session\ is \t-reading if,
  $\exists \t'<\t, \p \in \Dom{\Session}$
  such that $\Session(\p) = \Cfg
    \implies \Cfg{\Val+\t'}\Trans:{?\Msg}$.
  We write \Session\ is $\Diam\t$-reading if,
  $\exists \t'\Diam\t, \p \in \Dom{\Session}$
  such that $\Session(\p) = \Cfg
    \implies \Cfg{\Val+\t'}\Trans:{?\Msg}$.
\end{rdefi}

\noindent \Cref{def:t_reading_session} is useful to check against violations of \emph{receive urgency}.\footnote{See~\Cref{exa:t_reading_bad_interleavings}.}
Finally, we extend the definition of well-formed configurations to session environments in the obvious way:
\begin{rdefi}{defWfSession}[Well-formed \Session]\label{def:wf_session}
  \Session\ is \wf\ if,
  for all $\p \in \Dom{\Session}$
  such that $\Session = \Session',{\p:\Cfg} \implies $ \Cfg\ is \wf\ by~\Cref{def:cfg_wf}.
\end{rdefi}

\UnloadTypeChecking

\NewDocumentCommand{\ProcessTypeCheckingEnd}{}{\WrapMathMode{%
    \Gam;\Timers\Entails\Term\TypedBy\emptyset%
    \quad \Rule{End}%
  }%
}

\NewDocumentCommand{\ProcessTypeCheckingWeak}{}{\WrapMathMode{%
\infer[\Rule{Weak}]{
\Gam;\Timers\Entails{\P}\TypedBy\Session,~{\p:\Cfg,{\End}}%
}{
\Gam;\Timers\Entails{\P}\TypedBy\Session%
}%
}%
}

\NewDocumentCommand{\ProcessTypeCheckingEmpty}{}{\WrapMathMode{%
    \Gam;\Timers\Entails{\pq:\emptyset}\TypedBy{\pq:\emptyset}%
    \quad \Rule{Empty}%
  }%
}

\NewDocumentCommand{\ProcessTypeCheckingPar}{}{\WrapMathMode{%
    \infer[\Rule{Par}]{
      \Gam;\Timers_1,\Timers_2\Entails{\P\mid\Q}\TypedBy\Session_1,\Session_2%
    }{
      \Gam;\Timers_1\Entails{\P}\TypedBy\Session_1%
      \quad%
      \Gam;\Timers_2\Entails{\Q}\TypedBy\Session_2%
    }%
  }%
}

\NewDocumentCommand{\ProcessTypeCheckingRes}{}{\WrapMathMode{%
    \infer[\Rule{Res}]{
      \Gam;\Timers\Entails{\Scope{\pq}{\P}}\TypedBy\Session%
    }{
      \begin{array}[c]{c}
        \Gam;\Timers\Entails{\P}\TypedBy\Session,%
        ~{\p:\Cfg_1},%
        ~{\qp:\Que_1},%
        ~{\q:\Cfg_2},%
        ~{\pq:\Que_2}%
        \\[\ArrayTallLineSpacing]
        \Cfg;{\Que}_1\Compat\Cfg;{\Que}_2%
        \qquad%
        \forall i \in \{1,2\}.\, \S_i \text{\ well-formed\ against\ }\nu_i
      \end{array}
    }%
  }%
}

\NewDocumentCommand{\ProcessTypeCheckingVQue}{}{\WrapMathMode{%
\infer[\Rule{VQue}]{
\Gam;\Timers\Entails{\qp:{\lv}\cdot\h}\TypedBy\Session,%
~{\qp:{\lT<>};\Que}%
}{
\T~\text{\bt}%
\quad%
\Gam\Entails{\v:\T}%
\quad%
\Gam;\Timers\Entails{\qp:{\h}}\TypedBy\Session,%
~{\qp:{\Que}}%
}%
}%
}

\NewDocumentCommand{\ProcessTypeCheckingDQue}{}{\WrapMathMode{%
\infer[\Rule{DQue}]{
\Gam;\Timers\Entails{\qp:{\l\q}\cdot\h}\TypedBy\Session,%
~{\qp:{\lT<>};\Que},%
~{\q:\Cfg}%
}{
\T=\Del%
\quad%
\Val\models\Constraints%
\quad%
\Gam;\Timers\Entails{\qp:{\h}}\TypedBy\Session,%
~{\qp:{\Que}}%
}%
}%
}

\NewDocumentCommand{\ProcessTypeCheckingTimer}{}{\WrapMathMode{%
    \infer[\Rule{Timer}]{
      \Gam;\Timers,\circled{x}:n\Entails{\Set{x}.\P}\TypedBy\Session%
    }{
      \Gam;\Timers,\circled{x}:0\Entails{\P}\TypedBy\Session%
    }%
  }%
}

\NewDocumentCommand{\ProcessTypeCheckingDelDelta}{s}{\WrapMathMode{%
    \infer[\IfBooleanTF{#1}{\Rule{Del}[\Duration]}{\smash{\mathrlap{\Rule{Del}[\Duration]}}\hspace{4ex}}]{
      \Gam;\Timers\Entails{\Delay{\Duration}.\P}\TypedBy\Session%
    }{
      \forall \t\in\Duration:%
      \Gam;\Timers\Entails{\Delay{\t}.\P}\TypedBy\Session%
    }%
  }%
}

\NewDocumentCommand{\ProcessTypeCheckingDelT}{}{\WrapMathMode{%
    \infer[\Rule{Del}[t]]{
      \Gam;\Timers\Entails{\Delay{\t}.\P}\TypedBy\Session%
    }{
      \Gam;\Timers+\t\Entails{\P}\TypedBy\Session+\t%
      \quad%
      \Session~\text{not \t-reading}
    }%
  }%
}

\NewDocumentCommand{\ProcessTypeCheckingBranch}{s}{\WrapMathMode{%
\infer[\IfBooleanTF{#1}{\Rule{Branch}}{\smash{\mathrlap{\Rule{Branch}}}\hspace{4ex}}]{
\Gam;\Timers\Entails%
\On{\p}\Recv^{\e}{\l}[\v]:{\P}_{i}%
\TypedBy\Session,~{\p:\Cfg,{\Choice{\l}[\T]_{j}}}%
}{
\begin{array}[t]{@{}c@{}}%
  \neg (\left\lvert J \right\rvert = \left\lvert I \right\rvert = 1)
  \quad
  \begin{array}[t]{@{}l@{}}%
    \forall j\in J: %
    (\Val\models\Constraints_j)%
    \implies%
    (\Comm_j = ?)%
    ~\land~%
    \exists i\in I:
    (\l_i=\l_j)%
    ~\land~\mbox{}%
    \\[\ArrayTallLineSpacing]%
    \mbox{}\qquad
    \Gam;\Timers\Entails{\On{\p}\Recv*^{\e}{\l}[\v]:{\P}_{i}}%
    \TypedBy\Session,~{\p:\Cfg,{\Choice*{\l}[\T]_{j}}}%
  \end{array}%
\end{array}%
}%
}%
}

\NewDocumentCommand{\ProcessTypeCheckingTimeout}{s}{\WrapMathMode{%
\infer[\IfBooleanTF{#1}{\Rule{Timeout}}{\smash{\mathrlap{\Rule{Timeout}}}\hspace{4ex}}]{
\Gam;\Timers\Entails%
\On{\p}\Recv^{\Diam\n}{\l}[\v]:{\P}_{i}%
~\After{\n}:{\Q}%
\TypedBy\Session,~{\p:\Cfg,{\C_j}}%
}{
\begin{array}[t]{@{}c@{}}%
  \begin{array}[t]{@{}l@{}}%
    \Gam;\Timers\Entails%
    \On{\p}\Recv^{\Diam\n}{\l}[\v]:{\P}_{i}%
    \TypedBy\Session,~{\p:\Cfg,{\C_{j}}}%
  \end{array}%
  \\[\ArrayTallLineSpacing]
  \begin{array}[t]{@{}l@{}}%
    \Gam;\Timers+\n\Entails{\Q}%
    \TypedBy\Session+\n,~{\p:\Cfg{\Val+\n},{\C_{j}}}%
  \end{array}%
\end{array}%
}%
}%
}

\NewDocumentCommand{\ProcessTypeCheckingVRecv}{}{\WrapMathMode{%
\infer[\Rule{VRecv}]{
\Gam;\Timers\Entails{\On{\p}\Recv*^{\e}{\l}[\v]:{\P}}
\TypedBy\Session,~{\p:\Cfg,{\Option?-{\l}[\T]{\delta}[\lambda].{\S}}}%
}{
\begin{array}[t]{@{}c@{}}%
  \T~\text{\bt}%
  \quad%
  \Session~\text{not\ $\e$-reading}
  \quad
  {%
    \forall\t:
    (\Val+\t\models\Constraints)
    \Bimplies
    (\t\in\e)
  }
  \\[\ArrayTallLineSpacing]
  \forall\t\in\e: %
  \Gam,{\v:\T};\Timers+\t\Entails{\P}\TypedBy\Session+\t,%
  ~{\p:\Cfg{\Val+\t\Reset},{\S}}%
\end{array}%
}%
}%
}

\NewDocumentCommand{\ProcessTypeCheckingDRecv}{}{\WrapMathMode{%
\infer[\Rule{DRecv}]{
\Gam;\Timers\Entails{\On{\p}\Recv*^{\e}{\l}[\q]:{\P}}
\TypedBy\Session,~{\p:\Cfg,{\Option?-{\l}[\T]{\delta}[\lambda].{\S}}}%
}{
\begin{array}[t]{@{}c@{}}%
  \T=\Del'%
  \quad%
  \Val'\models\Constraints'%
  \quad%
  \Session~\text{not\ $\e$-reading}
  \quad
  {\forall\t:
    (\Val+\t\models\Constraints)
    \Bimplies
    (\t\in\e)
  }
  \\[\ArrayTallLineSpacing]
  \forall\t\in\e: %
  \Gam;\Timers+\t\Entails{\P}\TypedBy\Session+\t,%
  ~{\p:\Cfg{\Val+\t\Reset},{\S}},%
  ~{\q:\Cfg'}%
\end{array}%
}%
}%
}

\NewDocumentCommand{\ProcessTypeCheckingIfTrue}{}{\WrapMathMode{%
    \infer[\Rule{IfTrue}]{
      \Gam;\Timers\Entails{\If*{\Cond}~\Then{\P}~\Else{\Q}}\TypedBy\Session%
    }{
      \Timers\models\Cond%
      \quad%
      \Gam;\Timers\Entails{\P}\TypedBy\Session%
    }%
  }%
}

\NewDocumentCommand{\ProcessTypeCheckingIfFalse}{}{\WrapMathMode{%
    \infer[\Rule{IfFalse}]{
      \Gam;\Timers\Entails{\If*{\Cond}~\Then{\P}~\Else{\Q}}\TypedBy\Session%
    }{
      \Timers\not\models\Cond%
      \quad%
      \Gam;\Timers\Entails{\Q}\TypedBy\Session%
    }%
  }%
}

\NewDocumentCommand{\ProcessTypeCheckingVSend}{s}{\WrapMathMode{%
\infer[\IfBooleanTF{#1}{\Rule{VSend}}{\smash{\mathrlap{\Rule{VSend}}}\hspace{0.95ex}}]{
\Gam;\Timers\Entails{\On{\p}\Send{\l}[\v].{\P}}\TypedBy\Session,
~{\p:\Cfg,{\Choice{\l}[\T]_i}}%
}{
\begin{array}[t]{@{}c@{}}
  \exists i\in I: %
  (\l=\l_i)
  \land
  (\Val\models\Constraints_i)%
  \land%
  (\T_i~\text{\bt})%
  \land%
  (\Gam\Entails\v:\T_i)
  \land
  \ifInlineTypeCheck\else
  \mbox{} \\[\ArrayTallLineSpacing]%
  \fi
  (\Comm_i=!)
  \land
  \Gam;\Timers\Entails{\P}\TypedBy\Session,{\p:\Cfg{\Val\Reset_i},{\S_i}}%
\end{array}%
}%
}%
}

\NewDocumentCommand{\ProcessTypeCheckingDSend}{s}{\WrapMathMode{%
\infer[\IfBooleanTF{#1}{\Rule{DSend}}{\smash{\mathrlap{\Rule{DSend}}}\hspace{0.95ex}}]{
\Gam;\Timers\Entails{\On{\p}\Send{\l}[\b].{\P}}\TypedBy\Session,%
~{\p:\Cfg,{\Choice{\l}[\T]_i}},%
~{\b:\Cfg'}%
}{
\begin{array}[t]{@{}c@{}}%
  \exists i\in I: %
  (\l=\l_i)
  \land
  (\Val\models\Constraints_i)%
  \land%
  (\T_i=\Del')%
  \land%
  (\Val'\models\Constraints')%
  \land
  \ifInlineTypeCheck\else
  \mbox{} \\[\ArrayTallLineSpacing]%
  \fi
  (\Comm_i=!)
  \land
  \Gam;\Timers\Entails{\P}\TypedBy\Session,{\p:\Cfg{\Val\Reset_i},{\S_i}}%
\end{array}%
}%
}%
}

\NewDocumentCommand{\ProcessTypeCheckingVar}{}{\WrapMathMode{%
    \infer[\Rule{Var}]{
      \Gam,{\Rec|};\Timers\Entails{\Rec}\TypedBy
      \vec{r}:(\vec {\Val},\vec {\S})%
    }{
      (\vec {\Val},\vec {\S})\in\FormatRecCheckStyle{\Delta}%
      \quad%
      \Timers\in\FormatRecCheckStyle{\theta}%
      \quad%
      \forall i: \Gam\Entails \vec{v}_i : \vec{T}_i%
    }%
  }%
}

\NewDocumentCommand{\ProcessTypeCheckingRec}{}{\WrapMathMode{%
\infer[\Rule{Rec}]{
\Gam;\Timers\Entails{\Def{\Rec*{\vec{v};\vec{r}}=\P}:{\Q}}\TypedBy\Session%
}{
\begin{array}[t]{@{}c@{}}%
  \forall (\vec{\Val},~\vec{\S})
  \in\FormatRecCheckStyle{\Delta},
  {\theta'}\in\FormatRecCheckStyle{\theta}: %
  \Gam,
  {\vec{v}:\vec{T}},%
  {\Rec|}%
  ;\Timers'
  \Entails{\P}
  \TypedBy
  \vec{r}: (\vec{\Val},~\vec{\S})%
  \\[\ArrayLineSpacing]
  \Gam,{\Rec|};\Timers\Entails{\Q}\TypedBy\Session%
\end{array}%
}%
}%
}

%
%
%
\begin{rfigure}{figTypeCheckStandard}[t]
  \begin{typecheck}\label{eq:type_checking_standard}
    \begin{array}[c]{@{}c@{}}
      \begin{array}{@{}c @{\qquad} c@{}}
        \begin{array}[c]{@{}c@{}}{\ProcessTypeCheckingEnd}\end{array}
         &
        \begin{array}[c]{@{}c@{}}{\ProcessTypeCheckingEmpty}\end{array}
      \end{array}
      \\[-0.5ex]\\%
      \begin{array}{@{}c @{\qquad} c@{}}
        \begin{array}[c]{@{}c@{}}{\ProcessTypeCheckingWeak}\end{array}
         &
        \begin{array}[c]{@{}c@{}}{\ProcessTypeCheckingPar}\end{array}
      \end{array}
      \\[-0.5ex]\\%
      \begin{array}[c]{@{}c@{}}{\ProcessTypeCheckingRes}\end{array}
      \\[-0.5ex]\\%
      \begin{array}[c]{@{}c@{}}{\ProcessTypeCheckingRec}\end{array}
      \\[-0.5ex]\\%
      \begin{array}[c]{@{}c@{}}{\ProcessTypeCheckingVar}\end{array}
    \end{array}
  \end{typecheck}
  \FigHRule%
  \begin{typecheck}\label{eq:type_checking_time_sensitive}
    \begin{array}[c]{@{}c@{}}
      \begin{array}[c]{@{}l@{}}
        \begin{array}{@{}c @{\hspace{1.25ex}} c@{}}
          \begin{array}[c]{@{}c@{}}{\ProcessTypeCheckingIfTrue}\end{array}
           &
          \begin{array}[c]{@{}c@{}}{\ProcessTypeCheckingTimer}\end{array}
        \end{array}
        \\[-0.5ex]\\%
        \begin{array}{@{}c @{\hspace{1.25ex}} c@{}}
          \begin{array}[c]{@{}c@{}}{\ProcessTypeCheckingIfFalse}\end{array}
           &
          \begin{array}[c]{@{}c@{}}{\ProcessTypeCheckingDelDelta}\end{array}
        \end{array}
      \end{array}
      \\[-0.5ex]\\%
      \begin{array}[c]{@{}c@{}}{\ProcessTypeCheckingDelT}\end{array}
    \end{array}
  \end{typecheck}
  \caption{Standard \& time-sensitive type checking rules}\label{fig:type_checking}
\end{rfigure}

\LoadTypeChecking

\subsection{Type Checking Rules}
The \tc\ rules are given in~\RefTypeCheck.
\Cref{fig:type_checking} shows standard typing rules in~\Cref{eq:type_checking_standard} and time-sensitive typing rules in~\Cref{eq:type_checking_time_sensitive}.
\Cref{fig:type_checking_communication} shows rules pertaining to communication, with:
sending processes in~\Cref{eq:type_checking_send},
branching in~\Cref{eq:type_checking_recv},
receptions in~\Cref{eq:type_checking_recv_single},
and queues in~\Cref{eq:type_checking_queues}.
We will now discuss each rule.

\subsubsection{\texorpdfstring{Standard Rules, \Cref{eq:type_checking_standard}}{Standard Rules}}
Terminated processes and empty channel buffers are typed using axioms \Rule{End} and \Rule{Empty}, respectively.
%
%
  Rule $\Rule{Weak}$ handles the type-checking
  of a session where an individual process has successfully reached termination, and where the rest of the session may be ongoing.
  Such an instance is to be expected in the asynchronous setting, where one party may perform a sequence of non-blocking sending actions and finish, leaving the other party to process and receive.
  Rule $\Rule{Par}$ divides
  the session environment
  into two disjoint environments,
  $\Session_1$ and $\Session_2$,
  so that each session is carried by only one of $\P$ and $\Q$.
  The timer environment is also split into two disjoint environments, $\theta_1$ and $\theta_2$,
  as processes would otherwise be capable of indirectly communicating or influencing the behaviour of other processes
  outside the interactions prescribed by the types.
%
Rule \Rule{Res}\ ensures scoped processes are well-typed against binary sessions composed of \compat\ configurations (by \Cref{def:cfg_compat}).
Notably, the only change from~\cite{Bocchi2019} in this rule is the well-formedness requirement of $\S_i$ against $\nu_i$ (see \Cref{def:cfg_wf}). In~\cite{Bocchi2019}, well-formedness was naturally entailed by the premises of their
  rules type-checking sending/receiving processes.
%
%
%
In order to tackle the complexity of checking mixed-choice (branching and selections), we have used multiple simpler rules that make it difficult to encapsulate such a requirement. Therefore, without loss of expressiveness, we have made
the
requirement
explicit here.

Rules \Rule{Rec} and \Rule{Var} are essentially unchanged from~\cite{Bocchi2019},
only now extended with process timers.
Rule \Rule{Rec} handles recursion definitions by
evaluating \Q\ and collecting within the triple $({\vec{T};\ScSVListToRecCheck{\theta;\Delta}})$ all messages, timer and session environments present at any corresponding call.
This triple
is mapped to a process variable assigned to \P, and added to $\Gamma$ as the evaluation continues with \Q.
Rule \Rule{Var} determines if a call is well-typed by checking that the provided messages and channels correspond to those in the process variable.

\subsubsection{\texorpdfstring{Time-sensitive Rules, \Cref{eq:type_checking_time_sensitive}}{Time-sensitive Rules}}
Rule \Rule{Timer} handles setting or resetting of timer $\circled{x}$.
We require $\circled{x}$ to be already in \Timers\ in the set case (also handled by this rule) to avoid race conditions on timers with concurrent sessions, which would compromise subject reduction.
%
  Inspection of the syntactic structure of a process is enough to know beforehand which timers will be used.
%
%
%
Rule \Rule{Del}[\Duration] checks that \P\ is well-typed
against all solutions of $\Duration$.
Rule \Rule{Del}[\t] reflects time passing on \Timers\ and the session \Session.
%
%
%
  The right-hand side premise of `$\Delta \text{\ not\ }t\text{-reading}$' enforces \emph{receive urgency} in processes by requiring that no participants in $\Delta$ are able to receive a message from their queue for the duration of $t$.
  (Inverse to~\Cref{def:t_reading_session}.)
%
Rules \Rule{IfTrue} and \Rule{IfFalse} are for typing time-sensitive conditional statements.
If the timers in \Timers\ satisfy the
condition $\Cond$,
then by rule \Rule{IfTrue} the evaluation continues with \P.
Else, by the symmetric rule \Rule{IfFalse} the evaluation continues with \Q.

\UnloadTypeChecking
%
%
%
%
%
\begin{rfigure}{figTypeCheckComm}[t]
  \begin{typecheck}\label{eq:type_checking_send}
    \begin{array}[c]{@{}c@{}}
      \begin{array}[c]{@{}c@{}}{\ProcessTypeCheckingVSend}\end{array}
      \\[-0.5ex]\\%
      \begin{array}[c]{@{}c@{}}{\ProcessTypeCheckingDSend}\end{array}
    \end{array}
  \end{typecheck}
  \FigHRule%
  \begin{typecheck}\label{eq:type_checking_recv}
    \begin{array}[c]{@{}c@{}}
      \begin{array}[c]{@{}c@{}}{\ProcessTypeCheckingBranch}\end{array}
      \\[-0.5ex]\\%
      \begin{array}[c]{@{}c@{}}{\ProcessTypeCheckingTimeout}\end{array}
    \end{array}
  \end{typecheck}
  \FigHRule%
  \begin{typecheck}\label{eq:type_checking_recv_single}
    \begin{array}[c]{@{}c@{}}
      \begin{array}[c]{@{}c@{}}{\ProcessTypeCheckingVRecv}\end{array}
      \\[-0.5ex]\\%
      \begin{array}[c]{@{}c@{}}{\ProcessTypeCheckingDRecv}\end{array}
    \end{array}
  \end{typecheck}
  \FigHRule%
  \begin{typecheck}\label{eq:type_checking_queues}
    \begin{array}[c]{@{}c@{}}
      \begin{array}[c]{@{}c@{}}{\ProcessTypeCheckingVQue}\end{array}
      \\[-0.5ex]\\%
      \begin{array}[c]{@{}c@{}}{\ProcessTypeCheckingDQue}\end{array}
    \end{array}
  \end{typecheck}
  \caption{Rules for type-checking communicating processes against TOAST}\label{fig:type_checking_communication}
\end{rfigure}

\LoadTypeChecking

\subsubsection{\texorpdfstring{Sending Rules, \Cref{eq:type_checking_send}}{Sending Rules}}
Rules \Rule{VSend} and \Rule{DSend} are for typing processes that send messages against a choice type, by finding a matching pair of labels in the process and type.
Rule \Rule{VSend} applies for sending base-types, requiring that \v\ is well-typed against \T, and the continuation process \P\ is well-typed against the continuation \S, with any clock resets applied.
Rule \Rule{DSend} is similar,
but for delegating an ongoing session with role \b\ to role \p,
and then
removes role \b\ from the session environment.

\subsubsection{\texorpdfstring{Branching Rules, \Cref{eq:type_checking_recv}}{Branching Rules}}
Rule $\Rule{Branch}$ checks a branching process against a choice type
and is comprised of a single premise.
The first
condition
of the premise
requires that
the type or the processes has more than one element, to ensure that type-checking is deterministic (if both are singleton sets, then either rule \Rule{DRecv} or \Rule{VRecv} in~\Cref{eq:type_checking_recv_single} should be used).
The second
condition
of the premise
requires that each currently enabled receiving action in the type has a corresponding branch in the process (i.e., since all labels are unique, they have matching labels) which when paired together, are \emph{well-typed}.

Rule \Rule{Timeout} checks that a \recvafter\ process is well-typed against a choice type; decomposing the checking into two parts:
the branch process, and the process \Q; requiring that both are \wt.
The first line in the premise forms a judgement to be checked by rule \Rule{Branch}, by discarding the `timeout' part of the original process.
The second line
of the premise
checks
the process is still \wt\ in the case where the
  timeout `triggers' (i.e., time $\n$ passes).
Put simply,
each enabled action prescribed by the type must have a corresponding branch in the process that, when paired together, form a well-typed judgement.

\subsubsection{\texorpdfstring{Reception Rules, \Cref{eq:type_checking_recv_single}}{Reception Rules}}
Rule \Rule{VRecv} is applied when evaluating individual base-type receptions.
The first line of the premise checks that waiting the duration of \n\ adheres to the timings specified by the type's constraints \Constraints; also, this line requires that there are no other roles in \Session\ that have enabled receiving actions.
The second line of the premise evaluates the proceeding process \P\ against the continuation type \S, given any possible delay that could occur ($\t\Diam\n$).
Rule \Rule{DRecv} is similar, but for evaluating individual \emph{delegation} receptions, which are then added to the session environment.
\subsubsection{\texorpdfstring{Queue Rules, \Cref{eq:type_checking_queues}}{Queue Rules}}
Non-empty buffers and queues are typed by \Rule{VQue} or rule \Rule{DQue}.
The former is for \bt\ payloads and the latter for delegated sessions.

\subsection{Type Checking Examples}\label{ssec:type_checking_examples}

The handling of scope restriction, parallel composition, and recursion
is similar to existing formulations.
We give an example of the less obvious rules to handle timers and mixed-choice.
Precisely,
we show
in~\Cref{exa:type_check_ping_pong}
how to type check an implementation of a simplified (non-recursive version) the mixed choice $ping$-$pong$ protocol featured
in~\Cref{ssec:mc_ping_pong}.
Later, in~\Cref{exa:t_reading_bad_interleavings}
  we discuss type-checking multiple binary sessions, and illustrate the issues that can arise from \emph{incompatible interleavings}.

\begin{exa}[Type Checking Ping-pong]\label{exa:type_check_ping_pong}
  Consider the protocol $S$ below:
  \begin{equation}\label{eq:ex_tc_type}
    \S =
    \begin{array}[c]\{{@{}l@{}}\}
      ?ping(x\leq3,~\{x\}).
      \begin{array}[c]\{{@{}l@{}}\}
        !pong(x\leq3,~\{x\}).\End,
        \\ ?timeout(x>3,~\emptyset).\End
      \end{array}, \\[-1ex]\\
      !pong(x>3,~\{x\}).
      \begin{array}[c]\{{@{}l@{}}\}
        ?ping(x\leq3,~\{x\}).\End,
        \\ !timeout(x>3,~\emptyset).\End
      \end{array} \\
    \end{array}
  \end{equation}

  \noindent and its candidate implementations $P$:
  \begin{equation}\label{eq:ex_tc_prc}
    \P = \begin{array}[t]{@{}l@{}}%
      \On{\p}\Recv*^{\leq 3}{ping}:{\begin{array}[t]{@{}l@{}}%
                                      \Set{z}.\Delay{t\leq 4}.
                                      \If{\circled{z}\leq 3}\Then*{%
                                      \On{\p}\Send{pong}.{\Term}%
                                      }\Else*{%
                                      \On{\p}\Recv*^{\infty}{timeout}:{\Term}%
                                      }%
                                    \end{array}}%
      \\[\ArrayExtraTallLineSpacing]
      \mathtt{after\ }\Delay{0.1}.{%
      \On{\p}\Send{pong}.{%
      \begin{array}[t]{@{}l@{}}%
        \On{\p}\Recv*^{\leq 3}{ping}:{\Term}%
        \\~\mathtt{after\ }\Delay{0.1}.{%
        \On{\p}\Send{timeout}.{\Term}%
        }%
      \end{array}}%
      }%
    \end{array}%
  \end{equation}

  \noindent We discuss how one can use our type system to check the following:
  $\emptyset; \circled{z} : 0 \Entails P \TypedBy ~\p : (\nu_0, S)$.
  From here,
  one can apply rule \Rule{Timeout},
  which decomposes the judgement into two:
  one for branching process
  in~\Cref{eq:ex_tc_branch},
  and one for the timeout continuation \Q\
  in~\Cref{eq:ex_tc_timeout}.
  \begin{equation}\label{eq:ex_tc_branch}
    \emptyset; \circled{z} : 0 \Entails
    \On{\p}\Recv*^{\leq 3}{ping}:\Set{z}.\Delay{t\leq 4}%
    .\mathtt{if}\,(\circled{z}\leq 3) \ldots
    \TypedBy ~\p : (\nu_0, S)
  \end{equation}
  \begin{equation}\label{eq:ex_tc_timeout}
    \emptyset; \circled{z}:3 \Entails
    \Delay{0.1}.{}\On{\p}\Send{pong}.R
    \TypedBy ~\p : (\nu_3[x\mapsto 0], S)
  \end{equation}

  \noindent where
  \(
  R = \On{\p}\Recv*^{\leq 3}{ping}:{\Term}%
  ~\mathtt{after\ }\Delay{0.1}.{\On{\p}\Send{timeout}.{\Term}}
  \).
  In~\Cref{eq:ex_tc_timeout}, note that the values of
  process timer $\circled{z}$
  and virtual clock $x$ have incremented by the duration of the timeout, i.e., $3$.
  Following up~\Cref{eq:ex_tc_branch}, one can apply rule \Rule{Branch}, which checks that all viable actions in \S\ are receiving actions, and that each is correctly implemented in the process being typed.
  Clearly this holds, since the only viable receive action in \S\ is \Option?{ping}{x\leq 3}[x], which matches the only
  branch $({ping}:\ldots)$ in the process.
  We next apply rule \Rule{VRecv}:
  \begin{equation}\label{eq:ex_tc_branch_recv}
    \emptyset; \circled{z} : \t' \Entails
    \Set{z}.\Delay{t\leq 4}%
    .\mathtt{If}\,(\circled{z}\leq 3) \ldots
    \TypedBy ~\p : (\nu_{\t'}[x \mapsto 0],  ~ S')
    \qquad (\t' \leq 3)
  \end{equation}

  \noindent with $n\leq 3$ and
  $S'=\{ !pong(x\leq3,~\{x\}).end, ?timeout(x>3,~\emptyset).end\}$.
  Since clock $x$ has been reset to 0,
  and is $x$ the only clock used by our type $S$,
  the set of clock valuations is back to $\nu_0$.
  Application of $\Rule{Set}$ then resets
  timer $\circled{z}$ to $0$.
  Next, $\Rule{Del}[\Duration]$
  decomposes~\Cref{eq:ex_tc_branch_recv}
  into a set of judgements, one for each solution $\t'{2}$
  of $\Duration = (\t\leq 4)$;
  which by $\Rule{Del}[t]$ yields:
  \begin{equation*}\label{eq:ex_tc_delay}
    \emptyset; \circled{z}: \t'+\t'{2} \Entails
    \mathtt{if\ }(\circled{z}\leq 3)\Then{%
    \On{\p}\Send{pong}.{\Term}%
    }~\Else{%
    \On{\p}\Recv*^{\infty}{timeout}:{\Term}%
    }%
    \TypedBy ~\p : (\nu_{\t'+\t'{2}}[x\mapsto\t'{2}], ~  S')
  \end{equation*}

  \noindent Notably, if $\t'{2}$ is smaller than or equal to $3$ ($\t'{2}\leq 3$) we can apply rule \Rule{IfTrue}; e.g., $\t'{2}=3$:
  \begin{equation}\label{eq:ex_tc_if_true}
    \emptyset; \circled{z}:  3 \Entails
    \On{\p}\Send{pong}.{\Term}%
    \TypedBy ~\p : (\nu_{\t'+3}[x\mapsto 3], ~  S')
  \end{equation}

  \noindent and if $\t'{2}$ is greater than $3$ (e.g., $\t'{2}=4$) we can apply rule \Rule{IfFalse}:
  \begin{equation}\label{eq:ex_tc_if_false}
    \emptyset; \circled{z} :  4 \Entails
    \On{\p}\Recv*^{\infty}{timeout}:{\Term}%
    \TypedBy ~\p : (\nu_{\t'+4}[x\mapsto 4],   ~S')
  \end{equation}

  \noindent To proceed the evaluation of~\Cref{eq:ex_tc_if_true}, observe that the process is selecting a label $pong$.
  By rule \Rule{VSend}, we only need to check that $S'$ has a sending branch with label $pong$ viable when $(x = 3)$, which is indeed the case (i.e., $!pong(x\leq3,~\{x\}).end$).
  Similarly, looking at~\Cref{eq:ex_tc_if_false}, we need to use the rule branching to isolate all (one in this case) the receive actions viable at the current time.
  The remaining of both derivations is straightforward.

  The case for the `$\mathtt{after}$' process in~\Cref{eq:ex_tc_timeout} is similar, except it starts with a shifted time in the timers and virtual clocks.
  %
  Note, we insert `$\Delay{0.1}.{\dots}$' immediately upon entering the `$\mathtt{after}$' to compensate for the inclusive upper-bound of ($\leq3$).
  %
  %

  We show
  the full
  derivation tree in~\Cref{sec:appendix:derivation:ping_pong}.
\end{exa}

\begin{exa}[Incompatible Interleavings]\label{exa:t_reading_bad_interleavings}
  Consider the following session environments:
  \[
    \begin{array}[c]{@{}lcl@{}}
      \Session_{ab} & = & {a:\Cfg_1},~{b:\Cfg_2},~{ba:\Que_1},~{ab:\Que_2}
      \\[0.5ex]%
      \Session_{pq} & = & {p:\Cfg_3},~{q:\Cfg_4},~{qp:\Que_3},~{pq:\Que_4}
    \end{array}
  \]

  \noindent where both \Session_{ab}\ and \Session_{pq}\ are \wf\ and composed together, under the session environment $\Session=(\Session_{ab},\Session_{pq})$.
  Both $p$ and $b$ are roles performed by the same participant in two distinct  sessions.
  All clocks within \Session\ are 0.
  Below are the definitions of \S_2\ and \S_3:
  \[
    \begin{array}[c]{@{}l@{\qquad}l@{}}
      \begin{array}[c]{@{}l@{}}
        \S_2  =  \begin{array}[c]{@{}l@{}}\Choice|{
                   \Option?{request}[\String]{x<5}[x].{\S_2'},
                   \Option!{timeout}{x\geq 5}[x].{\S_2'{2}}
                   }\end{array}
      \end{array}
       &
      \begin{array}[c]{@{}l@{}}
        \S_3  =  \Choice*|{\Option?{message}{y>3}.{\S_3'}}
      \end{array}
    \end{array}
  \]

  \noindent Consider the following implementation $\P$, where role $b$ first waits 5 time units for a $request$ before receiving $message$ as role $p$.
  \[
    \P = \begin{array}[t]{@{}l@{}}
      \On{ab}\Recv*^{<5}{request}[\mathtt{req\_str}]:{
      \On{qp}\Recv*^{\infty}{message}:{\P'}
      }
      \\[0.5ex]~\After<{5}:{
      \On{ba}\Send{timeout}.{
      \On{qp}\Recv*^{\infty}{message}:{\P'{2}}
      }
      }
    \end{array}
  \]

  \noindent Assume $message$ arrives in the queue $qp$ at time $3.5$ and $request$ arrives at time $4$. Process $P$ will receive $request$ at time $4$ while $message$ is still pending in the queue, breaking \emph{receive urgency}.
  In fact, $P$ is not $4$-reading since there exists a role $p$ in $\Delta$ that may be able to receive before $4$ time units.
  \Cref{def:t_reading_session} is utilised by the \tc\ rules in~\RefTypeCheck\ to ensure such implementations are \emph{not} \wt.
  In this case, one may observe that there exists no well-typed implementation of $S_2$ and $S_3$ that satisfies receive urgency, hence the roles are
  fundamentally incompatible and unable to be interleaved.
\end{exa}

\section{Subject Reduction}\label{sec:subject_reduction}

In this section we establish subject reduction for our typing system, namely we establish
a relation between well-typed processes and their types that is preserved by reduction.

As usual, subject reduction is given for closed systems (\Cref{thm:subject_reduction}) with $\Gamma$ and $\Delta$ empty.
The proof relies on two lemmas that establish subject reduction for open systems, for time-consuming steps (\Cref{lem:sr_time_step})
and instantaneous steps (\Cref{lem:sr_action_step}).
We give an overview in this section and relegate the full proofs to~\Cref{sec:proof_subject_reduction}.
Crucially, by rule \Rule{Res} in~\Cref{fig:type_checking}~\Cref{eq:type_checking_standard}, in any typing judgement of an open system the sessions in $\Delta$ are both \wf\ and comprised of \compat\ configurations, enabling us to utilise progress of TOAST~\Cref{thm:type_progress}.
Such sessions we say are \fbal.
The notion of balanced session, inspired  by the one in~\cite{Chen2017} is given in~\Cref{def:fbal_session}.

\begin{rdefi}{defBalancedSession}[Balanced \Session]\label{def:bal_session}
  Let \Bal\ be the set containing all session environments \Session, such that if $\Session\in\Bal$, then \Session\ adheres to the following:
  \begin{enumerate}
    \item\label[case]{case:bal_session_recv} $\Session = \Session',{\p:\Cfg},{\qp:{\Msg;\Que}}
            \implies
            \Cfg\Trans:{?\Msg}\Cfg'$
          and $\Session',{\p:\Cfg'},{\qp:\Que}
            \in\Bal
          $.
    \item\label[case]{case:bal_session_missing_queue} $\Session = \Session',{\p:\Cfg_1},{\qp:{\Que_1}},{\q:\Cfg_2}
            \implies \exists \Que_2:
            \Cfg;{\Que}_1\Compat\Cfg;{\Que}_2
          $.
    \item\label[case]{case:bal_session_compat} $\Session = \Session',{\p:\Cfg_1},{\qp:\Que_1},{\q:\Cfg_2},{\pq:\Que_2}
            \implies
            \Cfg;{\Que}_1\Compat\Cfg;{\Que}_2$.
  \end{enumerate}
\end{rdefi}

\begin{rdefi}{defFullyBalancedSession}[Fully\text{-}Balanced \Session]\label{def:fbal_session}
  A \bal\ \Session\ is said to be \fbal\ if:
  \begin{enumerate}
    \item\label[case]{case:fbal_session_cfg} $\Session = \Session',{\p:\Cfg_1}
            \implies
            \exists \Session'{2},\q,\Val_2,\S_2,\Que_1,\Que_2:
            \Session' = \Session'{2},{\qp:\Que_1},{\q:\Cfg_2},{\pq:\Que_2}$.
    \item\label[case]{case:fbal_session_que} $\Session = \Session',{\qp:\Que_1}
            \implies
            \exists \Session'{2},\Val_1,\S_1,\Val_2,\S_2,\Que_2:
            \Session' = \Session'{2},{\p:\Cfg_1},{\q:\Cfg_2},{\pq:\Que_2}$.
  \end{enumerate}
\end{rdefi}

\begin{restatable}[Time Step]%
  {thm}
  {lemTimeStep}\label{lem:sr_time_step}
  Let \Session\ be \fbal\ and \wf.
  If $\Gam;\Timers\Entails\P\TypedBy\Session$ and \Time\ is defined,
  then $\Gam;\Timers+\t\Entails\Time\TypedBy\Session+\t$
  and $\Session+\t$ is \fbal\ and \wf.
\end{restatable}

\noindent By~\Cref{lem:sr_time_step} we prove that, given a \wt\ process \P, for any value of \t\ such that \Time\ is defined by~\Cref{def:time_passing_func_ef}, \t\ passing evenly over the session environment preserves both \wfness\ and \balness.
(Similar to the preservation of \wfness\ and \emph{compatibility} for system configurations in~\Cref{lem:tp_cfg_trans_preserve_wf,lem:tp_cfg_trans_preserve_compat} of~\Cref{thm:type_progress}.)
Clearly, by inspection of the \tc\ rules in~\RefTypeCheck, such processes would require several divisions of the initial session environment (and map of process timers) before each individual process could be evaluated.
Therefore, we rely on the notion of \balness\ (\Cref{def:bal_session}).
Naturally, a \fbal\ session must also be \bal, and this allows us to reason on specific \bal\ components of a larger \fbal\ session environment.

\begin{restatable}[Action Step]%
  {thm}
  {lemActionStep}\label{lem:sr_action_step}
  Let \Session\ be \bal\ and \wf.
  If $\Gam;\Timers\Entails\P\TypedBy\Session$
  and $\PCfg\!\!\Reduce-\!\!\PCfg'$
  then $\exists \Session':
    \Session\!\!\Reduce*\!\!\Session'$
  and
  $\Gam;\Timers'\!\Entails\!\P'\!\TypedBy\!\Session'$
  and
  \Session'\ is \bal\ and \wf.
\end{restatable}

\noindent By~\Cref{lem:sr_action_step} we prove, among other things, that any sending or receiving action performed by a \wt\ process corresponds to an action prescribed by the type of the corresponding configuration, at the correct time.
The best illustration of this is perhaps~\Cref{case:sr_action_step_recv} in~\Cref{lem:sr_action_step},
which shows the case of reduction via rule \Rule{Recv}
in~\Cref{fig:prc_reduction}~\Cref{eq:prc_reduction_recursion}, and that the reduction preserves \wtness.
We relegate the rules for session environment reduction
to~\Cref{fig:session_reduction} in~\Cref{sec:proof_subject_reduction}, along with the full proofs of~\Cref{lem:sr_time_step,lem:sr_action_step}.

\begin{restatable}[Subject Reduction]%
  {cor}
  {thmSubjectReduction}\label{thm:subject_reduction}
  If $\emptyset;\Timers\Entails\!\P\TypedBy\emptyset$
  and $\PCfg\!\rightarrow\!\PCfg'$
  then $\emptyset;\Timers'\!\Entails\!\P'\!\TypedBy\emptyset$.%
\end{restatable}%
\begin{rproof}{proofSubjectReduction}
  Since $\Session=\emptyset$, then by~\Cref{def:fbal_session}, \Session\ is \fbal.
  We proceed by induction on the nature of the reduction (\Reduce):
  \begin{caseanalysis}
    %
    \item\label{case:sr_time_step}
    If
    \(
    \Reduce=\Reduce~
    \)
    then the thesis follows~\Cref{lem:sr_time_step}.
    %
    \item\label{case:sr_action_step}
    If
    \(
    \Reduce=\Reduce-
    \)
    then the thesis follows~\Cref{lem:sr_action_step}.
    \qedhere
  \end{caseanalysis}
\end{rproof}

\noindent 
As a safety property of our typing system, subject reduction guarantees
  that any action performed by a well-typed process is as prescribed by the types; i.e., is timely and correct.

\section{Related Work}\label{sec:related_work}
This paper is an extended version of~\cite{Pears2023} featuring numerous improvements and three substantial additions to the existing work:
(1) the proofs of progress for TOAST given in~\Cref{sec:proof_type_progress};
(2) the type-checking rules in~\Cref{sec:type_checking}; and,
(3) the subject reduction result in~\Cref{sec:subject_reduction} together with its proof in~\Cref{sec:proof_subject_reduction}.
We have also made several changes to the syntax of processes in~\Cref{sec:toast_processes} and the reduction rules in~\Cref{ssec:process_reduction} to better accommodate the new type-checking rules and our focus on subject reduction result. These changes are largely syntactic, and are equivalent to those in the original. For convenience, we report below in detail about the changes to the syntax of processes with respect to the short version of this paper. Most notably, we have simplified (without loss of generality) the definition of $\e$ from being a linear expression over process timers and numeric constants,
to either $\Diam\n$, $\infty$ or $\leq 0$, as defined in~\Cref{eq:prc_syntax}.
For readability, we have streamlined the syntax of branching processes with and without timeout.
We have removed $\Err$ processes, as our focus is not that of showing error-freedom.

\paragraph{Asynchronous Timed Session Types}
Since our work builds on {{Asynchronous Timed Session Types}}~\cite{Bocchi2019}, we begin by highlighting the differences between the two. Our work adds expressiveness to the types in~\cite{Bocchi2019} by allowing mixed-choice. Our processes are also extended.
The processes in~\cite{Bocchi2019} allow a simplified variant of a timeout where a receive action is annotated with a deadline, akin to our branching primitive $\On{\p}\Recv^{\e}{\l}[\v]:{\P}_i$.
The added expressiveness of mixed-choice in TOAST allow us to introduce a process to execute upon timeout to express, for instance, retry strategies, which are not expressible in~\cite{Bocchi2019}.
In addition to timeouts we have allowed our processes to model timers. By using timing constraints over timers as conditional statements it becomes straightforward to derive the corresponding ``\dual'' of a given timeout processes.

  Similarly
  to~\cite{Bocchi2019},
  type-checking using our typing system is decidable under the assumption that
  channel and recursion variables
  are annotated with typing information.
  The rules in our typing system with infinitely many premises, such as `$\forall \t\in \Duration$' in the premise of rule $\Rule{Del}[d]$
  in~\Cref{eq:type_checking_time_sensitive},
  can be accounted for symbolically, such as by exploiting zones and Difference Bound Matricies
  (DBMs)~\cite{Bengtsson2003}
  or Satisfiability Modulo Theory (SMT).
%

In this version of our work we do not feature the $\Err$ process, which is featured in~\cite{Bocchi2019} and~\cite{Pears2023}.
More similarly to the earlier work of {{Timed MPST}}~\cite{Bocchi2014}, instead of letting a process reaching an error state ($\Err$) when a time constraint is violated, we do not allow time to pass.
(See~\Cref{def:time_passing_func_ef}.)
This choice allows us to manage the complexity added by mixed-choice and separate the concerns of subject reduction, safety and progress.
Normally, safety and progress come as consequences of subject reduction.
Conversely, in a timed scenario with error states, subject reduction depends on progress (i.e., if a process gets stuck some time constraint may be violated, yielding an error state reached, hence compromising subject reduction).
In~\cite{Bocchi2019} this circular dependency was overcome by requiring a progress property called receive-liveness, which is defined on the untimed counterpart of a timed process, which can be checked by well-established techniques~\cite{DezaniCiancaglini2007,Bettini2008}.
Unfortunately, not only do techniques such as~\cite{DezaniCiancaglini2007} not apply to mixed-choice, but they are hardly attainable in this setting:
removing time from a mixed-choice is likely to introduce progress violations that would otherwise not be there in the presence of time.
Therefore, removing \Err\ processes became critical in order to break this circular dependency and establish subject reduction.
As a future work, we would like to establish a progress property on the basis of subject reduction.

\subsection*{Runtime Monitoring}
Recent work by~\cite{Pears2024b} presented a work-in-progress toolchain for generating Erlang stub programs from a protocol specification notation derived from the theory of TOAST presented in this paper, with the aim of bridging the gap between protocol specifications and program implementations.
%
%
%
%
%
%
The toolchain generates correct-by-construction Erlang stub programs which provide a bare-bones scaffolding structure for a user to extend with functionality (i.e., what to send and how to handle receptions).

In addition to generating Erlang stub programs, \cite{Pears2024b} provide a generic runtime monitoring program for Erlang, which is provided as a precautionary measure to ensure that user of the toolchain does not unintentionally introduce unspecified behaviour when extending the Erlang stubs with functionality. The monitoring program can be setup to monitor the communication of any Erlang process against a given protocol specification, and can be configured to either: (a) \emph{verify} the process communicates as specified, or (b) try to \emph{enforce} the specified behaviour (e.g., by delaying the sending/receiving of messages).

\subsection*{Other work on timeouts} Outside of session types, a similar construct to timeout processes appears in~\cite{Bocchi2022a}, which builds upon the {{Temporal Process Language}}.
Similarly to our work, \cite{Bocchi2022a} extends the receiving branch processes in~\cite{Bocchi2019} in a manner based on the \recvafter\ pattern in Erlang.
However~\cite{Bocchi2022a} does not provide a typing discipline and rather focuses on characterising reliability properties in a mailbox-based communication setting.
The ``\emph{periodic recursion}'' in {{Rate-Based Session Types}}~\cite{Iraci2023},
features a similar construct where a timeout-recursive loop  runs indefinitely, at a steady and fixed rate.
Instead of timing constraints on individual actions, periods are imposed on recursive loops, specifying the rate at which each iteration must adhere to, and optional deadlines which specify a fixed upper-bound.

\subsection*{Session types}
The (untimed) {{Multiparty, Asynchronous and Generalised $\pi$-calculus}}~\cite{LeBrun2023} ({{MAG$\pi$}})
features the option for a unique ``timeout branch'' in the syntax of both branching types and processes.
Notably, ``timeout branches'' in~\cite{LeBrun2023} are non-deterministic, and may trigger regardless of another branch being viable.
Additionally, the semantics go as far as to facilitate errors, such as by dropping unreliable messages from a queue before they can be received.
In our work, we use timeouts to model safe mixed-choice, structured in such a way that communication mismatches cannot occur.
Our work focuses on the timeliness of messages being sent and received,
and to this end our timeouts ensure only one participant can send at any time.
By comparison, the timeouts in~\cite{LeBrun2023} can also model mixed-choice,
but since they are unstructured and non-deterministic, communication mismatches may occur and as such, the type/process must be designed to handle these failures.
Our processes feature time-sensitive conditional statements in our \IFthenelse\ processes to model the counterpart of a timeout, while~\cite{LeBrun2023} instead staggers timeout processes in an alternating pattern.
While this is effective in their context, it does limit the descriptive capabilities to mixed-choice with only two distinct regions of sending and receiving actions.
In comparison, our work allows mixed-choice composed of any number of sending and receiving actions interleaved together in the same state.

  {{Fault-Tolerant MPST}}~\cite{Peters2022,Peters2023} allows a branching type/process to specify a default value, to be used in the case when nothing is received, abstracting from the decision procedure on when to trigger this case.
(This approach relies on an external failure detector to inform the process to take the default value.)
While different from a timeout, this does allow a process that would otherwise be stuck waiting forever, to instead make progress.
Our timeout process and timers closer align with programming primitives like timeouts and timers in e.g.,
Erlang (with its \recvafter\ pattern and timers) and Go.

  {{Mixed Sessions}}~\cite{Vasconcelos2020}
introduces mixed-choice in untimed \emph{synchronous} session types. They allow for mixed-choice with a single communication primitive, similar to our own types.
Unlike our work, choices in~\cite{Vasconcelos2020} are non-deterministic and labels in a choice do not have to be unique.
Additionally, options in a choice are either linear or \emph{unrestricted}.
This approach leads to patterns such as the producer/consumer pattern being elegantly represented by a single type, whereas in our system several would be needed.
Effectively, the unrestricted action embeds into one of the options action within a choice, a recursive call that returns back to the choice.
This is once again similar to the timeout-recursion behaviour found in~\cite{Bocchi2022a,Iraci2023}, except that~\cite{Vasconcelos2020} is \emph{untimed}.
%
Recent work by~\cite{Peters2024b} 
present a typing system for
mixed-choice in \emph{synchronous} multiparty sessions
and 
explores the expressiveness 
offered by mixed-choice
by evaluating and comparing various other works (with and without mixed-choice).
%

Further afield, coordination structures have been proposed that overlap with mixed-choice, for example, fork and join~\cite{Denielou2012a:MPST:CFSM}, which permit messages within a fork (and its corresponding join) to be sent or received in any order; reminiscent of mixed-choice.
Affine sessions~\cite{Lagaillardie2022,Mostrous2018} support exception handling by enabling an end-point to perform a subset of the interactions specified by their type, but there is no consideration of time, hence timeouts.
Before session types gained traction, timed processes~\cite{Berger2007} were proposed for realising timeouts, but lack any notion of counterpart.

\subsection*{Timed semantics} Our work is based on the timed semantics of {{Communicating Timed Automata}}~\cite{Krcal2006:CTA} ({{CTA}}).
{{CTA}} is the combined product of {{Timed Automata}}~\cite{Alur1994:TA} and {{CFSM}}~\cite{Brand1983:CFSM}, yielding a timed asynchronous semantics.
Our time-passing semantics of \Rule{time}
in~\Cref{eq:type_semantics_triple} is based upon the \emph{urgent semantics}~\cite{Bartoletti2018},
to enforce \emph{receive urgency} and ensure that the \emph{latest enabled} action is not missed~\cite{Bartoletti2018}.
Our semantics for time-passing over configurations differs from~\cite{Bocchi2019}, which requires that sending actions are never missed.
%
%
%
In our setting a configuration featuring a mixed-choice,
such as $\Cfg{\Val_0},{\left\{ \Option!{a}{x<3}.{S}, \Option?{b}{x\geq3}.{S'} \right\}}$
would be unable to reach the latest-enabled receiving action $b$, since the preceding sending action $a$ would \emph{always} happen instead,
effectively rendering $b$ redundant.
However, in this work we only require that the latest-enabled action is never missed, allowing instances such as $a$ to be passed.
This provides implementations of
TOAST protocols
the affordance of not featuring every sending action, and allows them to draw upon the full range of interactions described by the given TOAST protocol.
%
%

The work of~\cite{Bartoletti2017} presents (binary) {{Timed Session Types}} in the synchronous setting.
Instead of relying on
a
notion of \emph{duality}, \cite{Bartoletti2017} presents a means of determining if two parties are \emph{compliant}, given their sets of constraints.
This allows the timing constraints on actions to be constructed without considering the dual,
and therefore, avoiding such issues as those discussed in~\Cref{exa:junk_types}.
%
  E.g., if one party expects to receive $a$ some time after 3 time units,
  but has some proceeding action $c$ that will not be enabled after a total of 5 time units,
  then the party can still be guaranteed to be free from deadlocks if there is some \emph{compliant co-party} that always sends before 4 time units;
  as in~\Cref{exa:junk_types}.
%

\section{Conclusion}\label{sec:conclusion}

We have shown how timing constraints provide an intuitive way of integrating mixed-choice into asynchronous (binary) session types.
There are many conceivable ways to realise mixed-choice using programming primitives.
However, our integration with time, embodied in TOAST, offers new capabilities for modelling timeouts which sit at the heart of protocols and are a widely-used idiom in programming practice.
To provide a bridge to programming languages, we provide a timed session calculus enriched with a \recvafter\ pattern and process timers, providing the means to implement a timeout process and its dual.

Taken altogether, we have lifted a long-standing restriction on asynchronous session types by allowing for safe mixed-choice, through the judicious application of timing constraints.
In this extended version of~\cite{Pears2023}, we have established a means of type-checking our processes with TOAST.
Additionally, we have shown that processes that are well-typed by our type system adhere to subject reduction (\Cref{thm:subject_reduction}) which means that any step the system
takes
maintains
the shape of a well-typed system.
Subject reduction guarantees that when an action happens, its sender, receiver, label, payload, and timing all conform to the types.

\subsection*{Future Work}

  There are two primary focuses for future work:
  (1) an extension to the multiparty setting;
  and (2) a progress property for our typing system.
  Progress was included in previous work on timed session types~\cite{Bocchi2019}. This result is not easily transferable to our work since we deal with mixed-choice.
  In fact, the result in~\cite{Bocchi2019} builds upon existing techniques to ensure global progress~\cite{DezaniCiancaglini2007} on an untimed version of their timed processes.
  In essence, \cite{Bocchi2019} uses an \emph{untimed} version of a process to check for deadlocks in the interaction structures of the process, using the checks described in~\cite{DezaniCiancaglini2007}.
  We are unable to take this approach since: (a) the work in~\cite{DezaniCiancaglini2007} does not account for mixed-choice, and (b) the untimed counterpart of a \emph{safe} timed mixed-choice may not enjoy progress since the timing constraints that make it \emph{safe} would be removed. Proving progress in this setting requires explicit reasoning on the interactions structures.
  Therefore, we will have to take a different approach to in order to provide a stronger property for our typing system.
%

Additionally, in future work we would like to continue the development of the theory, such as to the multiparty setting, and in turn, enable work such as the toolchain presented by~\cite{Pears2024b} to have a broader potential for application to real world systems.
  As part of this growing toolchain, we aim to include an implementation of our typing system.

\bibliographystyle{alphaurl}
\bibliography{main.bib}

\appendix

\section{Proof of Type Progress}\label{sec:proof_type_progress}

The result of~\Cref{thm:type_progress} follows similar reasoning to~\cite{Bocchi2019}.
While~\Cref{thm:type_progress} states progress for \emph{initial system configurations}, progress is also guaranteed for systems composed of \wf\ and \compat\ configurations by~\Cref{lem:tp_progress}.
Our proof shows that our \wfness\ rules in~\Cref{fig:type_wf_rules} ensure that the timing-constraints on actions are \emph{feasible}, and structured such that any mixed-choice are \emph{safe}.
We show that our {{LTS}} in~\Cref{eq:type_semantics_tuple,eq:type_semantics_triple,eq:type_semantics_system} preserve both \wfness\ and compatibility of our configurations.

\subsection{Auxiliary Definitions \& Assumptions}

\begin{defi}[Latest-enabled Configurations]\label{def:cfg_le}
  \Cfg\ is \le\ if, $\exists \Comm, \Comm'$
  such that $\Comm\neq\Comm'$
  and $\Cfg \fe=$
  and $\Cfg \fe=|'$.
\end{defi}

\begin{defi}[Live Configurations]\label{def:cfg_live}
  \Cfg\ is \live\ if, $\S=\End$ or \Cfg\ is \fe.
\end{defi}


\begin{defi}[Clock Names]\label{def:clock_names}
  We define \CN{\delta}, the \emph{clock names} of a constraint $\delta$, as:\
  \[
    \CN{\delta} =
    \begin{cases}
      \{ x \}     
      & \text{if\ }\delta\in
          \begin{array}\{{@{}l@{}}\}
                x>n, x=n
          \end{array}
      \\
      \{ x,y \}     
      & \text{if\ }\delta\in
          \begin{array}\{{@{}l@{}}\}
                x-y>n, x-y=n
          \end{array}
      \\
      \CN{\delta'}
      & \text{if\ }\delta=\neg\delta'
      \\
      \CN{\delta_1}\cup\CN{\delta_2}
      & \text{if\ }\delta=\delta_1\land\delta_2
      \\
      \emptyset             
      & \text{if\ }\delta=\True
    \end{cases}
  \]
\end{defi}

\begin{defi}[Free Names]\label{def:free_names}
  We define \FN{S}, the \emph{free names} of a type \S, as:\footnote{As in~\cite{Pierce2002}.}
  \[
    \FN{\S} =
    \begin{cases}
      \{\alpha\}                  & \text{if\ }\S=\alpha
      \\
      \FN{\S'}\setminus\{\alpha\} & \text{if\ }\S=\mu\alpha.\S'
      \\
      \bigcup_{i\in I} \FN{\S_i}  & \text{if\ }\S=\Choice{\l}[\T]_i
      \\
      \emptyset                   & \text{if\ }\S=\End
    \end{cases}
  \]
\end{defi}

\begin{defi}[Capture Avoiding Substitution]\label{def:capture_avoid_subst}
  Substitution $\Subst{\S'}{\alpha}$ is \emph{capture avoiding} for $\S$ iff:
  \[
    \begin{array}{@{}l@{\quad}c@{\quad}l@{}}
      \S = \Choice{\l}[\T]_i
       & \Longrightarrow &
      \Subst{\S'}{\alpha} \text{\ is capture avoiding for \S_j,\ } \forall j\in I
      \\
      \S = \mu\beta.\S'{2} \land \alpha\neq\beta
       & \Longrightarrow &
      \beta\notin\FN{\S'} \text{\ and\ }
      \Subst{\S'}{\alpha} \text{\ is capture avoiding for}\S'{2}
    \end{array}
  \]

  \noindent With an abuse of notation, we say that $\S\Subst{\S'}{\alpha}$ is capture avoiding when $\Subst{\S'}{\alpha}$ is capture avoiding for \S.
\end{defi}

\begin{defi}[Unfold Equivalence]\label{def:up_to_unfolding}
  We define $(\utu)$ as the smallest relation such that:
  \small\[
    \begin{array}[t]{@{}l@{}}
      \begin{array}[t]{@{}l@{\qquad}l@{\qquad}l@{}}
        \S \utu \S
         &
        \S \utu \S' \implies \S' \utu \S
         &
        \S \utu \S' \land \S' \utu \S'{2} \implies \S \utu \S'{2}
        \\[1ex]
      \end{array}
      \\
      \begin{array}[t]{@{}l@{\qquad}l@{\qquad}l@{}}
        \S\Subst{\mu\alpha.\S}{\alpha}\text{\ is capture avoiding}\implies
        \mu\alpha.\S \utu \S\Subst{\mu\alpha.\S}{\alpha}
         &
        \S \utu \S' \implies
        \mu\alpha.\S \utu \mu\alpha.\S'
        \\[1ex]
      \end{array}
      \\
      \begin{array}[t]{@{}l@{}}
        \S \utu \S' \implies
        \begin{array}[c]\{{@{}l@{\ }l@{}}\}
          {\Comm}{\lT<>}({\Constraints},{\Resets}).{\S},
           &
          {\Comm''}{\lT<>'{2}}({\Constraints'{2}},{\Resets'{2}}).{\S'{2}}
        \end{array}
        \utu
        \begin{array}[c]\{{@{}l@{\ }l@{}}\}
          {\Comm}{\lT<>}({\Constraints},{\Resets}).{\S'},
           &
          {\Comm''}{\lT<>'{2}}({\Constraints'{2}},{\Resets'{2}}).{\S'{2}}
        \end{array}
      \end{array}
    \end{array}
  \]
\end{defi}

\begin{asm}\label{asm:rec_name_def}
  If $\RecTypeEnv,\alpha:\Constraints';\Constraints\Entails\S$,
  then $\alpha\notin\Dom{\RecTypeEnv}$.
\end{asm}

\subsection{Well-formed Types}

\begin{lem}\label{lem:tp_free_name}
  If $\alpha\notin\FN{\S}$
  and $\alpha\notin\Dom{\RecTypeEnv}$,
  then
  \(
  \RecTypeEnv,\alpha:\Constraints';\Constraints\Entails\S
  \iff
  \RecTypeEnv;\Constraints\Entails\S
  \).
\end{lem}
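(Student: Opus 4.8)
The plan is to prove the two implications simultaneously by structural induction on \S, in each case inverting the unique well-formedness rule of \Cref{fig:type_wf_rules} that can derive a judgement for the given type constructor. The hypothesis $\alpha\notin\FN{\S}$ propagates to every immediate subterm, and the side-condition $\alpha\notin\Dom{\RecTypeEnv}$ is maintained as the environment grows. The guiding observation is that the recursion environment \RecTypeEnv\ is consulted in exactly two places: the axiom \Rule{var}, which merely performs a lookup, and the feasibility premise of \Rule{choice}, which recursively checks the continuations. Every remaining premise is independent of \RecTypeEnv---in particular the mixed-choice condition is purely semantic, and the delegation condition is checked under the empty environment $\emptyset$---so none is affected by the presence or absence of the binding $\alpha:\Constraints'$.

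\emph{Base cases.} For $\S=\End$ the claim is immediate: \Rule{end} has no premises and forces $\Constraints=\True$ regardless of \RecTypeEnv. For a call $\S=\beta$, from $\alpha\notin\FN{\S}=\{\beta\}$ we get $\alpha\neq\beta$; the only applicable rule is \Rule{var}, whose conclusion requires the binding $\beta:\Constraints$ to occur in the environment, and since $\beta\neq\alpha$ that binding is present in \RecTypeEnv\ precisely when it is present in $\RecTypeEnv,{\alpha:\Constraints'}$, yielding both directions at once.

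\emph{Inductive cases.} For a choice $\S=\C_i$ I invert \Rule{choice}. Because $\alpha\notin\FN{\C_i}=\bigcup_{i\in I}\FN{\S_i}$ we have $\alpha\notin\FN{\S_i}$ for every $i$, so the induction hypothesis applied to each feasibility premise $\RecTypeEnv;\FutureEnv_i\Entails\S_i$ (resp.\ $\RecTypeEnv,{\alpha:\Constraints'};\FutureEnv_i\Entails\S_i$) transfers it across the binding, while the mixed-choice and delegation premises are carried over verbatim; re-applying \Rule{choice} reassembles the derivation. The delicate case is recursion, $\S=\mu\beta.\S'$, with $\alpha\notin\FN{\S'}\setminus\{\beta\}$. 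By the variable convention together with \Cref{asm:rec_name_def}, every use of \Rule{rec} here binds $\beta$ with $\beta\notin\Dom{\RecTypeEnv,{\alpha:\Constraints'}}$, so in particular $\beta\neq\alpha$ and hence $\alpha\notin\FN{\S'}$. Inverting \Rule{rec} produces $\RecTypeEnv,{\alpha:\Constraints'},{\beta:\Constraints};\Constraints\Entails\S'$; treating environments as finite maps this equals $(\RecTypeEnv,{\beta:\Constraints}),{\alpha:\Constraints'};\Constraints\Entails\S'$, where $\alpha\notin\Dom{\RecTypeEnv,{\beta:\Constraints}}$, so the induction hypothesis applied to $\S'$ removes (and, in the converse direction, reinstates) the binding $\alpha:\Constraints'$; re-applying \Rule{rec} closes the case.

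I expect the only real difficulty to be bookkeeping around the recursion binder: justifying that $\beta$ may be assumed distinct from $\alpha$ and from $\Dom{\RecTypeEnv}$ (supplied by $\alpha$-renaming of bound recursion variables and by \Cref{asm:rec_name_def}), and that the environment is order-insensitive so that exchanging $\alpha:\Constraints'$ with $\beta:\Constraints$ is sound. Once these conventions are fixed, no genuine obstacle remains, precisely because---unlike in settings where a binder's annotation is threaded through an entire subderivation---the environment here is used only for the lookups of \Rule{var} and the continuation checks of \Rule{choice}, both manifestly invariant under adding a binding whose subject does not occur free in \S.
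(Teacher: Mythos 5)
Your proposal is correct and follows essentially the same route as the paper's own proof: a case analysis on the structure of the type, observing that the recursion environment is only ever consulted by the variable axiom, that $\alpha$ cannot be the looked-up variable because it is not free, and that a bound recursion variable clashing with $\alpha$ can be $\alpha$-renamed away (the paper invokes exactly this renaming, and your appeal to the variable convention plus the freshness assumption on \Rule{rec} matches it). Your write-up is somewhat more disciplined---an explicit simultaneous structural induction with environment reordering in the \Rule{rec} case---but it contains no new idea beyond the paper's argument.
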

\begin{proof}
  Since $\alpha\notin\Dom{\RecTypeEnv}$ it follows that recursive call $\alpha$ has not yet been defined.
  Additionally, since $\alpha\notin\FN{\S}$, then it follows~\Cref{def:free_names} that
  if $\mu\alpha.\S'$ appears at any point within \S, then $\alpha$ \emph{may} appear at some point within \S'.
  For each case of ($\iff$):
  \begin{caseanalysis}
    %
    \item\label{case:tp_free_name_ltor}
    \(
    \RecTypeEnv,\alpha:\Constraints';\Constraints\Entails\S
    \Longrightarrow
    \RecTypeEnv;\Constraints\Entails\S
    \).
    If $\S=\mu\beta.\S'$, then by rule \Rule{rec}:
    \[
      \infer[\Rule{rec}]{%
        \RecTypeEnv,\alpha:\Constraints';\Constraints\Entails\mu\beta.\S'
      }{%
        \RecTypeEnv,\alpha:\Constraints',\beta:\Constraints;\Constraints\Entails\S'
      }
    \]

    \noindent Notice the environment
    \(
    \RecTypeEnv,\alpha:\Constraints',\beta:\Constraints
    \)
    in the premise of rule \Rule{rec} (above).
    If $\beta=\alpha$ then, the environment would contain duplicate variables.
    Therefore, $\beta\neq\alpha$
    or $\beta$ must be renamed to some $\beta'\notin\Dom{\RecTypeEnv,\alpha:\Constraints'}$
    and $\S=\mu\beta'.\S'\Subst{\beta'}{\beta}$.
    Similarly, if $\S=\beta$ then, since $\alpha\notin\FN{\beta}$, it must be that $\beta\neq\alpha$.
    In both of these cases, the presence of $\alpha:\Constraints$ is immediately inconsequential, and will remain so for the remainder of the evaluation.
    Clearly, if $\S=\End$ then the same holds.
    If $\S=\Choice{\l}[\T]_i$ then by~\Cref{def:free_names}, it must also hold for all $i\in I$.
    In summary, since $\alpha\notin\Dom{\RecTypeEnv}$ and $\alpha\notin\FN{\S}$, it follows that $\alpha$ must correspond to a recursive variable in a different unfolding (either outside \S, or a future unfolding within \S), and that it has been renamed from the corresponding recursive variable in the current unfolding.
    %
    \item\label{case:tp_free_name_rtol}
    \(
    \RecTypeEnv;\Constraints\Entails\S
    \Longrightarrow
    \RecTypeEnv,\alpha:\Constraints';\Constraints\Entails\S
    \).
    Following~\Cref{case:tp_free_name_ltor},
    the presence of $\alpha:\Constraints$ is inconsequential since $\alpha$ corresponds to some already unfolded recursive call outside \S, or a some future unfolding within \S.
    In both cases, the $\alpha$ has been renamed from the corresponding recursive variable in the current unfolding.
    \qedhere
  \end{caseanalysis}
\end{proof}

\begin{lem}[Substitution Lemma]\label{lem:tp_substitution}
  Let $ \RecTypeEnv;\Constraints'\Entails\S'$.
  If \(
  \alpha\notin \Dom{\RecTypeEnv}
  \) and \(
  \S\Subst{\S'}{\alpha}
  \) is \emph{capture avoiding}, then
  \(
  \RecTypeEnv;\Constraints\Entails \S\Subst{\S'}{\alpha}
  \iff
  \RecTypeEnv,\alpha:\Constraints';\Constraints\Entails \S
  \) holds.
\end{lem}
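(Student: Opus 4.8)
The plan is to prove both implications of the biconditional simultaneously by induction on the structure of $\S$ (equivalently, on the derivation), with one case for each of the four formation rules \Rule{end}, \Rule{var}, \Rule{rec}, and \Rule{choice}. In every case, capture-avoidance (\Cref{def:capture_avoid_subst}) lets the substitution $\Subst{\S'}{\alpha}$ commute with the top-level type constructor, so that the last rule applied to $\S\Subst{\S'}{\alpha}$ and to $\S$ is the same; the work is then to match premises across the two derivations, using \Cref{lem:tp_free_name} to add or discard the binding $\alpha:\Constraints'$ wherever $\alpha$ does not occur, and to transport the standing hypothesis $\RecTypeEnv;\Constraints'\Entails\S'$ underneath recursion binders. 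Throughout I would rely on \Cref{asm:rec_name_def} to keep recursion names distinct, so the environments involved never acquire duplicate variables.

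First I would dispatch the structurally simple cases. If $\S=\End$, then $\End\Subst{\S'}{\alpha}=\End$ and both judgements follow from \Rule{end}, which forces $\Constraints=\True$ and ignores the environment, so adding or removing $\alpha:\Constraints'$ is immaterial. If $\S=\beta$ is a variable with $\beta\neq\alpha$, then $\beta\Subst{\S'}{\alpha}=\beta$ and $\alpha\notin\FN{\beta}$, so \Cref{lem:tp_free_name} yields the equivalence directly. For $\S=\mu\beta.\S'{2}$, capture-avoidance gives $\beta\neq\alpha$, $\beta\notin\FN{\S'}$, and $(\mu\beta.\S'{2})\Subst{\S'}{\alpha}=\mu\beta.(\S'{2}\Subst{\S'}{\alpha})$; stripping \Rule{rec} on both sides reduces the goal to $\RecTypeEnv,\beta:\Constraints;\Constraints\Entails\S'{2}\Subst{\S'}{\alpha} \iff \RecTypeEnv,\alpha:\Constraints',\beta:\Constraints;\Constraints\Entails\S'{2}$, which, modulo reordering the environment, is exactly the induction hypothesis in the extended environment $\RecTypeEnv,\beta:\Constraints$, after using \Cref{lem:tp_free_name} to lift $\RecTypeEnv;\Constraints'\Entails\S'$ to $\RecTypeEnv,\beta:\Constraints;\Constraints'\Entails\S'$ (legitimate since $\beta\notin\FN{\S'}$ and $\beta\notin\Dom{\RecTypeEnv}$). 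Finally, for a choice $\S=\C_i$, substitution distributes into each continuation; in \Rule{choice} the mixed-choice and delegation premises concern only guards, directions, and (closed) delegated types and are therefore untouched by $\Subst{\S'}{\alpha}$, while the feasibility premises $\RecTypeEnv;\FutureEnv_i\Entails\S_i$ together with $\Constraints_i\Reset_i\models\FutureEnv_i$ are converted branch-by-branch by the induction hypothesis (the side conditions are identical on both sides, and the conclusion constraint $\Past{\bigvee_{i\in I}\Constraints_i}$ does not depend on $\alpha$).

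The crux is the remaining variable case $\S=\alpha$, where $\alpha\Subst{\S'}{\alpha}=\S'$. For the direction $(\Longleftarrow)$ this is immediate: a derivation of $\RecTypeEnv,\alpha:\Constraints';\Constraints\Entails\alpha$ can only end in \Rule{var}, which forces $\Constraints=\Constraints'$, and then the goal $\RecTypeEnv;\Constraints\Entails\S'$ is precisely the standing hypothesis. The direction $(\Longrightarrow)$ is the main obstacle: here I am handed $\RecTypeEnv;\Constraints\Entails\S'$ and must recover $\Constraints=\Constraints'$ in order to apply \Rule{var} and conclude $\RecTypeEnv,\alpha:\Constraints';\Constraints\Entails\alpha$. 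I would discharge this with a separate \emph{determinacy} observation: for a fixed type and recursion environment, the reachability constraint under which a judgement is derivable is unique. This holds because each rule pins down the constraint in its conclusion — \Rule{end} to $\True$, \Rule{var} to the constraint stored for the variable, \Rule{choice} to the weak past $\Past{\bigvee_{i\in I}\Constraints_i}$ of its guards, and \Rule{rec} propagates the (thereby determined) constraint of its body — so that $\RecTypeEnv;\Constraints\Entails\S'$ together with $\RecTypeEnv;\Constraints'\Entails\S'$ force $\Constraints=\Constraints'$. Determinacy tacitly uses guardedness of recursion to exclude degenerate bodies such as $\mu\beta.\beta$, for which the constraint would be left free. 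With $\Constraints=\Constraints'$ in hand the variable case closes, completing the induction.
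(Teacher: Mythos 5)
Your proof follows the same structural induction as the paper's and is correct in substance, but one justification is wrong and needs repair. In the recursion case you claim that capture-avoidance forces $\beta\neq\alpha$; it does not. \Cref{def:capture_avoid_subst} only imposes conditions \emph{under the hypothesis} $\alpha\neq\beta$ and is silent when the binder coincides with $\alpha$, so $\S=\mu\alpha.\S'{2}$ is a live sub-case that your argument skips. It is harmless once treated: there the substitution is the identity, since $\alpha$ is rebound, so $(\mu\alpha.\S'{2})\Subst{\S'}{\alpha}=\mu\alpha.\S'{2}$ and $\alpha\notin\FN{\mu\alpha.\S'{2}}$, whence \Cref{lem:tp_free_name} — a tool you already invoke elsewhere — closes the left-to-right direction directly; in the right-to-left direction the paper instead observes that $\beta=\alpha$ would place a duplicate binding in the environment of the premise of \Rule{rec}, so that sub-case is vacuous. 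You should make this explicit rather than assert the case cannot arise. On the other hand, your handling of the crux $\S=\alpha$ is \emph{more} careful than the paper's: in the left-to-right direction the paper simply asserts $\Constraints=\Constraints'$, whereas you derive it from a determinacy observation — each formation rule pins the conclusion constraint ($\True$ for \Rule{end}, the stored constraint for \Rule{var}, the weak past of the disjoined guards for \Rule{choice}, propagation for \Rule{rec}) — together with contractivity of recursion to exclude bodies like $\mu\beta.\beta$. That observation is sound (the paper appeals to contractivity for the analogous degeneracy in \Cref{lem:tp_wf_s}) and fills in the one step the paper leaves implicit. The remaining cases ($\End$, variables other than $\alpha$, choice) match the paper's argument.
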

\begin{proof}
  We proceed by induction on the structure of \S:
  \begin{caseanalysis}
    %
    \item\label{case:tp_subst_rec_def}
    If $\S=\mu\beta.\S'{2}$
    then, for each case of ($\iff$):
    \begin{caseanalysis}
      %
      \item\label{case:tp_subst_rec_def_ltor}
      \(
      \RecTypeEnv;\Constraints\Entails
      \mu\beta.\S'{2}
      \Subst{\S'}{\alpha}
      \Longrightarrow
      \RecTypeEnv,\alpha:\Constraints';\Constraints\Entails
      \mu\beta.\S'{2}
      \).
      \begin{caseanalysis}
        %
        \item\label{case:tp_subst_rec_def_ltor_a_eq_b}
        If $\beta=\alpha$, then
        \(
        \mu\beta.\S'{2}\Subst{\S'}{\alpha}
        =
        \mu\beta.\S'{2}\Subst{\S'}{\beta}
        =
        \mu\beta.\S'{2}
        \).
        Therefore:
        \[
          \RecTypeEnv;\Constraints\Entails\mu\beta.\S'{2}
        \]

        \noindent Since $\alpha=\beta$ and $\alpha\notin\Dom{\RecTypeEnv}$, by~\Cref{lem:tp_substitution} we have the thesis:
        \[
          \RecTypeEnv,\alpha:\Constraints';
          \Constraints\Entails
          \mu\beta.\S'{2}
        \]
        %
        \item\label{case:tp_subst_rec_def_ltor_a_neq_b}
        If $\beta\neq\alpha$,
        then by rule \Rule{rec}:
        \[
          \infer[\Rule{rec}]{%
          \RecTypeEnv;\Constraints\Entails
          \mu\beta.\S'{2}
          \Subst{\S'}{\alpha}
          }{%
          \RecTypeEnv,\beta:\Constraints;
          \Constraints\Entails
          \S'{2}
          \Subst{\S'}{\alpha}
          }
        \]

        \noindent By~\Cref{def:capture_avoid_subst}, $\beta\notin\FN{\S'}$ and $\S'{2}\Subst{\S'}{\alpha}$ is \emph{capture avoiding}.
        By~\Cref{lem:tp_free_name}:
        \(
        \RecTypeEnv,\beta:\Constraints;\Constraints';\Entails\S'
        \iff
        \RecTypeEnv;\Constraints'\Entails\S'
        \).
        By the induction hypothesis:
        \[
          \RecTypeEnv,\alpha:\Constraints',\beta:\Constraints;\Constraints\Entails\S'{2}
          \iff
          \RecTypeEnv,\beta:\Constraints;\Constraints;\Entails\S'{2}\Subst{\S'}{\alpha}
        \]

        \noindent We obtain our thesis following the conclusion of rule \Rule{rec} (below), using the former case (above) as our premise:
        \[
          \infer[\Rule{rec}]{%
          \RecTypeEnv,\alpha:\Constraints';\Constraints\Entails\mu\beta.\S'{2}
          }{%
          \RecTypeEnv,\alpha:\Constraints',\beta:\Constraints;\Constraints\Entails\S'{2}
          }
        \]
      \end{caseanalysis}
      %
      \item\label{case:tp_subst_rec_def_rtol}
      \(
      \RecTypeEnv,\alpha:\Constraints';\Constraints\Entails
      \mu\beta.\S'{2}
      \Longrightarrow
      \RecTypeEnv;\Constraints\Entails
      \mu\beta.\S'{2}
      \Subst{\S'}{\alpha}
      \).
      By rule \Rule{rec}:
      \[
        \infer[\Rule{rec}]{%
        \RecTypeEnv,\alpha:\Constraints';\Constraints\Entails
        \mu\beta.\S'{2}
        }{%
        \RecTypeEnv,\alpha:\Constraints',\beta:\Constraints;\Constraints\Entails
        \S'{2}
        }
      \]

      \noindent It cannot be that $\beta=\alpha$, as otherwise the environment
      \(
      \RecTypeEnv,\alpha:\Constraints',\beta:\Constraints
      \)
      would contain a duplicate variable (as in the premise of rule \Rule{rec}, above).
      Therefore, it must be that $\beta\neq\alpha$.
      Since $\mu\beta.\S'{2}\Subst{\S'}{\alpha}$ is \emph{capture avoiding}, by~\Cref{def:capture_avoid_subst}, $\beta\notin\FN{\S'}$ and $\S'{2}\Subst{\S'}{\alpha}$ is also \emph{capture avoiding}.
      By~\Cref{lem:tp_free_name}:
      \(
      \RecTypeEnv;\Constraints'\Entails\S'
      \iff
      \RecTypeEnv,\beta:\Constraints;\Constraints'\Entails\S'
      \).
      By the induction hypothesis:
      \[
        \RecTypeEnv,\alpha:\Constraints',\beta:\Constraints;
        \Constraints\Entails\S'{2}
        \iff
        \RecTypeEnv,\beta:\Constraints;\Constraints\Entails\S'{2}\Subst{\S'}{\alpha}
      \]

      \noindent We obtain our thesis following the conclusion of rule \Rule{rec} (below), using the latter case (above) as the premise:
      \[
        \infer[\Rule{rec}]{%
        \RecTypeEnv;\Constraints\Entails\mu\beta.\S'{2}\Subst{\S'}{\alpha}
        }{%
        \RecTypeEnv,\beta:\Constraints;\Constraints\Entails\S'{2}\Subst{\S'}{\alpha}
        }
      \]
    \end{caseanalysis}
    %
    \item\label{case:tp_subst_rec_call}
    If $\S=\beta$
    then, for each case of ($\iff$):
    \begin{caseanalysis}
      %
      \item\label{case:tp_subst_rec_call_ltor}
      \(
      \RecTypeEnv;\Constraints\Entails
      \beta
      \Subst{\S'}{\alpha}
      \Longrightarrow
      \RecTypeEnv,\alpha:\Constraints';\Constraints\Entails
      \beta
      \).
      \begin{caseanalysis}
        %
        \item\label{case:tp_subst_rec_call_ltor_a_eq_b}
        If $\beta=\alpha$,
        then $\beta\Subst{\S'}{\alpha}=\S'$
        and $\Constraints'=\Constraints$.
        Since $\alpha\notin\Dom{\RecTypeEnv}$, it follows:
        \[
          \infer[\Rule{var}]{%
            \RecTypeEnv',\beta:\Constraints;\Constraints\Entails\beta
          }{}
        \]
        %
        \item\label{case:tp_subst_rec_call_ltor_a_neq_b}
        If $\beta\neq\alpha$,
        then $\beta\Subst{\S'}{\alpha}=\beta$
        and $\RecTypeEnv=\RecTypeEnv',\beta:\Constraints$
        and by rule \Rule{var}:
        \[
          \infer[\Rule{var}]{%
            \RecTypeEnv',\beta:\Constraints;\Constraints\Entails\beta
          }{}
        \]

        \noindent By rule \Rule{var} (and up-to reordering of variables):
        \[
          \infer[\Rule{var}]{%
            \RecTypeEnv',\beta:\Constraints,\alpha:\Constraints';\Constraints\Entails\beta
          }{}
        \]
      \end{caseanalysis}
      %
      \item\label{case:tp_subst_rec_call_rtol}
      \(
      \RecTypeEnv,\alpha:\Constraints';\Constraints\Entails
      \beta
      \Longrightarrow
      \RecTypeEnv;\Constraints\Entails
      \beta
      \Subst{\S'}{\alpha}
      \).
      \begin{caseanalysis}
        %
        \item\label{case:tp_subst_rec_call_rtol_a_eq_b}
        If $\beta=\alpha$,
        then $\Constraints'=\Constraints$
        and $\beta\Subst{\S'}{\alpha}=\S'$.
        The thesis coincides with the hypothesis:
        \[
          \RecTypeEnv;\Constraints'\Entails\S'
        \]
        %
        \item\label{case:tp_subst_rec_call_rtol_a_neq_b}
        If $\beta\neq\alpha$,
        then $\beta\Subst{\S'}{\alpha}=\beta$ and
        \(
        \RecTypeEnv=\RecTypeEnv',\beta:\Constraints
        \)
        and by rule \Rule{var}:
        \[
          \infer[\Rule{var}]{%
            \RecTypeEnv',\alpha:\Constraints',\beta:\Constraints;\Constraints\Entails\beta
          }{}
        \]

        \noindent The thesis follows immediately by rule \Rule{var}:
        \[
          \infer[\Rule{var}]{%
            \RecTypeEnv',\beta:\Constraints;\Constraints\Entails\beta
          }{}
        \]
      \end{caseanalysis}
    \end{caseanalysis}
    %
    \item\label{case:tp_subst_choice}
    If $\S=\Choice{\l}[\T]_i$
    then, for each case of ($\iff$):
    \begin{caseanalysis}
      %
      \item\label{case:tp_subst_choice_ltor}
      \(
      \RecTypeEnv;\Constraints\Entails
      \Choice{\l}[\T]_i
      \Subst{\S'}{\alpha}
      \Longrightarrow
      \RecTypeEnv,\alpha:\Constraints';\Constraints\Entails
      \Choice{\l}[\T]_i
      \).
      \vspace{0.75ex}
      \newline
      By~\Cref{def:capture_avoid_subst}, for all $i\in I$, $\S_i\Subst{\S'}{\alpha}$ is \emph{capture avoiding}.
      By rule \Rule{choice}:
      \begin{equation}\label{eq:tp_subst_choice_ltor_asm}
        \infer[\Rule{choice}]{
        \RecTypeEnv;~\Past_iV_{i\in I}
        \Entails
        \Choice{\l}[\T]{\Constraints}[\Resets].{\S}_i
        \Subst{\S'}{\alpha}
        }{%
        \begin{array}{@{}c @{\hspace{2ex}} l@{}}
          \forall i\in I:
          \RecTypeEnv;~\FutureEnv_i
          \Entails
          \S_i\Subst{\S'}{\alpha}
          ~\land~
          \Constraints_i\Reset_i\models\FutureEnv_i
           & \text{(feasibility)}
          \\[0.5ex]
          \forall i,j\in I:
          i \neq j
          \implies
          \Constraints_i\land\Constraints_j\models\False
          ~\vee~
          \Comm_i=\Comm_j
           & \text{(mixed-choice)}
          \\[0.5ex]
          \forall i\in I:
          \T_i=\Del'{3}
          \implies
          \emptyset;~\FutureEnv'\Entails\S'{3}
          ~\land~
          \Constraints'{3}\models\FutureEnv'
           & \text{(delegation)}
        \end{array}
        }
      \end{equation}

      \noindent where $\Constraints=\Past{\bigvee_{i\in I}\Constraints_i}$.
      By the induction hypothesis:
      \[
        \RecTypeEnv;\FutureEnv_i
        \Entails\S_i\Subst{\S'}{\alpha}
        \iff
        \RecTypeEnv,\alpha:\Constraints';\FutureEnv_i
        \Entails\S_i
        \qquad
        \text{where\ }\Constraints_i\Reset_i\models\FutureEnv_i
      \]

      \noindent Therefore, we obtain our thesis following the conclusion of rule \Rule{choice} (below), using the latter case (above) as the $\text{(feasibility)}$ premise:
      \begin{equation}\label{eq:tp_subst_choice_ltor_conc}
        \infer[\Rule{choice}]{
        \RecTypeEnv,\alpha:\Constraints';~\Past_iV_{i\in I}
        \Entails
        \Choice{\l}[\T]{\Constraints}[\Resets].{\S}_i
        }{%
        \begin{array}{@{}c @{\hspace{2ex}} l@{}}
          \forall i\in I:
          \RecTypeEnv,\alpha:\Constraints';~\FutureEnv_i
          \Entails
          \S_i
          ~\land~
          \Constraints_i\Reset_i\models\FutureEnv_i
           & \text{(feasibility)}
          \\[0.5ex]
          \forall i,j\in I:
          i \neq j
          \implies
          \Constraints_i\land\Constraints_j\models\False
          ~\vee~
          \Comm_i=\Comm_j
           & \text{(mixed-choice)}
          \\[0.5ex]
          \forall i\in I:
          \T_i=\Del'{3}
          \implies
          \emptyset;~\FutureEnv'\Entails\S'{3}
          ~\land~
          \Constraints'{3}\models\FutureEnv'
           & \text{(delegation)}
        \end{array}
        }
      \end{equation}
      %
      \item\label{case:tp_subst_choice_rtol}
      \(
      \RecTypeEnv,\alpha:\Constraints';\Constraints\Entails
      \Choice{\l}[\T]_i
      \Longrightarrow
      \RecTypeEnv;\Constraints\Entails
      \Choice{\l}[\T]_i
      \Subst{\S'}{\alpha}
      \).
      \vspace{0.75ex}
      \newline
      By rule \Rule{choice} as in~\Cref{eq:tp_subst_choice_ltor_conc},
      $\Constraints=\Past{\bigvee_{i\in I}\Constraints_i}$.
      By the $\text{(feasibility)}$ premise of rule \Rule{choice},
      for all $i\in I$,
      \(
      \RecTypeEnv,\alpha:\Constraints';\FutureEnv_i\Entails\S_i
      \)
      and $\Constraints_i\Reset_i\models\FutureEnv_i$.
      Therefore, by the induction hypothesis:
      \[
        \RecTypeEnv,\alpha:\Constraints;\FutureEnv_i\Entails\S_i
        \iff
        \RecTypeEnv;\FutureEnv_i\Entails\S_i\Subst{\S'}{\alpha}
        \qquad
        \text{where\ }\Constraints_i\Reset_i\models\FutureEnv_i
      \]

      \noindent Our thesis follows the conclusion of rule \Rule{choice} in~\Cref{eq:tp_subst_choice_ltor_asm}, where we use the latter case (above) as the $\text{(feasibility)}$ premise.
    \end{caseanalysis}
    %
    \item\label{case:tp_subst_end}
    If $\S=\End$
    then, by rule \Rule{end}:
    \[
      \infer[\Rule{end}]{
        \RecTypeEnv;\True\Entails\End
      }
    \]

    \noindent where $\Constraints=\True$.
    By rule \Rule{end}, \End\ is \wf\ against any \Constraints\ and any \RecTypeEnv.
    It follows that $\End\Subst{\S'}{\alpha}=\End$.
    Therefore, the hypothesis holds for $\S=\End$.
  \end{caseanalysis}

  \noindent We have shown the hypothesis to hold for any \S.
  The interesting cases are when $\S=\mu\beta.\S'{2}$ or $\S=\beta$, which vary depending on $\alpha=\beta$ or $\alpha\neq\beta$.
  Since $\alpha\notin\Dom{\RecTypeEnv}$, we are dealing with either:
  (1) an unfolding of a yet-to-be defined recursion $\alpha$,
  or (2) an future unfolding of a renamed recursion.
  By our thesis, we prove that unfolding and renaming recursive types does not interfere with the judgements of \wfness\ found in the rules in~\Cref{fig:type_wf_rules}.
\end{proof}

\subsection{Well-formed Configurations}

\begin{lem}\label{lem:tp_end_wf}
  \Cfg,{\End}\ is always \wf.
\end{lem}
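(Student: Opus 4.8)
The plan is to apply the definition of well-formed configurations (\Cref{def:cfg_wf}) directly, exhibiting a single witnessing constraint. Recall that $\Cfg,{\End}$ is the configuration $(\Val,{\End})$ for an arbitrary valuation \Val, and that, by \Cref{def:cfg_wf}, this configuration is \wf\ exactly when \End\ is \wf\ against \Val, i.e.\ when there exists a constraint \Constraints\ with $\emptyset;\Constraints\Entails\End$ and $\Val\models\Constraints$. It therefore suffices to produce one such \Constraints\ that works uniformly for every \Val.

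The natural witness is $\Constraints=\True$. First I would invoke rule \Rule{end} of \Cref{fig:type_wf_rules}, whose conclusion $\RecTypeEnv;~\True\Entails\End$ holds for \emph{any} recursion environment \RecTypeEnv\ and has no premises to discharge; instantiating $\RecTypeEnv=\emptyset$ gives $\emptyset;\True\Entails\End$ outright. Since the constant constraint \True\ is satisfied by every clock valuation, we also have $\Val\models\True$ for all \Val. Combining these two observations with \Cref{def:cfg_wf} shows that \End\ is \wf\ against every \Val, and hence that $\Cfg,{\End}$ is always \wf.

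There is essentially no obstacle here: the statement is an immediate consequence of the axiomatic shape of rule \Rule{end} (empty premises, invariant \True) together with the tautology $\Val\models\True$. The only point worth stating explicitly is that the recursion environment can be taken empty, matching the $\emptyset;\Constraints\Entails\S$ form demanded by \Cref{def:cfg_wf}; no induction, and no case analysis on \Val\ or on the ambient configuration, is required.
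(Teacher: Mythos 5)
Your proof is correct and follows exactly the paper's argument: both exhibit $\Constraints=\True$ via rule \Rule{end} (with empty recursion environment) and conclude from the tautology $\Val\models\True$. No differences worth noting.
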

\begin{proof}
  By~\Cref{def:cfg_wf}, $\exists \Constraints$
  such that $\Val\models\Constraints$
  and $\emptyset;\Constraints\Entails\End$.
  By rule \Rule{end} in~\Cref{fig:type_wf_rules}, $\Constraints=\True$.
  Therefore, our thesis follows $\Val\models\True$.
\end{proof}


\begin{lem}\label{lem:tp_rec_unfold}
  If \Cfg\ is \wf,
  then $\S\neq\alpha$.
\end{lem}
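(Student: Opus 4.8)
The plan is to unfold the definition of well-formedness for configurations and then conclude by a direct inversion of the formation rules. By \Cref{def:cfg_wf}, saying that $\Cfg$ is \wf\ means precisely that there exists a constraint $\Constraints$ with $\emptyset;\Constraints\Entails\S$ (and $\Val\models\Constraints$, which plays no role here). The essential feature to exploit is that this derivation starts from the \emph{empty} recursion environment $\emptyset$: no recursion variable has yet been bound.

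First I would argue by contradiction, assuming $\S=\alpha$, so that $\emptyset;\Constraints\Entails\alpha$ would be derivable. Then I would inspect the four formation rules of \Cref{fig:type_wf_rules}. The conclusions of \Rule{end}, \Rule{rec}, and \Rule{choice} carry respectively $\End$, a recursive type $\mu\alpha.\S$, and a choice type $\Choice{\l}[\T]_i$ to the right of $\Entails$, none of which is a bare recursion variable. Hence the only rule whose conclusion can match the judgement $\RecTypeEnv;\Constraints\Entails\alpha$ is \Rule{var}. But \Rule{var} demands that its environment already contain the binding $\alpha:\Constraints$, i.e. be of the shape $\RecTypeEnv',\alpha:\Constraints$, so that $\alpha\in\Dom{\RecTypeEnv}$. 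Since here the environment is $\emptyset$ and $\Dom{\emptyset}=\emptyset$, rule \Rule{var} cannot fire, so $\emptyset;\Constraints\Entails\alpha$ is not derivable. This contradicts well-formedness, and therefore $\S\neq\alpha$.

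I expect essentially no real obstacle: the argument is a one-step inversion of the rules. The only point worth stating explicitly is the interplay between \Rule{rec} and \Rule{var} together with the empty starting environment of \Cref{def:cfg_wf} — a recursion variable becomes typeable only once it has been introduced into the environment by \Rule{rec}, so a type that is a bare variable can never be well-formed in isolation.
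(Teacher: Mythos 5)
Your proof is correct and follows essentially the same route as the paper's: argue by contradiction from $\S=\alpha$, observe that only rule \Rule{var} could derive $\emptyset;\Constraints\Entails\alpha$, and note that \Rule{var} requires $\alpha$ to be bound in the recursion environment, which is impossible when that environment is $\emptyset$. No gaps.
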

\begin{proof}
  Let us consider by contradiction that $\S=\alpha$.
  By~\Cref{def:cfg_wf}, $\exists \Constraints$ such that $\Val\models\Constraints$ and $\emptyset;\Constraints\Entails\alpha$.
  The only rule applicable to $\alpha$ is \Rule{var}:
  \[
    \infer[\Rule{var}]{
      \RecTypeEnv,\alpha:\Constraints;\Constraints\Entails\alpha
    }{}
  \]

  \noindent Yet,
  \(
  \emptyset\neq\RecTypeEnv,\alpha:\Constraints
  \),
  and so $\S=\alpha$ contradicts with~\Cref{def:cfg_wf} and we obtain our thesis.
\end{proof}

\begin{lem}\label{lem:tp_wf_cfg_utu_aux}
  If $\S\utu\S'$,
  then $\RecTypeEnv;\Constraints\Entails\S\ \implies \RecTypeEnv;\Constraints\Entails\S'$.
\end{lem}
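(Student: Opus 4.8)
The plan is to prove the stronger biconditional $\RecTypeEnv;\Constraints\Entails\S\iff\RecTypeEnv;\Constraints\Entails\S'$ by induction on the derivation of $\S\utu\S'$ from the generators of \Cref{def:up_to_unfolding}. The strengthening is forced by the symmetry generator: a direct proof of the one-directional implication would, in that case, only supply the converse implication as induction hypothesis. Since a biconditional is automatically reflexive, symmetric and transitive, the reflexivity, symmetry and transitivity generators are discharged immediately, and the stated implication is recovered as one half of the equivalence. For the congruence generator on recursion, $\mu\alpha.\S\utu\mu\alpha.\S'$, rule \Rule{rec} rewrites both sides to $\RecTypeEnv,\alpha:\Constraints;\Constraints\Entails\S$ and $\RecTypeEnv,\alpha:\Constraints;\Constraints\Entails\S'$, which are equivalent by the induction hypothesis taken over the environment $\RecTypeEnv,\alpha:\Constraints$ (using \Cref{asm:rec_name_def} to keep $\alpha\notin\Dom{\RecTypeEnv}$). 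For the congruence generator on choices, the two choices share every guard, reset, label and direction, so by rule \Rule{choice} their conclusion constraint $\Past{\bigvee_{i\in I}\Constraints_i}$ and their mixed-choice and delegation premises coincide; only the altered continuation occurs in the feasibility premise, under the same $\FutureEnv_i$, where the induction hypothesis yields the equivalence. \Cref{lem:tp_free_name} is used to absorb the environment whenever the bound variable does not occur free.

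The only substantial case is the unfolding generator $\mu\alpha.\S\utu\S\Subst{\mu\alpha.\S}{\alpha}$, for which $\alpha\notin\Dom{\RecTypeEnv}$ holds by \Cref{asm:rec_name_def} and $\S\Subst{\mu\alpha.\S}{\alpha}$ is capture avoiding by the side condition of the rule. The forward implication follows directly from the Substitution Lemma: assuming $\RecTypeEnv;\Constraints\Entails\mu\alpha.\S$, rule \Rule{rec} gives $\RecTypeEnv,\alpha:\Constraints;\Constraints\Entails\S$, and instantiating \Cref{lem:tp_substitution} with the substituted type $\mu\alpha.\S$ and invariant $\Constraints$ — whose hypothesis $\RecTypeEnv;\Constraints\Entails\mu\alpha.\S$ is exactly our assumption — delivers $\RecTypeEnv;\Constraints\Entails\S\Subst{\mu\alpha.\S}{\alpha}$.

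The backward implication is where I expect the difficulty to lie, because \Cref{lem:tp_substitution} cannot be reused here: its hypothesis $\RecTypeEnv;\Constraints\Entails\mu\alpha.\S$ is precisely the conclusion we are trying to establish, so applying it would be circular. I would instead prove a companion folding statement by induction on the structure of the body $\S$: if $\RecTypeEnv;\Constraints'\Entails\S\Subst{\mu\alpha.\S}{\alpha}$ then $\RecTypeEnv,\alpha:\Constraints;\Constraints'\Entails\S$, with $\Constraints$ the fixed invariant and $\Constraints'$ the current constraint. The fact that makes this work is that each formation rule pins down the constraint of its subject — rule \Rule{choice} forces it to be $\Past{\bigvee_{i\in I}\Constraints_i}$, rule \Rule{end} forces $\True$, and rule \Rule{var} forces the recorded invariant — so at every former occurrence of $\alpha$ the inserted copy of $\mu\alpha.\S$ is itself well formed only if the current constraint there equals $\Constraints$, which is exactly the matching condition needed to close that leaf with rule \Rule{var} when folding back. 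The degenerate bodies are routine: $\S=\End$ contains no occurrence, $\S=\alpha$ is the trivial $\mu\alpha.\alpha$, and \Cref{lem:tp_rec_unfold} rules out a bare call appearing as a well-formed subject where it would obstruct the argument. Combining the two implications closes the unfolding case and completes the induction.
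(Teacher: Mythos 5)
Your proof is correct and, on the generators the paper actually treats, follows the same route: induction on the last rule of \Cref{def:up_to_unfolding} used to derive $\S\utu\S'$, the Substitution Lemma (\Cref{lem:tp_substitution}) for the unfolding generator, and rules \Rule{rec} and \Rule{choice} together with the induction hypothesis for the two congruence generators. The genuine difference is your treatment of the closure generators. The paper's proof considers only unfolding and the two congruences; reflexivity and transitivity are indeed harmless for the one-directional statement, but symmetry is not, and your observation that it forces the strengthening to a biconditional --- and with it a proof of the backward implication for the unfolding generator --- addresses a step the paper skips entirely, even though that symmetric instance is exactly what \Cref{lem:tp_wf_cfg_utu} is later applied to (e.g.\ to pass from well-formedness of $\S\Subst{\mu\alpha.\S}{\alpha}$ back to $\mu\alpha.\S$ in \Cref{lem:tp_cfg_trans_preserve_wf}). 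Your diagnosis that \Cref{lem:tp_substitution} cannot simply be reused there is also accurate: its folding direction presupposes $\RecTypeEnv;\Constraints\Entails\mu\alpha.\S$, precisely in order to pin the recorded invariant at each occurrence of $\alpha$, and that is the conclusion being sought. The companion folding lemma you sketch, resting on the fact that every formation rule determines the constraint of its subject so that each inserted copy of $\mu\alpha.\S$ sits under exactly the invariant needed to close the leaf with \Rule{var}, is the right repair; it is essentially a constraint-determinacy argument, and stating that determinacy as a separate lemma would make the sketch fully rigorous. One cosmetic point: the degenerate body $\S=\alpha$ yields the non-contractive $\mu\alpha.\alpha$, which the paper excludes by contractivity, so that case can simply be dropped rather than discharged.
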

\begin{proof}
  We proceed by induction on the last equation of~\Cref{def:up_to_unfolding} used to establish $\S\utu\S'$.
  \begin{caseanalysis}
    %
    \item\label{case:tp_wf_cfg_utu_aux_rec_def}
    If
    \(
    \S = \mu\alpha.\S_0
    \)
    with $\S_0\Subst{\mu\alpha.\S_0}{\alpha}$ capture avoiding,
    then
    \(
    \S' = \S_0\Subst{\mu\alpha.\S_0}{\alpha}
    \).
    The only rule applicable to
    \(
    \RecTypeEnv;\Constraints\Entails
    \mu\alpha.\S_0
    \)
    is rule \Rule{rec}:
    \[
      \infer[\Rule{rec}]{%
        \RecTypeEnv;\Constraints\Entails\mu\alpha.\S_0
      }{%
        \RecTypeEnv,\alpha:\Constraints;\Constraints\Entails\S_0
      }
    \]

    \noindent Note that $\alpha$ cannot be in $\Dom{\RecTypeEnv}$.
    Since $\S_0\Subst{\mu\alpha.\S_0}{\alpha}$ is capture avoiding,
    by~\Cref{lem:tp_substitution}:
    \[
      \RecTypeEnv,\alpha:\Constraints;\Constraints\Entails\S_0
      \iff
      \RecTypeEnv;\Constraints\Entails\S_0\Subst{\mu\alpha.\S_0}{\alpha}
    \]

    \noindent The thesis follows the conclusion of rule \Rule{rec}, using the latter case (above) as premise:
    \[
      \infer[\Rule{rec}]{%
        \RecTypeEnv;\Constraints\Entails\mu\alpha.\S_0\Subst{\mu\alpha.\S_0}{\alpha}
      }{%
        \RecTypeEnv,\alpha:\Constraints;\Constraints\Entails\S_0\Subst{\mu\alpha.\S_0}{\alpha}
      }
    \]
    %
    %
    \item\label{case:tp_wf_cfg_utu_aux_rec_call}
    If
    \(
    \S = \mu\alpha\S_0
    \),
    then
    \(
    \S' = \mu\alpha\S_0'
    \)
    with $ \S_0\utu\S_0'$.
    The only rule applicable to
    \(
    \RecTypeEnv;\Constraints\Entails
    \mu\alpha.\S_0
    \)
    is rule \Rule{rec}:
    \[
      \infer[\Rule{rec}]{%
        \RecTypeEnv;\Constraints\Entails\mu\alpha.\S_0
      }{%
        \RecTypeEnv,\alpha:\Constraints;\Constraints\Entails\S_0
      }
    \]

    \noindent By the induction hypothesis applied to the premise of the above:
    \(
    \RecTypeEnv,\alpha:\Constraints;\Constraints\Entails\S_0'
    \).
    Therefore, we obtain our thesis following the conclusion of rule \Rule{rec}:
    \[
      \infer[\Rule{rec}]{%
        \RecTypeEnv;\Constraints\Entails\mu\alpha.\S_0'
      }{%
        \RecTypeEnv,\alpha:\Constraints;\Constraints\Entails\S_0'
      }
    \]
    %
    \item\label{case:tp_wf_cfg_utu_aux_choice}
    If
    \(
    \S =
    \begin{array}[c]\{{@{}l@{}}\}
      {\Comm'}{\lT<>'}({\Constraints'},{\Resets'}).{\S_0'}
    \end{array}_{i \in I}
    \)
    then
    \(
    \S' =
    \begin{array}[c]\{{@{}l@{}}\}
      {\Comm'}{\lT<>'}({\Constraints'},{\Resets'}).{\S_0'{2}}
    \end{array}_{i \in I}
    \)
    with $ \S_0'\utu\S_0'{2} $.
    (For brevity and clarity we only show for $\left\lvert I\right\rvert = 1$.)
    The only rule applicable to
    \(
    \RecTypeEnv;\Constraints\Entails
    \begin{array}[c]\{{@{}l@{}}\}
      {\Comm'}{\lT<>'}({\Constraints'},{\Resets'}).{\S_0'}
    \end{array}_{i \in I}
    \)
    is rule \Rule{choice}:
    \[
      \infer[\Rule{choice}]{
      \RecTypeEnv;~\Past_iV_{i\in I}
      \Entails
      \begin{array}[t]\{{@{}l@{}}\}
        {\Comm'}{\lT<>'}({\Constraints'},{\Resets'}).{\S_0'}
      \end{array}_{i \in I}
      }{%
      \begin{array}{@{}c @{\hspace{2ex}} l@{}}
        \RecTypeEnv;~\FutureEnv
        \Entails
        \S_0'
        ~\land~
        \Constraints'\Reset'\models\FutureEnv
         & \text{(feasibility)}
        \\[\ArrayLTSPremiseLineSpacing]
        \T'=\Del'{2}
        \implies
        \emptyset;~\FutureEnv'\Entails\S'{2}
        ~\land~
        \Constraints'{2}\models\FutureEnv'
         & \text{(delegation)}
      \end{array}
      }
    \]

    \noindent where $ \Constraints = \Past{\bigvee_{i\in I}\Constraints_i} = \Past{\Constraints'} $.
    (Since $\left\lvert I\right\rvert = 1$, we omit the $\text{(mixed-choice)}$ premise of rule \Rule{choice}.)
    By the induction hypothesis, applied to the $\text{(feasibility)}$ premise above:
    \[
      \RecTypeEnv;~\FutureEnv
      \Entails
      \S_0'{2}
      ~\land~
      \Constraints'\Reset'\models\FutureEnv
    \]

    \noindent We obtain our thesis following the conclusion of rule \Rule{choice}, when the above is used as the $\text{(feasibility)}$ premise:
    \[
      \infer[\Rule{choice}]{
      \RecTypeEnv;~\Past_iV_{i\in I}
      \Entails
      \begin{array}[t]\{{@{}l@{}}\}
        {\Comm'}{\lT<>'}({\Constraints'},{\Resets'}).{\S_0'{2}}
      \end{array}_{i \in I}
      }{%
      \begin{array}{@{}c @{\hspace{2ex}} l@{}}
        \RecTypeEnv;~\FutureEnv
        \Entails
        \S_0'{2}
        ~\land~
        \Constraints'\Reset'\models\FutureEnv
         & \text{(feasibility)}
        \\[\ArrayLTSPremiseLineSpacing]
        \T'=\Del'{2}
        \implies
        \emptyset;~\FutureEnv'\Entails\S'{2}
        ~\land~
        \Constraints'{2}\models\FutureEnv'
         & \text{(delegation)}
      \end{array}
      }
      \qedhere
    \]
  \end{caseanalysis}
\end{proof}

\begin{lem}\label{lem:tp_wf_cfg_utu}
  If $\S\utu\S'$,
  then \Cfg\ is \wf\ $\implies$ \Cfg,{\S'}\ is \wf.
\end{lem}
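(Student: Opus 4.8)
The plan is to derive this directly from the auxiliary result \Cref{lem:tp_wf_cfg_utu_aux}, since the only work left is to unwind the definition of a well-formed configuration (\Cref{def:cfg_wf}) and check that the witnessing constraint and valuation can be reused verbatim. In other words, this lemma is the ``configuration-level'' packaging of the judgement-level statement already proved.

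First I would assume that $\Cfg$ is \wf, i.e.\ that $\S$ is \wf\ against $\Val$. By \Cref{def:cfg_wf} this gives a constraint $\Constraints$ with $\emptyset;\Constraints\Entails\S$ and $\Val\models\Constraints$. Next I would instantiate \Cref{lem:tp_wf_cfg_utu_aux} at the empty recursion environment $\RecTypeEnv=\emptyset$ and at this very $\Constraints$: since $\S\utu\S'$ by hypothesis, the lemma yields $\emptyset;\Constraints\Entails\S'$. Crucially, the valuation side condition $\Val\models\Constraints$ mentions only $\Constraints$ and $\Val$, neither of which changes under $\utu$, so it continues to hold for the same $\Constraints$. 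Re-reading \Cref{def:cfg_wf} in the other direction, the existence of a $\Constraints$ with $\emptyset;\Constraints\Entails\S'$ and $\Val\models\Constraints$ is exactly what it means for $\S'$ to be \wf\ against $\Val$, i.e.\ for $\Cfg,{\S'}$ to be \wf. This closes the argument.

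There is essentially no obstacle here: the genuine difficulty (tracking how unfolding and renaming interact with the formation rules \Rule{rec}, \Rule{var}, \Rule{choice}, \Rule{end}, and with the substitution and capture-avoidance machinery) has already been absorbed into \Cref{lem:tp_wf_cfg_utu_aux} and, underneath it, the Substitution Lemma \Cref{lem:tp_substitution}. The one point worth stating explicitly is that $\utu$ preserves the \emph{constraint} under which a type is well-formed, so the same $\Constraints$ serves as witness for both $\S$ and $\S'$; this is precisely what lets the valuation condition $\Val\models\Constraints$ survive unchanged and is the reason the configuration-level claim reduces to a one-line application of the auxiliary lemma rather than requiring any re-derivation of a new constraint.
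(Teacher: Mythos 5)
Your proposal is correct and is essentially identical to the paper's own proof: both extract the witnessing constraint $\Constraints$ from \Cref{def:cfg_wf}, apply \Cref{lem:tp_wf_cfg_utu_aux} at the empty environment to obtain $\emptyset;\Constraints\Entails\S'$, and observe that the same $\Constraints$ and $\Val$ witness well-formedness of $\Cfg,{\S'}$. No differences worth noting.
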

\begin{proof}
  By~\Cref{def:cfg_wf}, $\exists \Constraints$ such that $\Val\models\Constraints$ and $\emptyset;\Constraints\Entails\S$.
  The thesis follows immediately from~\Cref{lem:tp_wf_cfg_utu_aux},
  since $\emptyset;\Constraints\Entails\S'$.
\end{proof}

\begin{lem}\label{lem:tp_wf_s}
  If \Cfg\ is \wf,
  then $\S \utu \End$
  or $\S \utu \Choice{\l}[\T]_i$.
\end{lem}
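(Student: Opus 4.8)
The plan is to unpack \wfness\ and then strip away recursion binders using the closure clauses of \Cref{def:up_to_unfolding}. By \Cref{def:cfg_wf}, since \Cfg\ is \wf\ there is some \Constraints\ with $\Val\models\Constraints$ and $\emptyset;\Constraints\Entails\S$. By the grammar of types, \S\ is one of $\End$, a choice $\Choice{\l}[\T]_i$, a recursion $\mu\alpha.\S'$, or a bare variable $\alpha$. The variable case is excluded immediately by \Cref{lem:tp_rec_unfold}, which gives $\S\neq\alpha$ whenever \Cfg\ is \wf\ (the only rule typing a variable is \Rule{var}, and it demands a non-empty environment, contradicting $\emptyset;\Constraints\Entails\S$). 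If \S\ is $\End$ or a choice we are done at once, since $\utu$ is reflexive by the first clause of \Cref{def:up_to_unfolding}.

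It remains to treat $\S=\mu\alpha_1.\cdots.\mu\alpha_k.\S_0$, where I peel off all leading recursion binders so that $\S_0$ is not itself of the form $\mu\beta.\S''$; this decomposition is finite because \S\ is a finite term. The first step is to argue that $\S_0$ is necessarily $\End$ or a choice. Tracing the \wfness\ derivation through the applications of \Rule{rec} that introduce $\alpha_1,\ldots,\alpha_k$, the core $\S_0$ is typed in the environment $\alpha_1:\Constraints,\ldots,\alpha_k:\Constraints$; since \S\ is closed, the only candidate free variables of $\S_0$ are the $\alpha_i$. The standing guardedness (contractiveness) assumption on recursive types then rules out $\S_0$ being any $\alpha_i$, as such a head occurrence would be unguarded under the innermost binder. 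Hence $\S_0$ is $\End$ or a choice $\Choice{\l}[\T]_i$.

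The conclusion then follows by induction on the number $k$ of leading binders. For $k=0$ there is nothing to prove. For $k\geq 1$, write $\S=\mu\alpha_1.\S'$ with $\S'=\mu\alpha_2.\cdots.\mu\alpha_k.\S_0$; by the induction hypothesis $\S'\utu\End$ or $\S'\utu\Choice{\l}[\T]_i$. The $\mu$-congruence clause of \Cref{def:up_to_unfolding} lifts this to $\mu\alpha_1.\S'\utu\mu\alpha_1.\End$ (respectively $\mu\alpha_1$ of a choice), and the unfolding clause gives $\mu\alpha_1.\End\utu\End\Subst{\mu\alpha_1.\End}{\alpha_1}=\End$ (respectively unfolds the binder into the choice, whose top-level shape is preserved by substitution). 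Transitivity of $\utu$ yields the claim.

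I expect the main obstacle to be the degenerate non-contractive type $\mu\alpha.\alpha$: it satisfies the \wfness\ rules of \Cref{fig:type_wf_rules} yet its only unfolding is itself, so it is \emph{not} $\utu$-equivalent to $\End$ or to a choice. This is precisely why the guardedness assumption (standard for recursive session types) is invoked above, and it is the only place it is needed. A secondary technical point is discharging the capture-avoidance side conditions required by the unfolding clause of \Cref{def:up_to_unfolding}; these are met by $\alpha$-renaming the bound recursion variables to be pairwise distinct, consistently with \Cref{asm:rec_name_def}.
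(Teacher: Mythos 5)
Your proof is correct and rests on the same underlying induction as the paper's (peeling off leading $\mu$-binders, with contractiveness invoked at exactly the same point to exclude the degenerate core $\mu\alpha_1\dots\mu\alpha_n.\alpha_i$), but it takes a different route through the recursion case. The paper applies \Cref{lem:tp_substitution} to transfer the judgement $\alpha:\Constraints;\Constraints\Entails\S'$ to $\emptyset;\Constraints\Entails\S'\Subst{\mu\alpha.\S'}{\alpha}$, so that the one-step unfolding is itself \wf\ and the induction hypothesis (stated for \wf\ configurations) applies directly to it. You instead avoid the Substitution Lemma entirely: you read off the shape of the core $\S_0$ from the stack of \Rule{rec} applications, and then reassemble purely syntactically using the $\mu$-congruence, unfolding, and transitivity clauses of \Cref{def:up_to_unfolding}, which requires reformulating the induction as an auxiliary claim on the number of leading binders rather than on \wfness. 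Your version is more elementary and self-contained; the paper's buys the (reusable) intermediate fact that every unfolding of a \wf\ type is again \wf, which it exploits elsewhere (e.g.\ in \Cref{lem:tp_wf_cfg_utu}). Both arguments correctly note that the \wfness\ rules alone do not exclude $\mu\alpha.\alpha$ (it is derivable via \Rule{rec} then \Rule{var}), so the appeal to contractiveness is genuinely load-bearing in each.
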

\begin{proof}
  Since \Cfg\ is \wf, by~\Cref{def:cfg_wf} $\exists \Constraints$ such that $\Val\models\Constraints$ and $\emptyset;\Constraints\Entails\S$.
  We proceed by induction on the derivation, analysing the structure of \S.
  By~\Cref{def:up_to_unfolding}, the only interesting case is $\S=\mu\alpha.\S'{2}$ and  $\S\utu\S'\Subst{\mu\alpha.\S'}{\alpha}$.
  By rule \Rule{rec}:
  \[
    \infer[\Rule{rec}]{%
      \emptyset;\Constraints\Entails\mu\alpha.\S'
    }{%
      \alpha:\Constraints;\Constraints\Entails\S'
    }
  \]

  \noindent By~\Cref{lem:tp_substitution},
  \(
  \alpha:\Constraints;\Constraints\Entails\S'
  \iff
  \emptyset;\Constraints\Entails\S'\Subst{\mu\alpha.\S'}{\alpha}
  \).
  Therefore, it holds that \Cfg,{\S'\Subst{\mu\alpha.\S'}{\alpha}}\ is \wf, and so, we obtain our thesis by the induction hypothesis:
  \[
    \Cfg,{\S'\Subst{\mu\alpha.\S'}{\alpha}}
    ~\text{is\ \wf}
    \implies
    \Cfg,{\S'{2}}
    ~\text{is\ \wf}
  \]

  \noindent where $\S'\utu\S'{2}$.
  Since types are \emph{contractive}, we rule out the possibility of $\S'=\mu\alpha_1\dots\mu\alpha_n.\alpha_i$, for some $i\leq n$ where $1<n\in \SetNatural$.
\end{proof}

\subsection{Live Configurations}

\begin{lem}\label{lem:tp_wf_cfg_is_live}
  If \Cfg\ is \wf,
  then \Cfg\ is \live.
\end{lem}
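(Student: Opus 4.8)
The plan is to case-split on the shape of $\S$ supplied by \Cref{lem:tp_wf_s}. Since $\Cfg$ is \wf, that lemma gives either $\S\utu\End$ or $\S\utu\Choice{\l}[\T]_i$ for a choice over a non-empty index set $I$. In the first case the configuration is terminated, hence \live\ by the terminal clause of \Cref{def:cfg_live} (read, as for the notion of \emph{final} in \Cref{def:type_progress}, up to unfolding). All the real work lies in the second case, where I must produce a delay $\t$ and an action $\Comm\Msg$ witnessing $\Cfg\Trans:{\t,\Comm\Msg}$, i.e.\ show that $\Cfg$ is \fe.

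For the choice case I would first transport well-formedness along $\utu$: by \Cref{lem:tp_wf_cfg_utu} the configuration $\Cfg,{\Choice{\l}[\T]_i}$ is again \wf, so by \Cref{def:cfg_wf} there is a $\Constraints$ with $\emptyset;\Constraints\Entails\Choice{\l}[\T]_i$ and $\Val\models\Constraints$. The only formation rule whose conclusion is a choice is \Rule{choice}, and its conclusion pins down $\Constraints=\Past{\bigvee_{i\in I}\Constraints_i}$, the weak past of the disjunction of the option guards. I then read off an enabling time: by \Cref{def:past_delta}, $\Val\models\Past{\bigvee_{i\in I}\Constraints_i}$ is by definition the existence of a $\t$ with $\Val+\t\models\bigvee_{i\in I}\Constraints_i$, hence of some $j\in I$ with $\Val+\t\models\Constraints_j$. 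Rule \Rule{tick} provides $\Cfg,{\Choice{\l}[\T]_i}\Trans:{\t}\Cfg{\Val+\t},{\Choice{\l}[\T]_i}$, and rule \Rule{act} — applicable precisely because $\Val+\t\models\Constraints_j$ and $j\in I$ — fires the $j$-th option with $\Msg=\lT<>_j$. Composing the two steps yields $\Cfg,{\Choice{\l}[\T]_i}\Trans:{\t,\Comm_j\Msg}$, and since $\S\utu\Choice{\l}[\T]_i$ arises from finitely many leading unfoldings, repeated use of \Rule{unfold} lifts this to $\Cfg\Trans:{\t,\Comm_j\Msg}$. Thus $\Cfg$ is \fe\ and therefore \live.

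The main obstacle is the choice case, and within it the one genuinely load-bearing observation is that \Rule{choice} forces $\Constraints$ to be exactly $\Past{\bigvee_{i\in I}\Constraints_i}$, which, combined with the characterisation of the weak past in \Cref{def:past_delta}, converts the \emph{static} constraint $\Val\models\Constraints$ into a \emph{dynamic} witness $\t$ at which an option becomes viable. I expect I will \emph{not} need the feasibility, mixed-choice, or delegation premises of \Rule{choice} here: being \fe\ asks only that some single option be enabled at some time, which already follows from $\Val\models\Constraints$ alone. The remaining bookkeeping — reducing the case $\S\utu\End$ to the terminal clause, and the \Rule{unfold} lifting in the choice case — is routine, relying on contractivity of types (finitely many leading unfoldings) and on \Cref{lem:tp_wf_cfg_utu_aux} to justify that unfolding does not disturb the derivation.
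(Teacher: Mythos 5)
Your proposal is correct and is essentially the paper's own argument: the load-bearing step in both is that the conclusion of \Rule{choice} forces the well-formedness constraint to be the weak past of the disjunction of the option guards, so \Cref{def:past_delta} yields a delay $\t$ at which some option $j$ is enabled, whence the configuration is \fe\ (and the feasibility, mixed-choice and delegation premises are indeed never used). The only difference is organisational: the paper dispatches recursion by inducting on the derivation of $\emptyset;\Constraints\Entails\S$ (with a \Rule{rec} case closed by the induction hypothesis and \Cref{lem:tp_rec_unfold} to rule out \Rule{var}), whereas you reach the same two terminal shapes up front via \Cref{lem:tp_wf_s} and transport well-formedness across the unfolding with \Cref{lem:tp_wf_cfg_utu}.
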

\begin{proof}[\Cref{lem:tp_rec_unfold}]
  By~\Cref{def:cfg_wf}, $\exists\Constraints$
  such that $\Val\models\Constraints$
  and $\emptyset;\Constraints\Entails\S$.
  We proceed by induction on the derivation of $\emptyset;\Constraints\Entails\S$ by the \wfness\ rules in~\Cref{fig:type_wf_rules}:
  \begin{caseanalysis}
    %
    \item\label{case:tp_wf_cfg_is_live_choice}
    If rule \Rule{choice}, then $\S=\Choice{\l}[\T]_i$ and
    $\Constraints=\Past{\bigvee_{i\in I}\Constraints_i}$.
    By~\Cref{def:past_delta}, $\exists t$ such that $\Val+\t\models\bigvee_{i\in I}\Constraints_i$.
    It follows~\Cref{def:cfg_fe} that \Cfg\ is \fe.
    Following~\Cref{def:cfg_live} we obtain our thesis.
    %
    \item\label{case:tp_wf_cfg_is_live_rec_def}
    If rule \Rule{rec}, then $\S=\mu\alpha.\S'$.
    By~\Cref{lem:tp_wf_cfg_utu}, since \Cfg,{\mu\alpha.\S'}\ is \wf, if $\S\utu\S'{2}$ then \Cfg,{\S'{2}}\ is \wf.
    By~\Cref{def:up_to_unfolding}, $\mu\alpha.\S'\utu\S'\Subst{\mu\alpha.\S'}{\alpha}=\S'{2}$.
    Therefore, since \Cfg,{\S'\Subst{\mu\alpha.\S'}{\alpha}}\ is \wf, the thesis follows by induction hypothesis on \Cfg,{\S'\Subst{\mu\alpha.\S'}{\alpha}}\ being \live.
    %
    \item\label{case:tp_wf_cfg_is_live_end}
    If rule \Rule{end}, then $\S=\End$.
    By~\Cref{def:cfg_live}, \Cfg,{\End}\ is \live.
    %
    \item\label{case:tp_wf_cfg_is_live_rec_call}
    Rule \Rule{var} is not applicable following~\Cref{lem:tp_rec_unfold}, as $\S\neq\alpha$.
    \qedhere
  \end{caseanalysis}
\end{proof}

\begin{lem}\label{lem:tp_wf_cfg_past_closed}
  If \Cfg{\Val+\t}\ is \wf, then \Cfg\ is \wf.
\end{lem}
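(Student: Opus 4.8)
The plan is to reduce the claim to a single arithmetic fact about the weak past, namely that constraints of the form \Past{\Constraints}\ are closed under moving a valuation \emph{backwards} in time, and then to dispatch the two possible shapes of \S\ using the structural lemma \Cref{lem:tp_wf_s}. By \Cref{def:cfg_wf}, the hypothesis that \Cfg{\Val+\t}\ is \wf\ means there is some \Constraints\ with $\emptyset;\Constraints\Entails\S$ and $\Val+\t\models\Constraints$; the goal is to produce a constraint witnessing that \Cfg\ is \wf.

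First I would establish the backward-closure fact: for every \Constraints, valuation \Val\ and offset \t, if $\Val+\t\models\Past{\Constraints}$ then $\Val\models\Past{\Constraints}$. This is immediate from \Cref{def:past_delta}: $\Val+\t\models\Past{\Constraints}$ yields some $\t'$ with $(\Val+\t)+\t'\models\Constraints$, i.e. $\Val+(\t+\t')\models\Constraints$, so $\t+\t'$ witnesses $\Val\models\Past{\Constraints}$. The point of this fact is that every constraint appearing as the conclusion of a well-formedness judgement is a past constraint, hence insensitive to advancing the clocks in this direction.

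Next I would split on the shape of \S. Since \Cfg{\Val+\t}\ is \wf, \Cref{lem:tp_wf_s} gives either $\S\utu\End$ or $\S\utu\Choice{\l}[\T]_i$. If $\S\utu\End$, then \Cref{lem:tp_end_wf} shows \Cfg,{\End}\ is \wf\ at any valuation, and since $(\utu)$ is symmetric (\Cref{def:up_to_unfolding}) we have $\End\utu\S$; applying \Cref{lem:tp_wf_cfg_utu} transports this to \Cfg. If $\S\utu\Choice{\l}[\T]_i$, then \Cref{lem:tp_wf_cfg_utu} applied along $\S\utu\Choice{\l}[\T]_i$ shows \Cfg{\Val+\t},{\Choice{\l}[\T]_i}\ is \wf; inspecting \Rule{choice} in \Cref{fig:type_wf_rules}, the only rule deriving a choice, the witnessing constraint is forced to be $\Constraints=\Past{\bigvee_{i\in I}\Constraints_i}$, and its derivability is independent of the valuation. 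The backward-closure fact then turns $\Val+\t\models\Constraints$ into $\Val\models\Constraints$, so \Cfg,{\Choice{\l}[\T]_i}\ is \wf, and a final application of \Cref{lem:tp_wf_cfg_utu} along $\Choice{\l}[\T]_i\utu\S$ gives that \Cfg\ is \wf.

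The only delicate point is recursion. I deliberately avoid an induction on the derivation of $\emptyset;\Constraints\Entails\S$, because the \Rule{rec} and \Rule{var} cases carry a non-empty environment and would force bookkeeping about which invariant is stored for each recursion variable. Offloading all unfolding to \Cref{lem:tp_wf_s} and \Cref{lem:tp_wf_cfg_utu} lets the argument inspect only the top-level constraint of a \End\ or a choice type, so the proof collapses to the two base shapes, with the \End\ case being immediate from \Cref{lem:tp_end_wf} and the choice case needing only the one-line backward-closure property.
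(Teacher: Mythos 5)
Your proposal is correct and follows essentially the same route as the paper's own proof: both reduce to the two shapes given by \Cref{lem:tp_wf_s}, transport well-formedness across unfolding via \Cref{lem:tp_wf_cfg_utu}, read off the witnessing constraint (true, or the weak past of the disjunction of guards) from the end and choice formation rules, and conclude by the closure of the weak past under moving the valuation backwards, per \Cref{def:past_delta}. The only difference is that you make explicit the one-line backward-closure argument that the paper leaves as ``follows from the definition''.
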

\begin{proof}
  By \Cref{lem:tp_wf_s}, $\S \utu \End$ or $\S \utu \Choice{\l}[\T]_i$.
  By \Cref{lem:tp_wf_cfg_utu}, together with rules \Rule{end} and \Rule{choice} we have that $\emptyset;\Constraints\Entails\S$
  where $\Constraints = \True$
  or $\Constraints = \Past{\bigvee_{i\in I}\Constraints_i}$.
  Obviously, $\Val \models \True$.
  The fact $\Val \models \Past{\bigvee_{i\in I}\Constraints_i}$ instead follows from the definition of $\Past{}$.
\end{proof}

\begin{lem}\label{lem:tp_wf_type_fe_cfg_wf}
  If $\emptyset;\Constraints\Entails\S$
  and \Cfg\ is \fe,
  then \Cfg\ is \wf.
\end{lem}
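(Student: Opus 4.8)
The plan is to extract from the future\text{-}enabledness hypothesis a single time offset $\t$ at which an option of \S\ is guarded\text{-}enabled, and then to show that this very offset witnesses $\Val\models\Constraints$ through the weak\text{-}past construction, so that the \emph{already assumed} derivation $\emptyset;\Constraints\Entails\S$ satisfies \Cref{def:cfg_wf} verbatim (i.e.\ the witnessing constraint can be taken to be the given \Constraints\ itself). First I would unpack $\Cfg$ being \fe\ using \Cref{def:cfg_fe}: there is some $\t$ with $\Cfg\Trans:{\t,\Comm\Msg}$, that is $\Cfg{\Val+\t}\Trans:{\Comm\Msg}\Cfg'$ through the configuration rules of \Cref{fig:type_semantics}~\Cref{eq:type_semantics_tuple}. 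A communication\text{-}labelled transition can only be derived by \Rule{act}, possibly preceded by finitely many applications of \Rule{unfold}. Since \Rule{unfold} leaves the valuation untouched and realises exactly one step of $(\utu)$ (by the unfolding clause of the definition of $(\utu)$, which applies under the usual contractivity/capture\text{-}avoiding convention), an induction on this derivation peels off the unfoldings until \Rule{act} fires on a choice. Hence $\S\utu\Choice{\l}[\T]_i$, and the premise of \Rule{act} supplies an index $j\in I$ with $\Val+\t\models\Constraints_j$, whence $\Val+\t\models\bigvee_{i\in I}\Constraints_i$.

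Next I would transfer the assumed judgement along $(\utu)$. From $\emptyset;\Constraints\Entails\S$ together with $\S\utu\Choice{\l}[\T]_i$, \Cref{lem:tp_wf_cfg_utu_aux} gives $\emptyset;\Constraints\Entails\Choice{\l}[\T]_i$. The only rule whose conclusion matches a choice is \Rule{choice} in \Cref{fig:type_wf_rules}, and its conclusion pins the left\text{-}hand constraint to be $\Constraints=\Past{\bigvee_{i\in I}\Constraints_i}$. Combining this with $\Val+\t\models\bigvee_{i\in I}\Constraints_i$ and the weak\text{-}past characterisation of \Cref{def:past_delta} --- namely $\Val\models\Past{\bigvee_{i\in I}\Constraints_i}$ iff $\exists\t:\Val+\t\models\bigvee_{i\in I}\Constraints_i$ --- yields $\Val\models\Constraints$. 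With both $\emptyset;\Constraints\Entails\S$ and $\Val\models\Constraints$ in hand, \Cref{def:cfg_wf} concludes that \S\ is \wf\ against \Val, i.e.\ \Cfg\ is \wf.

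The main obstacle is purely the bridge between the operational and the syntactic layers: I must argue carefully by inversion that a $\Comm\Msg$\text{-}transition of $\Cfg{\Val+\t}$ can only arise from \Rule{act} after finitely many \Rule{unfold} steps, so that \S\ genuinely reduces, up to $(\utu)$, to a choice whose guard $\Constraints_j$ is satisfied at $\Val+\t$. Once this inversion is established the remainder is mechanical. It is worth noting that the \fe\ hypothesis is exactly what excludes the degenerate shape $\S\utu\End$ (which admits no communication transition), so no separate case split on the structure of \S\ is required, and \Rule{var} need not be considered since $\emptyset;\Constraints\Entails\alpha$ is underivable under the empty environment.
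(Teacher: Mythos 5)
Your proof is correct and takes essentially the same route as the paper's: invert the $\Comm\Msg$-labelled transition through finitely many \Rule{unfold} steps down to \Rule{act} on a choice, use inversion on \Rule{choice} to pin \Constraints\ to $\Past{\bigvee_{i\in I}\Constraints_i}$, and let the guard satisfied at $\Val+\t$ witness the past at \Val. The only differences are minor streamlinings: you obtain $\Val\models\Past{\bigvee_{i\in I}\Constraints_i}$ in one step from \Cref{def:past_delta} with witness \t\ (where the paper first derives $\Val+\t\models\Past{\bigvee_{i\in I}\Constraints_i}$ and then argues, somewhat informally, that this persists back to \Val), and you discharge the \Rule{unfold} case by transporting the judgement forward along $(\utu)$ via \Cref{lem:tp_wf_cfg_utu_aux}, where the paper instead combines the induction hypothesis with \Cref{lem:tp_wf_cfg_utu} and \Cref{lem:tp_wf_cfg_past_closed} --- both are sound.
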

\begin{proof}[\Cref{def:cfg_fe,lem:tp_wf_type_fe_cfg_wf}]
  By~\Cref{def:cfg_fe}, $\exists \t$ \st\ %
  \(
  \Cfg \Trans:{\t} \Cfg{\Val+\t}
  \Trans:{\Comm\Msg}
  \Cfg'
  \).
  We proceed by induction on the last rule applied for the transition
  \(
  \Cfg{\Val+\t} \Trans:{\Comm\Msg}
  \) of those %
  in~\Cref{eq:type_semantics_tuple}:
  \begin{caseanalysis}
    %
    \item\label{case:tp_wf_type_fe_cfg_wf_act}
    If rule \Rule{act}, then $\S=\Choice{\l}[\T]_i$ and:
    \[
      \infer[\Rule{act}]{
      \Cfg{\Val'{2}},{\Choice{\l}[\T]_i}%
      \Trans:{\Comm_j\Msg}%
      \Cfg{\Val'{2}\Reset_j},{\S_j}%
      }{
      \Val'{2}\models\delta_j%
      \quad%
      \Msg={\lT<>_j}%
      \quad%
      j\in I%
      }%
    \]

    \noindent where $\Comm=\Comm_j$ and $\Val'{2}=\Val+\t$.
    Since $\emptyset;\Constraints\Entails\Choice{\l}[\T]_i$
    by hypothesis, by rule \Rule{choice} we conclude:
    \[
      \RecTypeEnv;~
      \Past_iV_{i\in I} \Entails \Choice{\l}[\T]_i
    \]

    \noindent where $\RecTypeEnv=\emptyset$ and $\Constraints=\Past{\bigvee_{i\in I}\Constraints_i}$.
    By~\Cref{def:cfg_wf}, it remains to show that $\Val\models\Past{\bigvee_{i\in I}\Constraints_i}$.
    First we show that $\Val+\t\models\Past{\bigvee_{i\in I}\Constraints_i}$.
    By the premise of rule \Rule{act}, $\Val'{2}\models\Constraints_j$ (where $\Val'{2}=\Val+\t$).
    Since $\Constraints_j\models\Past{\bigvee_{i\in I}\Constraints_i}$, it follows that
    \(
    \Val+\t
    \models
    \Past{\bigvee_{i\in I}\Constraints_i}
    \)
    and therefore, \Cfg{\Val+\t},{\Choice{\l}[\T]_i}\ is \wf.
    By the conclusion of rule \Rule{choice}, the set of constraints $\Constraints=\Past{\bigvee_{i\in I}\Constraints_i}$ only require a minimum of \emph{one weak past} constraint to be satisfied for the entire set constraint to be satisfied.
    Therefore, given that we know $\Val+\t\models\Past{\bigvee_{i\in I}\Constraints_i}$, it follows that $\Val\models\Past{\bigvee_{i\in I}\Constraints_i}$ must also hold,
    and we obtain our thesis.
    %
    \item\label{case:tp_wf_type_fe_cfg_wf_unfold}
    If rule \Rule{unfold}, then $\S=\mu\alpha.\S'{2}$ and:
    \[
      \infer[\Rule{unfold}]{
      \Cfg{\Val+\t},{\mu\alpha.\S'{2}}%
      \Trans:{\ell}%
      \Cfg'%
      }{
      \Cfg{\Val+\t},{\S'{2}\Subst{\mu\alpha.\S'{2}}{\alpha}}%
      \Trans:{\ell}%
      \Cfg'%
      }%
    \]

    \noindent By the induction hypothesis we have that
    \(
    \Cfg{\Val+\t},{\S'{2}\Subst{\mu\alpha.\S'{2}}{\alpha}}
    \)
    is \wf.
    Then by \Cref{lem:tp_wf_cfg_utu} we have that $\Cfg{\Val+\t},{\mu\alpha.\S'{2}}$ is \wf.
    Finally, by \Cref{lem:tp_wf_cfg_past_closed}, $\Cfg,{\mu\alpha.\S'{2}}$ is \wf, as required.
    \qedhere
  \end{caseanalysis}
\end{proof}

\begin{lem}\label{lem:tp_en_cfg_dual_en}
  If $\Cfg\Trans:{\Comm\Msg}$,
  then $\Cfg,{\D}\Trans:{\Dual{\Comm}\Msg}$.
\end{lem}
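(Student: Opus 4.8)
The plan is to prove this by induction on the derivation of $\Cfg\Trans:{\Comm\Msg}$ using the rules in~\Cref{fig:type_semantics}~\Cref{eq:type_semantics_tuple}. Since the label $\Comm\Msg$ is a communication action rather than a time label, rule \Rule{tick} never concludes such a transition, so only \Rule{act} and \Rule{unfold} are relevant; in particular $\S$ is neither \End\ nor a bare variable $\alpha$, as neither of these admits any transition. Note also that the valuation \Val\ is carried unchanged through both rules, so it is automatically shared by the original and dual configurations.

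For the base case, the derivation ends with \Rule{act}, so $\S=\Choice{\l}[\T]_i$ and the transition $\Cfg,{\Choice{\l}[\T]_i}\Trans:{\Comm_j\Msg}\Cfg{\Val\Reset_j},{\S_j}$ is justified, for some $j\in I$, by the premises $\Val\models\Constraints_j$, $\Msg={\lT<>_j}$ and $\Comm=\Comm_j$. By \Cref{def:type_duality}, $\D=\DualChoice_i[i]$: the dual choice has the \emph{same} index set, labels, data types, guards and reset sets, differing only in that each direction $\Comm_i$ is flipped to $\Dual{\Comm_i}$ and each continuation to $\Dual{\S_i}$. I would then apply \Rule{act} to $(\Val,\D)$ with the same index $j$; its premises $\Val\models\Constraints_j$, $\Msg={\lT<>_j}$ and $j\in I$ are literally unchanged, and it yields $\Cfg,{\D}\Trans:{\Dual{\Comm_j}\Msg}\Cfg{\Val\Reset_j},{\Dual{\S_j}}$. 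As $\Comm=\Comm_j$, this is exactly the desired $\Cfg,{\D}\Trans:{\Dual{\Comm}\Msg}$.

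For the inductive case the derivation ends with \Rule{unfold}, so $\S=\mu\alpha.\S'$ and the transition comes from a shorter derivation of $\Cfg,{\S'\Subst{\mu\alpha.\S'}{\alpha}}\Trans:{\Comm\Msg}$. The induction hypothesis then gives $\Cfg,{\Dual{\S'\Subst{\mu\alpha.\S'}{\alpha}}}\Trans:{\Dual{\Comm}\Msg}$. Since $\D=\mu\alpha.\Dual{\S'}$ by \Cref{def:type_duality}, one unfolding of $\D$ is $\Dual{\S'}\Subst{\mu\alpha.\Dual{\S'}}{\alpha}$, so it only remains to match these two types, i.e.\ to establish that duality commutes with substitution:
\[
    \Dual{\S'\Subst{\mu\alpha.\S'}{\alpha}}
    =
    \Dual{\S'}\Subst{\Dual{\mu\alpha.\S'}}{\alpha}
    =
    \Dual{\S'}\Subst{\mu\alpha.\Dual{\S'}}{\alpha}.
\]
Granting this, the induction hypothesis supplies a transition out of $(\Val,\Dual{\S'}\Subst{\mu\alpha.\Dual{\S'}}{\alpha})$, and a single application of \Rule{unfold} lifts it to $\Cfg,{\D}\Trans:{\Dual{\Comm}\Msg}$, completing the case.

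The one step requiring genuine work, and hence the main obstacle, is the auxiliary identity $\Dual{\S\Subst{\S'{2}}{\alpha}}=\Dual{\S}\Subst{\Dual{\S'{2}}}{\alpha}$, which I would prove by a routine structural induction on \S\ driven by \Cref{def:type_duality}. The cases $\S=\End$, $\S=\alpha$ and $\S=\beta$ with $\beta\neq\alpha$ are immediate, since duality fixes \End\ and variables; the cases $\S=\mu\beta.\S_0$ and $\S=\Choice{\l}[\T]_i$ go through because duality acts as a homomorphism that preserves binders, labels, guards and resets while recursing only into continuations, where the inductive hypothesis applies. The sole subtlety is ensuring the recursion binder $\beta$ does not clash with $\alpha$, which is guaranteed by restricting to capture-avoiding substitutions (\Cref{def:capture_avoid_subst}); no further difficulty is expected.
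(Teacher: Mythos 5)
Your proof is correct and follows essentially the same route as the paper's: induction on the derivation, with \Rule{act} handled by observing that the dual choice shares the same index set, guards and resets, and \Rule{unfold} handled by the induction hypothesis. The only difference is that you make explicit the identity $\Dual{\S'\Subst{\mu\alpha.\S'}{\alpha}}=\Dual{\S'}\Subst{\mu\alpha.\Dual{\S'}}{\alpha}$ needed to match the unfolding of the dual with the dual of the unfolding, which the paper's proof uses silently; spelling it out (with the capture-avoidance caveat) is a welcome refinement rather than a deviation.
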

\begin{proof}
  We proceed by induction on the last rule applied for the transition $\Cfg\Trans:{\Comm\Msg}$ of those %
  in~\Cref{eq:type_semantics_tuple}:
  \begin{caseanalysis}
    %
    \item\label{case:tp_en_cfg_dual_en_act}
    If rule \Rule{act}, then $\S=\Choice{\l}[\T]_i$ and:
    \[
      \infer[\Rule{act}]{
      \Cfg,{\Choice{\l}[\T]_i}%
      \Trans:{\Comm_j\Msg}%
      \Cfg{\Val\Reset_j},{\S_j}%
      }{
      \Val\models\delta_j%
      \quad%
      \Msg={\lT<>_j}%
      \quad%
      j\in I%
      }%
    \]

    \noindent where $\Comm=\Comm_j$.
    By~\Cref{def:type_duality}, $\D=\DualChoice_i$.
    Therefore, as the preconditions (\Constraints_i) in \S\ and \D\ are identical, it follows that the premise of rule \Rule{act} shown above is equally applicable to \Cfg,{\D}.
    With the only difference being the interactions having opposite directions (sending or receiving), we obtain our thesis.
    %
    \item\label{case:tp_en_cfg_dual_en_unfold}
    If rule \Rule{unfold}, then $\S=\mu\alpha.\S'{2}$ and:
    \[
      \infer[\Rule{unfold}]{
      \Cfg,{\mu\alpha.\S'{2}}%
      \Trans:{\ell}%
      \Cfg'%
      }{
      \Cfg,{\S'{2}\Subst{{\mu\alpha.\S'{2}}}{\alpha}}%
      \Trans:{\ell}%
      \Cfg'%
      }%
    \]

    \noindent where $\ell=\Comm\Msg$.
    By~\Cref{def:type_duality}, $\D=\mu\alpha.\D'{2}$.
    The thesis follows by induction:
    \[
      \Cfg,{\S'{2}\Subst{\mu\alpha.\S'{2}}{\alpha}}
      \Trans:{\Comm\Msg}
      \quad
      \implies
      \quad
      \Cfg,{\D'{2}\Subst{\mu\alpha.\D'{2}}{\alpha}}
      \Trans:{\Dual{\Comm}\Msg}
      \qedhere
    \]
  \end{caseanalysis}
\end{proof}

\subsection{Configuration Transitions}

\begin{lem}\label{lem:tp_cfg_trans}
  Let \Cfg\ be \wf.
  \Cref{claim:tp_cfg_trans_tick,claim:tp_cfg_trans_act} both hold.
\end{lem}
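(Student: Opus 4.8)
The plan is to prove the two embedded claims — one for the time-passing rule \Rule{tick} and one for the action rule \Rule{act} — as instances of \emph{preservation of \wfness\ under configuration transitions}. Since \Cfg\ is \wf, \Cref{def:cfg_wf} yields a constraint \Constraints\ with $\emptyset;\Constraints\Entails\S$ and $\Val\models\Constraints$, and \Cref{lem:tp_wf_s} lets me assume $\S\utu\End$ or $\S\utu\Choice{\l}[\T]_i$. I would prove each claim by induction on the last rule used in the transition derivation of \Cref{fig:type_semantics}~\Cref{eq:type_semantics_tuple}, treating rule \Rule{unfold} as the structural case: it rewrites $\S=\mu\alpha.\S_0$ to $\S_0\Subst{\mu\alpha.\S_0}{\alpha}$, so the induction hypothesis applies to the unfolded configuration and \wfness\ is transported back along $\mu\alpha.\S_0\utu\S_0\Subst{\mu\alpha.\S_0}{\alpha}$ using \Cref{lem:tp_wf_cfg_utu} (with \Cref{lem:tp_substitution} relating the two derivations).

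For the time-passing claim, the transition $\Cfg\Trans:{\t}\Cfg{\Val+\t}$ is immediate from rule \Rule{tick}. To obtain \wfness\ of the target I would exploit that \Rule{tick} leaves the type component untouched: the derivation $\emptyset;\Constraints\Entails\S$ already in hand is exactly the hypothesis required by \Cref{lem:tp_wf_type_fe_cfg_wf}, so whenever $\Cfg{\Val+\t}$ is \fe\ that lemma yields $\Cfg{\Val+\t}$ \wf\ directly. The conclusion is, and must be, conditioned on the target remaining \fe, because \wfness\ is not preserved under \emph{arbitrary} time passing: a guard such as $x<5$ can be left behind once $\t$ grows, so a configuration that is no longer \fe\ need not be \wf.

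For the action claim, the base case is rule \Rule{act} with $\S=\Choice{\l}[\T]_i$, firing option $j\in I$ under premise $\Val\models\Constraints_j$ and reaching $\Cfg{\Val\Reset_j},{\S_j}$. From $\emptyset;\Constraints\Entails\Choice{\l}[\T]_i$ the only applicable rule is \Rule{choice}, whose \emph{feasibility} premise supplies, for this $j$, an environment $\FutureEnv_j$ with $\emptyset;\FutureEnv_j\Entails\S_j$ and $\Constraints_j\Reset_j\models\FutureEnv_j$. It then suffices to verify $\Val\Reset_j\models\FutureEnv_j$: by invariance of satisfaction under resets, $\Val\models\Constraints_j$ gives $\Val\Reset_j\models\Constraints_j\Reset_j$, and composing with $\Constraints_j\Reset_j\models\FutureEnv_j$ yields $\Val\Reset_j\models\FutureEnv_j$. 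Taking $\FutureEnv_j$ as the witness in \Cref{def:cfg_wf} shows $\Cfg{\Val\Reset_j},{\S_j}$ is \wf, closing the base case; the \Rule{unfold} case is handled as described above.

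The main obstacle I anticipate is not conceptual but the careful reset bookkeeping in the \Rule{act} case: making precise the reset-invariance step $\Val\models\Constraints_j\Rightarrow\Val\Reset_j\models\Constraints_j\Reset_j$, so that the \emph{feasibility} environment $\FutureEnv_j$ can double as the \wfness\ witness for the continuation, and keeping the recursion environment empty across \Rule{unfold} unfoldings so that \Cref{lem:tp_wf_type_fe_cfg_wf,lem:tp_wf_cfg_utu} apply with the correct (empty) environment. The two genuinely load-bearing design points are the $\utu$ transport of \wfness\ through recursion and the decision to condition the time-passing claim on the target being \fe\ rather than attempting unconditional preservation.
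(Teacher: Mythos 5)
Your treatment of \Cref{claim:tp_cfg_trans_act} is essentially the paper's proof: induct on the transition derivation, handle \Rule{unfold} by transporting through $\utu$, and in the \Rule{act} base case extract $\FutureEnv_j$ from the $\text{(feasibility)}$ premise of \Rule{choice}, then combine $\nu\models\delta_j \Rightarrow \nu[\lambda_j\mapsto 0]\models\delta_j[\lambda_j\mapsto 0]$ with $\delta_j[\lambda_j\mapsto 0]\models\FutureEnv_j$ to witness well-formedness of the continuation. That half is fine.

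The problem is \Cref{claim:tp_cfg_trans_tick}. You have misread what it asserts: it is a shape/inversion statement — a time transition satisfies $\nu'=\nu+t$ and $S\utu S'$ — and nothing more. It does \emph{not} claim the target configuration is \wf, and it carries no \fe\ side condition. Your paragraph instead proves a conditional well-formedness preservation ("whenever the target is \fe, \Cref{lem:tp_wf_type_fe_cfg_wf} yields it \wf"), and you explicitly insist the conclusion "is, and must be, conditioned on the target remaining \fe." That condition is not in the lemma and cannot be smuggled in without changing the statement; as written, your argument proves a different proposition for this claim. Your own observation that guards like $x<5$ can be outrun is exactly why the paper states the tick claim as a pure characterization rather than as preservation — the preservation-under-delay result lives elsewhere (in the treatment of rule \Rule{time}, where persistency is a premise). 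The fix is small: the actual claim is immediate from \Rule{tick} in the base case ($\nu'=\nu+t$, $S'=S$), and the \Rule{unfold} case needs only the induction hypothesis plus reflexivity/transitivity of $\utu$ together with the unfolding equation $\mu\alpha.S_0\utu S_0[\mu\alpha.S_0/\alpha]$ — no appeal to \Cref{lem:tp_wf_type_fe_cfg_wf} at all. Drop the well-formedness detour and state that argument explicitly.
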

\begin{clm}\label{claim:tp_cfg_trans_tick}
  \(
  \Cfg\Trans:{\t}\Cfg'
  \implies
  {\Val'=\Val+\t}
  \land {\S\utu\S'}
  \)
\end{clm}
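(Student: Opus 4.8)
The plan is to prove \Cref{claim:tp_cfg_trans_tick} by induction on the derivation of the time transition $\Cfg\Trans:{\t}\Cfg'$ in the configuration layer of \Cref{fig:type_semantics}~\Cref{eq:type_semantics_tuple}. The first step is to narrow down which rules can conclude a transition carrying a time label $\t$. Of the three rules at that layer, rule \Rule{act} emits a communication label $\Comm_j\Msg$, so it cannot be the last rule applied here; consequently the derivation must end with either \Rule{tick} (the base case) or \Rule{unfold} (the inductive step that strips a $\mu$-binder before time elapses).

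In the base case the derivation is a single application of \Rule{tick}, so the transition is literally $\Cfg\Trans:{\t}\Cfg{\Val+\t}$. Reading off the conclusion gives $\Val'=\Val+\t$ by inspection, and $\S'=\S$, whence $\S\utu\S'$ is immediate from reflexivity of $(\utu)$ (\Cref{def:up_to_unfolding}). In the inductive case $\S=\mu\alpha.\S_0$ and the premise of \Rule{unfold} is $\Cfg,{\S_0\Subst{\mu\alpha.\S_0}{\alpha}}\Trans:{\t}\Cfg'$, where the valuation $\Val$ is unchanged by the rule. To apply the induction hypothesis I must know the premise configuration is \wf: since $\Cfg,{\mu\alpha.\S_0}$ is \wf\ and $\mu\alpha.\S_0\utu\S_0\Subst{\mu\alpha.\S_0}{\alpha}$, \Cref{lem:tp_wf_cfg_utu} gives that $\Cfg,{\S_0\Subst{\mu\alpha.\S_0}{\alpha}}$ is \wf\ as well. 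The induction hypothesis then yields $\Val'=\Val+\t$ together with $\S_0\Subst{\mu\alpha.\S_0}{\alpha}\utu\S'$. Chaining $\mu\alpha.\S_0\utu\S_0\Subst{\mu\alpha.\S_0}{\alpha}\utu\S'$ through transitivity of $(\utu)$ delivers $\S\utu\S'$, closing the case.

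The hard part will be the bookkeeping in the \Rule{unfold} case rather than any genuine difficulty. Two side conditions must be discharged so that the chain above is legitimate: that the unfolded premise configuration is \wf\ (needed to invoke the hypothesis), and that the unfolding equation of \Cref{def:up_to_unfolding} applies, which it does only when the substitution $\S_0\Subst{\mu\alpha.\S_0}{\alpha}$ is capture avoiding. The first is settled by \Cref{lem:tp_wf_cfg_utu} as above; the second follows from \Cref{asm:rec_name_def} and the shape of the \Rule{rec} derivation witnessing well-formedness of $\Cfg,{\mu\alpha.\S_0}$, which keeps bound recursion variables distinct from the ambient environment. The valuation component $\Val'=\Val+\t$ is frictionless throughout, since neither \Rule{tick} nor \Rule{unfold} perturbs the clock assignment beyond the single additive offset introduced by \Rule{tick}.
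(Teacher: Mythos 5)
Your proof is correct and follows essentially the same route as the paper's: induction on the derivation of the time transition, with \Rule{tick} as the immediate base case and \Rule{unfold} handled via \Cref{lem:tp_wf_cfg_utu}, the induction hypothesis, and transitivity of $(\utu)$. Your extra care about the capture-avoiding side condition is a minor refinement the paper leaves implicit, but it does not change the argument.
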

\begin{clm}\label{claim:tp_cfg_trans_act}
  \(
  \Cfg\Trans:{\Comm\Msg}\Cfg'
  \implies
  \begin{array}[t]{@{}l@{\ }l@{}}
    \exists \Constraints                    & \ST 
    {\emptyset;\Constraints\Entails\S}
    \land
    {\Val\models\Constraints}
    \\
    \mathllap{\land\ }\exists \Constraints' & \ST %
    {\Constraints'=\Constraints\Reset->0}
    \land
    {\Val'=\Val\Reset->0}
    \\
    \mathllap{\land\ }\exists \FutureEnv    & \ST %
    {\Val'\models\Constraints'\models\FutureEnv}
    \land
    {\emptyset;\FutureEnv\Entails\S'}
  \end{array}
  \)
\end{clm}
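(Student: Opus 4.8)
The plan is to prove both claims simultaneously by induction on the derivation of the transition leaving $\Cfg$, using the three configuration rules of \Cref{fig:type_semantics}~\Cref{eq:type_semantics_tuple}. Since the two claims concern labels of disjoint shape, the base case of \Cref{claim:tp_cfg_trans_tick} is rule \Rule{tick}, the base case of \Cref{claim:tp_cfg_trans_act} is rule \Rule{act}, and rule \Rule{unfold} is the single inductive case shared by both. First I would dispatch the two base cases, and then handle \Rule{unfold} uniformly.

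For \Cref{claim:tp_cfg_trans_tick}, rule \Rule{tick} gives $\Cfg\Trans:{\t}\Cfg{\Val+\t}$ with the type untouched, so $\Val'=\Val+\t$ and $\S'=\S$, whence $\S\utu\S'$ by reflexivity of $\utu$ (\Cref{def:up_to_unfolding}); well-formedness is not even needed here. For \Cref{claim:tp_cfg_trans_act}, rule \Rule{act} forces $\S$ to be a choice and selects some index $j$ with $\Val\models\Constraints_j$, $\S'=\S_j$ and $\Val'=\Val[\lambda_j\mapsto 0]$. I would then unpack the hypothesis that $\Cfg$ is \wf: by \Cref{def:cfg_wf} there is a constraint with $\emptyset;\Constraints\Entails\S$ and $\Val\models\Constraints$, and as $\S$ is a choice the derivation must end in rule \Rule{choice}, whose \emph{feasibility} premise supplies, for the chosen $j$, a future environment $\FutureEnv_j$ with $\emptyset;\FutureEnv_j\Entails\S_j$ and $\Constraints_j[\lambda_j\mapsto 0]\models\FutureEnv_j$. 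I would take as witnesses the data of branch $j$, namely $\Constraints_j$, its reset $\Constraints_j[\lambda_j\mapsto 0]$, and $\FutureEnv_j$; the effect of \Rule{act} is exactly the second conjunct, and the feasibility implication together with $\emptyset;\FutureEnv_j\Entails\S_j$ gives the third. The only nontrivial points are to reconcile the first conjunct $\emptyset;\Constraints\Entails\S$ with the fact that \Rule{choice} records the \emph{entry} constraint $\Past{\bigvee_{i\in I}\Constraints_i}$ rather than the individual guard (here one uses $\Constraints_j\models\Past{\bigvee_{i\in I}\Constraints_i}$ to recover $\Val\models\Constraints$), and to observe that resetting commutes with satisfaction, so that $\Val\models\Constraints_j$ entails $\Val[\lambda_j\mapsto 0]\models\Constraints_j[\lambda_j\mapsto 0]$, i.e. $\Val'\models\Constraints'$.

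The inductive case, rule \Rule{unfold}, is where the real work lies and is the step I expect to be the main obstacle. Here $\S=\mu\alpha.\S_0$ and the premise transitions $\Cfg,{\S_0\Subst{\mu\alpha.\S_0}{\alpha}}$ under the same label. To invoke the induction hypothesis I must first know that this unfolded configuration is \wf, which follows from $\mu\alpha.\S_0\utu\S_0\Subst{\mu\alpha.\S_0}{\alpha}$ (\Cref{def:up_to_unfolding}, the substitution being capture avoiding by contractivity) together with \Cref{lem:tp_wf_cfg_utu}. Applying the hypothesis, I then recombine: for \Cref{claim:tp_cfg_trans_tick} it gives $\Val'=\Val+\t$ and $\S_0\Subst{\mu\alpha.\S_0}{\alpha}\utu\S'$, and transitivity of $\utu$ with $\mu\alpha.\S_0\utu\S_0\Subst{\mu\alpha.\S_0}{\alpha}$ yields $\S\utu\S'$; for \Cref{claim:tp_cfg_trans_act} the hypothesis already delivers witnesses $\Constraints$, $\Constraints'$ and $\FutureEnv$ for the unfolded type, and since the first conjunct is a well-formedness judgement and $\S$ unfolds to the premise's type, \Cref{lem:tp_wf_cfg_utu_aux} lets me re-ascribe $\emptyset;\Constraints\Entails\S$, while the remaining conjuncts are unchanged because they mention only $\S'$, $\Val$ and $\Val'$, which the unfolding leaves fixed. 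The care needed in threading well-formedness and the $\utu$-equivalences correctly through recursion is the crux; the arithmetic of resets and constraint satisfaction is routine by comparison.
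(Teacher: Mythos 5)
Your proposal matches the paper's own proof essentially step for step: induction on the last rule of the transition derivation, with \Rule{act} discharged by inverting \Cref{def:cfg_wf} into rule \Rule{choice}, taking the branch data $\Constraints_j$, $\Constraints_j\Reset_j$, $\FutureEnv_j$ as witnesses (including the same reconciliation via $\Constraints_j\models\Past{\bigvee_{i\in I}\Constraints_i}$ and the commutation of resets with satisfaction), and \Rule{unfold} handled through \Cref{def:up_to_unfolding}, \Cref{lem:tp_wf_cfg_utu} and the induction hypothesis. Your explicit use of \Cref{lem:tp_wf_cfg_utu_aux} to re-ascribe the judgement to the folded type in the \Rule{unfold} case is slightly more careful than the paper's ``the rest follows'' remark, but it is the same argument.
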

\begin{proof}
  We proceed by addressing each claim in turn:
  \begin{description}[font=\normalfont]
    %
    \item[\Cref{claim:tp_cfg_trans_tick}]\label{case:tp_cfg_trans_tick}
          We proceed by induction on the derivation of the transition
          \(
          \Cfg\Trans:{\t}\Cfg'
          \)
          analysing the last rule applied of those given %
          in~\Cref{eq:type_semantics_tuple}:
          \begin{caseanalysis}
            %
            \item\label{case:tp_cfg_trans_tick_tick}
            If rule \Rule{tick},
            then the thesis holds as $\Val'=\Val+\t$ and $\S=\S'$.
            %
            \item\label{case:tp_cfg_trans_tick_unfold}
            If rule \Rule{unfold},
            then $\S=\mu\alpha.\S'{2}$ and:
            \[
              \infer[\Rule{unfold}]{
              \Cfg,{\mu\alpha.\S'{2}}%
              \Trans:{\ell}%
              \Cfg'%
              }{
              \Cfg,{\S'{2}\Subst{{\mu\alpha.\S'{2}}}{\alpha}}%
              \Trans:{\ell}%
              \Cfg'%
              }%
            \]

            \noindent where $\ell=\t$.
            By~\Cref{def:up_to_unfolding}, $\mu\alpha.\S'{2}\utu\S'{2}\Subst{\mu\alpha.\S'{2}}{\alpha}$.
            By~\Cref{lem:tp_wf_cfg_utu}, \Cfg,{\S'{2}\Subst{\mu\alpha.\S'{2}}{\alpha}}\ is \wf.
            By the induction hypothesis,
            \(
            \S'{2}\Subst{\mu\alpha.\S'{2}}{\alpha}
            \utu
            \allowbreak
            \S'
            \).
            By the transitivity equation in~\Cref{def:up_to_unfolding},
            $\mu\alpha.\S'{2} \utu \S'$, as required.
          \end{caseanalysis}
          %
    \item[\Cref{claim:tp_cfg_trans_act}]\label{case:tp_cfg_trans_act}
          We proceed by induction on the derivation of the transition $\Cfg\Trans:{\Comm\Msg}\Cfg'$, analysing the last rule applied of those given %
          in~\Cref{eq:type_semantics_tuple}:
          \begin{caseanalysis}
            %
            \item\label{case:tp_cfg_trans_act_act}
            If rule \Rule{act},
            then $\S=\Choice{\l}[\T]_i$ and:
            \[
              \infer[\Rule{act}]{
              \Cfg,{\Choice{\l}[\T]_i}%
              \Trans:{\Comm_j\Msg}%
              \Cfg{\Val\Reset_j},{\S_j}%
              }{
              \Val\models\delta_j%
              \quad%
              \Msg={\lT<>_j}%
              \quad%
              j\in I%
              }%
            \]

            \noindent where $\Val'=\Val\Reset_j$, $\S'=\S_j$ and $\Constraints=\Constraints_j$.
            Since \Cfg\ is \wf, by~\Cref{def:cfg_wf}, $\exists \Constraints$ such that $\Val\models\Constraints$ and
            \(
            \emptyset;\Constraints\Entails
            \S
            \).
            By rule \Rule{choice}:
            \[
              \infer[\Rule{choice}]{
              \RecTypeEnv;~\Past_iV_{i\in I}
              \Entails
              \Choice{\l}[\T]{\Constraints}[\Resets].{\S}_i
              }{%
              \begin{array}{@{}c @{\hspace{2ex}} l@{}}
                \forall i\in I:
                \RecTypeEnv;~\FutureEnv_i
                \Entails
                \S_i
                ~\land~
                \Constraints_i\Reset_i\models\FutureEnv_i
                 & \text{(feasibility)}
                \\[0.5ex]
                \forall i,j\in I:
                i \neq j
                \implies
                \Constraints_i\land\Constraints_j\models\False
                ~\vee~
                \Comm_i=\Comm_j
                 & \text{(mixed-choice)}
                \\[0.5ex]
                \forall i\in I:
                \T_i=\Del'{2}
                \implies
                \RecTypeEnv;~\FutureEnv'\Entails\S'{2}
                ~\land~
                \Constraints'{2}\models\FutureEnv'
                 & \text{(delegation)}
              \end{array}
              }
            \]

            \noindent where $\RecTypeEnv=\emptyset$.
            As in~\Cref{case:tp_wf_type_fe_cfg_wf_act} of~\Cref{lem:tp_wf_type_fe_cfg_wf}, the set of constraints $\Past{\bigvee_{i\in I}\Constraints_i}$ requires as minimum the \emph{weak past} of only \emph{one} set of constraints (\Constraints_j) to be satisfied for the entire set of constraints to be satisfied; hence $\Constraints_j\models\Past{\bigvee_{i\in I}\Constraints_i}$.
            It remains to show:
            \(
            \Val\Reset_j\models\Constraints_j\Reset_j\models\FutureEnv
            \)
            and $\emptyset;\FutureEnv\Entails\S_j$.
            Clearly, by the $\text{(feasibility)}$ premise of rule \Rule{choice} it holds that $\Constraints_j\Reset_j\models\FutureEnv$ and $\emptyset;\FutureEnv\Entails\S_j$ (as $\RecTypeEnv=\emptyset$).
            By the premise of rule \Rule{act}, it holds that $\Val\models\Constraints_j$, and therefore it follows that $\Val\Reset_j\models\Constraints_j\Reset_j$.
            %
            \item\label{case:tp_cfg_trans_act_unfold}
            If rule \Rule{unfold},
            then $\S=\mu\alpha.\S'{2}$ and:
            \[
              \infer[\Rule{unfold}]{
              \Cfg,{\mu\alpha.\S'{2}}%
              \Trans:{\ell}%
              \Cfg'%
              }{
              \Cfg,{\S'{2}\Subst{{\mu\alpha.\S'{2}}}{\alpha}}%
              \Trans:{\ell}%
              \Cfg'%
              }%
            \]

            \noindent where $\ell=\Comm\Msg$.
            The rest follows~\Cref{case:tp_cfg_trans_tick_unfold} of~\Cref{claim:tp_cfg_trans_tick} (above).
            \qedhere
          \end{caseanalysis}
  \end{description}
\end{proof}

\begin{lem}\label{lem:tp_cfg_que_trans}
  Let \Cfg\ be \wf.
  \Cref{claim:tp_cfg_que_trans_send,claim:tp_cfg_que_trans_recv,claim:tp_cfg_que_trans_que,claim:tp_cfg_que_trans_time} all hold.
\end{lem}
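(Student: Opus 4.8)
The plan is to establish each of the four claims separately, in every case by a case analysis on the last rule used to derive the relevant transition of \Cref{fig:type_semantics}~\Cref{eq:type_semantics_triple}, discharging the interesting cases by appealing to the configuration-layer result already proved in \Cref{lem:tp_cfg_trans}. The guiding observation is that three of the four queue-layer rules (\Rule{send}, \Rule{recv}, \Rule{time}) have as their single premise a transition of the underlying configuration $\Cfg$, so the queue component is merely threaded through and the real content is inherited from \Cref{lem:tp_cfg_trans}; the fourth rule, \Rule{que}, is an axiom and is handled directly.

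For \Cref{claim:tp_cfg_que_trans_send} the only applicable rule is \Rule{send}, whose premise is $\Cfg\Trans:{!\Msg}\Cfg'$. I would apply \Cref{claim:tp_cfg_trans_act} to this premise to recover the constraints $\Constraints$, $\Constraints'$ and the environment $\FutureEnv$ witnessing that $\S'$ remains well-formed, and observe that \Rule{send} leaves $\Que$ unchanged. \Cref{claim:tp_cfg_que_trans_recv} is symmetric: the $\tau$-labelled step can only come from \Rule{recv}, whose premise is $\Cfg\Trans:{?\Msg}\Cfg'$; I again invoke \Cref{claim:tp_cfg_trans_act}, additionally recording that $\Msg$ is the message removed from the head of $\Que$ and that the tail is preserved.

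\Cref{claim:tp_cfg_que_trans_que} is immediate: the only input-labelled step that extends the queue is the axiom \Rule{que}, which leaves the configuration $\Cfg$ (hence $\Val$ and $\S$) untouched and merely appends $\Msg$ at the back of $\Que$, so well-formedness of the configuration is trivially preserved and there is nothing further to check.

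The main work lies in \Cref{claim:tp_cfg_que_trans_time}, which I expect to be the principal obstacle. Here the last rule must be \Rule{time}, whose premise pairs a configuration time step $\Cfg\Trans:{\t}\Cfg'$ with the persistency and (relaxed) urgency side-conditions that make mixed-choice safe. From the configuration time step I apply \Cref{claim:tp_cfg_trans_tick} to obtain $\Val'=\Val+\t$ and $\S\utu\S'$, note that $\Que$ is unchanged, and conclude preservation of well-formedness from \Cref{lem:tp_wf_cfg_utu}, since well-formedness is carried along unfolding equivalence. The delicate point is to verify that the persistency and urgency premises constrain only which delays $\t$ are admissible and do not perturb the relationship between $\Cfg$ and $\Cfg'$ beyond what \Cref{claim:tp_cfg_trans_tick} already supplies, so that the queue-layer claim reduces cleanly to its configuration-layer counterpart without re-proving any timing bookkeeping.
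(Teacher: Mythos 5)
Your proposal is correct and follows essentially the same route as the paper's own (very terse) proof: a case analysis on the single queue-layer rule applicable to each label, reading the queue bookkeeping off the rule's shape and delegating the configuration-level facts (in particular $\Val'=\Val+\t$ and $\S\utu\S'$ for the time case) to \Cref{lem:tp_cfg_trans}. The extra observations you make — invoking \Cref{claim:tp_cfg_trans_act} for the send/receive cases and well-formedness preservation via \Cref{lem:tp_wf_cfg_utu} — go beyond what the claims actually assert but introduce no error.
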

\begin{clm}\label{claim:tp_cfg_que_trans_send}
  $\Cfg;{\Que} \Trans:{!\Msg} \Cfg;{\Que}'
    \implies
    \Que'=\Que
    \land
    \Cfg\Trans:{!\Msg}\Cfg'$
\end{clm}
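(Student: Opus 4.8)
The plan is to prove the claim by a simple inversion (generation) argument on the derivation of $\Cfg;{\Que}\Trans:{!\Msg}\Cfg;{\Que}'$, using the operational rules for configurations with queues in~\Cref{fig:type_semantics}~\Cref{eq:type_semantics_triple}. The crucial observation is that the transition label is the send action $!\Msg$, and of the four rules governing configurations with queues only rule \Rule{send} yields a conclusion carrying such a label: rule \Rule{recv} produces the silent label $\tau$, rule \Rule{que} produces a receive label $?\Msg$, and rule \Rule{time} produces a time label $\t$. Consequently the last (and only) rule applicable in any derivation of this transition must be \Rule{send}.

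First I would invoke this inversion. Since \Rule{send} must be the rule applied, the transition is forced to have exactly the shape of that rule's conclusion, namely $\Cfg;{\Que}\Trans:{!\Msg}\Cfg{\Val'},{\S'};{\Que}$ with premise $\Cfg\Trans:{!\Msg}\Cfg{\Val'},{\S'}$. From the shape of the conclusion I would then read off the two conjuncts of the thesis directly: the queue component is left untouched by \Rule{send}, so $\Que'=\Que$; and, writing $\Cfg'$ for $\Cfg{\Val'},{\S'}$, the premise of \Rule{send} is literally $\Cfg\Trans:{!\Msg}\Cfg'$.

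I do not expect any genuine obstacle here, as the claim is purely a generation lemma on the labelled transition system rather than a substantive inductive argument. The only point requiring (trivial) care is to confirm that no rule other than \Rule{send} in~\Cref{eq:type_semantics_triple} can emit a $!\Msg$ label, which follows by direct inspection of the remaining three rules. This claim then serves as a building block for the companion claims of~\Cref{lem:tp_cfg_que_trans} and, ultimately, for establishing progress.
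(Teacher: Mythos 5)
Your proposal is correct and matches the paper's own (very terse) argument: the paper likewise dispatches this claim by appealing to rule \Rule{send} as the only applicable rule and reading off $\Que'=\Que$ from its conclusion and $\Cfg\Trans:{!\Msg}\Cfg'$ from its premise. Your version merely spells out the inversion step (that none of \Rule{recv}, \Rule{que}, \Rule{time} can emit a $!\Msg$ label) which the paper leaves implicit.
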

\begin{clm}\label{claim:tp_cfg_que_trans_recv}
  $\Cfg;{\Que} \Trans:{\tau} \Cfg;{\Que}'
    \implies
    \exists \Msg ~\ST
    \Que=\Msg;\Que'
    \land
    \Cfg\Trans:{?\Msg}\Cfg'$
\end{clm}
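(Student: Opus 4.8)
The plan is to prove all four claims uniformly as inversion lemmas on the labelled transition relation for configurations with queues, defined by the rules in~\Cref{fig:type_semantics}~\Cref{eq:type_semantics_triple}. The decisive observation is that each of the four rules emits a label of a distinct syntactic shape: \Rule{send} produces $!\Msg$, \Rule{recv} produces $\tau$, \Rule{que} produces $?\Msg$, and \Rule{time} produces a time label $\t$. Since the grammar of labels $\ell$ in~\Cref{eq:lts_labels} keeps these four shapes pairwise disjoint, the label of a transition $\Cfg;{\Que}\Trans:{\ell}\Cfg;{\Que}'$ uniquely determines the rule that derived it. Hence each claim reduces to reading off the premise and the shape of the conclusion of the single applicable rule.

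Concretely, for \Cref{claim:tp_cfg_que_trans_send} the label $!\Msg$ forces the last rule to be \Rule{send}; its conclusion leaves the queue component untouched, yielding $\Que'=\Que$, and its unique premise is exactly the configuration step $\Cfg\Trans:{!\Msg}\Cfg'$. For \Cref{claim:tp_cfg_que_trans_recv} the label $\tau$ forces \Rule{recv}, which is applicable only when the queue has a message at its head; inverting it supplies the witness $\Msg$ with $\Que=\Msg;\Que'$ and the premise $\Cfg\Trans:{?\Msg}\Cfg'$. For \Cref{claim:tp_cfg_que_trans_que} the label $?\Msg$ forces \Rule{que}, whose conclusion appends $\Msg$ to the back of the queue and leaves \Cfg\ unchanged. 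For \Cref{claim:tp_cfg_que_trans_time} the label $\t$ forces \Rule{time}, whose first (configuration) premise is the time step $\Cfg\Trans:{\t}\Cfg'$, again with the queue preserved.

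Because the argument is essentially pure inversion, I do not expect a substantial obstacle; the one place needing care is the time case. There, the premise of \Rule{time} additionally carries the persistency and urgency side conditions, which must be recorded so that they remain available to the lemmas that later reason about preservation of well-formedness and compatibility under time steps. On the extracted premise $\Cfg\Trans:{\t}\Cfg'$ I would moreover invoke \Cref{claim:tp_cfg_trans_tick} of~\Cref{lem:tp_cfg_trans}---legitimate precisely because \Cfg\ is assumed \wf---to expose the auxiliary facts $\Val'=\Val+\t$ and $\S\utu\S'$. This is also why the \wf\ hypothesis on \Cfg\ is threaded through the whole statement rather than invoked only at a single point: each claim is framed so that its conclusion can be consumed downstream alongside well-formedness of the resulting configuration.
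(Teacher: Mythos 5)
Your proof is correct and takes essentially the same approach as the paper: the paper's own proof of this claim is a one-line inversion on the last rule applied, observing that only rule $[\mathtt{recv}]$ can emit a $\tau$ label at the configurations-with-queues level, so the queue must decompose as a message followed by the residual queue and the rule's premise supplies the configuration transition on that message. Nothing further is needed.
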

\begin{clm}\label{claim:tp_cfg_que_trans_que}
  $\Cfg;{\Que} \Trans:{?\Msg} \Cfg;{\Que}'
    \implies
    \S'=\S
    \land
    \Que'=\Que;\Msg
    \land
    \Val'=\Val$
\end{clm}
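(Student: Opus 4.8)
The plan is to prove each of the four claims as a routine \emph{inversion lemma} on the labelled transition system for configurations with queues (\Cref{fig:type_semantics}~\Cref{eq:type_semantics_triple}). The single fact underlying all of them, which I would record at the outset, is that the four rules of that LTS are pairwise distinguished by the \emph{shape of the label} they produce: \Rule{send} is the only rule whose conclusion is labelled $!\Msg$, \Rule{que} the only one labelled $?\Msg$, \Rule{recv} the only one labelled $\tau$, and \Rule{time} the only one labelled with a time $\t$. Hence in each claim the label of the hypothesised transition determines its last rule uniquely, and the thesis is obtained simply by reading off the matching conclusion and premise.

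For \Cref{claim:tp_cfg_que_trans_send}, inverting $\Cfg;{\Que}\Trans:{!\Msg}\Cfg;{\Que}'$ forces the last rule to be \Rule{send}; its conclusion carries the queue over unchanged, so $\Que'=\Que$, and its premise is exactly $\Cfg\Trans:{!\Msg}\Cfg'$, which is the thesis. \Cref{claim:tp_cfg_que_trans_que} is handled the same way through \Rule{que}, which is an axiom with conclusion $\Cfg;{\Que}\Trans:{?\Msg}\Cfg;{\Que;\,\Msg}$; reading it off gives $\S'=\S$, $\Val'=\Val$ and $\Que'=\Que;\Msg$ simultaneously, with no premise to analyse.

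\Cref{claim:tp_cfg_que_trans_recv} uses the same method on \Rule{recv}, the only $\tau$-labelled rule: its conclusion has the form $\Cfg;{\Msg;\,\Que}\Trans:{\tau}\Cfg;{\Que}$, so the source queue must split as $\Msg;\Que'$ for some head $\Msg$ (the existential witness) and tail $\Que'$, and its premise yields $\Cfg\Trans:{?\Msg}\Cfg'$. The remaining \Cref{claim:tp_cfg_que_trans_time} is the only one that actually consumes the hypothesis that $\Cfg;{\Que}$ is \wf: inverting the $\t$-labelled rule \Rule{time} leaves the queue untouched and exposes the configuration-level premise $\Cfg\Trans:{\t}\Cfg'$, to which I would apply \Cref{claim:tp_cfg_trans_tick} of \Cref{lem:tp_cfg_trans} to conclude $\Val'=\Val+\t$ and $\S\utu\S'$.

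I expect no genuine obstacle, as every claim is pure inversion over a single rule. The only steps needing care are the preliminary one — checking that the label shapes $!\Msg$, $?\Msg$, $\tau$ and $\t$ are mutually exclusive, so that each label pins down exactly one rule — and, for \Cref{claim:tp_cfg_que_trans_time}, ensuring the \wf\ hypothesis is in scope when invoking \Cref{lem:tp_cfg_trans}, whose \Cref{claim:tp_cfg_trans_tick} presupposes \wfness\ of the configuration. Everything else amounts to copying data out of the relevant rule.
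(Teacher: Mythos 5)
Your proposal is correct and takes essentially the same route as the paper: the statement is proved by inverting the unique $?\Msg$-labelled rule \Rule{que}, which is an axiom with conclusion $\Cfg;{\Que}\Trans:{?\Msg}\Cfg;{\Que;\,\Msg}$, so $\S'=\S$, $\Val'=\Val$ and $\Que'=\Que;\Msg$ are read off directly. Your explicit preliminary remark that the four label shapes pin down the four rules uniquely is exactly the justification the paper leaves implicit.
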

\begin{clm}\label{claim:tp_cfg_que_trans_time}
  $\Cfg;{\Que} \Trans:{\t} \Cfg;{\Que}'
    \implies
    \S'\utu\S
    \land
    \Que'=\Que
    \land
    \Val'=\Val+\t$
\end{clm}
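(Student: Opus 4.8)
The plan is to establish all four claims by inverting the single rule that could have produced the transition at the level of configurations with queues, i.e.\ one of \Rule{send}, \Rule{recv}, \Rule{que}, \Rule{time} in~\Cref{fig:type_semantics}~\Cref{eq:type_semantics_triple}. The crucial, and essentially only, observation I would use is that the four label shapes addressed by the claims are each generated by exactly one rule: a label $!\Msg$ can arise only from \Rule{send}, a silent label $\tau$ only from \Rule{recv}, an input label $?\Msg$ only from \Rule{que}, and a time label $\t$ only from \Rule{time}. Thus the hypothesis of each claim fixes the applicable rule, and the conclusion can be read directly off that rule's conclusion and premise.

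For \Cref{claim:tp_cfg_que_trans_send}, inverting \Rule{send} shows that the queue is carried over unchanged, so $\Que'=\Que$, while its unique premise is exactly $\Cfg\Trans:{!\Msg}\Cfg'$. For \Cref{claim:tp_cfg_que_trans_recv}, inverting \Rule{recv} forces the source queue to have the form $\Msg;\Que'$ (providing the witness $\Msg$), consumes the head leaving $\Que'$, and supplies the premise $\Cfg\Trans:{?\Msg}\Cfg'$. For \Cref{claim:tp_cfg_que_trans_que}, inverting the axiom \Rule{que} shows that neither the valuation nor the type is altered, giving $\S'=\S$ and $\Val'=\Val$, while the incoming message is appended at the tail, so $\Que'=\Que;\Msg$. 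Each of these three cases is immediate once the rule is fixed.

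The only case that needs more than bare inversion is \Cref{claim:tp_cfg_que_trans_time}. Inverting \Rule{time} again yields $\Que'=\Que$, and exposes in its premise the configuration-level transition $\Cfg\Trans:{\t}\Cfg{\Val'},{\S'}$. Here I would use the standing hypothesis that \Cfg\ is \wf\ in order to apply \Cref{claim:tp_cfg_trans_tick} of~\Cref{lem:tp_cfg_trans} to this premise, obtaining $\Val'=\Val+\t$ and $\S\utu\S'$; the symmetry clause of $(\utu)$ in~\Cref{def:up_to_unfolding} then turns $\S\utu\S'$ into $\S'\utu\S$, as required by the claim.

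I do not expect a genuine obstacle: the entire argument is case-inversion over the four triple-layer rules, plus a single appeal to the previously established \Cref{lem:tp_cfg_trans} (and the symmetry of unfold equivalence) in the time case. The one point I would state carefully is that the well-formedness assumption on \Cfg\ is precisely what licenses that appeal, since \Cref{lem:tp_cfg_trans} is only available for \wf\ configurations; the other three claims in fact do not use well-formedness at all.
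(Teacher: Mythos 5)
Your proposal is correct and follows essentially the same route as the paper: invert rule \Rule{time} to read off $\Que'=\Que$, then apply \Cref{claim:tp_cfg_trans_tick} of~\Cref{lem:tp_cfg_trans} to the configuration-level premise $\Cfg\Trans:{\t}\Cfg{\Val'},{\S'}$ to obtain $\Val'=\Val+\t$ and the unfold-equivalence of the types. Your additional remarks -- that the symmetry clause of $(\utu)$ turns $\S\utu\S'$ into $\S'\utu\S$, and that the well-formedness hypothesis is what licenses the appeal to~\Cref{lem:tp_cfg_trans} -- are correct and in fact slightly more explicit than the paper's own justification.
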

\begin{proof}
  We proceed addressing each claim in turn, using the rules in~\Cref{eq:type_semantics_triple}:
  \begin{description}[font=\normalfont]
    %
    \item[\Cref{claim:tp_cfg_que_trans_send}]
          By rule \Rule{send}
          the claim holds.
          By the premise $\Cfg\Trans:{!\Msg}\Cfg'$.
          %
    \item[\Cref{claim:tp_cfg_que_trans_recv}]
          By rule \Rule{recv}
          the claim holds.
          By the premise $\Cfg\Trans:{?\Msg}\Cfg'$.
          %
    \item[\Cref{claim:tp_cfg_que_trans_que}]
          By rule \Rule{que}
          the claim holds.
          %
    \item[\Cref{claim:tp_cfg_que_trans_time}]
          By rule \Rule{time}
          the claim holds.
          By the premise $\Val'=\Val+\t$ via rule \Rule{tick}.
          (See~\Cref{claim:tp_cfg_trans_tick} of~\Cref{lem:tp_cfg_trans}.)
          \qedhere
  \end{description}
\end{proof}

\begin{lem}\label{lem:tp_wf_cfg_safe_mc}
  Let \Cfg\ be \wf.
  If $\Cfg\Trans:{\Comm\Msg}$
  and $\Cfg\Trans:{\Comm'\Msg'}$
  then $\Comm=\Comm'$.
\end{lem}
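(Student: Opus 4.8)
The plan is to induct on the derivation of the first transition $\Cfg\Trans:{\Comm\Msg}$, using the rules of \Cref{fig:type_semantics}~\Cref{eq:type_semantics_tuple}. Of those rules only \Rule{act} and \Rule{unfold} can produce a communication label (rule \Rule{tick} yields a time label $\t$), so these are the only two cases. A key fact used throughout is that \Rule{unfold} leaves the valuation $\Val$ untouched, so the valuation at the point where the underlying \Rule{act} actually fires is still the original $\Val$; this is what lets me compare the two enabled options against one and the same valuation.

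\emph{Base case (\Rule{act}).} Here $\S$ is a choice type $\Choice{\l}[\T]_i$, with options indexed by $I$, guards $\Constraints_i$ and directions $\Comm_i$. The first transition selects some $j\in I$ with $\Comm=\Comm_j$ and $\Val\models\Constraints_j$. Since \Rule{act} is the only rule applicable to a choice type for a communication label, the second transition $\Cfg\Trans:{\Comm'\Msg'}$ must also be by \Rule{act}, selecting some $j'\in I$ with $\Comm'=\Comm_{j'}$ and $\Val\models\Constraints_{j'}$. Because $\Cfg$ is \wf, by~\Cref{def:cfg_wf} there is a $\Constraints$ with $\Val\models\Constraints$ and $\emptyset;\Constraints\Entails\S$, and this judgement can only have been derived by rule \Rule{choice}. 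Its $\text{(mixed-choice)}$ premise states that for $j\neq j'$ either $\Constraints_j\land\Constraints_{j'}\models\False$ or $\Comm_j=\Comm_{j'}$. As $\Val\models\Constraints_j\land\Constraints_{j'}$, the first disjunct is impossible, forcing $\Comm_j=\Comm_{j'}$, i.e.\ $\Comm=\Comm'$ (and if $j=j'$ this is immediate).

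\emph{Inductive case (\Rule{unfold}).} Here $\S=\mu\alpha.\S_0$, and the first transition is derived from $\Cfg,{\S_0\Subst{\mu\alpha.\S_0}{\alpha}}\Trans:{\Comm\Msg}$. Since \Rule{unfold} is the only rule applicable to $\mu\alpha.\S_0$ for a communication label, the second transition similarly reduces to $\Cfg,{\S_0\Subst{\mu\alpha.\S_0}{\alpha}}\Trans:{\Comm'\Msg'}$, so both transitions are now taken from the \emph{same} unfolded configuration. By~\Cref{def:up_to_unfolding} we have $\mu\alpha.\S_0\utu\S_0\Subst{\mu\alpha.\S_0}{\alpha}$, whence~\Cref{lem:tp_wf_cfg_utu} gives that $\Cfg,{\S_0\Subst{\mu\alpha.\S_0}{\alpha}}$ is \wf; the induction hypothesis then yields $\Comm=\Comm'$.

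The substance of the argument — and the only genuinely delicate point — lies in the base case: one must observe that the two enabled options belong to the same (fully unfolded) choice and that their guards are simultaneously satisfied by the common valuation $\Val$, which is exactly what rules out the $\models\False$ disjunct of the $\text{(mixed-choice)}$ premise and forces the two directions to agree. Everything concerning recursion is routine bookkeeping, provided \wfness\ is transported through each unfolding step via~\Cref{lem:tp_wf_cfg_utu}, together with the observation that well-formedness of a choice type is witnessed by a \Rule{choice} derivation carrying the $\text{(mixed-choice)}$ premise.
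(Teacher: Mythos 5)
Your proof is correct and follows essentially the same route as the paper's: induct on the derivation of the transition, handle \Rule{unfold} by transporting \wfness\ through the unfolding (via \Cref{lem:tp_wf_cfg_utu}) and appealing to the induction hypothesis, and in the \Rule{act} case invoke the $\text{(mixed-choice)}$ premise of rule \Rule{choice}, observing that $\Val\models\Constraints_j\land\Constraints_{j'}$ rules out the $\models\False$ disjunct and forces the directions to coincide. If anything, your treatment of the \Rule{unfold} case is slightly more explicit than the paper's, which merely remarks that it follows by induction.
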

\begin{proof}
  We proceed by induction on the derivation of the transition $\Cfg\Trans:{\Comm\Msg}$, analysing the last rule applied of those %
  in~\Cref{eq:type_semantics_tuple}.
  We only show the case of rule \Rule{act}, as the only other applicable case
  (rule \Rule{unfold}) follows by induction hypothesis as in~\Cref{claim:tp_cfg_trans_tick,claim:tp_cfg_trans_act} of~\Cref{lem:tp_cfg_trans}.
  Therefore, if rule \Rule{act}, then $\S=\Choice{\l}[\T]_i$ and:
  \begin{equation}\label{eq:tp_wf_cfg_safe_mc}
    \infer[\Rule{act}]{
    \Cfg,{\Choice{\l}[\T]_i}%
    \Trans:{\Comm_j\Msg}%
    \Cfg{\Val\Reset_j},{\S_j}%
    }{
    \Val\models\delta_j%
    \quad%
    \Msg={\lT<>_j}%
    \quad%
    j\in I%
    }%
  \end{equation}

  \noindent where $\Comm=\Comm_j$ and $\Msg=\lT<>_j$.
  Since \Cfg\ is \wf, by~\Cref{def:cfg_wf}, $\exists \Constraints$ such that $\Val\models\Constraints$ and $\emptyset;\Constraints\Entails\Choice{\l}[\T]_i$.
  The only possible rule is \Rule{choice}:
  \[
    \infer[\Rule{choice}]{
    \RecTypeEnv;~\Past_iV_{i\in I}
    \Entails
    \Choice{\l}[\T]{\Constraints}[\Resets].{\S}_i
    }{%
    \begin{array}{@{}c @{\hspace{2ex}} l@{}}
      \forall i\in I:
      \RecTypeEnv;~\FutureEnv_i
      \Entails
      \S_i
      ~\land~
      \Constraints_i\Reset_i\models\FutureEnv_i
       & \text{(feasibility)}
      \\[0.5ex]
      \forall i,j\in I:
      i \neq j
      \implies
      \Constraints_i\land\Constraints_j\models\False
      ~\vee~
      \Comm_i=\Comm_j
       & \text{(mixed-choice)}
      \\[0.5ex]
      \forall i\in I:
      \T_i=\Del'{2}
      \implies
      \emptyset;~\FutureEnv'\Entails\S'{2}
      ~\land~
      \Constraints'{2}\models\FutureEnv'
       & \text{(delegation)}
    \end{array}
    }
  \]

  \noindent As in~\Cref{lem:tp_wf_type_fe_cfg_wf}, it follows that $\Constraints_j\models\Past{\bigvee_{i\in I}\Constraints_i}$.
  Since $\Cfg\Trans:{\Comm'\Msg'}$, we proceed by induction on the derivation of the transition, analysing the last rule applied. Again, we omit the case by rule \Rule{unfold}.
  By rule \Rule{act}, it follows similarly to~\Cref{eq:tp_wf_cfg_safe_mc}: for some $k \in I$, $\Comm'=\Comm_k$, $\Msg=\lT<>_k$ and $\Val\models\Constraints_k$.
  In the case where $j=k$, then the thesis coincides with the hypothesis.
  Otherwise, if $j\neq k$, then by the $\text{(mixed-choice)}$ premise of rule \Rule{choice},
  $\Constraints_j\land\Constraints_k\models\False$ or $\Comm_j=\Comm_k$.
  Since $\Val\models\Constraints_j$ and $\Val\models\Constraints_k$, it follows that $\Constraints_j\land\Constraints_k\not\models\False$.
  Therefore, it must be that $\Comm_j=\Comm_k$ and we obtain our thesis.
  \qedhere
\end{proof}

\subsection{System Transition Preservations}\label{ssec:tp_system_trans_preserve}

\begin{lem}\label{lem:tp_cfg_trans_preserve_wf}
  Let \Cfg_1\ and \Cfg_2\ be \wf,
  and $\Cfg;{\Que}_1 \Compat \Cfg;{\Que}_2$.
  \linebreak%
  If \(
  {\Cfg*;{\Que}_1 \mid \Cfg*;{\Que}_2}
  \ \Trans\ %
  {\Cfg;{\Que}_1' \mid \Cfg;{\Que}_2'}
  \),
  then \Cfg_1'\ and \Cfg_2'\ are \wf.
\end{lem}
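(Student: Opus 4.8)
The plan is to proceed by a case analysis on the last rule used to derive the system transition $\Cfg*;{\Que}_1 \mid \Cfg*;{\Que}_2 \Trans \Cfg;{\Que}_1' \mid \Cfg;{\Que}_2'$, i.e.\ on which of \Rule{com}[l]*, \Rule{com}[r]*, \Rule{par}[l]*, \Rule{par}[r]* or \Rule{wait} (from \Cref{fig:type_semantics}~\Cref{eq:type_semantics_system}) was applied. In each case the premise consists of one or two transitions at the configuration-with-queue layer, and the strategy is to pull these back to configuration-level transitions using \Cref{lem:tp_cfg_que_trans}, then read off \wfness\ of the residual configurations from \Cref{lem:tp_cfg_trans}. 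By symmetry it suffices to treat the left variants. Note that the compatibility hypothesis $\Cfg*;{\Que}_1 \Compat \Cfg*;{\Que}_2$ is not actually needed for this preservation (it is consumed by the companion compatibility-preservation result, \Cref{lem:tp_cfg_trans_preserve_compat}); I retain it only to match the shared invariant.

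For the communication step \Rule{com}[l]*, the premises are $\Cfg;{\Que}_1 \Trans:{!\Msg} \Cfg;{\Que}_1'$ and $\Cfg;{\Que}_2 \Trans:{?\Msg} \Cfg;{\Que}_2'$. By \Cref{claim:tp_cfg_que_trans_send} the first gives $\Que_1'=\Que_1$ together with a configuration send $\Cfg_1 \Trans:{!\Msg} \Cfg_1'$; since $\Cfg_1$ is \wf, \Cref{claim:tp_cfg_trans_act} yields a $\FutureEnv$ with $\Val_1'\models\FutureEnv$ and $\emptyset;\FutureEnv\Entails\S_1'$, so $\Cfg_1'$ is \wf\ directly by \Cref{def:cfg_wf} (taking the witness constraint to be $\FutureEnv$). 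The receiving side merely enqueues: by \Cref{claim:tp_cfg_que_trans_que} the $?\Msg$ transition is a \Rule{que} step, leaving $\S_2'=\S_2$ and $\Val_2'=\Val_2$, so $\Cfg_2'=\Cfg_2$ is \wf\ by hypothesis. The par step \Rule{par}[l]* is handled identically on the active side: \Cref{claim:tp_cfg_que_trans_recv} turns the $\tau$-transition into a configuration receive $\Cfg_1 \Trans:{?\Msg} \Cfg_1'$, and \Cref{claim:tp_cfg_trans_act} again supplies the $\FutureEnv$ witnessing \wfness\ of $\Cfg_1'$, while the other configuration is untouched.

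The delicate case is \Rule{wait}, where a delay $\t$ elapses on both sides and the valuation advances to $\Val_i+\t$, so one cannot simply invoke $\utu$-closure (\Cref{lem:tp_wf_cfg_utu}), which preserves the type but not the valuation. Instead I unfold the \Rule{time} premise on side $1$: its configuration premise gives $\Cfg_1 \Trans:{\t} \Cfg_1'$, whence $\Val_1'=\Val_1+\t$ and $\S_1\utu\S_1'$ by \Cref{claim:tp_cfg_trans_tick}, while its persistency premise states that if $\Cfg_1$ is \fe\ then $\Cfg_1'$ is \fe. Since $\Cfg_1$ is \wf\ it is \live\ by \Cref{lem:tp_wf_cfg_is_live}, so either $\S_1\utu\End$ or $\Cfg_1$ is \fe. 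In the first case, symmetry and transitivity of $(\utu)$ (\Cref{def:up_to_unfolding}) give $\S_1'\utu\End$, so $\Cfg_1'$ is \wf\ by \Cref{lem:tp_end_wf} together with \Cref{lem:tp_wf_cfg_utu}. In the second case persistency makes $\Cfg_1'$ \fe; moreover $\Cfg_1$ \wf\ supplies $\emptyset;\Constraints\Entails\S_1$ for some $\Constraints$, and $\S_1\utu\S_1'$ transports this to $\emptyset;\Constraints\Entails\S_1'$ via \Cref{lem:tp_wf_cfg_utu_aux}; then \Cref{lem:tp_wf_type_fe_cfg_wf} concludes that $\Cfg_1'$ is \wf. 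The argument for side $2$ is identical.

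The main obstacle is precisely this \Rule{wait} case: \wfness\ is a property of a type read against a specific valuation, and advancing all clocks by $\t$ can in principle cross the satisfiability boundary of a future guard, so the preservation is not purely syntactic. The resolution is that the relaxed \Rule{time} rule's persistency premise guarantees that some latest-enabled action survives the delay, which is exactly the hypothesis that \Cref{lem:tp_wf_type_fe_cfg_wf} needs in order to rebuild \wfness\ at the advanced valuation; the remaining cases reduce to bookkeeping driven by \Cref{lem:tp_cfg_que_trans} and \Cref{lem:tp_cfg_trans}.
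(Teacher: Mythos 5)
Your proof is correct and follows essentially the same route as the paper's: case analysis on the system-level rule, pulling the premises back through \Cref{lem:tp_cfg_que_trans} to configuration transitions, using \Cref{claim:tp_cfg_trans_act} of \Cref{lem:tp_cfg_trans} for the communication/par cases, and combining liveness, the persistency premise of \Rule{time}, and \Cref{lem:tp_wf_type_fe_cfg_wf} for the \Rule{wait} case. If anything, your treatment of \Rule{wait} is slightly more careful than the paper's (you split explicitly on $\S_1\utu\End$ versus \fe\ and make the $\utu$-transport of the judgement explicit via \Cref{lem:tp_wf_cfg_utu_aux}), and your observation that the compatibility hypothesis is not actually consumed here is accurate.
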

\begin{proof}
  We proceed by cases on the rule used in the derivation of the transition:
  \[
    {\Cfg;{\Que}_1 \mid \Cfg;{\Que}_2}
    \ \Trans\ %
    {\Cfg;{\Que}_1' \mid \Cfg;{\Que}_2'}
  \]

  \noindent analysing the last rule applied
  of those given %
  in~\Cref{eq:type_semantics_system}:
  \begin{caseanalysis}
    %
    \item\label{case:tp_cfg_trans_preserve_wf_wait}
    If rule \Rule{wait}, then:
    \[
      \infer[\Rule{wait}]{
      {\Cfg;{\Que}_1 \mid \Cfg;{\Que}_2}
      \Trans:{\t}
      {\Cfg;{\Que}_1' \mid \Cfg;{\Que}_2'}
      }{
      {\Cfg;{\Que}_1}
      \Trans:{\t}
      {\Cfg;{\Que}_1'}
      &
      {\Cfg;{\Que}_2}
      \Trans:{\t}
      {\Cfg;{\Que}_2'}
      }
    \]

    \noindent Hereafter, we only show the case for \Cfg;{\Que}_1, as the transition for \Cfg;{\Que}_2\ is analogous.
    We proceed by inner induction on the derivation of the transition
    \(
    {\Cfg;{\Que}_1}
    \Trans:{\t}
    {\Cfg;{\Que}_1'}
    \)
    analysing the last rule applied.
    By rule \Rule{time}:
    \[
      \infer[\Rule{time}]{
      \Cfg;{\Que}_1%
      \Trans:{\t}%
      \Cfg{\Val_1'},{\S_1'};{\Que_1}%
      }{
      \begin{array}[t]{@{}c @{\quad} l@{}}
        %
        \Cfg_1\Trans:{\t}\Cfg{\Val_1'},{\S_1'}%
         & \text{(configuration)} %
        %
        \\[0.5ex]%
        \Cfg_1~\text{is \fe*}%
        \implies%
        \Cfg{\Val_1'},{\S_1'}~\text{is \fe*}%
         & \text{(persistency)}   %
        %
        \\[0ex]%
        \forall \t'<\t:\Cfg{\Val_1+\t'},{\S_1};{\Que_1}\Trans|:{\tau}%
         & \text{(urgency)}       %
      \end{array}
      }%
    \]

    \noindent By inner induction on the derivation of the transition
    \(
    \Cfg_1\Trans:{\t}\Cfg{\Val_1'},{\S_1'}%
    \)
    analysing the last rule applied of those %
    in~\Cref{eq:type_semantics_tuple}:
    \begin{caseanalysis}
      %
      \item\label{case:tp_cfg_trans_preserve_wf_wait_tick}
      If rule \Rule{tick},
      then by~\Cref{claim:tp_cfg_trans_tick} of~\Cref{lem:tp_cfg_trans}, $\Val_1'=\Val_1+\t$, $\Que_1'=\Que_1$ and $\S_1'\utu\S_1$.
      If $\t=0$ then $\Val_1=\Val_1+\t$ and therefore, the thesis coincides with the hypothesis.
      By the $\text{(urgency)}$ premise of rule \Rule{time}, \t\ must be valued such that no message can be received from the queue \Que_1\ via rule \Rule{recv}.
      If $\t=0$, then $\nexists \t'<\t$, and the $\text{(urgency)}$ premise of rule \Rule{time} always holds, even if $\Cfg;{\Que}_2\Trans:{\tau}$.
      However, since $\Val_1'=\Val_1$, this is the same as no transition occurring via rule \Rule{time}.
      Otherwise, $\t>0$.
      Since \Cfg_1\ is \wf, it follows~\Cref{lem:tp_cfg_fe}, \Cfg_1\ is also \fe.
      Since \Cfg_1\ is \fe, it follows the $\text{(persistency)}$ premise of rule \Rule{time} that \Cfg{\Val_1'},{\S_1'} is also \fe.
      By~\Cref{lem:tp_wf_type_fe_cfg_wf}, \Cfg{\Val_1+\t},{\S_1'}\ is \wf.
      %
      %
      \item\label{case:tp_cfg_trans_preserve_wf_wait_unfold}
      If rule \Rule{unfold},
      then $\S_1=\mu\alpha.\S_1'{2}$ and:
      \[
        \infer[\Rule{unfold}]{%
        \Cfg{\Val_1},{\mu\alpha.\S_1'{2}}\Trans:{\ell}\Cfg{\Val_1'},{\S_1'}%
        }{%
        \Cfg{\Val_1},{\S_1'{2}\Subst{\mu\alpha.\S_1'{2}}{\alpha}}\Trans:{\ell}\Cfg{\Val_1'},{\S_1'}%
        }
      \]

      \noindent where $\ell=\t$.
      Since \Cfg{\Val_1},{\mu\alpha.\S_1'{2}}\ is \wf,
      $\exists \Constraints$ such that $\Val_1\models\Constraints$ and $\emptyset;\Constraints\Entails\mu\alpha.\S_1'{2}$.
      By rule \Rule{rec}:
      \[
        \infer[\Rule{rec}]{%
        \emptyset;\Constraints\Entails\mu\alpha.\S_1'{2}
        }{%
        \alpha:\Constraints;\Constraints\Entails\S_1'{2}\Subst{\mu\alpha.\S_1'{2}}{\alpha}
        }
      \]

      \noindent By~\Cref{lem:tp_substitution}:
      \(
      \alpha:\Constraints;\Constraints\Entails\S_1'{2}\Subst{\mu\alpha.\S_1'{2}}{\alpha}
      \iff
      \emptyset;\Constraints\Entails\S_1'{2}
      \).
      By~\Cref{def:up_to_unfolding},
      it follows
      \(
      \mu\alpha.\S_1'{2}\utu\S_1'{2}\Subst{\mu\alpha.\S_1'{2}}{\alpha}
      \).
      By~\Cref{lem:tp_wf_cfg_utu_aux}, since
      \(
      \emptyset;\Constraints\Entails\S_1'{2}
      \),
      then
      \(
      \emptyset;\Constraints\Entails\mu\alpha.\S_1'{2}
      \).
      By~\Cref{lem:tp_wf_cfg_utu},
      if \Cfg{\Val_1},{\S_1'{2}\Subst{\mu\alpha.\S_1'{2}}{\alpha}}\ is \wf,
      then \Cfg{\Val_1},{\mu\alpha.\S_1'{2}}\ is \wf.
      The thesis holds by the induction hypothesis. (As in~\Cref{case:tp_wf_type_fe_cfg_wf_unfold} in~\Cref{lem:tp_wf_type_fe_cfg_wf}.)
    \end{caseanalysis}
    %
    \item\label{case:tp_cfg_trans_preserve_wf_com}
    If rule \Rule{com}[l]*, then:
    \[
      \infer[\Rule{com}[l]*]{
      {\Cfg;{\Que}_1\mid\Cfg;{\Que}_2}%
      \Trans:{\tau}%
      \Cfg;{\Que}_1'%
      \mid%
      \Cfg;{\Que}_2'%
      }{
      \Cfg;{\Que}_1%
      \Trans:{!\Msg}%
      \Cfg;{\Que}_1'%
      &%
      \Cfg;{\Que}_2%
      \Trans:{?\Msg}%
      \Cfg;{\Que}_2'%
      }%
    \]

    \noindent First, we proceed by cases on the derivation of the transition
    \(
    \Cfg;{\Que}_2%
    \Trans:{?\Msg}%
    \Cfg;{\Que}_2'%
    \)
    analysing the last rule applied.
    By rule \Rule{que}:
    \[
      \Cfg;{\Que}_2%
      \Trans:{?\Msg}%
      \Cfg{\Val_2},{\S_2};{\Que_2;\Msg}%
      \quad%
      \Rule{que}%
    \]

    \noindent where, as in~\Cref{claim:tp_cfg_que_trans_que} of~\Cref{lem:tp_cfg_que_trans}, $\Val_2'=\Val_2$, $\S_2'=\S_2$ and $\Que_2'=\Que_2;\Msg$.
    Therefore, it follows that \Cfg_2'\ is \wf.
    Next, by cases on the derivation of the transition
    \(
    \Cfg;{\Que}_1%
    \Trans:{!\Msg}%
    \Cfg;{\Que}_1'%
    \) analysing the last rule applied.
    By rule \Rule{send}:
    \[
      \infer[\Rule{send}]{
      \Cfg;{\Que}_1%
      \Trans:{!\Msg}%
      \Cfg;{\Que}_1'%
      }{
      \Cfg_1%
      \Trans:{!\Msg}%
      \Cfg_1'%
      }%
    \]

    \noindent where, as in~\Cref{claim:tp_cfg_que_trans_send} of~\Cref{lem:tp_cfg_que_trans}, $\Que_1'=\Que_1$ and
    \(
    \Cfg_1%
    \Trans:{!\Msg}%
    \Cfg_1'%
    \).
    As in~\Cref{claim:tp_cfg_trans_tick,claim:tp_cfg_trans_act} of~\Cref{lem:tp_cfg_trans}, we proceed to only show the case of rule \Rule{act} (as the only other applicable rule is rule \Rule{unfold}, which follows by induction).
    Therefore,
    \(
    \S_1 = \Choice{\l}[\T]_i
    \)
    and for some $j \in I$, $\Val_1\models\Constraints_j$, $\Msg=\lT<>_j$, $!=\Comm_j$, $\Val_1'=\Val_1\Reset_j$ and $\S_1'=\S_j$.
    It remains to show that \Cfg{\Val_1\Reset_j},{\S_j}\ is \wf.
    The thesis follows~\Cref{claim:tp_cfg_trans_act} of~\Cref{lem:tp_cfg_trans}, since
    \(
    \Val_1\Reset_j\models\Constraints_j\Reset_j\models\FutureEnv_j
    \)
    and $\emptyset;\FutureEnv_j\Entails\S_j$, for some $j \in I$.
    By~\Cref{def:cfg_wf}, \Cfg{\Val_1\Reset_j},{\S_j}\ is \wf.
    %
    \item\label{case:tp_cfg_trans_preserve_wf_par}
    If rule \Rule{par}[l]*, then:
    \[
      \infer[\Rule{par}[l]*]{
      {\Cfg;{\Que}_1\mid\Cfg;{\Que}_2}%
      \Trans:{\tau}%
      {\Cfg;{\Que}_1'\mid\Cfg;{\Que}_2}%
      }{
      \Cfg;{\Que}_1
      \Trans:{\tau}%
      \Cfg;{\Que}_1'%
      }%
    \]

    \noindent where, $\Val_2'=\Val_2$, $\S_2'=\S_2$ and $\Que_2'=\Que_2$.
    It holds that \Cfg_2'\ remains \wf.
    We proceed by inner induction on the derivation of the transition
    \(
    \Cfg;{\Que}_1
    \Trans:{\tau}%
    \Cfg;{\Que}_1'%
    \)
    analysing the last rule applied of those %
    in~\Cref{eq:type_semantics_triple}.
    By rule \Rule{recv}:
    \[
      \infer[\Rule{recv}]{
      \Cfg;{\Que}_1
      \Trans:{\tau}%
      \Cfg;{\Que}_1'%
      }{
      \Cfg_1%
      \Trans:{?\Msg}%
      \Cfg_1'%
      }%
    \]

    \noindent As in~\Cref{case:tp_cfg_trans_preserve_wf_com},
    the thesis follows~\Cref{claim:tp_cfg_trans_act} of~\Cref{lem:tp_cfg_trans}
    (except $?=\Comm_j$).
    %
    %
    \item Rules \Rule{com}[r]* and \Rule{par}[r]* are analogous to~\Cref{case:tp_cfg_trans_preserve_wf_com,case:tp_cfg_trans_preserve_wf_par} respectively.
    \qedhere
  \end{caseanalysis}
\end{proof}

\begin{lem}\label{lem:tp_compat_cfg_delay_empty_que}
  Let \Cfg_1\ and \Cfg_2\ be \wf, and $\Cfg;{\Que}_1 \Compat \Cfg;{\Que}_2$.
  \linebreak
  If ${\Cfg;{\Que}_1 \mid \Cfg;{\Que}_2}\ \Trans:{\t}\ {\Cfg;{\Que}_1' \mid \Cfg;{\Que}_2'}$
  and $\t>0$,
  then $\Que_1=\emptyset=\Que_2$.
\end{lem}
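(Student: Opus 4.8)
The plan is to argue by inversion on the two layers of the labelled transition system and then derive a contradiction with the urgency premise of rule \Rule{time}. First I would invert the system step ${\Cfg;{\Que}_1 \mid \Cfg;{\Que}_2} \Trans:{\t} {\Cfg;{\Que}_1' \mid \Cfg;{\Que}_2'}$: among the rules of \Cref{fig:type_semantics}~\Cref{eq:type_semantics_system}, only \Rule{wait} carries a time label $\t$ (rules \Rule{com} and \Rule{par} emit $\tau$), so its premises give $\Cfg;{\Que}_1 \Trans:{\t} \Cfg;{\Que}_1'$ and $\Cfg;{\Que}_2 \Trans:{\t} \Cfg;{\Que}_2'$. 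Inverting each of these at the level of configurations with queues, the only rule of \Cref{eq:type_semantics_triple} producing a $\t$ label is \Rule{time} (rules \Rule{send}, \Rule{recv}, \Rule{que} emit $!\Msg$, $\tau$, $?\Msg$ respectively). Hence, for each $k\in\{1,2\}$, I obtain the urgency premise $\forall \t'<\t:\Cfg{\Val_k+\t'},{\S_k};{\Que_k}\Trans|:{\tau}$.

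The key step is to instantiate urgency at $\t'=0$. Since $\t>0$ we have $0<\t$, and because $\Val_k+0=\Val_k$ this yields $\Cfg;{\Que}_k\Trans|:{\tau}$ for each $k$. I then prove each queue empty by contradiction. Suppose $\Que_1\neq\emptyset$; then $\Que_1=\Msg;\Que_1'$ for some message $\Msg$ at the head, and by compatibility \Cref{cond:cfg_compat_recv} applied to $\Cfg;{\Que}_1\Compat\Cfg;{\Que}_2$ we have $\Cfg_1\Trans:{?\Msg}\Cfg_1'$. Feeding this into rule \Rule{recv} derives $\Cfg;{\Que}_1\Trans:{\tau}$, contradicting the $\t'=0$ instance of urgency; hence $\Que_1=\emptyset$. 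A symmetric argument, using \Cref{cond:cfg_compat_recv_other} in place of \Cref{cond:cfg_compat_recv} together with the second \Rule{time} transition, gives $\Que_2=\emptyset$, completing the proof.

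I do not anticipate a serious obstacle: the argument reduces to inversion on the transition rules plus a single appeal to a defining clause of compatibility. The only two points demanding care are (i) confirming that \Rule{time} is genuinely the unique rule able to fire for a strictly positive time step, so that the urgency premise is always available after inversion, and (ii) matching the FIFO orientation of the queue, namely that the head $\Msg$ of $\Que=\Msg;\Que'$ is exactly the message consumed by \Rule{recv} and the one whose receivability is supplied by compatibility. Notably, the well-formedness hypotheses on $\Cfg_1$ and $\Cfg_2$ are not actually used in this lemma; they are carried for uniformity with the surrounding development and could in principle be dropped.
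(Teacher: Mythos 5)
Your proposal is correct and follows essentially the same route as the paper's proof: invert the system step to \Rule{wait}, invert each component step to \Rule{time} to expose the urgency premise, and then derive a contradiction between a non-empty queue (which, by \Cref{cond:cfg_compat_recv} or \Cref{cond:cfg_compat_recv_other} of compatibility, forces a \Rule{recv}-derived $\tau$ step) and urgency instantiated at $\t'=0<\t$. Your explicit instantiation at $\t'=0$ and your remark that the well-formedness hypotheses are not actually needed are both accurate refinements of the paper's slightly terser presentation.
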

\begin{proof}
  We proceed by cases on the derivation of the transition
  \(
  \Cfg;{\Que}_1\mid\Cfg;{\Que}_2
  \Trans:{\t}
  \)
  analysing the last rule applied of those %
  in~\Cref{eq:type_semantics_system}.
  By rule \Rule{wait}:
  \[
    \infer[\Rule{wait}]{
    {\Cfg;{\Que}_1\mid\Cfg;{\Que}_2}%
    \Trans:{\t}%
    {\Cfg;{\Que}_1'\mid\Cfg;{\Que}_2'}%
    }{
    {\Cfg;{\Que}_1}
    \Trans:{\t}
    \Cfg;{\Que}_1'
    &
    {\Cfg;{\Que}_2}
    \Trans:{\t}
    \Cfg;{\Que}_2'
    }
  \]

  \noindent where, as in~\Cref{claim:tp_cfg_que_trans_time} of~\Cref{lem:tp_cfg_que_trans}, $\Que_1'=\Que_1$ and $\Que_2'=\Que_2$.
  Hereafter, we only show \Cfg;{\Que}_1\ as \Cfg;{\Que}_2\ is analogous.
  By cases on the derivation of the transition
  \(
  {\Cfg;{\Que}_1}
  \Trans:{\t}
  \Cfg;{\Que}_1'
  \)
  analysing the last rule applied.
  By rule \Rule{time}:
  \[
    \infer[\Rule{time}]{
    \Cfg;{\Que}_1%
    \Trans:{\t}%
    \Cfg{\Val_1'},{\S_1'};{\Que_1}%
    }{
    \begin{array}[t]{@{}c @{\quad} l@{}}
      %
      \Cfg_1\Trans:{\t}\Cfg{\Val_1'},{\S_1'}%
       & \text{(configuration)} %
      %
      \\[-2ex]\\%
      \Cfg_1~\text{is \fe*}%
      \implies%
      \Cfg{\Val_1'},{\S_1'}~\text{is \fe*}%
       & \text{(persistency)}   %
      %
      \\[-2.25ex]\\%
      \hphantom{\forall \t'<\t:\Cfg{\Val_1+\t'},{\S_1};{\Que_1}\Trans|:{\tau}}%
      \mathllap{\smash{\forall \t'<\t:\Cfg{\Val_1+\t'},{\S_1};{\Que_1}\Trans|:{\tau}}}%
       & \text{(urgency)}       %
    \end{array}
    }%
  \]

  \noindent where, as in~\Cref{claim:tp_cfg_que_trans_time} of~\Cref{lem:tp_cfg_que_trans}, $\Val_1'=\Val_1+\t$ and $\S_1'\utu\S_1$.
  Suppose by contradiction that one of the queues $\Que_1$ were \emph{non-empty}, and $\Que_1=\Msg;\Que_1'{2}$.
  Since $\Cfg;{\Msg;\Que}_1 \Compat \Cfg;{\Que}_2$, by~\Cref{cond:cfg_compat_recv} in~\Cref{def:cfg_compat}:
  \[
    \Que_1=\Msg;\Que_1'
    \implies
    \Cfg_1 \Trans:{?\Msg} \Cfg_1'
    \land \Cfg;{\Que}_1' \Compat \Cfg;{\Que}_2
  \]

  \noindent Such a transition is only viable by rule \Rule{recv}.
  By~\Cref{claim:tp_cfg_que_trans_recv} of~\Cref{lem:tp_cfg_que_trans}, it follows
  that $\Cfg_1 \Trans:{?\Msg} \Cfg_1'$ is made by the premise of rule \Rule{recv}.
  However, by the $\text{(urgency)}$ premise of rule \Rule{time}, $\forall \t'<\t$ no transitions by rule \Rule{recv} are viable
  -- which contradicts with~\Cref{cond:cfg_compat_recv} in~\Cref{def:cfg_compat}.
  Therefore, it cannot be that either queue is \emph{non-empty} for a transition \t, where $\t>0$.
  For such a transition, both queues must be \emph{empty}.
\end{proof}

\begin{lem}\label{lem:tp_cfg_trans_preserve_compat}
  Let \Cfg_1\ and \Cfg_2\ be \wf,
  and $\Cfg;{\Que}_1 \Compat \Cfg;{\Que}_2$.
  \linebreak%
  If ${\Cfg;{\Que}_1 \mid \Cfg;{\Que}_2}\ \Trans\ {\Cfg;{\Que}_1' \mid \Cfg;{\Que}_2'}$,
  then $\Cfg;{\Que}_1' \Compat \Cfg;{\Que}_2'$.
\end{lem}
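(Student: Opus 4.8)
The plan is to proceed by a case analysis on the last rule used to derive the system step ${\Cfg;{\Que}_1 \mid \Cfg;{\Que}_2} \Trans {\Cfg;{\Que}_1' \mid \Cfg;{\Que}_2'}$, namely one of \Rule{wait}, \Rule{com}[l]*, \Rule{com}[r]*, \Rule{par}[l]* and \Rule{par}[r]* from~\Cref{fig:type_semantics}~\Cref{eq:type_semantics_system}; the right-hand variants being symmetric, I would treat only the left ones. In each case I first unpack the underlying configuration-with-queue transitions using the claims of~\Cref{lem:tp_cfg_que_trans}, which pin down exactly how each of \S, \Val\ and \Que\ evolves, and then re-establish the four clauses of~\Cref{def:cfg_compat} for the residual pair, supplying the compatibility demanded in the continuations by unfolding the hypothesis $\Cfg;{\Que}_1 \Compat \Cfg;{\Que}_2$ one step (this is legitimate since compatibility is the largest relation closed under those clauses).

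The immediate case is \Rule{par}[l]*, a $\tau$-step in which $\Cfg;{\Que}_1$ consumes the head of its own queue via \Rule{recv}. By~\Cref{claim:tp_cfg_que_trans_recv} of~\Cref{lem:tp_cfg_que_trans} we get $\Que_1 = \Msg;\Que_1'$ and $\Cfg_1 \Trans:{?\Msg} \Cfg_1'$, with $\Cfg_2$ and $\Que_2$ untouched. Since $\Que_1 \neq \emptyset$, \Cref{cond:cfg_compat_neq} forces $\Que_2 = \emptyset$, and then \Cref{cond:cfg_compat_recv} of the hypothesis yields $\Cfg;{\Que}_1' \Compat \Cfg;{\Que}_2$ directly (the receive named in that clause coincides with the one of the transition, determinacy following from the distinct labels within a choice).

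For \Rule{com}[l]*, an asynchronous send from $1$ whose message lands on $\Que_2$, the first step is to observe that the sender's queue must be empty: by~\Cref{claim:tp_cfg_que_trans_send} the step rests on $\Cfg_1 \Trans:{!\Msg} \Cfg_1'$, and were $\Que_1$ non-empty, \Cref{cond:cfg_compat_recv} would require $\Cfg_1$ to be able to receive the head of $\Que_1$, whence the safe-mixed-choice~\Cref{lem:tp_wf_cfg_safe_mc} (applicable because $\Cfg_1$ is \wf) would rule out the concurrent send. Hence $\Que_1 = \emptyset = \Que_1'$, so \Cref{cond:cfg_compat_neq} holds for the residual pair and \Cref{cond:cfg_compat_recv} is vacuous. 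For \Cref{cond:cfg_compat_recv_other} with $\Que_2' = \Que_2;\Msg$, I would read the pre-step compatibility $\Cfg;{\Que}_1 \Compat \Cfg;{\Que}_2$ (with $\Que_1=\emptyset$) as saying that $\Cfg_2$ can absorb $\Que_2$ in order and then reach a configuration dual to $\Cfg_1$ — precisely what iterating \Cref{cond:cfg_compat_recv_other} down to \Cref{cond:cfg_compat_dual} gives. The appended $\Msg$ extends this absorption by one step: applying~\Cref{lem:tp_en_cfg_dual_en} to $\Cfg_1 \Trans:{!\Msg} \Cfg_1'$ shows the dual configuration performs $?\Msg$ with continuation dual to $\S_1'$ and matching (reset) valuation, so the extended sequence again ends in a dual pair, yielding $\Cfg;{\Que}_1' \Compat \Cfg;{\Que}_2'$.

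The delicate case is \Rule{wait}, a time step of duration \t. By~\Cref{claim:tp_cfg_que_trans_time} each side keeps its queue while its valuation advances by \t\ and its type changes only up to unfolding ($\S_i' \utu \S_i$); for $\t>0$, \Cref{lem:tp_compat_cfg_delay_empty_que} additionally forces $\Que_1 = \emptyset = \Que_2$. The only clause that is genuinely at risk is \Cref{cond:cfg_compat_dual}: one has $\Val_1' = \Val_1 + \t = \Val_2 + \t = \Val_2'$, but from the hypothesis only $\S_1' \utu \S_1 = \D_2$ and $\S_2' \utu \S_2$, giving $\S_1' \utu \D_2'$ rather than the strict equality $\S_1' = \D_2'$ the definition asks for. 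I expect this to be the main obstacle, and I would discharge it with a short auxiliary lemma stating that duality commutes with unfold-equivalence, i.e.\ $\S \utu \S' \implies \D \utu \D'$ (proved by induction on the derivation of $\utu$, exactly as in~\Cref{lem:tp_wf_cfg_utu_aux}), together with the observation that compatibility is closed under replacing either type by an unfold-equivalent one; this lets me conclude that the residual pair is again compatible. With the time case settled in this way and the symmetric rules \Rule{com}[r]*, \Rule{par}[r]* dispatched identically, all four clauses of~\Cref{def:cfg_compat} hold in every case, establishing $\Cfg;{\Que}_1' \Compat \Cfg;{\Que}_2'$.
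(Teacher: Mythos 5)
Your proposal is correct and follows essentially the same route as the paper's proof: a case split on the last system rule, using \Cref{lem:tp_cfg_que_trans} to pin down the residuals, \Cref{lem:tp_compat_cfg_delay_empty_que} to empty the queues in the \Rule{wait} case, \Cref{lem:tp_wf_cfg_safe_mc} to show the sender's queue is empty in the \Rule{com}[l]* case, and \Cref{lem:tp_en_cfg_dual_en} plus one-step unfolding of the compatibility hypothesis to re-establish the clauses of \Cref{def:cfg_compat}. The one place you go beyond the paper is the \Rule{wait} case: you correctly observe that \Cref{claim:tp_cfg_que_trans_time} only gives $\S_i' \utu \S_i$, so the strict equality $\S_1' = \Dual{\S_2'}$ demanded by \Cref{cond:cfg_compat_dual} does not follow literally when the two sides unfold differently; the paper's proof silently concludes compatibility here, whereas your auxiliary lemma (duality commutes with $\utu$, and compatibility is stable under unfold-equivalent replacement) is exactly the patch needed to make that step rigorous.
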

\begin{proof}
  We proceed by cases on the derivation of the transition:
  \[
    {\Cfg;{\Que}_1 \mid \Cfg;{\Que}_2}\ \Trans\ {\Cfg;{\Que}_1' \mid \Cfg;{\Que}_2'}
  \]

  \noindent analysing the last rule applied
  of those given %
  in~\Cref{eq:type_semantics_system}:
  \begin{caseanalysis}
    %
    \item\label{case:tp_cfg_trans_preserve_compat_wait}
    If rule \Rule{wait}, then:
    \[
      \infer[\Rule{wait}]{
      {\Cfg;{\Que}_1\mid\Cfg;{\Que}_2}%
      \Trans:{\t}%
      {\Cfg;{\Que}_1'\mid\Cfg;{\Que}_2'}%
      }{
      \Cfg;{\Que}_1\Trans:{\t}\Cfg;{\Que}_1'%
      \quad%
      \Cfg;{\Que}_2\Trans:{\t}\Cfg;{\Que}_2'%
      }%
    \]

    \noindent By~\Cref{claim:tp_cfg_que_trans_time} of~\Cref{lem:tp_cfg_que_trans}, it follows that $\Val_1'=\Val_1+\t$, $\S_1'\utu\S_1$ and $\Que_1'=\Que_1$ (and similarly for \Val_2', \S_2'\ and \Que_2').
    If $\t=0$ then $\Val_1=\Val_1+\t$ (and $\Val_2=\Val_2+\t$) and therefore, the thesis coincides with the hypothesis.
    Else $\t>0$, and by~\Cref{lem:tp_compat_cfg_delay_empty_que}, $\Que_1=\emptyset=\Que_2$.
    By~\Cref{cond:cfg_compat_dual} of~\Cref{def:cfg_compat}:
    \(
    \Que_1=\emptyset=\Que_2
    \implies
    \S_1=\D_2 ~\land~ \Val_1=\Val_2
    \).
    Since $\Val_1+\t=\Val_2+\t$, it follows that $\Cfg{\Val_1+\t},{\S_1'};{\Que_1}\Compat\Cfg{\Val_2+\t},{\S_2'};{\Que_2}$.
    %
    \item\label{case:tp_cfg_trans_preserve_compat_com}
    If rule \Rule{com}[l]*, then:
    \[
      \infer[\Rule{com}[l]*]{
      {\Cfg;{\Que}_1\mid\Cfg;{\Que}_2}%
      \Trans:{\tau}%
      {\Cfg;{\Que}_1'\mid\Cfg;{\Que}_2'}%
      }{
      \Cfg;{\Que}_1\Trans:{!\Msg}\Cfg;{\Que}_1'%
      \quad%
      \Cfg;{\Que}_2\Trans:{?\Msg}\Cfg;{\Que}_2'%
      }%
    \]

    \noindent First, by induction on the derivation of the transition
    \(
    \Cfg;{\Que}_2\Trans:{?\Msg}\Cfg;{\Que}_2'
    \)
    analysing the last rule applied of those %
    in~\Cref{eq:type_semantics_triple}.
    By rule \Rule{que}:
    \[
      \Cfg;{\Que}_2\Trans:{?\Msg}\Cfg{\Val_2},{\S_2};{\Que_2;\,\Msg}%
      \quad%
      \Rule{que}%
    \]

    \noindent where $\Val_2'=\Val_2$, $\S_2'=\S_2$ and $\Que_2'=\Que_2;\Msg$.
    Next, by induction on the derivation of the transition $\Cfg;{\Que}_1\Trans:{!\Msg}\Cfg;{\Que}_1'$ analysing the last rule applied.
    By rule \Rule{send}:
    \[
      \infer[\Rule{send}]{
      \Cfg;{\Que}_1%
      \Trans:{!\Msg}%
      \Cfg{\Val_1'},{\S_1'};{\Que_1}%
      }{
      \Cfg_1%
      \Trans:{!\Msg}%
      \Cfg_1'%
      }%
    \]

    \noindent where $\Que_1'=\Que_1$.
    By inner induction on the derivation of the transition
    \(
    \Cfg_1%
    \Trans:{!\Msg}%
    \Cfg_1'%
    \)
    analysing the last rule applied of those %
    in~\Cref{eq:type_semantics_tuple}.
    We omit the case of rule \Rule{unfold}, since following~\Cref{lem:tp_wf_s}, $\S_1\utu\Choice{\l}[\T]_i$, and by \Cref{lem:tp_wf_cfg_utu_aux}, \Cfg{\Val_1},{\Choice{\l}[\T]_i}\ is \wf.
    Therefore, by rule \Rule{act}:
    \[
      \infer[\Rule{act}]{
      \Cfg{\Val_1},{\Choice{\l}[\T]_i}%
      \Trans:{\Comm_j\Msg}%
      \Cfg{\Val_1\Reset_j},{\S_j}%
      }{
      \Val_1\models\delta_j%
      \quad%
      \Msg={\lT<>_j}%
      \quad%
      j\in I%
      }%
    \]

    \noindent where $\Val_1'=\Val_1\Reset_j$, $\S_1'=\S_j$ and $\Msg=\lT<>_j$, for some $j\in I$ such that $\Val_1\models\Constraints_j$.
    If \Que_1\ were \emph{non-empty} (i.e.: $\Que_1=\Msg';\Que_1'{2}$),
    then by~\Cref{cond:cfg_compat_recv} of~\Cref{def:cfg_compat}, it must be that $\Cfg_1\Trans:{?\Msg'}$.
    However, by~\Cref{lem:tp_wf_cfg_safe_mc}, this is plainly not the case since $\Cfg_1\Trans:{!\Msg}$.
    Therefore, $\Que_1=\emptyset$.
    We proceed by case analysis on the contents of \Que_2:
    \begin{caseanalysis}
      %
      \item\label{case:tp_cfg_trans_preserve_compat_com_empty}
      If $\Que_2=\emptyset$,
      then by~\Cref{cond:cfg_compat_dual} of~\Cref{def:cfg_compat}, $\S_1=\D_2$ and $\Val_1=\Val_2$.
      Therefore, by the induction hypothesis:
      \[
        \begin{array}[c]{@{}l@{\ArrWrap}l@{}}
          \Cfg{\Val_1},{\Choice{\l}[\T]_i};{\emptyset}
          \mid
          \Cfg{\Val_2},{\DualChoice_i};{\emptyset}
          \Trans:{\tau}
          \\[\ArrayLTSLineSpacing]
           & \mathllap{
          \Cfg{\Val_1\Reset_j},{\S_j};{\emptyset}
          \mid
          \Cfg{\Val_2},{\DualChoice_i};{\Msg}
          }
        \end{array}
      \]

      \noindent It remains to show that:
      \(
      \Cfg{\Val_1\Reset_j},{\S_j};{\emptyset}
      \Compat
      \Cfg{\Val_2},{\DualChoice_i};{\Msg}
      \).
      Since $\Que_2'=\Msg;\Que_2'{2}$, then by~\Cref{cond:cfg_compat_recv_other} of~\Cref{def:cfg_compat}:
      \begin{equation}\label{eq:tp_cfg_trans_preserve_compat_com_empty}
        \Cfg{\Val_2},{\DualChoice_i}
        \Trans:{?\Msg}
        \Cfg_2'{2}
        \text{\ and\ }
        \Cfg{\Val_1\Reset_j},{\S_j};{\emptyset}
        \Compat
        \Cfg;{\Que}_2'{2}
      \end{equation}

      \noindent The thesis follows~\Cref{lem:tp_en_cfg_dual_en}, since
      \(
      \Cfg_1\Trans:{!\Msg}
      \)
      and $\Val_1=\Val_2$,
      then
      \(
      \Cfg{\Val_2},{\D_2}\Trans:{?\Msg}
      \).
      %
      \item\label{case:tp_cfg_trans_preserve_compat_com_neq}
      If $\Que_2=\Msg';\Que_2'{2}$,
      then $\Que_2'=\Msg';\Que_2'{2};\Msg$ and therefore, by~\Cref{cond:cfg_compat_recv_other} of~\Cref{def:cfg_compat}, it follows~\Cref{eq:tp_cfg_trans_preserve_compat_com_empty}.
      The thesis follows by the induction hypothesis:
      \[
        \Cfg{\Val_1},{\Choice{\l}[\T]_i};{\emptyset}
        \mid
        \Cfg{\Val_2},{\S_2};{\Que_2}
        \Trans:{\tau}
        \Cfg{\Val_1\Reset_j},{\S_j};{\emptyset}
        \mid
        \Cfg{\Val_2},{\S_2};{\Msg';\Que_2'{2};\Msg}
      \]

      \noindent If $\Que_2'{2}=\emptyset$, then it follows~\Cref{case:tp_cfg_trans_preserve_compat_com_empty}.
      Otherwise, it follows this case.
    \end{caseanalysis}
    %
    %
    \item\label{case:tp_cfg_trans_preserve_compat_par}
    If rule \Rule{par}[l]*, then by~\Cref{claim:tp_cfg_que_trans_recv} of~\Cref{lem:tp_cfg_que_trans} it follows that $\Que_1=\Msg;\Que_1'$, and by~\Cref{cond:cfg_compat_recv} in~\Cref{def:cfg_compat}, we obtain our thesis.
    %
    \item If rule \Rule{com}[r]* or rule \Rule{par}[r]*, then it follows~\Cref{case:tp_cfg_trans_preserve_compat_com,case:tp_cfg_trans_preserve_compat_par} respectively.
    \qedhere
  \end{caseanalysis}
\end{proof}

\subsection{System Progress}\label{ssec:tp_system_progress}

\begin{lem}\label{lem:tp_cfg_fe}
  Let \Cfg_1\ and \Cfg_2\ be \wf.
  If $\Cfg;{\Que}_1 \Compat \Cfg;{\Que}_2$,
  then \Cfg;{\Que}_1\ and \Cfg;{\Que}_2\ are \emph{final},
  or \(
  \exists \t:
  {\Cfg;{\Que}_1 \mid \Cfg;{\Que}_2}\ %
  \Trans:{\t,\tau}%
  \).
\end{lem}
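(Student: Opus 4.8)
The plan is to prove the dichotomy by a case analysis on the contents of the queues, driven by \Cref{cond:cfg_compat_neq} of \Cref{def:cfg_compat}, which guarantees that at most one of $\Que_1,\Que_2$ is non-empty. The intuition is that a non-empty queue can always be drained immediately (a $\tau$-step with delay $0$), whereas when both queues are empty duality pins down the two endpoints: either both have reached $\End$ and are therefore final, or one can send at some time while the other accepts the message into its (empty) queue. Concretely, if $\Que_1=\Msg;\Que_1'$ (the case $\Que_2\neq\emptyset$ is symmetric, using \Cref{cond:cfg_compat_recv_other} and rule \Rule{par}[r]*), then \Cref{cond:cfg_compat_recv} gives $\Cfg_1\Trans:{?\Msg}\Cfg_1'$, so rule \Rule{recv} yields a $\tau$-step of the left triple that consumes the head of $\Que_1$ and rule \Rule{par}[l]* lifts it to the system. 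Choosing $\t=0$ (a trivial \Rule{tick}) gives $\Cfg;{\Que}_1\mid\Cfg;{\Que}_2\Trans:{0,\tau}$, the second disjunct.

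It remains to treat both queues empty, where \Cref{cond:cfg_compat_dual} gives $\S_1=\D_2$ (so also $\S_2=\D_1$) and $\Val_1=\Val_2$. By \Cref{lem:tp_wf_cfg_is_live} both $\Cfg_1,\Cfg_2$ are live, so by \Cref{def:cfg_live} each is either future-enabled or equal to $\End$ up to unfolding; indeed, by the dichotomy of \Cref{lem:tp_wf_s} a configuration whose type unfolds to a choice is future-enabled, so being not future-enabled forces the type to unfold to $\End$. If neither endpoint is future-enabled, both types unfold to $\End$, and together with the empty queues this makes both configurations final (\Cref{def:type_progress}), discharging the first disjunct. Otherwise some endpoint, say $\Cfg_1$, is future-enabled: by \Cref{def:cfg_fe} there is a delay $\t$ and a label $\Comm\Msg$ with $\Cfg_1\Trans:{\t,\Comm\Msg}$, i.e. $\Cfg_1$ ticks to time $\t$ and then performs $\Comm\Msg$ at the configuration level. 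Applying \Cref{lem:tp_en_cfg_dual_en} to that $\Comm\Msg$-transition and using $\S_2=\D_1$ and $\Val_2=\Val_1$, the co-party advanced to time $\t$ offers $\Dual{\Comm}\Msg$ (and so $\Cfg_2$ is future-enabled as well). I then let both triples delay by $\t$ through rule \Rule{wait}, each triple-delay being justified by rule \Rule{time}: its \emph{(urgency)} premise is vacuous because both queues are empty, and its \emph{(persistency)} premise holds because after the delay each endpoint still offers its complementary action and is hence future-enabled. Finally rule \Rule{com}[l]* (when $\Comm=!$, combining \Rule{send} on $\Cfg_1$ advanced to $\t$ with \Rule{que} on the co-party; or \Rule{com}[r]* when $\Comm=?$, with the roles swapped) fires the communication, delivering $\Cfg;{\Que}_1\mid\Cfg;{\Que}_2\Trans:{\t,\tau}$.

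The main obstacle is exactly this empty-queue, future-enabled sub-case. The witness delay $\t$ produced by \Cref{def:cfg_fe} lives purely at the configuration level (through plain \Rule{tick}, which carries no side-conditions), whereas the system transition must route the same delay through \Rule{wait} and \Rule{time}, whose \emph{(persistency)} and \emph{(urgency)} conditions have to be re-established. The two observations that make this succeed are that emptiness of both queues trivialises urgency, and that duality via \Cref{lem:tp_en_cfg_dual_en} does double duty: it supplies the complementary action needed by \Rule{com} and it certifies that the co-party remains future-enabled precisely at time $\t$, so persistency is met for \emph{both} endpoints at that single synchronising delay.
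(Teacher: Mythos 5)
Your proof is correct, and it is in fact substantially more complete than the one the paper gives. The paper's argument is only a few lines: well-formedness gives liveness (\Cref{lem:tp_wf_cfg_is_live}), so each endpoint whose type is not unfold-equivalent to \End\ is \fe\ by \Cref{def:cfg_live,def:cfg_fe}, and the proof then asserts that this ``coincides with the hypothesis.'' It never performs your case analysis on the queues, never invokes \Cref{lem:tp_en_cfg_dual_en} to obtain the complementary action on the co-party, and never discharges the $\text{(persistency)}$ and $\text{(urgency)}$ premises of rule \Rule{time} that are needed to route the witness delay through rule \Rule{wait} before firing rule \Rule{com}[l]* (or draining a non-empty queue via \Rule{recv} and \Rule{par}[l]* at delay $0$). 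Those are exactly the steps that turn configuration-level future-enabledness into an actual system-level \(\Trans:{t,\tau}\) step, and your two observations --- emptiness of both queues trivialises urgency, while duality certifies that the co-party is still enabled at the same time $t$, giving persistency for both endpoints --- are precisely the content the paper leaves implicit. The only cosmetic difference is that the paper reads finality directly off \Cref{def:type_progress} via unfold-equivalence to \End, whereas you derive the same dichotomy from \Cref{lem:tp_wf_s}; both routes are fine.
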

\begin{proof}
  Since both \Cfg_1\ and \Cfg_2\ are \wf,
  by~\Cref{lem:tp_wf_cfg_is_live}, both \Cfg_1\ and \Cfg_2\ are \live.
  By~\Cref{def:type_progress}, \Cfg;{\Que}_1\ is \emph{final} if $\S_1\UTU\End$
  (and similarly for \S_2).
  By~\Cref{def:cfg_live} if $\S_1\NUTU\End$, then \Cfg_1\ is \fe\
  (and similarly for \Cfg_2).
  By~\Cref{def:cfg_fe}, $\exists \t'$ such that $\Cfg_1\Trans:{\t',\Comm\Msg}$
  (and similarly for \Cfg_2).
  Therefore, we have obtained our thesis as it holds that if, \Cfg;{\Que}_1\ is \emph{not} final the by~\Cref{def:cfg_live}, it \emph{must} be \fe\ by~\Cref{def:cfg_fe}, which coincides with the hypothesis.
\end{proof}

\begin{lem}\label{lem:tp_cfg_progress}
  If \Cfg_1\ and \Cfg_2\ are \wf,
  and $\Cfg;{\Que}_1 \Compat \Cfg;{\Que}_2$
  and ${\Cfg;{\Que}_1 \mid \Cfg;{\Que}_2}\Reduce*{\Cfg;{\Que}_1' \mid \Cfg;{\Que}_2'}$,
  then ${\Cfg;{\Que}_1' \Compat \Cfg;{\Que}_2'}$ \allowbreak and \Cfg_1'\ and \Cfg_2'\ are \wf.
\end{lem}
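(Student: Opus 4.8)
The plan is to prove \Cref{lem:tp_cfg_progress} by a routine induction on the length $n$ of the reduction ${\Cfg;{\Que}_1 \mid \Cfg;{\Que}_2}\Reduce*{\Cfg;{\Que}_1' \mid \Cfg;{\Que}_2'}$, where $\Reduce*$ is the reflexive--transitive closure of the system transition relation $\Trans$ from \Cref{fig:type_semantics}~\Cref{eq:type_semantics_system}. The whole point is that the single-step work has already been done: \Cref{lem:tp_cfg_trans_preserve_wf} shows that one system transition out of a pair of \wf, \compat\ configurations yields \wf\ configurations, and \Cref{lem:tp_cfg_trans_preserve_compat} shows that the same transition preserves \compat. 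Both of these lemmas take the entire reduction $\Trans$ (covering $\tau$, communication, and time steps) as input, so no case analysis on the kind of step is needed here; the remaining task is only to iterate the two preservation results along the reduction sequence.

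For the base case $n=0$, the reached system coincides with the starting one, so the conclusion is literally the hypothesis: \Cfg_1\ and \Cfg_2\ are \wf\ and $\Cfg;{\Que}_1 \Compat \Cfg;{\Que}_2$. For the inductive step I would peel off the last transition, decomposing a reduction of length $n+1$ as ${\Cfg;{\Que}_1 \mid \Cfg;{\Que}_2}\Reduce*{\Cfg;{\Que}_1'{2} \mid \Cfg;{\Que}_2'{2}}\ \Trans\ {\Cfg;{\Que}_1' \mid \Cfg;{\Que}_2'}$, where the prefix has length $n$. Applying the induction hypothesis to the prefix gives that \Cfg_1'{2}\ and \Cfg_2'{2}\ are \wf\ and $\Cfg;{\Que}_1'{2} \Compat \Cfg;{\Que}_2'{2}$. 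These are exactly the premises demanded by \Cref{lem:tp_cfg_trans_preserve_wf} and \Cref{lem:tp_cfg_trans_preserve_compat}, which I then apply to the single final transition ${\Cfg;{\Que}_1'{2} \mid \Cfg;{\Que}_2'{2}}\ \Trans\ {\Cfg;{\Que}_1' \mid \Cfg;{\Que}_2'}$ to obtain that \Cfg_1'\ and \Cfg_2'\ are \wf\ and $\Cfg;{\Que}_1' \Compat \Cfg;{\Que}_2'$, which is the thesis.

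The one point that requires care --- and the closest thing to an obstacle --- is that the two properties must be threaded through the induction \emph{jointly} rather than separately. Each single-step lemma consumes both \wfness\ \emph{and} \compat{ibility} of its source system as hypotheses (\compat\ is needed, for instance, in \Cref{lem:tp_cfg_trans_preserve_compat} to invoke \Cref{cond:cfg_compat_recv} and \Cref{cond:cfg_compat_dual} of \Cref{def:cfg_compat}), so neither invariant is preserved in isolation. The induction therefore carries the conjunction ``\wf\ and \compat'' as a single invariant, which is precisely what the combined conclusion of \Cref{lem:tp_cfg_progress} records. This iterated statement is what then lets me read off progress: at any reachable system the configurations are still \wf\ and \compat, so \Cref{lem:tp_cfg_fe} applies to give that each is final or a further $\tau$-transition (after some delay) is available, which is the content needed for \Cref{thm:type_progress}.
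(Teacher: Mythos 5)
Your proof is correct and is essentially the paper's own argument: the same induction on the length of the reduction, peeling off the last step and discharging it with \Cref{lem:tp_cfg_trans_preserve_wf} and \Cref{lem:tp_cfg_trans_preserve_compat}. Your observation that \wfness\ and \compat{ibility} must be carried jointly through the induction is a correct (and worthwhile) explicit remark about what the paper leaves implicit.
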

\begin{proof}
  We proceed by induction on the length of ($\Reduce*$).
  The base case is trivial, as
  \(
  {\Cfg;{\Que}_1 \mid \Cfg;{\Que}_2}%
  \Reduce^{0}
  {\Cfg;{\Que}_1 \mid \Cfg;{\Que}_2}
  \)
  and hence the thesis coincides with the hypothesis.
  For the induction case, suppose:
  \[
    {\Cfg;{\Que}_1 \mid \Cfg;{\Que}_2}%
    \Reduce^{n}%
    {\Cfg;{\Que}_1'{2} \mid \Cfg;{\Que}_2'{2}}%
    \Reduce%
    {\Cfg;{\Que}_1' \mid \Cfg;{\Que}_2'}
  \]

  \noindent By the induction hypothesis:
  $\Cfg;{\Que}_1'{2}\Compat\Cfg;{\Que}_2'{2}$,
  and both \Cfg_1'{2}\ and \Cfg_2'{2}\ are \wf.
  It remains for us to show that
  both \Cfg_1'\ and \Cfg_2'\ are \wf,
  and $\Cfg;{\Que}_1'\Compat\Cfg;{\Que}_2'$.
  By~\Cref{lem:tp_cfg_trans_preserve_wf,lem:tp_cfg_trans_preserve_compat}, we obtain our thesis.
\end{proof}

\begin{lem}\label{lem:tp_progress}
  The following holds:
  \[
    \forall \Val, \S : %
    \Cfg\text{\ is\ \wf}
    \implies
    {\Cfg;\mid\Cfg,{\D};}
    \text{\ satisfies progress.}
  \]
\end{lem}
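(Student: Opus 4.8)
The plan is to recognise the dual composition $\Cfg;\mid\Cfg,{\D};$ as a special case of a compatible pair of well-formed configurations with queues, and then to close the goal by composing the two results already proved for such pairs: the preservation lemma \Cref{lem:tp_cfg_progress} and the per-state liveness lemma \Cref{lem:tp_cfg_fe}. Fix \Val\ and \S\ with \Cfg\ \wf. First I would discharge the two hypotheses those lemmas require of the initial system. For compatibility, both queues are empty, so \Cref{cond:cfg_compat_recv,cond:cfg_compat_recv_other} of \Cref{def:cfg_compat} are vacuous, \Cref{cond:cfg_compat_neq} holds trivially, and \Cref{cond:cfg_compat_dual} demands only that the two types be mutually dual and the valuations equal; this holds because $\Dual{\Dual{\S}}=\S$ (duality is involutive, by \Cref{def:type_duality}) and both endpoints carry the same valuation \Val. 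Hence $\Cfg;$ and $\Cfg,{\D};$ are \compat: the relation containing only this pair satisfies every clause of \Cref{def:cfg_compat}, so it is contained in the greatest such relation. Well-formedness of the left endpoint \Cfg\ is exactly the hypothesis.

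It remains to show that the right endpoint \Cfg,{\D}\ is \wf\ as well, i.e.\ that duality preserves well-formedness of configurations. I would establish, by induction on the derivation of $\emptyset;\Constraints\Entails\S$ with the rules of \Cref{fig:type_wf_rules}, that $\emptyset;\Constraints\Entails\Dual{\S}$; the claim for configurations then follows from \Cref{def:cfg_wf}, since the witnessing constraint \Constraints\ and the valuation \Val\ are shared between \Cfg\ and \Cfg,{\D}. The crux is that the only rule whose premises mention a communication direction is \Rule{choice}, and there the direction occurs solely in the (mixed-choice) clause $\Comm_i=\Comm_j$, which is invariant under the uniform swap of directions performed by \Cref{def:type_duality}. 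The (feasibility) and (delegation) clauses refer only to the guards \Constraints_i, the resets \Resets_i, the payload types \T_i, and the continuations \S_i; by \Cref{def:type_duality} dualisation leaves the guards, resets, and (delegated) payload types untouched and merely replaces each continuation by \Dual{\S_i}, so these premises carry over verbatim once the induction hypothesis is applied to the continuations. Rules \Rule{end}, \Rule{rec}, and \Rule{var} are likewise direction-agnostic. Thus \Cfg,{\D}\ is \wf\ against \Val.

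With both endpoints \wf\ and the initial system \compat, the proof concludes by unfolding \Cref{def:type_progress}. Let $\Cfg;{\Que}_1'\mid\Cfg;{\Que}_2'$ be any system reachable from $\Cfg;\mid\Cfg,{\D};$. Applying \Cref{lem:tp_cfg_progress} with $\Cfg_1=\Cfg$, $\Cfg_2=\Cfg,{\D}$, and $\Que_1=\Que_2=\emptyset$ shows that the reached endpoints \Cfg_1'\ and \Cfg_2'\ are again \wf\ and that $\Cfg;{\Que}_1'\Compat\Cfg;{\Que}_2'$. Feeding this reached state into \Cref{lem:tp_cfg_fe} yields precisely the disjunction required by \Cref{def:type_progress}: either \Cfg;{\Que}_1'\ and \Cfg;{\Que}_2'\ are both final (\Cref{cond:type_progress_final}), or there is a \t\ with $\Cfg;{\Que}_1'\mid\Cfg;{\Que}_2'\Trans:{\t,\tau}$ (\Cref{cond:type_progress_live}). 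As the reachable state was arbitrary, the system satisfies progress, and since \Val\ and \S\ were arbitrary the statement follows.

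The main obstacle is not the final assembly, which is bookkeeping once \Cref{lem:tp_cfg_progress,lem:tp_cfg_fe} are available, but verifying the hypotheses for the dual endpoint, namely the duality-preserves-well-formedness step. The point needing care is to confirm that \Cref{def:type_duality} really dualises only the top-level direction and the continuations of a choice, leaving every clock guard, reset set, and delegated payload type intact, so that the direction-insensitive (feasibility) and (delegation) premises of \Rule{choice} are genuinely unaffected while the single direction-sensitive clause (mixed-choice) survives as a symmetric equality of directions.
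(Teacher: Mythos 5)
Your proposal is correct and follows essentially the same route as the paper: establish compatibility of the initial dual system and well-formedness of both endpoints, then combine \Cref{lem:tp_cfg_progress} (preservation of well-formedness and compatibility along reachable states) with \Cref{lem:tp_cfg_fe} (final-or-enabled for compatible well-formed pairs) to discharge \Cref{def:type_progress}. The only difference is that you spell out the duality-preserves-well-formedness step by induction on the formation rules, which the paper simply asserts via \Cref{def:type_duality}.
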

\begin{proof}
  Since \Cfg\ is \wf, by~\Cref{def:type_duality} \Cfg,{\D}\ is \wf.
  By~\Cref{def:cfg_compat}, ${\Cfg;\Compat\Cfg,{\D};}$.
  If $\S=\End$ then $\D=\End$ and therefore, by~\Cref{def:type_progress}, the thesis coincides with the hypothesis.
  Otherwise, ${\Cfg;\mid\Cfg,{\D};}$ \emph{satisfies progress} if, for all $\Cfg*;'\mid\Cfg*;'{2}$ reachable from ${\Cfg;\mid\Cfg,{\D};}$
  then,
  either \Cfg*;'\ and \Cfg*;'{2}\ are both \emph{final}, or
  $\exists \t$ such that:
  \(
  \Cfg*;'\mid\Cfg*;'{2}
  \Trans:{\t,\tau}
  \).
  By~\Cref{lem:tp_cfg_progress}, any transition made by ${\Cfg;\mid\Cfg,{\D};}$ will preserve both \wfness\ and \compat.
  By~\Cref{lem:tp_cfg_fe}, we obtain our thesis.
\end{proof}

\thmTypeProgress*
\begin{proof}
  The thesis follows immediately from~\Cref{lem:tp_progress}.
\end{proof}

\noindent As typical for binary systems, our proof of progress builds upon a notion of \emph{duality} between participants.
Since communication is asynchronous, each party may break duality with their co-party.
Compatibility allows the duality of participants to be broken, only if, doing so does not violate \emph{communication safety}.
More specifically, compatibility requires that any message received by a queue is expected (i.e., able to be received), and that the resulting configurations are still \compat.
In practice, using compatibility allows each party to behave and progress independently, regardless the state of the other party, while retaining the essence of \emph{duality}, and guaranteeing that all messages will eventually be received.

\clearpage

\section{Proof of Subject Reduction}\label{sec:proof_subject_reduction}

%
%
%
\NewDocumentCommand{\BranchCardinalityPremise}{}{\WrapMathMode{%
    \neg (\left\lvert J \right\rvert = \left\lvert I \right\rvert = 1)
  }}

\subsection{Auxiliary Definitions, Figures \& Assumptions}

\begin{rdefi}{defFuncFQ}[Free-queues of a Process]\label{def:func_fq}
  The function \FQ{\P}\ returns the set of \emph{free-queues} in a given process \P, defined inductively below:
  \small\[
    \begin{array}[t]{@{}l@{}}
      \FQ{\P}
      = \begin{cases}
          \{\pq\}                                                 &
          \text{if\ }\P=\pq:\h
          \\[\ArrayLineSpacing]
          \FQ{\P'}\setminus\{\pq,\qp\}                            &
          \text{if\ }\P=\Scope{\pq}{\P'}
          \\[\ArrayExtraTallLineSpacing]
          \FQ{\P'}                                                &
          \text{if\ }\P\in\begin{array}[c]\{{@{}l@{}}\}
                          \On{\p}\Send{\l}[\v]{\P'},
                          ~\Def{\Rec*{\vec{v};\vec{r}}=\P}:{\P'},
                          \\
                          \Set{x}.\P',
                          ~\Delay{\Duration}.\P',
                          ~\Delay{\t}.\P'
                        \end{array}
          \\[\ArrayExtraTallLineSpacing]
          \bigcup_{i\in I} \FQ{\P_i}                              &
          \text{if\ }\P =
          \On{\p}\Recv^{\e}{\l}[\v]:{\P}_i
          \\[\ArrayExtraTallLineSpacing]
          \FQ{\P'} \cup \FQ{\Q}                                   &
          \begin{array}[t]{@{}l@{}}
          \text{if\ }\P\in\begin{array}[c]\{{@{}l@{}}\}
                            {\P'\mid\Q},
                            ~\If*{\Cond}~\Then{\P'}~\Else{\Q}
                          \end{array}
        \end{array}
          \\[\ArrayExtraTallLineSpacing]
          \FQ{\On{\p}\Recv^{\Diam\n}{\l}[\v]:{\P}_i} \cup \FQ{\Q} &
          \text{if\ }\P =
          \On{\p}\Recv^{\Diam\n}{\l}[\v]:{\P}_i~\After:{\Q}
          \\[\ArrayExtraTallLineSpacing]
          \emptyset                                               &
          \text{if\ }\P =
          \Term,
          ~\Rec
        \end{cases}
    \end{array}
  \]
\end{rdefi}

\begin{rdefi}{defFuncFT}[Free-timers of a Process]\label{def:func_ft}
  The function \FT{\P}\ returns the set of \emph{free timers} in a given process \P, defined inductively below:
  \small\[
    \begin{array}[t]{@{}l@{}}
      \FT{\P}
      = \begin{cases}
          \{\circled{x}\} \cup \FT{\P'}                           &
          \text{if\ }\P=\Set{x}.{\P'}
          \\[\ArrayExtraTallLineSpacing]
          \{\circled{x}\} \cup \FT{\P'} \cup \FT{\Q}              &
          \text{if\ }\P=\If*{\Cond}~\Then{\P'}~\Else{\Q}
          \text{\ and\ }
          {\circled{x}\in\Cond}
          %
          \\[\ArrayExtraTallLineSpacing]
          \FT{\P'}                                                &
          \text{if\ }\P\in\begin{array}[c]\{{@{}l@{}}\}
                          \On{\p}\Send{\l}[\v]{\P'},
                          ~\Def{\Rec*{\vec{v};\vec{r}}=\P}:{\P'},
                          \\
                          \Delay{\Duration}.\P',
                          ~\Delay{\t}.\P',
                          ~\Scope{\pq}{\P'}
                        \end{array}
          \\[\ArrayExtraTallLineSpacing]
          \bigcup_{i\in I} \FT{\P_i}                              &
          \text{if\ }\P =
          \On{\p}\Recv^{\e}{\l}[\v]:{\P}_i
          \\[\ArrayExtraTallLineSpacing]
          \FT{\P'} \cup \FT{\Q}                                   &
          \text{if\ }\P={\P'\mid\Q}
          \\[\ArrayExtraTallLineSpacing]
          \FT{\On{\p}\Recv^{\Diam\n}{\l}[\v]:{\P}_i} \cup \FT{\Q} &
          \text{if\ }\P =
          \On{\p}\Recv^{\Diam\n}{\l}[\v]:{\P}_i~\After:{\Q}
          \\[\ArrayExtraTallLineSpacing]
          \emptyset                                               &
          \text{if\ }\P =
          \Term,
          ~\Rec,
          ~{\pq:\h}
        \end{cases}
    \end{array}
  \]
\end{rdefi}

\begin{rdefi}{defFuncFN}[Free-names of a Process]\label{def:func_fn}
  The function \FN{\P}\ returns the set of \emph{free-names} in a given process \P, defined inductively below:
  \small\[
    \begin{array}[t]{@{}l@{}}
      \FN{\P}
      = \begin{cases}
          \{\pq\} \cup \FN{\P'}                                   &
          \text{if\ }\P=\Scope{\pq}{\P'}
          \\[\ArrayExtraTallLineSpacing]
          \FN{\P'}                                                &
          \text{if\ }\P\in\begin{array}[c]\{{@{}l@{}}\}
                          \On{\p}\Send{\l}[\v]{\P'},
                          ~\Def{\Rec*{\vec{v};\vec{r}}=\P}:{\P'},
                          \\
                          \Set{x}.\P',
                          ~\Delay{\Duration}.\P',
                          ~\Delay{\t}.\P'
                        \end{array}
          \\[\ArrayExtraTallLineSpacing]
          \bigcup_{i\in I} \FN{\P_i}                              &
          \text{if\ }\P =
          \On{\p}\Recv^{\e}{\l}[\v]:{\P}_i
          \\[\ArrayExtraTallLineSpacing]
          \FN{\P'} \cup \FN{\Q}                                   &
          \begin{array}[t]{@{}l@{}}
          \text{if\ }\P\in\begin{array}[c]\{{@{}l@{}}\}
                            {\P'\mid\Q},
                            ~\If*{\Cond}~\Then{\P'}~\Else{\Q}
                          \end{array}
        \end{array}
          \\[\ArrayExtraTallLineSpacing]
          \FN{\On{\p}\Recv^{\Diam\n}{\l}[\v]:{\P}_i} \cup \FN{\Q} &
          \text{if\ }\P =
          \On{\p}\Recv^{\Diam\n}{\l}[\v]:{\P}_i~\After:{\Q}
          \\[\ArrayExtraTallLineSpacing]
          \emptyset                                               &
          \text{if\ }\P =
          \Term,
          ~\Rec,
          ~\pq:\h
        \end{cases}
    \end{array}
  \]
\end{rdefi}

\begin{rdefi}{defFuncFPV}[Free process variables of a Process]\label{def:func_fpv}
  The function \FPV{\P}\ returns the set of \emph{free process variables} in a given process \P, defined inductively below:
  \small\[
    \begin{array}[t]{@{}l@{}}
      \FPV{\P}
      = \begin{cases}
          \{\Var\}                                                  &
          \text{if\ }\P=\Rec
          \\[\ArrayLineSpacing]
          \{\Var\} \cup \FPV{\P'}                                   &
          \text{if\ }\P=\Def{\Rec*{\vec{v};\vec{r}}=\P}:{\P'}
          \\[\ArrayExtraTallLineSpacing]
          \FPV{\P'}                                                 &
          \text{if\ }\P\in\begin{array}[c]\{{@{}l@{}}\}
                          \On{\p}\Send{\l}[\v]{\P'},
                          ~\Scope{\pq}{\P'},
                          ~\Set{x}.\P',
                          \\
                          \Delay{\Duration}.\P',
                          ~\Delay{\t}.\P'
                        \end{array}
          \\[\ArrayExtraTallLineSpacing]
          \bigcup_{i\in I} \FPV{\P_i}                               &
          \text{if\ }\P =
          \On{\p}\Recv^{\e}{\l}[\v]:{\P}_i
          \\[\ArrayExtraTallLineSpacing]
          \FPV{\P'} \cup \FPV{\Q}                                   &
          \begin{array}[t]{@{}l@{}}
          \text{if\ }\P\in\begin{array}[c]\{{@{}l@{}}\}
                            {\P'\mid\Q},
                            ~\If*{\Cond}~\Then{\P'}~\Else{\Q}
                          \end{array}
        \end{array}
          \\[\ArrayExtraTallLineSpacing]
          \FPV{\On{\p}\Recv^{\Diam\n}{\l}[\v]:{\P}_i} \cup \FPV{\Q} &
          \text{if\ }\P =
          \On{\p}\Recv^{\Diam\n}{\l}[\v]:{\P}_i~\After:{\Q}
          \\[\ArrayExtraTallLineSpacing]
          \emptyset                                                 &
          \text{if\ }\P =
          \Term,
          ~\pq:\h
        \end{cases}
    \end{array}
  \]
\end{rdefi}

\subsubsection{Session Environments}

\begin{defi}[Live \Session]\label{def:live_session}
  \Session\ is \live\ if,
  for all $\p \in \Dom{\Session}$
  such that $\Session = \Session',{\p:\Cfg} \implies $ \Cfg\ is \live\ by~\Cref{def:cfg_live}.
\end{defi}

\IsUnused{
  \begin{defi}[Role Actions]\label{def:role_actions}
    Given a role $\p:\Cfg$ we write \p\ as shorthand for {\Cfg}.
  \end{defi}
}

\begin{defi}[Delayable \Session]\label{def:delayable_session}
  A session environment \Session\ is \del\ if:
  \[
    \forall \qp\in\Dom{\Session} :
    \Session(\qp)\neq\emptyset \implies \p\notin\Dom{\Session}
  \]
\end{defi}

\subsubsection{Session Environment Reduction}
See~\Cref{fig:session_reduction}.

\begin{figure}[h!]
  \begin{typecheck}*
    \begin{array}[t]{@{}c@{}}
      \begin{array}[t]{c @{\quad} c}
        \infer[\Rule[\Session]{L}]{
          \Session_1,~\Session_2
          \Reduce
          \Session_1',~\Session_2
        }{
          \Session_1
          \Reduce
          \Session_1'
        }
         & %
        \infer[\Rule[\Session]{Recv}]{
        {\p:\Cfg},
        ~{\qp:{\lT<>;\Que}}
        \Reduce
        {\p:\Cfg'},
        ~{\qp:\Que}
        }{
        \Cfg;{\lT<>;\Que}
        \Trans:{\tau}
        \Cfg{\Val'},{\S'};{\Que}
        }
      \end{array}
      \\[-0.5ex]\\%
      \begin{array}[t]{c @{\quad} c}
        \infer[\Rule[\Session]{R}]{
          \Session_1,~\Session_2
          \Reduce
          \Session_1,~\Session_2'
        }{
          \Session_2
          \Reduce
          \Session_2'
        }
         & %
        \infer[\Rule[\Session]{Send}]{
        {\p:\Cfg},
        ~{\pq:\Que},
        \Reduce
        {\p:\Cfg'},
        ~{\pq:\Que;\Msg}
        }{
        \Cfg
        \Trans:{!\Msg}
        \Cfg'
        }
      \end{array}
    \end{array}
  \end{typecheck}
  \caption{Reduction for Session Environments}\label{fig:session_reduction}
\end{figure}

\subsection{Delayable Processes}

\begin{lem}\label{lem:sr_def_delay_no_recv}
  If \Time\ is defined,
  then $\Wait\cap\NEQ=\emptyset$.
\end{lem}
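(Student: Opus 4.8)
The plan is to prove the statement by structural induction on \P, considering only those forms for which the partial function \Time\ is defined by~\Cref{def:time_passing_func_ef}; for all other forms (such as sending processes, $\Set{x}.\P'$ and conditionals), \Time\ is undefined and the implication holds vacuously. Throughout, \Wait\ and \NEQ\ are read off from their defining clauses in~\Cref{fig:func_wait_neq}.

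The base cases are immediate once one observes that no single non-composite process contributes a role to \emph{both} \Wait\ and \NEQ. Precisely: if \P\ is a branching process $\On{\p}\Recv^{\e}{\l}[\v]:{\P'}_i$ or a timeout $\after{\p}{\e}{\l}{\v}{\P'}{\Q}$, then $\Wait=\{\p\}$ but \P\ falls under the ``otherwise'' clause of \NEQ, so $\NEQ=\emptyset$; dually, if $\P=\qp:\h$ is a queue then $\Wait=\emptyset$ (the ``otherwise'' clause of \Wait) while $\NEQ$ is $\{\p\}$ or $\emptyset$; and for $\P=\Term$ and the delay forms $\Delay{\Constraints}.\P'$, $\Delay{\t}.\P'$ both sets are $\emptyset$. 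In each case at least one of \Wait, \NEQ\ is empty, so $\Wait\cap\NEQ=\emptyset$. Note that whichever side conditions on \e\ guard the definedness of \Time\ for receive and timeout processes are irrelevant here, since neither \Wait\ nor \NEQ\ inspects them.

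For the inductive step there are three composite forms on which \Time\ is defined. If $\P=\Scope{\pq}{\Q}$ then \Time\ is defined exactly when \Time{\Q}\ is, and $\Wait=\Wait{\Q}\setminus\{\p,\q\}$, $\NEQ=\NEQ{\Q}\setminus\{\p,\q\}$; hence $\Wait\cap\NEQ\subseteq\Wait{\Q}\cap\NEQ{\Q}=\emptyset$ by the induction hypothesis. If $\P=\Def{\Rec*{\vec{v};\vec{r}}=\P_0}:{\Q}$ then \Time\ recurses on \Q\ only, $\Wait=\Wait{\Q}$ and $\NEQ=\NEQ{\Q}$, so the claim is immediate from the induction hypothesis. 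The only case with genuine content is parallel composition $\P=\P_1\mid\P_2$, where $\Wait=\Wait{\P_1}\cup\Wait{\P_2}$ and $\NEQ=\NEQ{\P_1}\cup\NEQ{\P_2}$. Distributing the intersection over these unions produces four terms, $(\Wait{\P_1}\cap\NEQ{\P_1})\cup(\Wait{\P_1}\cap\NEQ{\P_2})\cup(\Wait{\P_2}\cap\NEQ{\P_1})\cup(\Wait{\P_2}\cap\NEQ{\P_2})$. The two diagonal terms $\Wait{\P_i}\cap\NEQ{\P_i}$ are empty by the induction hypothesis (both \Time{\P_1}\ and \Time{\P_2}\ are defined, since \Time\ is defined on $\P_1\mid\P_2$), and the two off-diagonal terms $\Wait{\P_i}\cap\NEQ{\P_j}$ with $i\neq j$ are empty by exactly the side condition that the parallel clause of~\Cref{def:time_passing_func_ef} demands for $\Time{\P_1\mid\P_2}$ to be defined. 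Thus all four terms vanish and $\Wait\cap\NEQ=\emptyset$.

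The proof presents no real obstacle: the parallel case is the only one carrying content, and there the desired disjointness is essentially the very condition built into the definition of \Time\ for parallel compositions, so it falls out as soon as the intersection is expanded. The single point deserving care is ensuring the case analysis ranges over precisely the forms in the domain of \Time, so that every appeal to the induction hypothesis is applied to a sub-process on which \Time\ is indeed defined.
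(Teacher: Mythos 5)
Your proof is correct and is essentially the paper's argument made explicit: the paper disposes of this lemma in one line by appealing to \Cref{def:time_passing_func_ef}, and your structural induction over the domain of the time-passing function is exactly the bookkeeping that justifies that appeal, with the parallel case carrying all the content via the off-diagonal side condition $\mathtt{Wait}(P_i)\cap\mathtt{NEQ}(P_j)=\emptyset$ ($i\neq j$) that the definition builds in and the diagonal terms discharged by the induction hypothesis. The only nit is that $\mathtt{delay}(\delta).P'$ is not in the domain of the time-passing function at all, so that case is vacuous rather than a base case, but this changes nothing.
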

\begin{proof}
  The hypothesis holds by the definition of \Time\ in~\Cref{def:time_passing_func_ef}.
\end{proof}

\subsection{Well-formed \texorpdfstring{\Session}{Session Environments}}

\begin{lem}\label{lem:sr_wt_bal_session_reduce_preserve_bal}
  Let \Session\ be \wf.
  If \Session\ is \bal\ and $\Session\Reduce\Session'$,
  then \Session'\ is \bal.
  Furthermore, if \Session\ is \fbal, then \Session'\ is \fbal.
\end{lem}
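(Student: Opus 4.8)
The plan is to prove the statement by first observing that every session-environment reduction is \emph{local}, and then re-establishing the per-group conditions of \Cref{def:bal_session} by appealing to the preservation of compatibility already proved in \Cref{lem:tp_cfg_trans_preserve_compat}. Inspecting \Cref{fig:session_reduction}, the two congruence rules only recurse into a sub-environment while fixing its complement, whereas the two base rules (receive and send) each touch a single role $\p$ together with exactly one queue. A routine induction on the derivation of $\Session\Reduce\Session'$ therefore shows that $\Session=\Session_0,\Sigma$ and $\Session'=\Session_0,\Sigma'$, where $\Sigma\Reduce\Sigma'$ is a single receive or send step and $\Session_0$ is untouched; in particular neither base step adds or removes a name, so $\Dom{\Session}=\Dom{\Session'}$. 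The structural fact I intend to exploit is that balancedness decomposes over the independent binary groups of $\Session$, and that for a complete group the conjunction of \Cref{case:bal_session_recv,case:bal_session_compat} is exactly compatibility ($\Compat$) of the corresponding configuration pair, while \Cref{case:bal_session_missing_queue} is its existential relaxation when one queue is absent. It thus suffices to show the single affected group stays \bal; untouched groups inherit their conditions from $\Session$.

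For the receive step, $\Session_0$ together with $\p$ and a non-empty inbox reduces by receiving the head message $\Msg$. This case needs no compatibility reasoning: the pattern of \Cref{case:bal_session_recv} already matches $\Session$ (with prefix $\Session_0$) and asserts both that $\p$ can receive $\Msg$ and that the post-receive environment is again \bal. Since labels within a choice are pairwise distinct, reception of $\Msg$ is deterministic, so the continuation configuration delivered by \Cref{case:bal_session_recv} coincides with the one produced by the reduction, and $\Session'\in\Bal$ follows directly.

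The send step is where the work lies. I first argue that $\p$'s own inbox is empty or absent: were it non-empty, \Cref{case:bal_session_recv} applied to $\Session$ would force $\p$ to be able to receive its head, contradicting \Cref{lem:tp_wf_cfg_safe_mc}, since $\p$'s configuration is \wf\ and already offers the send. This guarantees that updating $\p$'s configuration cannot invalidate any receive-condition on $\p$. It then remains to re-establish the group condition. If the $\p$--$\q$ group is complete, \Cref{case:bal_session_compat} gives compatibility of the old pair; the send is precisely a $\tau$-step of the composed system via the communication rule of \Cref{fig:type_semantics} (the sender fires its send while the receiver enqueues $\Msg$ through \Rule{que}), so \Cref{lem:tp_cfg_trans_preserve_compat} yields compatibility of the updated pair, i.e.\ \Cref{case:bal_session_compat} for $\Session'$; the recursive receive-conditions on $\q$'s now non-empty inbox are exactly what this compatibility encodes. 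If instead $\q$ is present but $\p$'s inbox is absent, I take the witness queue supplied by \Cref{case:bal_session_missing_queue}, run the same communication transition on the witnessed system, and use \Cref{lem:tp_cfg_trans_preserve_compat} together with the symmetry of $\Compat$ (\Cref{def:cfg_compat}) to recover \Cref{case:bal_session_missing_queue} for $\Session'$ with the same witness; and if $\q$ is absent the pair matches no condition and is \bal\ vacuously before and after.

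Finally, the \fbal\ claim is immediate once balancedness is in hand: \Cref{case:fbal_session_cfg,case:fbal_session_que} are purely domain-membership requirements (each role or queue is accompanied by its companion role and two queues), and since $\Dom{\Session}=\Dom{\Session'}$ these conditions transfer verbatim to the already-\bal\ $\Session'$. The main obstacle I anticipate is the interaction between the coinductive flavour of $\Compat$ and the self-referential \Cref{case:bal_session_recv}: making rigorous that a local update preserves every recursively-checked condition, and not merely the top-level one, hinges on the group-decomposition of balancedness and on pinning down the exact equivalence between a complete group's balancedness and compatibility of its configuration pair, which is the step I would write out most carefully.
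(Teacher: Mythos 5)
Your proposal is correct and follows essentially the same route as the paper's proof: localise the reduction to a single send or receive step, discharge the receive case directly from \Cref{case:bal_session_recv} of \Cref{def:bal_session}, and for the send case rule out a pending message at the sender via \Cref{lem:tp_wf_cfg_safe_mc} and re-establish the group conditions through the communication transition and \Cref{lem:tp_cfg_trans_preserve_compat} (using the \Cref{case:bal_session_missing_queue} witness when a queue is absent). The delicate point you flag -- making the group-decomposition of \balness\ rigorous against the self-referential recursion in \Cref{case:bal_session_recv} -- is present in the paper's proof as well, where it is dispatched by a brief induction on the queue length; your explicit domain-preservation argument for the \fbal\ part is if anything more complete than the paper's treatment.
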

\begin{proof}
  We proceed by induction on the derivation of the reduction $\Session\Reduce\Session'$, analysing the last rule applied of those given in~\Cref{fig:session_reduction}:
  \begin{caseanalysis}
    %
    \item\label{case:sr_wt_bal_session_reduce_preserve_bal_send}
    If rule \Rule[\Session]{Send}, then $\Session={\p:\Cfg},{\pq:\Que}$ and:
    \[
      \infer[\Rule[\Session]{Send}]{%
        {\p:\Cfg},{\pq:\Que}
        \Reduce
        {\p:\Cfg'},{\pq:\Que;\lT<>}
      }{%
        \Cfg
        \Trans:{!\lT<>}
        \Cfg'
      }
    \]

    \noindent We have that ${\p:\Cfg'},{\pq:\Que;\lT<>}$ is \bal, as every condition of~\Cref{def:bal_session} holds trivially. Regarding the furthermore part, it holds
    trivially as $\Session$ is not \fbal.
    %
    \item\label{case:sr_wt_bal_session_reduce_preserve_bal_recv}
    If rule \Rule[\Session]{Recv},
    then $\Session={\p:\Cfg},{\qp:\lT<>;\Que}$ and:
    \[
      \infer[\Rule[\Session]{Recv}]{%
      {\p:\Cfg},{\qp:\lT<>;\Que}
      \Reduce
      {\p:\Cfg'},{\qp:\Que}
      }{%
      \Cfg;{\lT<>;\Que}
      \Trans:{?\lT<>}
      \Cfg;{\Que}'
      }
    \]

    \noindent By rule \Rule{recv} of~\Cref{eq:type_semantics_triple}:
    \[
      \infer[\Rule{recv}]{
      \Cfg;{\Msg;\,\Que}%
      \Trans:{\tau}%
      \Cfg{\Val'},{\S'};{\Que}%
      }{
      \Cfg%
      \Trans:{?\Msg}%
      \Cfg'%
      }%
    \]

    \noindent By the premise of the above,
    thanks to~\Cref{case:bal_session_recv} of~\Cref{def:bal_session},
    it follows that $\Cfg{\Val'},{\S'};{\Que}$ is \bal, as required.
    The furthermore case holds trivially.
    %
    \item\label{case:sr_wt_bal_session_reduce_preserve_bal_par}
    If rule \Rule[\Session]{L} then
    $\Session = \Session_1,~\Session_2$ and
    $\Session' = \Session'_1,~\Session_2$ and:
    \[
      \infer[\Rule[\Session]{L}]{
        \Session_1,~\Session_2
        \Reduce
        \Session_1',~\Session_2
      }{
        \Session_1
        \Reduce
        \Session_1'
      }
    \]

    \noindent It is easy to see that $\Session_1$ and $\Session_2$ are both balanced. By the induction hypothesis also $\Session'_1$ is balanced. Notice that the
    composition of balanced environments is not necessarily balanced. So we need to show
    that all the items in~\Cref{def:bal_session} hold for $\Session'_1$.
    We show only for~\Cref{case:bal_session_recv} of~\Cref{def:bal_session}, as the other cases are similar.
    So, suppose:
    \(
    \Session' = \Session'{2},{\p:\Cfg},{\qp:{\Msg;\Que}}
    \),
    then we have to show:
    \(
    \Cfg\Trans:{?\Msg}\Cfg'
    \) and \(
    \Session',{\p:\Cfg'},{\qp:\Que} \in\Bal
    \).
    The interesting cases are:
    \begin{itemize}
      \item $\Session'_1(\p) = \Cfg$, $\Session_2(\qp) = \Msg;\Que$ and
            $\Session_1(\p) \neq \Session'_1(\p)$. We argue that this case is
            impossible. So suppose, by contradiction, it is not the case. Then,
            since queue $\qp$ did not change during the transition, it must be
            $\Session_1(\p) = (\nu_0,\S_0)$ for some $\nu_0,\S_0$ such that:
            \(
            (\nu_0,\S_0) \Trans:{!\Msg'} \Cfg
            \)
            and $\Session_1(\pq) = \Que'$ and $\Session'_1(\pq) = \Que';
              \Msg'$.
            By~\Cref{lem:tp_wf_cfg_safe_mc}, $(\nu_0,\S_0) \Trans|:{?\Msg}$
            and hence $\Session$ was not balanced in the first place: contradiction.
      \item $\Session_2(\p) = \Cfg$, $\Session'_1(\qp) = \Msg;\Que$ and
            $\Session_1(\qp) \neq \Session'_1(\qp)$. Then $\Que = \Que';\Msg'$ and $\Session_1(\q) = (\nu_0,\S_0)$, $\Session'_1(\q) = (\nu_0',\S'_0)$ and
            $\Session_1(\qp) = \Msg;\Que'$ with:
            \(
            (\nu_0,\S_0) \Trans:{!\Msg'} (\nu_0',\S'_0)
            \).
            By~\Cref{case:bal_session_missing_queue} of~\Cref{def:bal_session}, for some $\Que_0$:
            \(
            (\nu,\S,\Que)\Compat(\nu_0,\S_0,\Que_0)
            \).
            by rule \Rule{com}[r]* %
            in~\Cref{eq:type_semantics_system}:
            \[
              {(\nu,\S,\Msg;\Que')\mid(\nu_0,\S_0,\Que_0)}%
              \Trans:{\tau}%
              {(\nu,\S,\Msg;\Que)\mid(\nu'_0,\S'_0,\Que_0)}
            \]

            \noindent By~\Cref{lem:tp_cfg_trans_preserve_compat}:
            \[
              (\nu,\S,\Msg;\Que)\Compat(\nu_0,\S_0,\Que_0)
            \]

            \noindent Hence $(\nu,\S) \Trans:{?\Msg} (\nu',\S')$ for some $(\nu',\S')$ such
            that $(\nu',\S',\Que)\Compat(\nu'_0,\S'_0,\Que_0)$.
            It remains to show that $\Session',{\p:\Cfg'},{\qp:\Que}$ is \bal\ by~\Cref{def:bal_session}.
            For~\Cref{case:bal_session_missing_queue,case:bal_session_compat} of~~\Cref{def:bal_session} it surely holds.
            Regarding~\Cref{case:bal_session_recv} of~\Cref{def:bal_session}, it follows by a simple induction on the length of $\Que$.
            \qedhere
    \end{itemize}
  \end{caseanalysis}
\end{proof}

\subsection{Well-typed Processes}

\begin{lem}[Inversion Lemma]\label{lem:tc_inversion}
  Let \Session\ be \wf.
  \Cref{claim:tc_inversion_send,claim:tc_inversion_recv,claim:tc_inversion_branch,claim:tc_inversion_timeout,claim:tc_inversion_timer,claim:tc_inversion_det,claim:tc_inversion_delay,claim:tc_inversion_if,claim:tc_inversion_par,claim:tc_inversion_scope,claim:tc_inversion_def,claim:tc_inversion_def_par,claim:tc_inversion_rec,claim:tc_inversion_empty_queue,claim:tc_inversion_neq} each hold.
\end{lem}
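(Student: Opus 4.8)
The plan is to prove each of \Cref{claim:tc_inversion_send,claim:tc_inversion_recv,claim:tc_inversion_branch,claim:tc_inversion_timeout,claim:tc_inversion_timer,claim:tc_inversion_det,claim:tc_inversion_delay,claim:tc_inversion_if,claim:tc_inversion_par,claim:tc_inversion_scope,claim:tc_inversion_def,claim:tc_inversion_def_par,claim:tc_inversion_rec,claim:tc_inversion_empty_queue,claim:tc_inversion_neq} separately, in each case by induction on the derivation of the typing judgement $\Gam,\Timers\Entails\P\TypedBy\Session$, with a case analysis on the last rule applied. The guiding observation is that the rules in~\RefTypeCheck\ are \emph{almost} syntax-directed: once the head constructor of $\P$ is fixed, exactly one structural rule can match it (e.g.\ \Rule{VSend} or \Rule{DSend} for a send, \Rule{Branch} or \Rule{Timeout} for a branching/timeout, \Rule{VRecv} or \Rule{DRecv} for a single reception, \Rule{Timer} for a timer set, \Rule{Del}[\delta] and \Rule{Del}[t] for the two delay forms, and so on), and the premises asserted by the corresponding claim are precisely the hypotheses of that rule. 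Thus, modulo the non-syntax-directed rules discussed below, inversion reduces to reading the premises off the conclusion.

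First I would set up the common template. Fixing a claim for a process of a given shape, the final rule of the derivation is either (i) the matching structural rule, in which case the claim's premises coincide with that rule's hypotheses and we are done immediately, or (ii) the weakening rule \Rule{Weak}, whose conclusion types the \emph{same} $\P$ against $\Session = \Session'',{\p:\Cfg,{\End}}$ from a strictly shorter subderivation $\Gam,\Timers\Entails\P\TypedBy\Session''$. In case (ii) the induction hypothesis yields the claim for $\Session''$, and I would then argue that prepending an ended configuration ${\p:\Cfg,{\End}}$ leaves every premise intact: the premises governing an action refer only to the configuration and queues that actually drive that action, and an ended configuration plays no part in them. Because \Rule{Weak} may be applied repeatedly, it is the induction (rather than a flat case split) that is essential here.

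The main obstacle is exactly the interaction between \Rule{Weak} and the claims whose rules carry \emph{universally quantified} or \emph{receive-urgency} premises --- namely the delay claims (rules \Rule{Del}[\delta] and \Rule{Del}[t]), \Cref{claim:tc_inversion_branch}, \Cref{claim:tc_inversion_timeout}, and \Cref{claim:tc_inversion_recv}. For these I must confirm that an added ${\p:\Cfg,{\End}}$ cannot create a new reader, and hence cannot disturb the ``$\Session$ not \t-reading'' side conditions of~\Cref{def:t_reading_session}: an ended configuration is final and never performs a receive, so it contributes nothing to either the $\t$-reading test or the urgency requirements, and the quantified subjudgements transfer verbatim. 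A secondary delicate point is \Cref{claim:tc_inversion_def_par}, where a recursion definition runs in parallel with another process: here inversion must commute the \Rule{Par} split of the session environment with the \Rule{Rec} analysis of the definition body, which again comes down to reading off the premises once the weakening prefix has been peeled away via the induction hypothesis.

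All remaining cases are routine rule inspection following the same template: the timer set (\Rule{Timer}), the conditional (\Rule{IfTrue} together with its symmetric \Rule{IfFalse}), scope restriction (\Rule{Res}, which additionally exposes the \compat\ and \wfness\ side conditions used downstream), the recursion call (\Rule{Var}), and the queue claims, both empty (\Rule{Empty}) and non-empty (\Rule{VQue}, \Rule{DQue}). In each the structural rule supplies the premises directly, and the weakening case is absorbed by the induction hypothesis exactly as above, so that \Cref{lem:tc_inversion} follows by assembling the fifteen claims.
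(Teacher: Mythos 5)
Your proposal is correct and matches the paper's approach: the paper's own proof simply states that the lemma ``follows by induction on each derivation, using the type-checking rules'', which is exactly the induction-on-derivation, rule-inspection argument you spell out. Your additional care with the non-syntax-directed \Rule{Weak} rule and its interaction with the \t-reading and urgency premises is a reasonable elaboration of details the paper leaves implicit.
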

\begin{clm}\label{claim:tc_inversion_par}\anno{par}
  If
  \(
  \Gam;\Timers\Entails
  \P\mid\Q
  \TypedBy\Session
  \), then \(
  \Session = (\Session_1,\Session_2)
  \) and \(
  \Timers = (\Timers_1,\Timers_2)
  \) and $\Gam;\Timers_1\Entails\P\TypedBy\Session_1$
  and $\Gam;\Timers_2\Entails\Q\TypedBy\Session_2$.
\end{clm}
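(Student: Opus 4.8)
The plan is to establish this claim by induction on the derivation of the typing judgement \(\Gam,\Timers\Entails\P\mid\Q\TypedBy\Session\), with a case analysis on the last rule applied. The first step is to observe that, since the subject of the judgement is syntactically a parallel composition, most of the rules in \RefTypeCheck\ are immediately ruled out because they are directed by the shape of the process: \Rule{End} requires the subject \Term, \Rule{Empty} requires a queue, \Rule{Res} a scope restriction, and each of \Rule{VSend}, \Rule{DSend}, \Rule{Branch}, \Rule{Timeout}, \Rule{VRecv}, \Rule{DRecv}, \Rule{IfTrue}, \Rule{Timer}, \Rule{Del}[\delta], \Rule{Del}[\t], \Rule{Rec} and \Rule{Var} requires its own process constructor. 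Hence the last rule can only be \Rule{Par} or \Rule{Weak}, and these are the only two cases to treat.

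If the last rule is \Rule{Par}, the claim is immediate: the conclusion is already of the form \(\Gam,\Timers_1,\Timers_2\Entails\P\mid\Q\TypedBy\Session_1,\Session_2\), and the two premises \(\Gam,\Timers_1\Entails\P\TypedBy\Session_1\) and \(\Gam,\Timers_2\Entails\Q\TypedBy\Session_2\) are precisely the desired split judgements, so I would take \(\Timers=\Timers_1,\Timers_2\) and \(\Session=\Session_1,\Session_2\).

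The only genuinely interesting case is \Rule{Weak}, and it is the reason the statement must be proved by induction rather than by a single inversion step: because \Rule{Weak} is not syntax-directed, a derivation of a parallel composition need not end in \Rule{Par}. Here \(\Session=\Session_0,{\p:\Cfg,{\End}}\) and the derivation ends with \Rule{Weak} applied to a strictly shorter derivation of \(\Gam,\Timers\Entails\P\mid\Q\TypedBy\Session_0\). Applying the induction hypothesis to that shorter derivation yields \(\Timers=\Timers_1,\Timers_2\), a splitting \(\Session_0=\Session_1,\Session_2\), and judgements \(\Gam,\Timers_1\Entails\P\TypedBy\Session_1\) and \(\Gam,\Timers_2\Entails\Q\TypedBy\Session_2\); I would then re-apply \Rule{Weak} to the first component to reattach the ended session, obtaining \(\Gam,\Timers_1\Entails\P\TypedBy\Session_1,{\p:\Cfg,{\End}}\), and deliver the decomposition with this augmented first component together with the unchanged \(\Session_2\). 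The one point requiring care — and the main (if modest) obstacle — is the domain-disjointness side condition implicit in writing \(\Session_1,\Session_2\): since \Rule{Weak} introduces \p\ fresh, \(\p\notin\Dom{\Session_0}\), so attaching \p\ to either component keeps the two domains disjoint and the reconstructed split well formed. With both possible last rules handled, the induction closes.
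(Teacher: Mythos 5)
Your proof is correct and follows exactly the route the paper intends: the paper dispatches the whole Inversion Lemma with ``induction on each derivation, using the type-checking rules'', and your case analysis (only \Rule{Par} and the non-syntax-directed \Rule{Weak} can conclude a judgement for $\P\mid\Q$, with \Rule{Weak} handled by the induction hypothesis and re-weakening one component) is precisely the standard argument being invoked. Your observation that the freshness of $\p$ in \Rule{Weak} keeps the reconstructed split domain-disjoint is the right detail to check, and it goes through because the comma notation for session environments already enforces disjointness.
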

\begin{clm}\label{claim:tc_inversion_scope}\anno{scope}
  If
  \(
  \Gam;\Timers\Entails
  \Scope{\pq}{\P}
  \TypedBy\Session
  \), then
  \(
  \Gam;\Timers\Entails
  \P
  \TypedBy\Session,{\p:\Cfg_1},{\qp:\Que_2},{\q:\Cfg_2},{\pq:\Que_2}
  \) and \(
  \Cfg;{\Que}_1\Compat\Cfg;{\Que}_2
  \) and both \Cfg_1\ and \Cfg_2\ are \wf.
\end{clm}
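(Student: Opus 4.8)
The plan is to establish the whole Inversion Lemma by a single induction on the height of the derivation of $\Gam,\Timers\Entails\P\TypedBy\Session$, proceeding by case analysis on the last rule applied; I describe the common pattern and instantiate it on the two claims shown. The guiding observation is that every rule in \RefTypeCheck\ is syntax-directed on \P\ except the weakening rule \Rule{Weak}, which attaches a single \End-typed role to the session environment without inspecting the process. Consequently, for a process of a fixed shape the derivation can end only with the (unique) rule matching that shape or with \Rule{Weak}, and the latter case is always discharged uniformly through the induction hypothesis. Note also that removing an \End-typed role from \Session\ keeps it \wf\ (by \Cref{def:wf_session} together with \Cref{lem:tp_end_wf}), so the induction hypothesis remains applicable under the standing \wfness\ assumption.

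For \Cref{claim:tc_inversion_par}, where the process is $\P\mid\Q$, the last rule is \Rule{Par} or \Rule{Weak}. If it is \Rule{Par}, the decomposition $\Session=(\Session_1,\Session_2)$, $\Timers=(\Timers_1,\Timers_2)$ and the two sub-judgements for \P\ and \Q\ are exactly the premises of the rule, so the claim is read off directly. If it is \Rule{Weak}, then \Session\ is a one-role \End-extension of some \Session'\ with $\Gam,\Timers\Entails\P\mid\Q\TypedBy\Session'$ derivable in fewer steps; the induction hypothesis splits \Session'\ and \Timers, and re-applying \Rule{Weak} to attach the deleted \End-role to either component recovers a split of the original \Session.

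For \Cref{claim:tc_inversion_scope}, where the process is $\Scope{\pq}{\P}$, the last rule is \Rule{Res} or \Rule{Weak}. The \Rule{Res} case is immediate: reading its premises yields exactly the extended judgement $\Gam,\Timers\Entails\P\TypedBy\Session,{\p:\Cfg_1},{\qp:\Que_1},{\q:\Cfg_2},{\pq:\Que_2}$, the compatibility $\Cfg;{\Que}_1\Compat\Cfg;{\Que}_2$, and the well-formedness of each configuration \Cfg_i\ (i.e.\ $\S_i$ well-formed against $\Val_i$ in the sense of \Cref{def:cfg_wf}). The \Rule{Weak} case follows the same template as above; one only notes that the bound endpoints \p\ and \q\ are distinct from the role introduced by weakening, so re-attaching the \End-role does not interfere with the scoped endpoints, and by \Cref{lem:tp_end_wf} an \End-typed role is always \wf\ and, having no peer queue, cannot affect the compatibility of \Cfg_1\ and \Cfg_2.

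I expect the main obstacle to be the bookkeeping around \Rule{Weak}: because weakening is not syntax-directed it can be the last step for any claim, so each claim must separately dispatch a \Rule{Weak} case, and in each one must check that restoring the deleted \End-role is consistent with the side-conditions the claim carries (the timer and session splits in \Cref{claim:tc_inversion_par}; the compatibility and well-formedness conditions in \Cref{claim:tc_inversion_scope}). Once this \Rule{Weak} pattern is fixed, the remaining claims of \Cref{lem:tc_inversion} are obtained by the same two-case analysis against their respective syntax-directed rules, and are routine.
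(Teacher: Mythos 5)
Your proposal is correct and takes essentially the same approach as the paper, whose entire proof of the Inversion Lemma is the one-line appeal to ``induction on each derivation, using the type-checking rules''. Your elaboration—reading the claim off the premises of \Rule{Res} directly, and isolating \Rule{Weak} as the only non-syntax-directed rule, handled by the induction hypothesis after observing that detaching an \End-typed role preserves \wfness\ and does not disturb the compatibility of \Cfg_1\ and \Cfg_2—is exactly the standard argument the paper is invoking.
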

\begin{clm}\label{claim:tc_inversion_def}\anno{def}
  If
  \(
  \Gam;\Timers\Entails
  \Def{\Rec*{\vec{v};\vec{r}}=\P}:{\Q}
  \TypedBy\Session
  \), then $\Gam,\Rec|;\Timers\Entails\Q\TypedBy\Session$ and:
  \[
    \forall (\vec{\Val},~\vec{\S})
    \in\FormatRecCheckStyle{\Delta},
    {\theta'}\in\FormatRecCheckStyle{\theta}:\quad %
    \Gam,
    {\vec{v}:\vec{T}},%
    {\Rec|}%
    ;\Timers'
    \Entails{\P}
    \TypedBy
    \vec{r}: (\vec{\Val},~\vec{\S})%
  \]
\end{clm}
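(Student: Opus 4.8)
The plan is to prove \Cref{claim:tc_inversion_def} by induction on the height of the typing derivation of $\Gam,\Timers\Entails \Def{\Rec*{\vec{v};\vec{r}}=\P}:{\Q}\TypedBy\Session$. The crucial observation is that, among all the \tc\ rules in~\RefTypeCheck, the only two whose conclusion can carry a subject of the shape $\Def{\Rec*{\vec{v};\vec{r}}=\P}:{\Q}$ are rule \Rule{Rec}, the syntax-directed rule for recursion definitions, and rule \Rule{Weak}, which is applicable to \emph{any} process. Hence the last rule applied in the derivation must be one of these two, and I analyse each case in turn.

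In the base case the derivation ends with rule \Rule{Rec}. Here I read the thesis straight off the two premises: the second premise is exactly $\Gam,{\Rec|},\Timers\Entails\Q\TypedBy\Session$, and the first premise is precisely the universally quantified judgment over all $(\vec{\Val},\vec{\S})\in\FormatRecCheckStyle{\Delta}$ and $\theta'\in\FormatRecCheckStyle{\theta}$ asserted in the claim. Nothing further is required in this case.

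In the inductive case the derivation ends with rule \Rule{Weak}, so $\Session=\Session_0,{\p:\Cfg,{\End}}$ for some $\Session_0$, and the premise is a strictly shorter derivation of $\Gam,\Timers\Entails \Def{\Rec*{\vec{v};\vec{r}}=\P}:{\Q}\TypedBy\Session_0$. To invoke the induction hypothesis I first note that $\Session_0$ is still \wf: the discarded component $\Cfg,{\End}$ is \wf\ by \Cref{lem:tp_end_wf}, and since \wfness\ of session environments is defined pointwise over roles (\Cref{def:wf_session}), deleting one role preserves it. The induction hypothesis then yields both the universal judgment (which is independent of the session environment and is therefore already the thesis) and $\Gam,{\Rec|},\Timers\Entails\Q\TypedBy\Session_0$. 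I recover the full environment by re-applying rule \Rule{Weak} to this last judgment, re-attaching the component ${\p:\Cfg,{\End}}$, which gives $\Gam,{\Rec|},\Timers\Entails\Q\TypedBy\Session$ as needed.

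The only genuine subtlety, and the reason inversion is not utterly immediate, is the non-syntax-directed rule \Rule{Weak}: without it the claim would follow by a single inspection of rule \Rule{Rec}. With it present I must thread the $\End$-extensions through the continuation judgment and check that weakening commutes with the recursion premise. This is routine, since \Rule{Weak} only ever appends an $\End$-typed role and leaves $\Gam$, $\Timers$, and the recursion-variable binding ${\Rec|}$ untouched, so the bookkeeping in the two premises matches exactly. I expect this same ``induction over trailing \Rule{Weak} applications'' schema to recur verbatim in the other clauses of the Inversion Lemma, so I would factor it out and reuse it rather than repeating it per case.
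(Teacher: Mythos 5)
Your proof is correct and follows the same route the paper takes: the paper disposes of the whole Inversion Lemma with a one-line appeal to induction on the typing derivation, and your argument is exactly that induction made explicit, correctly identifying \Rule{Rec} and \Rule{Weak} as the only rules whose conclusion can type a $\mathtt{def}$ process and handling the trailing \Rule{Weak} applications by re-weakening the continuation judgment.
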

\begin{clm}\label{claim:tc_inversion_def_par}\anno{def par}
  If
  \(
  \Gam,\Rec|;\Timers\Entails
  \Def{\Rec*{\vec{v};\vec{r}}=\P}:{\P'\mid\Q}
  \TypedBy\Session
  \), then $\Session=(\Session_1,\Session_2)$, $\Timers=(\Timers_1,\Timers_2)$ and:
  \begin{itemize}
    \item $\Gam,\Rec|;\Timers_1\Entails
            \Def{\Rec*{\vec{v};\vec{r}}=\P}:{\P'}
            \TypedBy\Session_1$.
    \item $\Gam,\Rec|;\Timers_2\Entails
            \Def{\Rec*{\vec{v};\vec{r}}=\P}:{\Q}
            \TypedBy\Session_2$.
  \end{itemize}
\end{clm}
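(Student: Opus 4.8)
The plan is to establish the claim by a short inversion argument that re-uses the two preceding claims of this lemma, followed by a single re-application of the typing rules. The process $\Def{\Rec*{\vec{v};\vec{r}}=\P}:{\P'\mid\Q}$ is a recursive definition, so the only rule that can have concluded the hypothesis is \Rule{Rec}; and its continuation $\P'\mid\Q$ is a parallel composition, so the only rule that can type it is \Rule{Par}, whose inversion is exactly \Cref{claim:tc_inversion_par}. First I would invert the \Rule{Rec} instance concluding the hypothesis, then split the continuation with \Cref{claim:tc_inversion_par}, and finally rebuild a definition around each parallel component.

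Concretely, inverting \Rule{Rec} on $\Gam,\Rec|,\Timers\Entails\Def{\Rec*{\vec{v};\vec{r}}=\P}:{\P'\mid\Q}\TypedBy\Session$ yields its two premises: (i) the \emph{body premise}, stating that for every $(\vec{\Val},\vec{\S})\in\FormatRecCheckStyle{\Delta}$ and $\theta'\in\FormatRecCheckStyle{\theta}$ the body $\P$ is well-typed against $\vec{r}:(\vec{\Val},\vec{\S})$ in the environment extended with $\vec{v}:\vec{T}$, $\Rec|$ and $\Timers'$; and (ii) the \emph{continuation premise} $\Gam,\Rec|,\Timers\Entails\P'\mid\Q\TypedBy\Session$ (the binding for the recursion variable being re-introduced, which is harmless since it coincides with the $\Rec|$ already in scope). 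The decisive observation is that premise (i) refers only to the header $\Rec*{\vec{v};\vec{r}}$ and body $\P$ of the definition and is completely independent of the continuation.

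Next I would apply \Cref{claim:tc_inversion_par} to premise (ii), obtaining $\Session=(\Session_1,\Session_2)$ and $\Timers=(\Timers_1,\Timers_2)$ together with $\Gam,\Rec|,\Timers_1\Entails\P'\TypedBy\Session_1$ and $\Gam,\Rec|,\Timers_2\Entails\Q\TypedBy\Session_2$; note that \Rule{Par} leaves the variable environment $\Gam,\Rec|$ shared between the two components. I would then re-apply \Rule{Rec} twice: for the first component, taking the shared body premise (i) as the first premise and $\Gam,\Rec|,\Timers_1\Entails\P'\TypedBy\Session_1$ as the second premise gives $\Gam,\Rec|,\Timers_1\Entails\Def{\Rec*{\vec{v};\vec{r}}=\P}:{\P'}\TypedBy\Session_1$; symmetrically for $\Q$ with $\Timers_2$ and $\Session_2$. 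This is precisely the thesis.

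The only delicate point --- and the nearest thing to an obstacle --- is the legitimacy of replicating the body premise (i) and the recursion binding $\Rec|$ across the two reconstructed \Rule{Rec} instances. This is sound exactly because both depend solely on the definition's header and body, not on its continuation, so they may be copied verbatim; and re-introducing the binding for the recursion variable is idempotent, since it is the same variable with the same associated triple $(\vec{T};\FormatRecCheckStyle{\theta};\FormatRecCheckStyle{\Delta})$. I would state this explicitly to justify that both reassembled judgments are well-formed. Everything else is routine bookkeeping, matching the environment split delivered by \Cref{claim:tc_inversion_par} against the shared $\Gam,\Rec|$.
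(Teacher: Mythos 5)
Your argument is correct and is essentially the paper's own proof: the paper dispatches the entire inversion lemma with ``standard induction on each derivation, using the type-checking rules'', and your inversion of \Rule{Rec}, split of the continuation premise via \Cref{claim:tc_inversion_par}, and reassembly of two \Rule{Rec} instances sharing the (continuation-independent) body premise is exactly that induction made explicit for this clause. The only minor inaccuracy is the assertion that \Rule{Rec} is the sole rule that can conclude the hypothesis --- \Rule{Weak} can as well --- but that case follows immediately from the induction hypothesis by absorbing the weakened $\mathtt{p}:(\nu,\mathtt{end})$ entry into $\Delta_1$ or $\Delta_2$, so it does not affect the soundness of your approach.
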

\begin{clm}\label{claim:tc_inversion_rec}\anno{rec}
  If
  \(
  \Gam;\Timers\Entails
  \Rec
  \TypedBy\Session
  \), then \(
  \Gam = \Gam',~\Rec|
  \)
  and:
  \[
    \forall i: \Gam'\Entails\vec{v}_i : \vec{T}_i
    \quad\text{and}\quad
    \Timers\in\FormatRecCheckStyle{\theta}
    \quad\text{and}\quad
    \Session=\vec{r}:(\vec{\nu},\vec{S})\in\FormatRecCheckStyle{\Delta}
  \]
\end{clm}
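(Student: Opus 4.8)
The plan is to prove all the claims of \Cref{lem:tc_inversion} simultaneously, by induction on the height of the typing derivation of the hypothesis, with a case analysis on the last typing rule applied (from those in \RefTypeCheck). The key observation is that every rule except \Rule{Weak} is syntax-directed: for a process of a fixed shape exactly one such rule can conclude the judgement, and its premises are precisely the decomposition asserted by the corresponding claim. So for each claim I would first treat the matching syntax-directed rule --- \Rule{Par} for \Cref{claim:tc_inversion_par}, \Rule{Res} for \Cref{claim:tc_inversion_scope}, \Rule{Rec} for \Cref{claim:tc_inversion_def}, and \Rule{Var} for \Cref{claim:tc_inversion_rec} --- and read off the premises, which already give the stated session/timer splittings together with the side-conditions. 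In particular, the compatibility $\Cfg;{\Que}_1\Compat\Cfg;{\Que}_2$ and the \wfness\ of $\Cfg_1,\Cfg_2$ in \Cref{claim:tc_inversion_scope} come directly from the two premises of \Rule{Res}, and the body-typing premise of \Cref{claim:tc_inversion_def} is exactly the first premise of \Rule{Rec}.

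The only non-syntax-directed rule is \Rule{Weak}, which appends an ended session $\p:\Cfg,{\End}$ to the environment without altering the process. Hence for every claim I must also handle the case where the last rule is \Rule{Weak}: here $\Session = \Session',{\p:\Cfg,{\End}}$ with $\Gam,\Timers\Entails\P\TypedBy\Session'$ derivable by a strictly shorter derivation, so the induction hypothesis applies to $\Session'$ and yields the required decomposition, after which I re-attach the weakened component by a single application of \Rule{Weak}. In the compound claims this forces a choice of which part of the decomposition receives the $\End$-session; for \Cref{claim:tc_inversion_par} I would place $\p:\Cfg,{\End}$ into $\Session_1$ and re-derive $\Gam,\Timers_1\Entails\P\TypedBy\Session_1,{\p:\Cfg,{\End}}$ by \Rule{Weak}. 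This is sound because, by the disjointness convention underlying the comma notation, $\p\notin\Dom{\Session'}$ and hence $\p$ is fresh for both $\Session_1$ and $\Session_2$, so the two halves still compose.

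The remaining claims reuse these. \Cref{claim:tc_inversion_def_par} follows by first applying \Cref{claim:tc_inversion_def} to strip the definition, obtaining a typing of $\P'\mid\Q$ under the augmented context carrying $\Rec|$, then applying \Cref{claim:tc_inversion_par} to split it into $\Session_1,\Timers_1$ and $\Session_2,\Timers_2$, and finally re-wrapping each component with \Rule{Rec}; the last step is legitimate because the body-typing premise of \Rule{Rec} is independent of $\Session$ and $\Timers$ and may therefore be replayed unchanged for each half of the split. All the claims listed in \Cref{lem:tc_inversion} but not displayed in this excerpt (those concerning send, receive, branching, timeout, timer, delay, determinisation, conditionals, and queues) are handled by the same two-case pattern: match the unique syntax-directed rule to read off the decomposition, then close the \Rule{Weak} case via the induction hypothesis.

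I expect the main obstacle to be the disciplined threading of \Rule{Weak} through every case, especially the bookkeeping in the compound claims (\Cref{claim:tc_inversion_par,claim:tc_inversion_def_par}), where one must verify that the weakened $\End$-session is inserted into a well-formed position and that freshness of $\p$ is preserved so that $\Session_1,{\p:\Cfg,{\End}}$ and $\Session_2$ remain composable and, in the scope case, so that the compatibility requirement on the $\p,\q$ pair introduced by \Rule{Res} is undisturbed. A secondary subtlety is that some syntax-directed rules (notably \Rule{Branch} and the delay rules) carry premises quantified over branch indices or over all $\t\in\e$, so the reading-off step must preserve these quantifiers; this introduces no real difficulty, since each claim is stated at the same quantificational level as its governing rule.
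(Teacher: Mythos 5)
Your proposal is correct and matches the paper's proof, which is given only as ``by induction on each derivation, using the type-checking rules'': for \Cref{claim:tc_inversion_rec} the decomposition is read off directly from the premises of the syntax-directed rule \Rule{Var}, with \Rule{Weak} as the only other case. The one point worth flagging is that for this particular claim the \Rule{Weak} case cannot literally ``re-attach'' the ended session, since the conclusion requires $\Session=\vec{r}:(\vec{\nu},\vec{S})\in\FormatRecCheckStyle{\Delta}$ exactly; this is a wrinkle in the lemma statement itself rather than in your argument, and the paper glosses over it in the same way.
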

\begin{clm}\label{claim:tc_inversion_send}\anno{send}\TODO{update}
  If
  \(
  \Gam;\Timers\Entails
  \On{\p}\Send{\l}[\v].{\P}
  \TypedBy\Session
  \), then \(
  \Session = \Session',{\p:\Cfg,{\Choice{\l}[\T]_i}}
  \) and \NEQ{\Session}0\ and 
  $\exists j\in I$ such that
  $\Val\models\Constraints_j$ and:
  \begin{itemize}
    \item $\T_j~\text{\bt} \implies \Gam;\Timers\Entails\P\TypedBy\Session',{\p:\Cfg{\Val\Reset_j},{\S_j}}$ and $\Gam\Entails\v:\T_j$.
    \item $\T_j=(\delta',\S') \implies \Session'=\Session'{2},{\v:\Cfg'}$ and $\Val'\models\Constraints'$ and {{\small$\Gam;\Timers\Entails\P\TypedBy\Session'{2},{\p:\Cfg{\Val\Reset_j},{\S_j}}$}}.
  \end{itemize}
\end{clm}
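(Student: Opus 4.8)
The plan is to prove \Cref{claim:tc_inversion_send} by induction on the derivation of $\Gam,\Timers\Entails\On{\p}\Send{\l}[\v].{\P}\TypedBy\Session$, with a case analysis on the last typing rule applied. Because the subject is syntactically a select/send process, inspection of the rules in \RefTypeCheck\ shows that only \Rule{VSend}, \Rule{DSend}, and the non-syntax-directed rule \Rule{Weak} can have a conclusion of this shape; every remaining rule concludes a judgement for a structurally different process (a queue, a parallel or scoped process, a branch or timeout, a conditional, a recursion definition or call), so those cases do not arise.

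First I would dispatch the two base cases. If the last rule is \Rule{VSend}, its conclusion already forces $\Session=\Session',{\p:\Cfg,{\C_i}}$, and its premises deliver exactly the data in the first bullet: an index $j\in I$ with $\C*_j=\Option!-{\l}[\T]{\delta}[\lambda].{\S}$, the guard $\Val\models\Constraints$, the payload typing $\Gam\Entails\v:\T$ for a base type, and the continuation judgement $\Gam,\Timers\Entails{\P}\TypedBy\Session',{\p:\Cfg{\Val\Reset},{\S}}$. The case \Rule{DSend} is analogous when the payload $\v$ is a delegated endpoint: the conclusion carries an extra assignment so $\Session'$ splits as $\Session'{2},{\v:\Cfg'}$, the premise gives $\T=\Del'$ with $\Val'\models\Constraints'$, and the continuation is typed against $\Session'{2},{\p:\Cfg{\Val\Reset},{\S}}$, which is precisely the second bullet. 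Since both bullets are implications, the bullet not selected in a given case holds vacuously (its hypothesis fails), so each base case discharges the full claim.

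The inductive step is \Rule{Weak}, which I expect to be the only genuine obstacle. Here $\Session=\Session_0,{\q:\Cfg,{\End}}$ and the immediate sub-derivation proves $\Gam,\Timers\Entails\On{\p}\Send{\l}[\v].{\P}\TypedBy\Session_0$. Applying the induction hypothesis to this shorter derivation decomposes $\Session_0$ as required and yields the index $j$, the guards, and the relevant continuation judgement(s) over the corresponding sub-environment of $\Session_0$. I would then re-attach the ended session by a single application of \Rule{Weak} to each continuation judgement; since \Rule{Weak} leaves $\Gam$, $\Timers$, the chosen option $\C*_j$, and all constraints untouched, the numerical side-conditions $\Val\models\Constraints$ (and $\Val'\models\Constraints'$ in the delegation case) carry over unchanged, and the decomposition of $\Session$ is obtained by adjoining ${\q:\Cfg,{\End}}$ to the sub-environment supplied by the hypothesis.

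Finally, the condition $\NEQ{\Session}0$ must be threaded through all three cases. An endpoint contributes to $\NEQ{\Session}$ only through a non-empty queue assignment, and such assignments are introduced solely by \Rule{VQue} and \Rule{DQue}, whose conclusions type a queue process $\qp:\h$ rather than a send. Under the standing well-formedness assumption on processes (\Cref{def:func_wf}), a send's continuation has no free queues, so $\Session$ can contain no non-empty queue entry; moreover \Rule{Weak} only adjoins an ended role with an empty queue, so the property is preserved by the inductive step. Hence $\NEQ{\Session}0$ holds uniformly, completing the argument.
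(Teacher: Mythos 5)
Your proposal is correct and matches the paper's approach: the paper proves the whole inversion lemma with the one-line remark that it ``follows by induction on each derivation, using the \tc\ rules'', and your case analysis on \Rule{VSend}, \Rule{DSend}, and \Rule{Weak} (with the observation that no other rule can conclude a judgement for a send process, and that $\NEQ{\Session}0$ follows from the standing well-formedness assumption barring free queues in the continuation) is exactly the expected elaboration of that remark.
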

\begin{clm}\label{claim:tc_inversion_recv}\anno{recv}\TODO{update}
  If
  \(
  \Gam;\Timers\Entails
  \On{\p}\Recv*^{\e}{\l}[\v]:{\P}
  \TypedBy\Session
  \), then \(
  \Session = \Session',{\p:\Cfg,{\Option?-{\l}[\T]{\delta}[\lambda].{\S}}}
  \) and
  \Session'\ not $\e$-reading and:
  \begin{itemize}
    \item $\forall \t$ such that $\Val+\t\models\Constraints \Bimplies \t\in\e$
    \item $\forall \t\in\e$:
          \begin{itemize}
            \item $\T=\Del' \implies \Gam;\Timers+\t\Entails\P\TypedBy\Session'+\t,{\p:\Cfg{\Val+\t\Reset},{\S}},{\v:\Cfg'}$ and $\Val'\models\Constraints'$.
            \item $\T~\text{\bt} \implies \Gam,\v:\T;\Timers+\t\Entails\P\TypedBy\Session'+\t,{\p:\Cfg{\Val+\t\Reset},{\S}}$.
          \end{itemize}
  \end{itemize}
\end{clm}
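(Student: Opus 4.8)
The plan is to induct on the typing derivation, using that the rules in \RefTypeCheck\ are \emph{almost} syntax-directed: for each syntactic shape of the subject only a fixed set of structural rules can conclude a judgement for it, together with the one non-syntax-directed rule \Rule{Weak}, which applies to any process and merely pads the session with a fresh role typed \End. The last clause stated, \Cref{claim:tc_inversion_recv}, concerns a single reception $\On{\p}\Recv*^{\e}{\l}[\v]:{\P}$, and the same template serves every clause. For a single reception the structural candidates are \Rule{VRecv}, \Rule{DRecv}, \Rule{Branch} and \Rule{Timeout}; the last is excluded here by the absence of an $\mathtt{after}$ clause, and the \Rule{Branch} alternative is separated off into the sibling clause \Cref{claim:tc_inversion_branch}, leaving \Rule{VRecv}, \Rule{DRecv} and \Rule{Weak} to consider. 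Distinguishing \Rule{VRecv}/\Rule{DRecv} from \Rule{Branch} relies on the single-option type together with pairwise distinctness of labels, so that the lone process branch can match at most one enabled option.

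First I would dispatch the structural base cases. If the last rule is \Rule{VRecv} or \Rule{DRecv}, its premises are, up to trivial rearrangement, exactly the disjuncts of the claim: the ``not \e-reading'' condition on the residual environment, the equivalence $\forall\t:\Val+\t\models\Constraints\Bimplies\t\in\e$, and the per-delay continuation judgement, with $\v:\T$ added in the base-type case or the delegated endpoint added in the delegation case. So the conclusion is read straight off the rule. The standing assumption that \Session\ is \wf\ is used at this point, via \Cref{lem:tp_wf_s}, to guarantee that the receiving role's type really is a (single-option) choice rather than an unexpanded recursion, so that a reception rule genuinely applies.

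The inductive case carries the real content and I would discharge it uniformly for all clauses. Suppose the last rule is \Rule{Weak}, so the environment is some residual $\Session$ extended by a single fresh role typed \End, and a strictly shorter derivation types the same subject against the residual alone. Applying the induction hypothesis to that sub-derivation yields the claim relative to the residual; I then restore the weakened \End-role and re-apply \Rule{Weak} to each continuation judgement named in the conclusion. Soundness of this lifting rests on three facts: a configuration typed \End\ can never fire a receive transition, so padding preserves every ``not \e-reading'' and queue-emptiness side condition; for \Cref{claim:tc_inversion_par} the \End-role can be attached to either block of the split while keeping the domains disjoint; and for \Cref{claim:tc_inversion_scope} the weakened role lies outside the scoped pair, leaving the compatibility premise and the well-formedness of the two scoped configurations intact.

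I expect the principal obstacle to be bookkeeping, not insight: in each clause one must verify that the decomposition returned by the induction hypothesis under \Rule{Weak} recombines into precisely the shape asserted, and that every side condition---compatibility, domain-disjointness, the \e-reading predicate, and membership in the recursion-trace sets $\FormatRecCheckStyle{\theta}$ and $\FormatRecCheckStyle{\Delta}$---survives padding by an \End-typed role. The other point requiring care is disambiguating the structural rules that share a subject shape: \Rule{VSend} versus \Rule{DSend} by the value's type in \Cref{claim:tc_inversion_send}, and \Rule{Branch}, \Rule{Timeout}, \Rule{VRecv}, \Rule{DRecv} for receive-like subjects by the number of options and the presence of an $\mathtt{after}$ clause. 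Once the \Rule{Weak} step is settled in this uniform way, the remaining clauses follow by the same two-case template with no further ideas.
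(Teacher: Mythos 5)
Your proposal is correct and takes essentially the same route as the paper, whose entire proof of the inversion lemma is the one-line remark that it follows ``as standard, by induction on each derivation'' over the rules in \RefTypeCheck. Your elaboration --- reading the conclusion directly off the premises of \Rule{VRecv}/\Rule{DRecv} in the syntax-directed base cases, and handling the lone non-syntax-directed rule \Rule{Weak} by applying the induction hypothesis and re-padding every continuation judgement with the \End-typed role (which cannot fire a receive, so the not-$\e$-reading side condition survives) --- is precisely the standard argument the paper is appealing to.
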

\begin{clm}\label{claim:tc_inversion_branch}\anno{branch}
  If
  \(
  \Gam;\Timers\Entails
  \On{\p}\Recv^{\e}{\l}[\v]:{\P}_i
  \TypedBy\Session
  \), then
    {{\small$\Session = \Session',{\p:\Cfg,{\Choice{\l}[\T]_j}}$}} and
  \BranchCardinalityPremise\ and,
  $\forall j\in J$ such that $\Val\models\Constraints_j$ implies:
  \[
    \Comm_j=?
    \quad\text{and}\quad
    \Bigl(
    \exists i\in I:
    \Gam;\Timers\Entails\On{\p}\Recv*^{\e}{\l}[\v]:{\P}_i\TypedBy\Session',{\p:\Cfg,{\Choice*{\l}[\T]_j}}
    \text{\ and\ }
    \l_i=\l_j
    \Bigr)
  \]
\end{clm}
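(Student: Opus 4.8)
The plan is to prove Claim~\ref{claim:tc_inversion_branch} as one case of the simultaneous induction that establishes the whole Inversion Lemma~\ref{lem:tc_inversion}: I would fix the derivation of $\Gam,\Timers\Entails\On{\p}\Recv^{\e}{\l}[\v]:{\P}_i\TypedBy\Session$ and argue by induction on its height, inspecting the last typing rule applied. The decisive syntactic observation is that the subject is a \emph{set-indexed} reception $\On{\p}\Recv^{\e}{\l}[\v]:{\P}_i$, which is distinct both from the single-option receptions $\On{\p}\Recv*^{\e}{\l}[\v]:{\P}$ governed by \Rule{VRecv} and \Rule{DRecv}, and from the timeout process governed by \Rule{Timeout} (which carries an $\mathtt{after}$ continuation). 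Consequently only two rules in~\Cref{fig:type_checking,fig:type_checking_communication} can have this exact term in their conclusion: the communication rule \Rule{Branch} and the structural weakening rule \Rule{Weak}.

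In the base case the last rule is \Rule{Branch}. Reading its conclusion off directly gives $\Session=\Session',{\p:\Cfg,{\Choice{\l}[\T]_{j}}}$, and its two premises are verbatim the two conjuncts demanded by the thesis: the side condition $\neg(\lvert J\rvert=\lvert I\rvert=1)$, and, for every $j\in J$ with $\Val\models\Constraints_j$, both that the direction is receiving, $\Comm_j={?}$, and that some $i\in I$ yields a well-typed single-reception judgement $\Gam,\Timers\Entails\On{\p}\Recv*^{\e}{\l}[\v]:{\P}_i\TypedBy\Session',{\p:\Cfg,{\Choice*{\l}[\T]_{j}}}$. Hence the claim holds with nothing further to check.

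In the inductive case the last rule is \Rule{Weak}, so $\Session$ is obtained from a strictly shorter judgement $\Gam,\Timers\Entails\On{\p}\Recv^{\e}{\l}[\v]:{\P}_i\TypedBy\Session_0$ by appending one role whose type is $\End$. Applying the induction hypothesis to this premise gives $\Session_0=\Session_0',{\p:\Cfg,{\Choice{\l}[\T]_{j}}}$ together with the per-$j$ premises above. Taking $\Session'$ to be $\Session_0'$ extended by the appended ended role yields $\Session=\Session',{\p:\Cfg,{\Choice{\l}[\T]_{j}}}$; since that role is distinct from $\p$ and carries type $\End$, neither the side condition nor any of the direction requirements $\Comm_j={?}$ are disturbed. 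For each matching $i$ I then re-establish the single-reception witness against the enlarged environment $\Session',{\p:\Cfg,{\Choice*{\l}[\T]_{j}}}$ by one further application of \Rule{Weak} to the witness supplied by the induction hypothesis.

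The only delicate point is this last step: I must confirm that the same ended role can be re-attached uniformly to every per-$j$ witness so that the residual environment $\Session'$ is genuinely common to all $j\in J$. This is immediate from the shape of \Rule{Weak}, which appends an ended role to an arbitrary judgement, but it is precisely the reason the claim threads a single $\Session'$ through both the top-level decomposition of $\Session$ and the witness subjudgements rather than allowing them to vary independently. All remaining typing rules are excluded purely by the syntactic shape of the subject, so no other cases arise, and the induction closes.
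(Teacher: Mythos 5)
Your proof is correct and matches the paper's approach: the paper dispatches the whole Inversion Lemma with ``induction on each derivation, using the type-checking rules,'' and your argument is exactly that induction spelled out for the branch case, correctly identifying \Rule{Branch} as the base case (where the thesis is read off the premises verbatim) and \Rule{Weak} as the only other applicable rule, handled by pushing the appended ended role through the induction hypothesis and re-weakening each per-$j$ witness. Your closing remark about threading a single common $\Session'$ through all witnesses is the right thing to check and is indeed immediate from the shape of \Rule{Weak}.
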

\begin{clm}\label{claim:tc_inversion_timeout}\anno{timeout}
  If
  \(
  \Gam;\Timers\Entails
  \On{\p}\Recv^{\Diam\n}{\l}[\v]:{\P}_i~\After:{\Q}
  \TypedBy\Session
  \), then
  $\Session = \Session',{\p:\Cfg,{\C_j}}$
  and:
  \begin{itemize}
    \item $\Gam;\Timers\Entails\On{\p}\Recv^{\Diam\n}{\l}[\v]:{\P}_i\TypedBy\Session',~{\p:\Cfg,{\C_j}}$.
    \item $\Gam;\Timers+\n\Entails\Q\TypedBy\Session'+\n,~{\p:\Cfg{\Val+\n},{\C_j}}$.
  \end{itemize}
\end{clm}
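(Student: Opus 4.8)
The plan is to prove \Cref{claim:tc_inversion_timeout} by induction on the derivation of the typing judgement $\Gam,\Timers\Entails\On{\p}\Recv^{\Diam\n}{\l}[\v]:{\P}_i~\After:{\Q}\TypedBy\Session$. The first step is to observe, by inspection of the type-checking rules in~\RefTypeCheck, that the only rules whose conclusion can match a process of the \recvafter\ shape $\On{\p}\Recv^{\Diam\n}{\l}[\v]:{\P}_i~\After:{\Q}$ are rule \Rule{Timeout} and rule \Rule{Weak}; every other rule is directed at a syntactically different process constructor (a send, a plain branch $\On{\p}\Recv^{\e}{\l}[\v]:{\P}_i$ without timeout, a single reception, a conditional, a parallel composition, a scope, a queue, a recursion definition or call, and so on). This confines the induction to these two cases.

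If the last rule applied is \Rule{Timeout}, then its conclusion already has the form $\Session_0,~{\p:\Cfg,{\C_j}}$, so I take $\Session'=\Session_0$, and the two premises of \Rule{Timeout} are \emph{verbatim} the two conjuncts to be established. This discharges the case at once.

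If the last rule applied is \Rule{Weak}, it appends to the environment a fresh role, say $r$, carrying the terminated type \End; thus $\Session=\Session_0,~{r:\Cfg,{\End}}$ while the premise types the \emph{same} process against $\Session_0$. Because the decomposition we seek isolates a role carrying a \emph{choice} type $\C_j$, whereas the weakened role carries \End, the two are necessarily distinct, i.e.\ $r\neq\p$. Applying the induction hypothesis to the premise yields a splitting $\Session_0=\Session_1,~{\p:\Cfg,{\C_j}}$ together with the branching judgement over $\Session_1,~{\p:\Cfg,{\C_j}}$ and the timeout-continuation judgement over $\Session_1+\n,~{\p:\Cfg{\Val+\n},{\C_j}}$. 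I then set $\Session'=\Session_1,~{r:\Cfg,{\End}}$, so that $\Session=\Session',~{\p:\Cfg,{\C_j}}$ as required, and re-apply \Rule{Weak} to each of the two sub-judgements to reintroduce $r$.

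The delicate point, and the step I expect to be the main obstacle, is the second conjunct in the \Rule{Weak} case: the re-weakening there must land on exactly $\Session'+\n,~{\p:\Cfg{\Val+\n},{\C_j}}$, which forces me to check that the terminated session introduced by weakening is left undisturbed by the time shift $+\n$. This is precisely the fact that \End\ is a fixed point of time passing on session environments---the valuation of $r$ shifts from \Val\ to $\Val+\n$ but its type remains \End---so that rule \Rule{Weak}, which may introduce an \End-typed role at any valuation, applies equally before and after the shift, giving $\Session+\n=\Session_1+\n,~{r:\Cfg{\Val+\n},{\End}},~{\p:\Cfg{\Val+\n},{\C_j}}$ and closing the case. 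Apart from this bookkeeping, the argument is a routine last-rule analysis, and the standing assumption that \Session\ is \wf\ (so that every configuration mentioned is well-formed) is carried along unchanged.
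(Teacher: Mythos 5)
Your proposal is correct and matches the paper's intent: the paper's own proof of the inversion lemma is simply ``as standard, by induction on each derivation,'' and your last-rule analysis --- only \Rule{Timeout} and \Rule{Weak} can conclude a \recvafter\ judgement, the former yielding the two conjuncts verbatim from its premises, the latter handled by the induction hypothesis followed by re-weakening (using that an \End-typed role can be introduced at any valuation, hence also after the shift by $\n$) --- is exactly what that standard argument amounts to. No gap.
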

\begin{clm}\label{claim:tc_inversion_timer}\anno{timer}
  If
  \(
  \Gam;\Timers\Entails
  \Set{x}.{\P}
  \TypedBy\Session
  \),
  then $\Timers=\Timers',\circled{x}:\t$
  and
  $ \Gam;\Timers',\circled{x}:0\Entails{\P}\TypedBy\Session$.
\end{clm}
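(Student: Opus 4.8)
The plan is to establish this claim---the \(\Set{}\)-case of the Inversion Lemma---by induction on the derivation of the typing judgement \(\Gam,\Timers\Entails\Set{x}.{\P}\TypedBy\Session\), proceeding by a case analysis on the last rule applied. First I would note that the shape of the subject strongly constrains which rules can conclude such a judgement: inspecting the type-checking rules in~\RefTypeCheck, every rule other than \Rule{Weak} fixes a concrete top-level process constructor in the subject of its conclusion (a send, a branch, a timeout, a delay, a conditional, a recursion, a queue, and so on), and none of these constructors is \(\Set{}\). Hence only two rules can be the last rule used: \Rule{Timer}, the rule tailored to the \(\Set{}\) construct, and \Rule{Weak}, whose conclusion is polymorphic in the process.

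If the last rule is \Rule{Timer}, the thesis follows immediately. Its conclusion has the form \(\Gam,\Timers',x:\t\Entails\Set{x}.{\P}\TypedBy\Session\) with premise \(\Gam,\Timers',x:0\Entails{\P}\TypedBy\Session\); matching against the hypothesis forces \(\Timers=\Timers',x:\t\), and the premise is exactly the required judgement. This is precisely where the design decision behind \Rule{Timer}---that \(x\) must already be present in the timer environment---surfaces, since it is recorded in the split \(\Timers=\Timers',x:\t\).

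If the last rule is \Rule{Weak}, then \(\Session=\Session',~{\p:\Cfg,{\End}}\) for some \(\Session'\), \(\p\) and \(\Cfg\), and the premise is \(\Gam,\Timers\Entails\Set{x}.{\P}\TypedBy\Session'\). Here I would appeal to the induction hypothesis on this (shorter) derivation to obtain \(\Timers=\Timers',x:\t\) together with \(\Gam,\Timers',x:0\Entails{\P}\TypedBy\Session'\), and then re-apply \Rule{Weak} to the latter to reattach the terminated role, yielding \(\Gam,\Timers',x:0\Entails{\P}\TypedBy\Session',~{\p:\Cfg,{\End}}=\Session\), as required. I do not expect a genuine obstacle: the entire argument is the case split on the last rule, and the only subtlety worth flagging is that \Rule{Weak} (rather than \Rule{Timer} alone) must be threaded through the induction, since it is the one rule that leaves the subject process unconstrained and so could otherwise be overlooked.
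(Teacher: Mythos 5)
Your proof is correct and is exactly the argument the paper intends: the paper proves the entire Inversion Lemma with the one-line remark that it follows ``by induction on each derivation, using the type-checking rules'', and your case split on the last rule applied (\Rule{Timer} giving the result directly, \Rule{Weak} handled by the induction hypothesis and re-application) is the standard instantiation of that remark for the \(\Set{}\) case. Your observation that \Rule{Weak} is the only rule besides \Rule{Timer} whose conclusion can have \(\Set{x}.\P\) as subject is the right justification for the exhaustiveness of the case analysis.
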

\begin{clm}\label{claim:tc_inversion_det}\anno{det}
  If
  \(
  \Gam;\Timers\Entails
  \Delay{\Duration}.{\P}
  \TypedBy\Session
  \), then $\forall \t\in\Duration: \Gam;\Timers\Entails\Delay{\t}.{\P}\TypedBy\Session$.
\end{clm}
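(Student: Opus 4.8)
The plan is to prove \Cref{claim:tc_inversion_det} by induction on the derivation of the typing judgement $\Gam,\Timers\Entails\Delay{\Constraints}.{\P}\TypedBy\Session$, with a case analysis on the last typing rule applied. Since the subject of the judgement is syntactically a non-deterministic delay $\Delay{\Constraints}.{\P}$, the only rules in \RefTypeCheck\ whose conclusion can match this shape are \Rule{Del}[\delta] (the syntax-directed rule for $\Delay{\Constraints}.{\P}$) and the weakening rule \Rule{Weak} (which applies to an arbitrary process and does not alter its syntactic form). Every other rule is excluded because its conclusion mentions a different process constructor; in particular \Rule{Del}[\t] does not apply, since its subject $\Delay{\t}.{\P}$ carries a concrete delay $\t\in\SetRealZ$ rather than a constraint \Constraints.

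In the principal case the last rule is \Rule{Del}[\delta]. Its conclusion is exactly the hypothesis $\Gam,\Timers\Entails\Delay{\Constraints}.{\P}\TypedBy\Session$, and its unique premise is $\forall \t\in\Constraints:\, \Gam,\Timers\Entails\Delay{\t}.{\P}\TypedBy\Session$, which is precisely the thesis. Hence the claim follows by reading off the premise of the rule.

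In the remaining case the last rule is \Rule{Weak}, so $\Session = \Session',~{\p:\Cfg,{\End}}$ for some $\Session'$, and the premise is $\Gam,\Timers\Entails\Delay{\Constraints}.{\P}\TypedBy\Session'$. Applying the induction hypothesis to this strictly shorter derivation gives $\forall \t\in\Constraints:\, \Gam,\Timers\Entails\Delay{\t}.{\P}\TypedBy\Session'$, and re-applying \Rule{Weak} to each such judgement restores the \End-typed endpoint, yielding $\forall \t\in\Constraints:\, \Gam,\Timers\Entails\Delay{\t}.{\P}\TypedBy\Session',~{\p:\Cfg,{\End}}$; since $\Session = \Session',~{\p:\Cfg,{\End}}$, this is the thesis. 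The proof is essentially routine, and the only point requiring care is that typing is not fully syntax-directed: \Rule{Weak} may be interleaved arbitrarily, so the mild obstacle is to thread the induction through it rather than asserting outright that \Rule{Del}[\delta] was the last rule. This case is discharged uniformly by the same induction that handles the companion claims of \Cref{lem:tc_inversion}.
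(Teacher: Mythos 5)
Your proposal is correct and follows exactly the route the paper takes: the paper discharges the whole of Lemma~\ref{lem:tc_inversion} with ``induction on each derivation, using the type-checking rules'', and your case split between the syntax-directed rule \Rule{Del}[\delta] (whose premise is literally the thesis) and the non-syntax-directed \Rule{Weak} (threaded through the induction and re-applied) is the standard elaboration of that one-line argument. No gaps.
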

\begin{clm}\label{claim:tc_inversion_delay}\anno{delay}
  If
  \(
  \Gam;\Timers\Entails
  \Delay{\t}.{\P}
  \TypedBy\Session
  \), then \Session\ not \t-reading and $\Gam;\Timers+\t\Entails\P\TypedBy\Session+\t$.
\end{clm}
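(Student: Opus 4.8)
The plan is to prove all of \Cref{claim:tc_inversion_send,claim:tc_inversion_recv,claim:tc_inversion_branch,claim:tc_inversion_timeout,claim:tc_inversion_timer,claim:tc_inversion_det,claim:tc_inversion_delay,claim:tc_inversion_if,claim:tc_inversion_par,claim:tc_inversion_scope,claim:tc_inversion_def,claim:tc_inversion_def_par,claim:tc_inversion_rec,claim:tc_inversion_empty_queue,claim:tc_inversion_neq} simultaneously, by induction on the derivation of the given typing judgement, with a case analysis on the last rule applied. The key structural observation is that, for each claim, the process standing to the left of $\TypedBy$ has a \emph{fixed syntactic shape}, so by inspection of the rules in \RefTypeCheck\ the only rules that can conclude such a judgement are (i) the unique syntax-directed rule for that shape, and (ii) rule \Rule{Weak}, which applies to an arbitrary process. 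Thus every claim reduces to exactly two sub-cases.

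In sub-case (i) the conclusion of the claim is obtained by simply reading off the premises of the relevant rule. For example, \Cref{claim:tc_inversion_par} reads the split $\Session=(\Session_1,\Session_2)$, $\Timers=(\Timers_1,\Timers_2)$ and the two component judgements straight from the premises of \Rule{Par}; \Cref{claim:tc_inversion_branch} reads off the side condition $\neg(|J|=|I|=1)$ together with the per-branch judgements from the premises of \Rule{Branch}; \Cref{claim:tc_inversion_scope} reads the extended judgement, the compatibility $\Cfg;{\Que}_1\Compat\Cfg;{\Que}_2$ and the well-formedness of $\Cfg_1,\Cfg_2$ directly from \Rule{Res}; and \Cref{claim:tc_inversion_timeout} reads the two decomposed judgements from \Rule{Timeout}. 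The claims about $\Set{x}.\P$, $\Delay{\Constraints}.\P$, $\Delay{\t}.\P$, the conditional, and the recursion forms \Rec\ and $\Def{\Rec*{\vec{v};\vec{r}}=\P}:{\Q}$ are discharged the same way, each having a single syntax-directed rule. \Cref{claim:tc_inversion_def_par} is slightly less immediate: one first inverts the definition via \Cref{claim:tc_inversion_def} to type the body $\P'\mid\Q$, then applies \Cref{claim:tc_inversion_par} to split it, and finally re-wraps each half with \Rule{Rec}.

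Sub-case (ii) is handled uniformly for every claim by applying the induction hypothesis and re-weakening. If the last rule is \Rule{Weak}, the conclusion has the form $\Gam,\Timers\Entails\P\TypedBy\Session',{\p:\Cfg,{\End}}$ with premise $\Gam,\Timers\Entails\P\TypedBy\Session'$ over the \emph{same} process $\P$ and a strictly smaller environment. Applying the induction hypothesis of the claim to this premise yields the claim's structural conclusion for $\Session'$, and we recover the conclusion for $\Session',{\p:\Cfg,{\End}}$ by a further application of \Rule{Weak}. For the claims that decompose the session environment (\Cref{claim:tc_inversion_par,claim:tc_inversion_scope,claim:tc_inversion_send,claim:tc_inversion_recv,claim:tc_inversion_def_par}) we insert the re-added ended endpoint $\p:\Cfg,\End$ into one of the decomposed components and apply \Rule{Weak} to that component; this is legitimate because $\Cfg,\End$ is well-formed by \Cref{lem:tp_end_wf} and, being a dangling ended role distinct from the pair named in the claim, it does not interfere with any compatibility or well-formedness side condition carried by the conclusion.

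The main obstacle is precisely the uniform treatment of \Rule{Weak}: it is the sole non-syntax-directed rule and applies to every process, so it must be dispatched in each of the fifteen claims rather than in a single place. The only genuinely delicate point is in the session-splitting claims --- most visibly \Cref{claim:tc_inversion_scope}, whose conclusion asserts $\Cfg;{\Que}_1\Compat\Cfg;{\Que}_2$ and well-formedness of $\Cfg_1,\Cfg_2$ --- where one must check that re-inserting the weakened ended session into a chosen component preserves these side conditions; this is exactly what \Cref{lem:tp_end_wf} (together with the trivial compatibility of an ended configuration with an empty queue, from \Cref{def:cfg_compat}) supplies. All remaining bookkeeping is routine.
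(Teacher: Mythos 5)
Your proposal is correct and follows essentially the same route as the paper, which simply states that the inversion lemma ``follows by induction on each derivation, using the \tc\ rules'': you carry out that standard induction, with the syntax-directed rule giving the claim directly and \Rule{Weak} dispatched via the induction hypothesis and re-weakening. For the \(\Delay{\t}.\P\) claim in particular, both the side condition (\Session\ not \t-reading, preserved under adding an ended endpoint since \End\ has no receive transitions) and the shifted judgement are read off from \Rule{Del}[\t] exactly as you describe.
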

\begin{clm}\label{claim:tc_inversion_if}\anno{if}
  If
  \(
  \Gam;\Timers\Entails
  \If{\Cond}\Then{\P}~\Else{\Q}
  \TypedBy\Session
  \), then:
  \begin{itemize}
    \item $\Timers\models\Cond \implies \Gam;\Timers\Entails{\P}\TypedBy\Session$,
    \item $\Timers\not\models\Cond \implies \Gam;\Timers\Entails{\Q}\TypedBy\Session$.
  \end{itemize}
\end{clm}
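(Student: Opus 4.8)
The plan is to prove all of the claims at once by a single induction on the derivation of the hypothesised typing judgement $\Gam,\Timers\Entails\P\TypedBy\Session$, performing a case analysis on the last rule applied among those in \RefTypeCheck. The organising observation is that the typing rules are \emph{syntax-directed on the subject process} $\P$: each process constructor appears as the subject of exactly one rule conclusion, so once the shape of $\P$ is fixed only that rule — together with the single non-syntax-directed rule \Rule{Weak} — can conclude the judgement. Hence every claim reduces to two sub-cases: (i) the last rule is the syntax-directed rule for the shape of $\P$, in which case the decomposition of $\Session$ and the side conditions listed in the claim are literally the premises of that rule and are read off directly; or (ii) the last rule is \Rule{Weak}, which is handled uniformly by the induction hypothesis as described next. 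Because \Rule{Weak} leaves $\P$ unchanged, an induction on the structure of $\P$ would not terminate, which is precisely why the induction is taken on the derivation.

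For sub-case (ii), suppose the judgement was concluded by \Rule{Weak}, so $\Session=\Session_0,~{\q:\Cfg,{\End}}$ for some role $\q$ with terminated type, and the premise is $\Gam,\Timers\Entails\P\TypedBy\Session_0$ with a strictly smaller derivation. I would apply the induction hypothesis to $\Session_0$, obtaining the decomposition and side conditions that the claim asserts for $\Session_0$. Since session environments are maps with pairwise-distinct names and the distinguished role named by each claim carries a non-terminated type (or the distinguished object is a queue entry), it is disjoint from the freshly added $\q:\Cfg,{\End}$; so reabsorbing $\q:\Cfg,{\End}$ into the ``remainder'' environment $\Session'$ of the claim reconstructs $\Session$ exactly. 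It then only remains to check that each recorded side condition is \emph{stable under extending the environment by a terminated, queue-free role}. These stability facts are all immediate and I would isolate them once, up front: an $\End$ configuration has no enabled receive, so it contributes to neither $\NEQ{\Session}$ (\Cref{claim:tc_inversion_send}) nor the \t-reading / $\e$-reading predicates (\Cref{claim:tc_inversion_recv,claim:tc_inversion_delay}), and it carries no queue, so the compatibility and well-formedness premises recorded for scoped and recursive judgements (\Cref{claim:tc_inversion_scope,claim:tc_inversion_rec}) are untouched; the continuation judgements simply inherit the extra terminated role.

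The only genuinely delicate case analysis is the cluster of receive-shaped processes, which must be routed to the correct claim. A branching process is the subject of \Rule{Branch} when $\neg(\lvert J\rvert=\lvert I\rvert=1)$ and of \Rule{VRecv} or \Rule{DRecv} in the remaining singleton case, so \Cref{claim:tc_inversion_branch} and \Cref{claim:tc_inversion_recv} are obtained from disjoint families of derivations separated precisely by that side condition; within \Cref{claim:tc_inversion_recv} the further split into base-type (\Rule{VRecv}) and delegation (\Rule{DRecv}) receptions is decided by whether the payload type is a base type or a session type, matching the two sub-items of the claim. The timeout process is the subject only of \Rule{Timeout}, giving \Cref{claim:tc_inversion_timeout} together with its two continuation judgements (the branch part and the shifted ``after'' part). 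The queue claims \Cref{claim:tc_inversion_empty_queue,claim:tc_inversion_neq} invert \Rule{Empty} and \Rule{VQue}/\Rule{DQue} respectively, and the remaining claims (\Cref{claim:tc_inversion_timer,claim:tc_inversion_det,claim:tc_inversion_if,claim:tc_inversion_par,claim:tc_inversion_def,claim:tc_inversion_def_par}) are each the subject of a single rule and follow by direct inversion plus the \Rule{Weak} step.

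The main obstacle is not any individual case but the bookkeeping discipline of checking, \emph{for every claim}, that its side conditions survive the \Rule{Weak} step: this is exactly where a glib ``inversion is immediate'' would silently assume that weakening does not interact with the semantic predicates $\NEQ{\Session}$, \t-reading, and compatibility. Establishing the handful of stability-under-termination-extension facts once, and then invoking them uniformly across all of the claims, is what keeps the argument both short and correct.
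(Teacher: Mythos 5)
Your proposal is correct and takes essentially the same route as the paper, whose entire proof of this lemma is the one-line ``as standard, by induction on each derivation, using the type-checking rules'': you simply spell out what that induction looks like, including the treatment of the non-syntax-directed rule \Rule{Weak} and the stability of the semantic side conditions under extension by a terminated role, which the paper leaves implicit. For the \Rule{IfTrue}/\Rule{IfFalse} claim in particular, the side condition involves only $\Timers\models\Constraints$ and no session-environment predicate, so your \Rule{Weak} case degenerates to applying the induction hypothesis and re-applying \Rule{Weak} to the continuation judgement, exactly as intended.
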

\begin{clm}\label{claim:tc_inversion_empty_queue}\anno{empty queue}
  If
  \(
  \Gam;\Timers\Entails
  \qp:\emptyset
  \TypedBy\Session
  \), then $\Session = \Session',{\qp:\emptyset}$.
\end{clm}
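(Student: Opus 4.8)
The statement bundles many inversion claims, one for each process or queue shape, so the plan is to prove them uniformly and simultaneously by induction on the derivation of the typing judgement $\Gam,\Timers\Entails\P\TypedBy\Session$. For a fixed claim the subject on the left has a determined top-level constructor, which pins down the last rule of the derivation: it is either the syntax-directed rule matching that constructor, or it is \Rule{Weak}, the only rule applicable to an arbitrary process. No other rule can conclude the judgement, since \Rule{End} and \Rule{Empty} match only \Term\ and the empty queue, and every remaining rule in \Cref{fig:type_checking,fig:type_checking_communication} is pinned to a different constructor.

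In the syntax-directed case the work is pure bookkeeping: I read the premises of the relevant rule straight off the figures and rename them to the existentials of the claim. For \Cref{claim:tc_inversion_send} the last rule is \Rule{VSend} or \Rule{DSend}, whose premises already deliver the decomposition $\Session=\Session',{\p:\Cfg,{\C_i}}$, a matching branch $j\in I$ with $\Val\models\Constraints$, and the continuation judgement against $\Session',{\p:\Cfg{\Val\Reset},{\S}}$; likewise \Cref{claim:tc_inversion_branch,claim:tc_inversion_timeout,claim:tc_inversion_recv} read off \Rule{Branch}, \Rule{Timeout} and \Rule{VRecv}/\Rule{DRecv}, carrying the side conditions such as ``\Session'\ not $\e$-reading'' and $\NEQ$ over verbatim, and \Cref{claim:tc_inversion_if} records both \Rule{IfTrue} and \Rule{IfFalse} as the two guarded sub-cases. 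The queue claims \Cref{claim:tc_inversion_empty_queue} and the $\NEQ$ claim come from \Rule{Empty}, \Rule{VQue} and \Rule{DQue}. The compound claims \Cref{claim:tc_inversion_def_par} are not read off a single rule but obtained by chaining the basic inversions (here \Cref{claim:tc_inversion_def} followed by \Cref{claim:tc_inversion_par}) and then re-applying \Rule{Rec} to each component.

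The only genuinely inductive case is \Rule{Weak}, and it is the crux. Here the derivation ends by adjoining an ended role ${\p':\Cfg,{\End}}$ to a strictly smaller judgement $\Gam,\Timers\Entails\P\TypedBy\Session_0$ with $\Session=\Session_0,{\p':\Cfg,{\End}}$. I apply the induction hypothesis for the current claim to this smaller judgement, obtaining the structural decomposition of $\Session_0$, and then push the ended role back in. Two observations make this go through: first, an ended configuration is neither receive-enabled nor carries a queue, so the global side conditions (``not $\e$-reading'', $\NEQ=\emptyset$, and the compatibility/well-formedness of the paired configurations) are insensitive to the extra role and survive its re-introduction; second, every continuation judgement appearing in the claim can itself be re-weakened by \Rule{Weak} to reinstate ${\p':\Cfg,{\End}}$. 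For \Cref{claim:tc_inversion_par} the ended role may be attached to either $\Session_1$ or $\Session_2$, which is harmless since ended roles impose no constraint on their partner.

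I expect the main obstacle to be exactly this interaction with \Rule{Weak}: checking, claim by claim, that reinstating an ended role neither disturbs the asserted shape of \Session\ nor invalidates a side condition, and, for the compound constructors (\Cref{claim:tc_inversion_par,claim:tc_inversion_scope,claim:tc_inversion_def_par}), that redistributing the ended role across the split session environment is consistent with the premises of \Rule{Par} and \Rule{Res}. The standing hypothesis that \Session\ is \wf\ is used precisely to guarantee that the configurations exposed by each claim — in particular the paired configurations for \Rule{Res} in \Cref{claim:tc_inversion_scope} — are themselves \wf, as asserted. All remaining cases are mechanical.
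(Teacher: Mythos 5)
Your proposal is correct and follows exactly the route the paper takes: the paper's proof of \Cref{lem:tc_inversion} is simply ``induction on each derivation, using the \tc\ rules'', and your case analysis (the last rule is either the syntax-directed \Rule{Empty}, giving $\Session'=\emptyset$, or \Rule{Weak}, handled by the induction hypothesis and re-attaching the ended role) is the standard way to discharge it. Your extra care about \Rule{Weak} being the only non-syntax-directed rule is exactly the point the paper leaves implicit.
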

\begin{clm}\label{claim:tc_inversion_neq}\anno{neq}
  If
  \(
  \Gam;\Timers\Entails
  \qp:\lv\cdot\h
  \TypedBy\Session
  \), then \(
  \Session = \Session',{\qp:\lT<>;\Que}
  \) and \NEQ{\Session'}0\ and:
  \begin{itemize}
    \item $\T=\Del' \implies \Session'=\Session'{2},{\q:\Cfg'}$ and $\Gam;\Timers\Entails\qp:\h\TypedBy\Session'{2},{\qp:\Que}$ and $\Val'\models\Constraints'$.
    \item $\T~\bt \implies \Gam\Entails\v:\T$ and $\Gam;\Timers\Entails\qp:\h\TypedBy\Session',{\qp:\Que}$.
  \end{itemize}
\end{clm}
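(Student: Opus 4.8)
The plan is to prove all the claims simultaneously by induction on the height of the typing derivation, with a case analysis on the last rule applied in~\RefTypeCheck. The crucial observation is that the system is almost \emph{syntax-directed}: for every process shape occurring in one of the claims there is exactly one structural rule whose conclusion matches it — or two, in the cases that distinguish base-type from delegation payloads, namely \Rule{VSend}/\Rule{DSend} for selection, \Rule{VRecv}/\Rule{DRecv} for single reception and \Rule{VQue}/\Rule{DQue} for queues, and likewise \Rule{IfTrue}/\Rule{IfFalse} for conditionals (split on whether $\Timers\models\Constraints$). The reception form $\On{\p}\Recv^{\e}{\l}[\v]:{\P}_i$ is resolved by its index set, sending a judgement to \Rule{Branch} when there is genuine choice and to \Rule{VRecv}/\Rule{DRecv} otherwise. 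The only rule that is \emph{not} syntax-directed is \Rule{Weak}, which may be inserted at any point to append an \End-typed role to \Session. Hence for each claim the last rule is either the matching structural rule(s) or \Rule{Weak}.

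First I would dispatch the structural cases, which are immediate: when the last rule is the one matching the process shape, the decomposition asserted by the claim is precisely what appears in that rule's premises, so the thesis is read off directly. For example, a derivation of \Cref{claim:tc_inversion_par} ending in \Rule{Par} yields $\Session=(\Session_1,\Session_2)$, $\Timers=(\Timers_1,\Timers_2)$ and the two sub-derivations; a derivation of \Cref{claim:tc_inversion_scope} ending in \Rule{Res} yields the extended judgement for \P\ together with the compatibility $\Cfg;{\Que}_1\Compat\Cfg;{\Que}_2$ and well-formedness of both configurations; and the payload-sensitive claims (\Cref{claim:tc_inversion_send,claim:tc_inversion_recv,claim:tc_inversion_neq}) split on which of the two matching rules was used and read off the corresponding branch, including the side conditions on $\Val\models\Constraints$ and the $\e$-reading predicate.

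The substantive work is the uniform treatment of \Rule{Weak}. If the last rule is
\[
    \infer[\Rule{Weak}]{\Gam,\Timers\Entails\P\TypedBy\Session',~{\p:\Cfg,{\End}}}{\Gam,\Timers\Entails\P\TypedBy\Session'}
\]
then \P\ is unchanged and the premise types it against the strictly smaller environment \Session'. I would apply the induction hypothesis (of the same claim, on the shorter derivation) to this premise to obtain the decomposition of \Session', and then re-append the role ${\p:\Cfg,{\End}}$ by a final application of \Rule{Weak} to reconstruct the decomposition of \Session. For the environment-splitting claims (\Cref{claim:tc_inversion_par,claim:tc_inversion_def_par}) it suffices to attach the extra \End-typed role to either component, say \Session_1, and apply \Rule{Weak} to that sub-derivation. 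For \Cref{claim:tc_inversion_scope} the added role is disjoint from the bound endpoints \p,\q, so the compatibility and well-formedness conditions on the scoped configurations are untouched, using \Cref{lem:tp_end_wf} to witness that the appended $\Cfg,{\End}$ is itself \wf.

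The main obstacle is exactly this interleaving of \Rule{Weak} with the structural rules: since \Rule{Weak} is height-decreasing but leaves \P\ fixed, a naive inversion on the last rule alone does not suffice, and one must carry out the induction so that appending an \End-typed role never invalidates any side condition transported by the structural premises — receive urgency (phrased via the $\e$-reading predicate of~\Cref{def:t_reading_session}), compatibility, and well-formedness. Once \Rule{Weak} is handled uniformly in this way, every remaining claim collapses to reading off the premises of its single matching rule.
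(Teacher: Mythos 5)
Your proposal is correct and matches the paper's proof, which simply states that the inversion lemma ``follows by induction on each derivation, using the type-checking rules'' — i.e.\ exactly the induction on the typing derivation with a case split on the last rule that you carry out. Your explicit identification of \Rule{Weak} as the sole non-syntax-directed rule, and the observation that appending an \End-typed role preserves the side conditions (in particular it adds no queue, so $\mathtt{NEQ}$ of the residual environment stays empty for this claim), is precisely the detail the paper leaves implicit.
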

\begin{proof}
  As standard, it follows by induction on each derivation, using the \tc\ rules given in~\RefTypeCheck.
\end{proof}

\subsubsection{Well-typed Enabled Actions}

%
\begin{lem}\label{lem:sr_wt_wf_role_recv_prc_wait}
  Let $\Session,{\p:\Cfg}$ be \wf.
  If $\Gam;\Timers\Entails\P\TypedBy\Session,{\p:\Cfg}$
  and \(
  \Cfg\Trans:{?\Msg}
  \)
  then $\p\in\Wait$.
\end{lem}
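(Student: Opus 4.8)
The plan is to proceed by induction on the derivation of \TypeCheck{\Gam,\Timers\Entails\P\TypedBy\Session,~{\p:\Cfg}}, with a case analysis on the last typing rule applied (equivalently, via the Inversion Lemma~\Cref{lem:tc_inversion}, on the shape of \TypeCheck{\P}). Two facts will be used throughout: since \TypeCheck{\Session,~{\p:\Cfg}} is \wf\ by~\Cref{def:wf_session}, the configuration \TypeCheck{\Cfg} is well-formed; and by~\Cref{lem:tp_wf_s} a well-formed configuration able to fire \TypeCheck{\Cfg\Trans:{?\Msg}} must have a choice type with a \emph{receiving} option already enabled at its current valuation, i.e.\ enabled after a delay of~$0$.

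The core of the argument is the family of receiving processes, typed by \Rule{VRecv}, \Rule{DRecv}, \Rule{Branch} and \Rule{Timeout}. In each of these \TypeCheck{\P} receives on some endpoint \TypeCheck{\q}, so the first clause of \TypeCheck{\Wait} in~\Cref{fig:func_wait_neq} gives \TypeCheck{\Wait{\P}=\{\q\}}; it therefore suffices to establish \TypeCheck{\p=\q}. Assume \TypeCheck{\p\neq\q}. Inverting these rules (\Cref{lem:tc_inversion}) exposes, for the relevant enabled option, a single-reception judgement whose residual environment \TypeCheck{\Session'} (everything but the endpoint \TypeCheck{\q}) is required to be \emph{not} \TypeCheck{\e}-reading. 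But \TypeCheck{\p\neq\q} forces \TypeCheck{\p:\Cfg\in\Session'}, and since \TypeCheck{\Cfg} can receive immediately while every deadline satisfies \TypeCheck{0\Diam\e} (as \TypeCheck{\e} is one of \TypeCheck{<\n}, \TypeCheck{\leq\n} with \TypeCheck{\n>0}, \TypeCheck{\leq0} or \TypeCheck{\infty}), \Cref{def:t_reading_session} makes \TypeCheck{\Session'} \TypeCheck{\e}-reading --- a contradiction. Hence \TypeCheck{\p=\q} and \TypeCheck{\p\in\Wait}.

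The structural rules follow from the induction hypothesis. For \Rule{Par} the role \TypeCheck{\p} lies in exactly one side of the split environment, and the union clause of \TypeCheck{\Wait} preserves its membership; for \Rule{Res} the scope binds channels distinct from \TypeCheck{\p}, so the set-difference clause \TypeCheck{\Wait{\Scope{\pq}{\P}}=\Wait{\P}\setminus\{\p,\q\}} keeps \TypeCheck{\p}; and for \Rule{Rec}/\Rule{Def} and \Rule{Weak} the environment carrying \TypeCheck{\p} is handed unchanged to a subderivation (in \Rule{Weak} the added role has type \TypeCheck{\End}, which cannot receive, so \TypeCheck{\p} remains in the smaller environment). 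The delay rules \Rule{Del}[\delta] and \Rule{Del}[t] are vacuous whenever a positive amount of time would elapse --- their ``not \TypeCheck{\t}-reading'' premise is incompatible with \TypeCheck{\Cfg} receiving at time $0$ --- and the degenerate case collapses to the continuation via \TypeCheck{\Delay{0}.\P\equiv\P}. Rule \Rule{End} forces every role to have type \TypeCheck{\End}, contradicting \TypeCheck{\Cfg\Trans:{?\Msg}}, and if \TypeCheck{\P} is a send on \TypeCheck{\p} itself then \Cref{lem:tp_wf_cfg_safe_mc} rules out a well-formed configuration enabling both a send and a receive, again a contradiction.

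The delicate step is the collection of prefixes whose \TypeCheck{\Wait} is empty yet which guard a reception deeper in the term --- chiefly a send \TypeCheck{\On{\q}\Send{\l}[\v].\P'} on an endpoint \TypeCheck{\q\neq\p}, and a queue transporting a delegated session (rule \Rule{DQue}). Here \TypeCheck{\p} is receive-enabled in the residual environment but does not surface at the head of \TypeCheck{\P}, so a naive induction would demand membership in the empty set. The resolution I expect to rely on is that these prefixes are instantaneous (respectively passive) and cannot let time pass: the time-passing function of~\Cref{def:time_passing_func_ef} is undefined on a leading send, so in the intended application of this lemma --- inside the proof of the Time Step~\Cref{lem:sr_time_step}, where that function is assumed defined --- no such prefix can precede the reception, and the induction bottoms out at a genuine receiving or structural form. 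Pinning down exactly this interface between the receive-urgency premises of the single-action rules and the domain of the time-passing function is where I expect the real work to lie.
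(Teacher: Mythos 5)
Your core argument coincides with the paper's: reduce $\Cfg\Trans:{?\Msg}$ to an application of rule \Rule{act} (so that a \emph{receiving} option of a choice type is enabled at the current valuation — the paper gets this by an inner induction on the transition derivation to dispatch \Rule{unfold}, where you invoke \Cref{lem:tp_wf_s}; the effect is the same), then induct on the typing derivation, obtaining $\p\in\Wait$ immediately in the cases \Rule{Branch}, \Rule{Timeout}, \Rule{VRecv} and \Rule{DRecv}, and discharging \Rule{VSend}/\Rule{DSend} by \Cref{lem:tp_wf_cfg_safe_mc} (a well-formed configuration cannot enable a send and a receive at the same time). The paper's proof stops essentially there: it only ever considers a last typing rule whose subject is a communication on $\p$ itself, tacitly identifying the endpoint at the head of the process with the role $\p$ of the statement, and it does not write out the structural cases. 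Your extra step — using the ``not $\e$-reading'' premise of the single-reception rules to force the received-on endpoint to equal $\p$, by exhibiting $\t'=0$ as a witness that the residual environment would otherwise be $\e$-reading — does not appear in the paper, but it is sound and is precisely the intended use of that premise (modulo the corner case where no option of the other endpoint's choice is currently enabled, which makes the \Rule{Branch} premise vacuous).

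The ``delicate case'' you isolate — a send on an endpoint $\q\neq\p$, or a \Rule{DQue} judgement, sequentially guarding the reception so that \Wait\ is empty at the head while $\p$ is receive-enabled in the residual environment — is a genuine issue, but you will not find its resolution in the paper: the paper's case analysis simply never entertains a last rule whose subject is an endpoint other than $\p$. Be aware, however, that your proposed repair (importing the hypothesis that \Time\ is defined) is not licensed by the lemma's stated assumptions, so as written your argument establishes the statement only for the configurations actually arising in \Cref{lem:sr_wt_delay} and \Cref{lem:sr_time_step}, where that hypothesis is available. A self-contained proof would require either adding that hypothesis to the statement or arguing that such prefixes cannot be typed against an environment in which $\p$ is currently receive-enabled; the paper does neither, so on this point your proposal is no weaker than the published proof — merely more candid about where it ends.
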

\begin{proof}
  By induction on the derivation of the transition
  \(
  \Cfg\Trans:{?\Msg}
  \), analysing the last rule applied of those given
  in~\Cref{eq:type_semantics_tuple}.
  The only applicable rules are \Rule{act} and \Rule{unfold}.
  As in~\Cref{lem:tp_wf_cfg_safe_mc}, we only show rule \Rule{act}:
  \[
    \infer[\Rule{act}]{
    \Cfg,{\Choice{\l}[\T]_i}%
    \Trans:{\Comm_j\Msg}%
    \Cfg{\Val\Reset_j},{\S_j}%
    }{
    \Val\models\delta_j%
    \quad%
    \Msg={\lT<>_j}%
    \quad%
    j\in I%
    }%
  \]

  \noindent where $\S=\Choice{\l}[\T]_i$.
  By inspecting~\Cref{fig:func_wait_neq} for when $\p\in\Wait{\P}$, it remains for us to show that \P\ is structured as follows:
  \[
    \P \in
    \begin{array}[t]\{{@{\ }l@{\ \ }l@{\ }}\}
      \On{\p}\Recv^{\e}{\l}[\v]:{\P}_j ,
       &
      \On{\p}\Recv^{\Diam\n}{\l}[\v]:{\P}_j ~\After:{\Q}
    \end{array}
  \]

  \noindent We proceed by induction on the derivation of
  $\Gam;\Timers\Entails\P\TypedBy\Session,{\p:\Cfg}$,
  analysing the last rule applied of those given in~\RefTypeCheck:
  \begin{caseanalysis}
    %
    \item\label{case:sr_wt_wf_role_recv_prc_wait_branch}
    If rule \Rule{Branch},
    then \(
    \P = \On{\p}\Recv^{\e}{\l}[\v]:{\P}_j
    \) and we obtain our thesis.
    %
    \item\label{case:sr_wt_wf_role_recv_prc_wait_}
    If rule \Rule{Timeout},
    then \(
    \P = \On{\p}\Recv^{\Diam\n}{\l}[\v]:{\P}_j ~\After:{\Q}
    \).
    %
    \item\label{case:sr_wt_wf_role_recv_prc_wait_vrecv}
    If rule \Rule{VRecv},
    then \(
    \P = \On{\p}\Recv*^{\e}{\l}[\v]:{\P'}
    \).
    (The same holds for rule \Rule{DRecv}.)
    %
    \item\label{case:sr_wt_wf_role_recv_prc_wait_vsend}
    If rule \Rule{VSend},
    then \(
    \P = \On{\p}\Send{\l}[\v].{\P'}
    \).
    Clearly $\p\notin\Wait$, and it remains for us to prove that this case is not applicable.
    By rule \Rule{VSend}:
    \[
      \infer[\Rule{VSend}]{%
      \Gam;\Timers\Entails
      \On{\p}\Send{\l}[\v].{\P'}
      \TypedBy\Session,{\p:\Cfg,{\Choice{\l}[\T]_i}}
      }{%
      \begin{array}[t]{@{}c@{}}
        \exists j\in I: %
        (\l=\l_j)
        \land
        (\Val\models\Constraints_j)%
        \land
        (\T_j~\text{\bt})
        \land%
        (\Gam\Entails\v:\T_j)
        \land\mbox{}%
        \\[\ArrayTallLineSpacing]%
        (\Comm_j=!)
        \land
        \Gam;\Timers\Entails{\P}\TypedBy\Session,~{\p:\Cfg{\Val\Reset_j},{\S_j}}%
      \end{array}%
      }
    \]

    \noindent By~\Cref{lem:tp_wf_cfg_safe_mc}, since \Cfg\ is \wf, it cannot be that both $\Cfg\Trans:{?\Msg}$ and $\Cfg\Trans:{!\Msg'}$, where $\Msg'=\lT<>$.
    Therefore, this case is not applicable since it cannot be that a \emph{sending} process \P\ is \wt\ against \p, which has an enabled \emph{receiving} action.
    (The same holds for rule \Rule{DSend}.)
    \qedhere
  \end{caseanalysis}
\end{proof}

\subsubsection{Type-checking Processes}

\begin{lem}\label{lem:sr_wt_equiv_prc}
  Let \Session\ be \wf.
  If $\Gam;\Timers\Entails\P\TypedBy\Session$ and $\P\equiv\Q$,
  then $\Gam;\Timers\Entails\Q\TypedBy\Session$.
\end{lem}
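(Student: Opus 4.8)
The plan is to strengthen the statement to a biconditional, namely that $\P\equiv\Q$ implies $\Gam,\Timers\Entails\P\TypedBy\Session$ \emph{if and only if} $\Gam,\Timers\Entails\Q\TypedBy\Session$, and to prove this by induction on the derivation of $\P\equiv\Q$. The strengthening to an ``iff'' is what makes structural congruence go through as an induction: a bare implication cannot be closed under the symmetry rule, because applying the induction hypothesis to a sub-derivation of $\Q\equiv\P$ would supply the \emph{converse} implication, whereas the biconditional is trivially preserved by reflexivity, by symmetry (which merely swaps the two sides of the ``iff''), and by transitivity (which chains two biconditionals). The original lemma is then the forward direction at $\P\equiv\Q$. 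Throughout, the well-formedness of $\Session$ assumed in the statement is exactly the hypothesis needed to invoke the Inversion Lemma (\Cref{lem:tc_inversion}), and it is preserved by the environment decompositions I use. The congruence (context-closure) cases, such as $\P\equiv\P'$ entailing $\P\mid\Q\equiv\P'\mid\Q$ or $\Scope{\pq}{\P}\equiv\Scope{\pq}{\P'}$, are all handled uniformly: apply the relevant clause of \Cref{lem:tc_inversion} to peel off the outermost typing rule, apply the induction hypothesis to the congruent sub-process, and re-apply the same rule to reassemble the judgement.

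The remaining work is to discharge each generating axiom of $\equiv$ in both directions. For commutativity and associativity of $\mid$, \Cref{claim:tc_inversion_par} splits $\Session$ and $\Timers$ into domain-disjoint parts; since the comma on session and timer environments denotes a commutative and associative disjoint union, the parts can be recombined with \Rule{Par} in any order. For $\P\mid\Term\equiv\P$, \Cref{claim:tc_inversion_par} gives $\Session=\Session_1,\Session_2$ with $\P\TypedBy\Session_1$ and $\Term\TypedBy\Session_2$, where \Rule{End} forces $\Session_2$ to be an ``end'' environment; this environment is then reabsorbed into (respectively shed from) the judgement for $\P$ using the end/weakening rules \Rule{Weak} and \Rule{Empty}. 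The axiom $\Delay{0}.\P\equiv\P$ follows from \Cref{claim:tc_inversion_delay} in one direction and rule \Rule{Del}[t] in the other, both instantiated at $\t=0$: by \Cref{def:t_reading_session} every $\Session$ is trivially not $0$-reading (there is no $\t'<0$), and $\Session+0=\Session$ and $\Timers+0=\Timers$, so the delay imposes no constraint. The cases $\Scope{\pq}{\Term}\equiv\Term$ and $\alpha$-renaming of bound channels and timers are immediate from \Cref{claim:tc_inversion_scope}, \Rule{End}, and the fact that the typing rules treat bound names up to renaming.

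The main obstacle is the scope-extrusion axiom $\Scope{\pq}{\P}\mid\Q\equiv\Scope{\pq}{\P\mid\Q}$, under the side condition that $\p$ and $\q$ do not occur free in $\Q$. In the forward direction, \Cref{claim:tc_inversion_par} gives $\Session=\Session_1,\Session_2$ with $\Scope{\pq}{\P}\TypedBy\Session_1$ and $\Q\TypedBy\Session_2$, and \Cref{claim:tc_inversion_scope} then yields $\P\TypedBy\Session_1,{\p:\Cfg_1},{\qp:\Que_1},{\q:\Cfg_2},{\pq:\Que_2}$ together with $\Cfg;{\Que}_1\Compat\Cfg;{\Que}_2$ and well-formedness of $\Cfg_1$ and $\Cfg_2$. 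Because $\p,\q\notin\Dom{\Session_2}$, rule \Rule{Par} composes $\P$ and $\Q$ over the disjoint environment $\Session_1,\Session_2,{\p:\Cfg_1},{\qp:\Que_1},{\q:\Cfg_2},{\pq:\Que_2}$, and \Rule{Res}, reusing the \emph{same} compatibility and well-formedness facts, rebinds $\pq$ to conclude $\Scope{\pq}{\P\mid\Q}\TypedBy\Session$. The delicate step is the reverse direction: starting from $\Scope{\pq}{\P\mid\Q}\TypedBy\Session$, after \Cref{claim:tc_inversion_scope} and \Cref{claim:tc_inversion_par} one must argue that the four endpoints $\p$, $\qp$, $\q$, $\pq$ fall entirely in $\P$'s share of the \Rule{Par} split and not in $\Q$'s, so that removing them leaves $\Q\TypedBy\Session_2$ intact. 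This is precisely where the hypothesis that $\p$ and $\q$ do not occur free in $\Q$ is used, and it is the one point that must be stated and justified carefully, appealing to the fact that a process can only be typed against a session environment whose session-channel domain it actually mentions; once this placement is secured, \Rule{Par} and \Rule{Res} reassemble $\Scope{\pq}{\P}\mid\Q$ with the inherited $\Compat$ and well-formedness side conditions.
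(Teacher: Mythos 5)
The paper does not actually write this proof out: its entire argument is \emph{``As standard, with the additions in~\Cref{sssec:process_reduction_rules}''}, deferring the induction on the derivation of $\P\equiv\Q$ to~\cite{Bocchi2019,Honda2008,Yoshida2007} and implicitly asserting that the only non-standard axiom, $\Delay{0}.\P\equiv\P$, is unproblematic. Your proposal is that standard argument actually carried out, and it is essentially correct: the strengthening to a biconditional so that the symmetry rule closes under the induction, the splitting and recombination of $\Session$ and $\Timers$ via \Cref{claim:tc_inversion_par} for the monoid laws of $\mid$, the use of \Rule{Weak} and \Rule{End} for $\P\mid\Term\equiv\P$, and the discharge of $\Delay{0}.\P\equiv\P$ by \Rule{Del}[\t] at $\t=0$ (no $\Session$ is $0$-reading, and $\Session+0=\Session$, $\Timers+0=\Timers$) are exactly what the cited proofs do. One step needs tightening: in the reverse direction of scope extrusion you justify placing the four endpoints $\p$, $\qp$, $\q$, $\pq$ in $\P$'s share of the \Rule{Par} split by claiming that a process can only be typed against a session environment whose domain it mentions. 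That is literally false in this system, because \Rule{Weak} lets any judgement carry spurious entries $\p:\Cfg,{\End}$ that could end up in $\Q$'s share. The queue entries are not a problem (they are introduced only by the queue-typing rules, so by well-formedness of the process they must sit in $\P$'s share), but for the configuration entries you need an explicit strengthening lemma: end-typed entries for names not free in $\Q$ can be removed from $\Q$'s judgement and reintroduced on $\P$'s side by \Rule{Weak}. That auxiliary lemma is routine, but it is the missing brick in the one step you yourself flagged as delicate, and it should be stated if this proof is written out in full.
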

\begin{proof}
  As standard~\cite{Bocchi2019,Honda2008,Yoshida2007}, with the additions in~\Cref{sssec:process_reduction_rules}.
\end{proof}

\subsubsection{Well-typed Substitutions}

\begin{lem}[Substitution]\label{lem:sr_wt_substitution}
  Let \Session\ be \wf.
  \Cref{claim:sr_wt_substitution_role,claim:sr_wt_substitution_var} both hold.
\end{lem}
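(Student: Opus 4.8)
The plan is to prove both \Cref{claim:sr_wt_substitution_role} and \Cref{claim:sr_wt_substitution_var} simultaneously by induction on the derivation of the well-typedness judgement $\Gam,\Timers\Entails\P\TypedBy\Session$, analysing the last typing rule applied of those in \RefTypeCheck. The two claims cover the two substitutions actually performed by the reduction semantics: replacing the bound variable of a reception by the received payload or delegated endpoint (rule \Rule{Recv}, where $\P_j\Subst{\v}{\v_j}$ occurs) and replacing the formal parameters of a recursive definition by the actual arguments of a call (rule \Rule{Call}, where $\P\Subst{\vec{v};\vec{r}}{\vec{v'};\vec{r'}}$ occurs). In every case the method is uniform: apply the Inversion Lemma (\Cref{lem:tc_inversion}) to the hypothesis to expose the premises of the last rule, push the substitution through the process structure so that the induction hypothesis applies to each strictly smaller sub-derivation, and reassemble the conclusion by re-applying the very same rule to the substituted continuations.

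First I would clear the base cases. For $\Term$ (rule \Rule{End}) and the empty queue (rule \Rule{Empty}) the substituted variable has no free occurrence, so the judgement is unchanged. The recursion call $\Rec$ (rule \Rule{Var}) is the crux of \Cref{claim:sr_wt_substitution_var}: by \Cref{claim:tc_inversion_rec}, $\Gam=\Gam',\Rec|$ with the arguments well-typed against $\vec{T}$ and with $\Session=\vec{r}:(\vec{\nu},\vec{S})\in\FormatRecCheckStyle{\Delta}$, so substituting the call parameters reduces to checking that the supplied values and endpoints are again well-typed, which is exactly the side condition guaranteed by \Rule{Var}. For the queue axioms \Rule{VQue} and \Rule{DQue}, and for the value-typing premise $\Gam\Entails\v:\T$ appearing in \Rule{VSend} and \Rule{VRecv}, I would first establish the elementary fact that typing of values is stable under substitution of a value of the same type, which feeds \Cref{claim:sr_wt_substitution_role}.

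The structural and communication cases are then routine given inversion. Rules \Rule{Par} and \Rule{Res} split $\Session$ and $\Timers$; by \Cref{claim:tc_inversion_par} and \Cref{claim:tc_inversion_scope} I apply the induction hypothesis componentwise, using that substitution commutes with the disjoint union of session environments and leaves the compatibility and well-formedness side conditions of \Rule{Res} untouched. For \Rule{VSend}, \Rule{DSend}, \Rule{VRecv}, \Rule{DRecv}, \Rule{Branch}, and \Rule{Timeout}, inversion isolates the matching labelled branch of the choice type; I push the substitution into the continuation(s) and re-apply the rule, observing that substitution does not touch the type, its guards \Constraints, its resets \Resets, or the payload sort \T, and hence preserves the $\e$-reading and weak-past side conditions verbatim. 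In the delegation cases I must additionally track that when the substituted name is itself a delegated endpoint, the replacement agrees with the role entry ${\q:\Cfg'}$ that inversion places in $\Session$.

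The main obstacle I anticipate is the interaction of substitution with the timing apparatus: the timer environment \Timers, the time-shifted environments $\Timers+\t$ and $\Session+\t$, and the clock resets recorded in configurations. For rules \Rule{Timer}, \Rule{Del}[t], \Rule{Del}[\delta], \Rule{IfTrue} and \Rule{IfFalse} the premises quantify over delays $\t$ and inspect whether $\Timers\models\Constraints$. The key observation is that \Timers\ maps timers into $\SetRealZ$ and is therefore untouched by payload or recursion-parameter substitution, so the guard test $\Timers\models\Constraints$, the shifts $\Timers+\t$ and $\Session+\t$, and the reset operations all commute with the substitution; this lets the induction hypothesis be applied uniformly for every $\t$. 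The one genuinely delicate point is capture-avoidance with respect to the timers freshly introduced by \Rule{Timer} and the formal parameters bound by \Rule{Rec}; I handle it with the standard $\alpha$-renaming convention, keeping bound names disjoint from the free names of the substituted term so that no timer or parameter is inadvertently captured.
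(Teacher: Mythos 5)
Your proof is correct and takes essentially the same approach as the paper: the paper's own proof of this lemma is a one-line appeal to the standard argument in the session-types literature (``both hold as standard''), and your induction on the typing derivation, with inversion, commutation of substitution with the timing apparatus, and $\alpha$-renaming for capture-avoidance, simply fills in the details the paper leaves implicit. One cosmetic slip: the stability of value typing under substitution feeds the variable-substitution claim rather than the role-substitution claim, but this does not affect the argument.
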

\begin{clm}\label{claim:sr_wt_substitution_role}
  If $\Gam;\Timers\Entails\P\TypedBy\Session,{\p:\Cfg}$
  and $\b\notin\Dom$,
  then $\Gam;\Timers\Entails{\P\Subst{\b}{\p}}\TypedBy\Session,{\b:\Cfg}$.
\end{clm}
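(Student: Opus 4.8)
The plan is to prove \Cref{claim:sr_wt_substitution_role} by structural induction on the derivation of $\Gam,\Timers\Entails\P\TypedBy\Session,{\p:\Cfg}$, with a case analysis on the last typing rule applied (the rules of \RefTypeCheck). The role of the freshness hypothesis $\b\notin\Dom$ is to guarantee that extending the session environment with $\b:\Cfg$ introduces no duplicate binding, so that $\Session,{\b:\Cfg}$ is a legitimate typing context; since $\Subst{\b}{\p}$ is a bijective renaming of a single free endpoint into a fresh one, it suffices to observe that each premise of each rule is invariant under such a relabelling. In every case I would apply the Inversion Lemma (\Cref{lem:tc_inversion}) to expose the premises, rename the occurrences of $\p$ to $\b$ both in the process and in the $\p$-component of the environment, appeal to the induction hypothesis on the structurally smaller sub-derivations, and re-apply the same rule.

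The straightforward cases are those in which $\p$ is not the subject of the outermost construct. For \Rule{End}, \Rule{Empty}, \Rule{Timer}, \Rule{IfTrue}, \Rule{Del}[\delta] and \Rule{Del}[t] the substitution only reaches the continuation, so the claim follows directly from the induction hypothesis applied to the premise while carrying the relabelling of $\p$ inside $\Session$. For \Rule{Par} the disjointness of the domains in $\Session,{\p:\Cfg}=(\Session_1,\Session_2)$ places $\p:\Cfg$ in exactly one component, say $\Session_1$; because $\p\notin\Dom{\Session_2}$, the endpoint $\p$ does not occur as a subject in the process typed against $\Session_2$, so $\Subst{\b}{\p}$ acts only on the first component and the two sub-derivations recombine under \Rule{Par}. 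The communicating cases \Rule{VSend}, \Rule{DSend}, \Rule{VRecv}, \Rule{DRecv}, \Rule{Branch} and \Rule{Timeout} are those in which $\p$ is the acting endpoint: here $\Subst{\b}{\p}$ rewrites the subject prefix on $\p$ to the same prefix on $\b$, the configuration $\Cfg$ together with its reset and time-shifted variants migrates from $\p$ to $\b$ in the environment, and the side conditions (satisfaction of \Constraints, the $\e$-reading test of \Cref{def:t_reading_session}, and the payload typing against \T) are all independent of the name of the endpoint and therefore transfer verbatim. The queue rules \Rule{VQue} and \Rule{DQue} are handled in the same way.

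The delicate cases, which I expect to be the main obstacle, are \Rule{Res} and the recursion rules. For $\Scope{\pq}{\P}$ typed by \Rule{Res}, the construct binds the pair of endpoints $\pq$; I would first $\alpha$-rename the scope so that $\b$ is distinct from the bound endpoints (admissible since $\b\notin\Dom$ and $\b$ does not occur free in $\P$), after which $\Subst{\b}{\p}$ is capture-avoiding and commutes with the restriction, and the premises --- the typing of the inner process together with the \compat\ and \wfness\ conditions of \Cref{def:cfg_compat,def:cfg_wf} --- are preserved because both are stable under consistent endpoint renaming. For \Rule{Rec} and \Rule{Var} the subtlety is that the process-variable binding $\Rec|$ stored in $\Gam$ records, in its component $\FormatRecCheckStyle{\Delta}$, the endpoints $\vec{r}$ of the admissible recursion calls; the renaming $\Subst{\b}{\p}$ must be propagated consistently to these stored session environments, and I would check that the matching performed by \Rule{Var} (namely $(\vec{\Val},\vec{\S})\in\FormatRecCheckStyle{\Delta}$) survives the bijective relabelling. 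Apart from these two families of rules the argument is entirely routine, the common thread being that every component of a typing judgement --- continuations, compatibility, and well-formedness of configurations --- is invariant under the consistent renaming of a single free endpoint to a fresh one.
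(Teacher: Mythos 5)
Your approach is sound, but be aware that the paper does not actually prove this claim: the proof of \Cref{lem:sr_wt_substitution} consists of the single remark that both claims ``hold as standard'' with citations to the session-types literature. Your plan --- induction on the typing derivation, invariance of every premise and side condition under the bijective renaming of a free endpoint to a fresh one, capture-avoiding $\alpha$-renaming in the \Rule{Res} case, and consistent propagation of the renaming into the environments stored in the process-variable binding for \Rule{Rec} and \Rule{Var} --- is exactly the standard argument the paper is appealing to, and you correctly single out the only genuinely delicate cases. The one point I would add: as stated the conclusion renames only the entry $p:(\nu,S)$ to $b:(\nu,S)$, while the residual environment could in principle still contain queue entries such as $qp$ or $pq$ that mention $p$; in the only place the claim is invoked (the \Rule{Call} case, \Cref{case:sr_action_step_call} of \Cref{lem:sr_action_step}) the environment consists solely of role entries $\vec{r}:(\vec{\nu},\vec{S})$, so this never arises, but a fully general write-up should either rename those entries as well or restrict the statement accordingly.
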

\begin{clm}\label{claim:sr_wt_substitution_var}
  If $\Gam_1,{\Msg:\T};\Timers\Entails\P\TypedBy\Session$
  and $\Gam_2\Entails{\v:\T}$
  and $\Dom{\Gam_1}\cap\Dom{\Gam_2}=\emptyset$,
  then $\Gam_1,\Gam_2;\Timers\Entails{\P\Subst{\v}{\Msg}}\TypedBy\Session$.
\end{clm}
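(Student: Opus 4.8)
The plan is to prove both claims by a standard induction on the typing derivation, i.e.\ on the last \tc\ rule applied in the hypothesis judgement, appealing to the Inversion Lemma (\Cref{lem:tc_inversion}) to decompose each judgement according to the syntactic shape of \P. In every case the substitution acts only on the conclusion's subject and, in the communication/queue rules, on the matching session-environment component, leaving the configurations \Cfg, the clock constraints \Constraints, the guards, and the continuation types \S\ untouched; the premises therefore transfer almost verbatim, and the induction hypothesis dispatches the recursive obligations. The two claims differ only in what is being substituted and hence in which handful of rules require genuine attention.

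For \Cref{claim:sr_wt_substitution_role} I would exploit the freshness side-condition $\b\notin\Dom$: since \b\ does not already occur in \Session, renaming \p\ to \b\ can never clash with an existing entry of \Session, so the substitution is capture-avoiding throughout. The bulk of the cases (\Rule{Timer}, \Rule{Del}[\t], \Rule{Del}[\delta], \Rule{IfTrue}, \Rule{End}, \Rule{Empty}, \Rule{Weak}) do not mention \p\ in their operative components and follow immediately. The representative cases are the communication rules \Rule{VSend}, \Rule{DSend}, \Rule{VRecv}, \Rule{DRecv}, \Rule{Branch}, \Rule{Timeout} and the queue rules \Rule{VQue}, \Rule{DQue}, where the role index and its pairing ${\p:\Cfg}$ are renamed in lockstep to ${\b:\Cfg}$ while every clock- and type-level premise is preserved. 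The one case needing care is \Rule{Res}: when the target of the rename sits under a scope \Scope{\pq}{\P}, I would first $\alpha$-rename the bound pair \pq\ away from \b\ (legitimate since \b\ is fresh) and only then apply the induction hypothesis to the body, checking that the compatibility and well-formedness premises are insensitive to the renaming.

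For \Cref{claim:sr_wt_substitution_var} the disjointness hypothesis $\Dom{\Gam_1}\cap\Dom{\Gam_2}=\emptyset$ guarantees that $\Gam_1,\Gam_2$ is a well-formed variable environment, so the merged environment is always available. Here the only rules that actually consume the variable \Msg\ via a premise of the form $\Gam\Entails\Msg:\T$ are the value-send rule \Rule{VSend}, the value-queue rule \Rule{VQue}, and a recursion call \Rule{Var} that passes \Msg\ as a parameter; in each I would discharge $\Gam_1,\Msg:\T\Entails\Msg:\T$ by the supplied derivation $\Gam_2\Entails\v:\T$ after substitution. In \Rule{Par} the variable environment $\Gam_1,\Msg:\T$ is shared between the two premises (only \Timers\ and \Session\ are split), so the single derivation $\Gam_2\Entails\v:\T$ serves both branches once domains are kept disjoint. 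I expect the main obstacle in both claims to be the recursion rules \Rule{Rec} and \Rule{Var}: the stored ``valid traces'' triple $({\vec{T};\ScSVListToRecCheck{\theta;\Delta}})$ must be threaded through the substitution unchanged, which requires verifying that the value-typing judgements $\Gam\Entails\vec{v}_i:\vec{T}_i$ in \Rule{Var} remain derivable under the extended environment, and—for \Cref{claim:sr_wt_substitution_role}—that the freshness of \b\ is re-established after each descent through a parallel split or scope so that the induction hypothesis remains applicable.
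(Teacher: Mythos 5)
Your proposal is correct and matches the paper's approach: the paper dispatches both claims of \Cref{lem:sr_wt_substitution} with the single line that they ``hold as standard'' (citing the session-types literature), and your induction on the typing derivation, isolating the rules that actually consume the substituted name and threading the disjointness/freshness side-conditions through \Rule{Par}, \Rule{Res}, \Rule{Rec} and \Rule{Var}, is exactly the standard argument being invoked. The only point worth adding is that for the variable claim one should also note the usual capture-avoidance proviso at the binding occurrences in \Rule{VRecv}/\Rule{DRecv}, symmetric to the $\alpha$-renaming you already flag for the role claim.
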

\begin{proof}
  \Cref{claim:sr_wt_substitution_role,claim:sr_wt_substitution_var} both hold as standard~\citeST.
\end{proof}

\subsubsection{Type-checking Messages}

\begin{lem}[Well-typed Messages]\label{lem:sr_wt_msg}
  Let
  \(
  \Gam;\Timers\Entails{\qp:\h}\TypedBy\Session,{\qp:\Que}
  \)
  and $\Session,{\qp:\Que}$ be \wf.
  \Cref{claim:sr_wt_msg_bt,claim:sr_wt_msg_del} both hold, for all \l.
\end{lem}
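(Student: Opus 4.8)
The plan is to establish both claims by a single induction on the length of the message queue $\h$ (equivalently, on the structure of $\h$ produced by the grammar $\h ::= \emptyset \mid \h\cdot\lv$). The key observation is that a judgement of the form $\Gam,\Timers\Entails\qp:\h\TypedBy\Session,{\qp:\Que}$ can only be derived using the queue rules \Rule{Empty}, \Rule{VQue} and \Rule{DQue} of~\Cref{fig:type_checking_communication}~\Cref{eq:type_checking_queues}. Each of \Rule{VQue} and \Rule{DQue} removes exactly one message from the head of $\h$ and exactly one message type from the head of $\Que$, so $\h$ and $\Que$ necessarily have the same length and are matched entry-by-entry. The engine of the proof is therefore the Inversion Lemma (\Cref{lem:tc_inversion}), and in particular \Cref{claim:tc_inversion_empty_queue,claim:tc_inversion_neq}, which let me peel messages off one at a time.

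For the base case $\h=\emptyset$, \Cref{claim:tc_inversion_empty_queue} forces $\Que=\emptyset$, and both claims hold vacuously. For the inductive step I would write $\h=\lv\cdot\h'$ and apply \Cref{claim:tc_inversion_neq}: this returns $\Que=\lT<>;\Que'$, exposes the residual judgement $\Gam,\Timers\Entails\qp:\h'\TypedBy\Session',{\qp:\Que'}$, and splits on the payload type $\T$. When $\T$ is a \bt, the inversion hands back $\Gam\Entails\v:\T$ directly, which is exactly \Cref{claim:sr_wt_msg_bt} for the head message; when $\T$ is a \dt, it supplies a configuration $\q:\Cfg'$ in the session environment together with $\Val'\models\Constraints'$, which is \Cref{claim:sr_wt_msg_del} for the head. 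In both cases the induction hypothesis applied to $\qp:\h'$ against $\Session',{\qp:\Que'}$ discharges the remaining messages.

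Since the lemma is essentially an iterated reading-off of the inversion result, there is no deep obstacle; the only care needed is bookkeeping. First, I must check that the residual environment $\Session',{\qp:\Que'}$ coming from the premise of \Rule{VQue}/\Rule{DQue} is still \wf, which is immediate because it is a sub-environment of a \wf\ one sharing the same configurations. Second, I must keep the positional alignment between $\h$ and $\Que$ straight through the recursion, which is guaranteed by the fact that each queue rule advances both lists by exactly one entry. The one thing to state carefully is that both claims are asserted for an arbitrary message (hence the quantification over \l) rather than only for the head of the queue, so that threading the induction hypothesis through the correct sub-environment in each branch actually covers every message in $\h$.
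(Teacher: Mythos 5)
You have misread the direction of the claims, and as a result your argument establishes a different statement from the one in the lemma. In \Cref{claim:sr_wt_msg_bt} the judgement $\Gam\Entails{\v:\T}$ is a \emph{hypothesis} about a \emph{new} message that is being appended at the tail of the queue, and the conclusion to be derived is the typing judgement for the \emph{extended} queue, namely $\Gam,\Timers\Entails{\qp:\h\cdot {\lv}}\TypedBy\Session,{\qp:\Que;{\lT<>}}$ (and analogously for \Cref{claim:sr_wt_msg_del}); this is the ``enqueue preserves typing'' property consumed in the \Rule{Send} case of \Cref{lem:sr_action_step}. Your proposal instead reads the claims as asserting that every payload already stored in \h\ is well-typed against the corresponding entry of \Que, i.e.\ an iterated inversion of the lemma's hypothesis. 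The misreading is visible where you declare the base case $\h=\emptyset$ ``vacuous'' --- it is in fact the crux: there one must produce a fresh derivation of $\qp:{\lv}$ against $\qp:{\lT<>}$ by applying \Rule{VQue} (resp.\ \Rule{DQue}) over \Rule{Empty}, consuming $\Gam\Entails{\v:\T}$ (resp.\ $\Val\models\Constraints$) as a premise of that rule --- and again where you say the inversion ``hands back $\Gam\Entails\v:\T$ directly, which is exactly \Cref{claim:sr_wt_msg_bt} for the head message''.

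The repair is small but essential. Keep the induction on \h\ and the appeal to \Cref{claim:tc_inversion_empty_queue,claim:tc_inversion_neq}, but reorient the goal: in the base case apply \Rule{VQue}/\Rule{DQue} once to \emph{derive} the extended judgement; in the inductive step, after inverting the head application of \Rule{VQue}/\Rule{DQue} to expose the residual judgement for the tail of \h\ against the tail of \Que, apply the induction hypothesis to append the new message at the back of that residual, and then \emph{re-apply} the head rule to rebuild the full derivation for the extended queue against the correspondingly extended queue type. This rebuilding step is absent from your proposal. For comparison, the paper's own proof compresses the whole argument into a single application of \Rule{VQue} (resp.\ \Rule{DQue}) ``using the given assumptions as premise''; your inversion bookkeeping is compatible with that argument, but only once the direction of the claims is corrected.
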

\begin{clm}\label{claim:sr_wt_msg_bt}
  If $\Gam\Entails{\v:\T}$,
  then
  $\Gam;\Timers\Entails{\qp:\h\cdot {\lv}}\TypedBy\Session,{\qp:\Que;{\lT<>}}$.
\end{clm}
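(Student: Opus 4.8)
The plan is to prove \Cref{claim:sr_wt_msg_bt} by induction on the structure of the queue $\h$ (equivalently, on the derivation of the hypothesis $\Gam,\Timers\Entails{\qp:\h}\TypedBy\Session,{\qp:\Que}$ of \Cref{lem:sr_wt_msg}), keeping the appended label $\l$, value $\v$, and its type $\T$ fixed throughout. The key observation is that the queue-typing rules \Rule{VQue} and \Rule{DQue} of \Cref{fig:type_checking_communication} consume messages at the \emph{front} of both the process queue and the queue type, whereas the statement appends $\v$ at the \emph{back}. The induction therefore peels the front message off $\h$, recurses on its tail, and reattaches the front message afterwards, relying on the associativity of concatenation $\cdot$ on process queues and of $;$ on queue types.

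For the base case $\h=\emptyset$ I would apply \Cref{claim:tc_inversion_empty_queue} to the hypothesis to conclude $\Que=\emptyset$, so that the goal reduces to $\Gam,\Timers\Entails{\qp:\lv}\TypedBy\Session,{\qp:\lT<>}$. This follows by a single application of \Rule{VQue}: its premise $\Gam\Entails\v:\T$ is the assumption of the claim (and value typing assigns only base types, so $\T$ is a base type as \Rule{VQue} requires), while its remaining premise $\Gam,\Timers\Entails{\qp:\emptyset}\TypedBy\Session,{\qp:\emptyset}$ is exactly the hypothesis with $\Que=\emptyset$.

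For the inductive step $\h$ has a non-empty front, and I would invert the hypothesis with \Cref{claim:tc_inversion_neq}, which splits according to whether that front message carries a base type or a delegation. In the base-type sub-case inversion gives $\Que=\lT<>';\Que'$, a value-typing premise for the front message, and a typing of the tail $\Gam,\Timers\Entails{\qp:\h'}\TypedBy\Session,{\qp:\Que'}$; the induction hypothesis applied to $\h'$ then yields $\Gam,\Timers\Entails{\qp:\h'\cdot\lv}\TypedBy\Session,{\qp:\Que';\lT<>}$, and re-applying \Rule{VQue} to reattach the front message gives, after rewriting with associativity, the required $\Gam,\Timers\Entails{\qp:\h\cdot\lv}\TypedBy\Session,{\qp:\Que;\lT<>}$. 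The delegation sub-case is symmetric through \Rule{DQue}: here inversion additionally exposes a configuration $\Cfg'$ for the delegated session in the environment together with a side condition $\Val'\models\Constraints'$, both of which are simply carried unchanged through the re-application of \Rule{DQue}.

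The main obstacle is bookkeeping rather than anything deep. One must track that in the delegation sub-case the session environment carries the extra configuration $\Cfg'$ per delegated message, verify that the reduced environment fed to the induction hypothesis (the lemma's $\Session$ with that delegated configuration removed) is still \wf\ — which holds because \Cref{def:wf_session} constrains only the role configurations and is insensitive both to the queue types and to dropping a \wf\ delegated role — and discharge the constraint side conditions of \Rule{DQue} unchanged. Since the appended message is base-typed, the concluding \Rule{VQue} step of the reconstruction is the only place where the assumption $\Gam\Entails\v:\T$ is consumed.
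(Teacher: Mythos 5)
Your proof is correct, but it is genuinely more elaborate than --- and structurally different from --- the paper's own argument. The paper discharges this claim with a \emph{single} application of rule \Rule{VQue}, plugging the lemma's hypotheses directly into the rule's premises; in doing so it writes the conclusion with the new message at the \emph{front} of both queues (the shape $l\langle v\rangle\cdot h$ against $l\langle T\rangle;\mathtt{M}$) and silently identifies this with the back-appended form $h\cdot l\langle v\rangle$ against $\mathtt{M};l\langle T\rangle$ that the claim actually asserts and that the \Rule{Send} case of subject reduction actually needs. You are right that \Rule{VQue} and \Rule{DQue} only ever extend a typed queue by prepending at the head, so a literal reading of the statement requires exactly the induction you describe: peel the front message off $h$ by inversion, apply the induction hypothesis to the tail, reattach the front message, and finish with associativity of concatenation. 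Your handling of the delegation sub-case (carrying the delegated configuration and its constraint side condition through unchanged, and checking that the environment fed to the induction hypothesis remains \emph{well-formed} after dropping that configuration) is also sound, as is your observation that the base-type premise of \Rule{VQue} is implicit in this being the base-type half of the lemma. In short, the paper buys a one-line proof by glossing over the front-versus-back distinction; your induction is what one would have to write to make that one-liner rigorous, at the cost of some queue bookkeeping.
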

\begin{clm}\label{claim:sr_wt_msg_del}
  If $\T=\Del$
  and $\b\neq\Dom{\Session}$
  and $\exists \Val$
  such that $\Val\models\Constraints$,
  then:
  \[
    \Gam;\Timers\Entails{\qp:\h\cdot\l\b}
    \TypedBy
    \Session,{\qp:\Que;\lT<>},{\b:\Cfg}
  \]
\end{clm}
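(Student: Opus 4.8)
The plan is to prove \Cref{claim:sr_wt_msg_bt,claim:sr_wt_msg_del} simultaneously by induction on the structure (equivalently, the length) of the process queue $\h$. The key observation is that the queue-typing rules \Rule{Empty}, \Rule{VQue} and \Rule{DQue} decompose a queue by its \emph{head} (its leftmost message) and the type-queue by its \emph{front} entry, whereas the lemma appends a message at the \emph{tail}. Thus appending at the back is a ``snoc'' operation that the induction pushes all the way down to the empty queue, where it becomes an ordinary head insertion; reattaching the original head messages on the way back up then only requires associativity of queue concatenation ($\cdot$) and of the type-queue separator ($;$), i.e. $m_0\cdot(\h'\cdot\lv) = (m_0\cdot\h')\cdot\lv$ and $\lT<>_0;(\Que';\lT<>) = (\lT<>_0;\Que');\lT<>$.

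For the base case $\h=\emptyset$, I would use \Cref{claim:tc_inversion_empty_queue} of \Cref{lem:tc_inversion} to conclude that the hypothesis forces the $\qp$-entry to be empty, so $\Que=\emptyset$. For \Cref{claim:sr_wt_msg_bt} I then apply rule \Rule{VQue} to $\qp:\emptyset\cdot\lv=\qp:\lv$, discharging its three premises with the assumption that $\T$ is a \bt, the hypothesis $\Gam\Entails\v:\T$, and the \Rule{Empty}-typing of the remaining $\qp:\emptyset$; this yields $\Gam,\Timers\Entails{\qp:\lv}\TypedBy\Session,{\qp:\lT<>;\emptyset}$, which is $\qp:\lT<>$ as required. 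For \Cref{claim:sr_wt_msg_del} I instead apply rule \Rule{DQue}, discharging its premises with $\T=\Del$ and the hypothesised witness $\Val$ such that $\Val\models\Constraints$; here the freshness hypothesis $\b\notin\Dom{\Session}$ is exactly what makes the conclusion's session $\Session,{\qp:\lT<>},{\b:\Cfg}$ a well-defined disjoint extension.

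For the inductive step $\h=m_0\cdot\h'$, where $m_0$ is the head message (a base-type payload or a delegated role), I would invert the hypothesis with \Cref{claim:tc_inversion_neq} of \Cref{lem:tc_inversion}, which peels $m_0$, exposes $\Que=\lT<>_0;\Que'$, and returns a typing of the tail $\qp:\h'$ against $\Session_0,{\qp:\Que'}$ together with the head's side conditions (its payload judgement, or its delegation constraint together with the threaded-out session). Since well-formedness (\Cref{def:wf_session}) constrains only the role-configuration entries and ignores queues, the smaller session is still \wf, so both the inversion lemma and the induction hypothesis apply to $\qp:\h'$; the latter appends the new message at the tail, typing $\qp:\h'\cdot\lv$ (resp. $\qp:\h'\cdot\l\b$) against $\Session_0,{\qp:\Que';\lT<>}$. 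Reapplying the very rule used to type the head ($\Rule{VQue}$ or $\Rule{DQue}$) then reattaches $m_0$ and, by the associativity noted above, delivers a typing of the full queue $\qp:(m_0\cdot\h')\cdot\lv=\qp:\h\cdot\lv$ against $\Session,{\qp:\Que;\lT<>}$.

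The main obstacle I anticipate is the session-environment bookkeeping in the presence of delegated messages: each delegated head message threads a session $\q:\Cfg$ out of the conclusion of \Rule{DQue}, so I must track how these roles accumulate, keep all domains disjoint, re-establish each delegated entry's constraint-satisfaction condition ($\Val\models\Constraints$), and ensure that the freshness of the newly appended role $\b$ survives as the induction descends and the head sessions are reattached. Since $\b\notin\Dom{\Session}$ holds for the full ambient session (which already contains every queue-delegated role), this distinctness is available throughout; nonetheless, verifying that $\Session,{\qp:\Que}$ remains \wf\ at each peeling step, as a prerequisite for each invocation of \Cref{lem:tc_inversion}, is the accompanying routine but fiddly part.
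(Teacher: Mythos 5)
Your proof is correct, and it is in fact more careful than the one the paper gives. The paper's entire proof of \Cref{claim:sr_wt_msg_bt} (and, ``similarly'', of \Cref{claim:sr_wt_msg_del}) is a single application of rule \Rule{VQue} (resp.\ \Rule{DQue}) ``using the given assumptions as premise'', and the rule instance it displays types ${\lv}\cdot\h$ against ${\lT<>};\Que$ --- that is, it inserts the new message at the \emph{head} of both the process queue and the type queue, whereas the claim appends it at the \emph{tail}. That one-step argument is only literally the stated claim when $\h=\emptyset$ (or if one silently identifies the two insertion orders); the induction on $\h$ that you carry out --- peeling the head with \Cref{claim:tc_inversion_neq} of \Cref{lem:tc_inversion}, applying the induction hypothesis to the tail, and reattaching the head with \Rule{VQue}/\Rule{DQue} via associativity of $\cdot$ and $;$ --- is exactly the bridge needed between the cons-shaped typing rules and the snoc-shaped statement. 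Your treatment of the side conditions is also right: well-formedness of a session environment (\Cref{def:wf_session}) constrains only the role-configuration entries, so it survives the peeling; the delegation constraint $\Val\models\Constraints$ for each peeled head is returned by the inversion lemma and can be re-used when reattaching; and since every queue-delegated role already occurs in $\Dom{\Session}$, the hypothesis $\b\notin\Dom{\Session}$ keeps the appended entry $\b:\Cfg$ disjoint throughout. In short: same key rules as the paper, but your route is the fully rigorous version of an argument the paper leaves implicit.
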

\begin{proof}
  We proceed by addressing each claim in turn:
  \begin{description}[font=\normalfont]
    %
    \item[\Cref{claim:sr_wt_msg_bt}]
          Holds following the conclusion of rule \Rule{VQue} in~\Cref{fig:type_checking_communication}~\Cref{eq:type_checking_queues}, where we use the given assumptions as premise:
          \[
            \infer[\Rule{VQue}]{
            \Gam;\Timers\Entails{\qp:{\l\v}\cdot\h}\TypedBy\Session,%
            ~{\qp:{\lT<>};\Que}%
            }{
            \T~\text{\bt}%
            \quad%
            \Gam\Entails{\v:\T}%
            \quad%
            \Gam;\Timers\Entails{\qp:{\h}}\TypedBy\Session,%
            ~{\qp:{\Que}}%
            }%
          \]
          %
    \item[\Cref{claim:sr_wt_msg_del}]
          Holds following the conclusion of rule \Rule{DQue} in~\Cref{fig:type_checking_communication}~\Cref{eq:type_checking_queues}, similar to the case of~\Cref{claim:sr_wt_msg_bt} (above).
          \qedhere
  \end{description}
\end{proof}

\subsubsection{Type-checking Message Handling}

\begin{lem}\label{lem:sr_neq_prc}
  Let $\Session,{\qp:\lT<>;\Que}$ be \wf.
  If $\Gam;\Timers\Entails{\P}\TypedBy\Session,{\qp:\lT<>;\Que}$,
  then $\p\in\NEQ$.
\end{lem}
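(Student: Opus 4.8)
The plan is to argue by induction on the derivation of $\Gam,\Timers\Entails{\P}\TypedBy\Session,{\qp:\lT<>;\Que}$, inspecting the last \tc\ rule of \RefTypeCheck. The organising observation is that a \emph{non-empty} queue can be placed in a session environment only by rules \Rule{VQue} and \Rule{DQue}, and that both of these type a queue term $\qp:{\lv}\cdot\h$ with non-empty contents. In those two base cases the process \emph{is} such a queue, so the first clause of \NEQ\ in \Cref{fig:func_wait_neq} gives $\NEQ=\{\p\}$ and hence the thesis at once; the agreement between the non-empty entry in the environment and the non-empty term queue is read off from \Cref{claim:tc_inversion_neq} of \Cref{lem:tc_inversion}.

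The \emph{container} rules are discharged by relocating the entry ${\qp:\lT<>;\Que}$ into a premise and invoking the induction hypothesis. For \Rule{Par} (\Cref{claim:tc_inversion_par}) the entry lies in exactly one side of the split $\Session_1,\Session_2$, say the side typing \P, whence $\p\in\NEQ{\P}$ and therefore $\p\in\NEQ{\P}\cup\NEQ{\Q}=\NEQ{\P\mid\Q}$. For \Rule{Rec} and \Rule{Weak} the entry passes unchanged into the premise that types the body, and the relevant clause of \NEQ\ is the identity, so $\p$ is retained. For \Rule{Res} the two endpoints bound by the restriction are fresh, hence distinct from $\p$; combining the induction hypothesis on the premise with the set-difference clause of \NEQ\ then preserves membership of $\p$.

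The remaining cases type participant processes (the sends, receptions, branchings, timeouts, conditionals, timer-sets and delays) or are the axioms \Rule{End}, \Rule{Empty} and \Rule{Var}; for every one of these the definition of \NEQ\ returns $\emptyset$, so the claim can hold only if none of them can ever carry a non-empty queue. The three axioms are immediate, as their environments contain no non-empty queue. To exclude the participant rules I would first prove, by a short parallel induction on the typing derivation, the auxiliary invariant that ${\qp:\Que}\in\Session$ with $\Que\neq\emptyset$ forces $\qp\in\FQ$ (\Cref{def:func_fq}): each participant rule threads the queue fragment of its environment unchanged (the time shifts $\Session+\t$ leave queue contents intact) into a premise while $\FQ$ only grows, and the induction bottoms out in \Rule{VQue}/\Rule{DQue}, where $\qp\in\FQ{\qp:{\lv}\cdot\h}$. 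Since \P\ is assumed well-formed by \Cref{def:func_wf}, every participant process has $\FQ=\emptyset$ (its continuations contain no free queues), which contradicts $\qp\in\FQ$; hence such a rule can never be the last step while a non-empty queue is present, and all these cases are vacuous.

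The step I expect to be the real obstacle is exactly this last one: the induction hypothesis by itself cannot close the participant cases, because \NEQ\ discards their continuations and so offers no way to propagate $\p$ upward. One genuinely needs the auxiliary ``non-empty queue implies free queue'' invariant, used in tandem with process well-formedness, to show these configurations do not occur. Two minor points also require attention: justifying via the freshness convention in \Rule{Res} that the bound endpoints differ from $\p$, and confirming that $\Session+\t$ in the time-sensitive reception and delay rules preserves queue non-emptiness, so that the invariant survives the shift.
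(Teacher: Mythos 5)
Your proof is correct and rests on the same skeleton as the paper's: induction on the typing derivation, with \Rule{VQue} and \Rule{DQue} as the base cases that actually yield $\p\in\NEQ$, and propagation through \Rule{Par}, \Rule{Res}, \Rule{Rec} and \Rule{Weak} via the matching clauses of \NEQ. The difference is one of completeness rather than of route: the paper's own proof is a two-line sketch that exhibits only the queue rules and is silent on every case where \NEQ\ returns $\emptyset$ (sends, receptions, branchings, timeouts, timer-sets, delays, conditionals, and the axioms). You are right that the induction hypothesis alone cannot close those cases and that they must instead be shown vacuous; your auxiliary invariant -- a non-empty queue entry in the session environment forces $\qp\in\FQ$ -- used together with the standing well-formedness assumption is precisely the argument the paper leaves implicit, and it goes through for all the prefixed processes whose clause in \Cref{def:func_wf} demands $\FQ{\P'}=\emptyset$ of the continuation. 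The one soft spot, which afflicts the paper's sketch equally, is the conditional: for \If{\Constraints}~\Then{\P'}~\Else{\Q}\ \Cref{def:func_wf} requires only $\WF{\P'}\land\WF{\Q}$ and does not force $\FQ{\P'}=\emptyset$ on the branches, so the contradiction with well-formedness is not immediate there and needs a further word (for instance, that a free non-empty queue inside a branch would duplicate a queue entry already owned by another parallel component, or violate the enclosing scope's $\FQ$ condition). Apart from that, your identification of the participant cases as the genuine obstacle is accurate, and your handling of \Rule{Res} via disjointness of the extended environment is exactly what the set-difference clause of \NEQ\ requires.
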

\begin{proof}
  By induction on the typing derivation,
  the key rules are \Rule{VQue} are \Rule{DQue}, which are structured such that:
  \(
  \Gam;\Timers\Entails{\qp:{\lv}\cdot\h}\TypedBy\Session,~{\qp:\Msg;\Que}
  \)
  where $\Msg=\lT<>$.
  The thesis follows~\Cref{fig:func_wait_neq}, as $\p\in\NEQ{\qp:{\lv}\cdot\h}$ holds.
\end{proof}

\begin{lem}\label{lem:sr_neq_cfg}
  Let \Session\ be \wf.
  If $\Gam;\Timers\Entails{\P}\TypedBy\Session$
  and $\p\in\NEQ$,
  then $\exists \Session'$ such that $\Session=\Session',~{\qp:\lT<>;\Que}$.
\end{lem}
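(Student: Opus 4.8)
My plan is to prove this as the converse of \Cref{lem:sr_neq_prc}, by induction on the structure of $\P$ following the four non-trivial clauses of the definition of \NEQ\ in~\Cref{fig:func_wait_neq}. Since $\p\in\NEQ$, the default clause (which returns $\emptyset$) is excluded, so exactly one of the four structural clauses applies, and in each inductive step I would reduce the typing judgement of $\P$ to that of the relevant subprocess by the appropriate part of the Inversion Lemma (\Cref{lem:tc_inversion}).

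For the base clause $\P = {\qp:\h}$ with $\h\neq\emptyset$, I would write $\h = {\lv}\cdot\h'$ and read off the thesis directly: inverting $\Gam,\Timers\Entails{\qp:{\lv}\cdot\h'}\TypedBy\Session$ via \Cref{claim:tc_inversion_neq} of~\Cref{lem:tc_inversion} forces $\Session = \Session',{\qp:{\lT<>};\Que}$, which is precisely what is required.

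The parallel and definition clauses are routine. If $\P = \P'\mid\Q$ then $\NEQ = \NEQ{\P'}\cup\NEQ{\Q}$; say $\p\in\NEQ{\P'}$ (the other case is symmetric). By \Cref{claim:tc_inversion_par} we have $\Session=(\Session_1,\Session_2)$ with $\P'$ typed against $\Session_1$, and since \wfness\ is imposed configuration-by-configuration (\Cref{def:wf_session}), $\Session_1$ is again \wf; the induction hypothesis on $\P'$ then gives $\Session_1 = \Session_1',{\qp:{\lT<>};\Que}$, hence $\Session = (\Session_1',\Session_2),{\qp:{\lT<>};\Que}$. If $\P = \Def{\Rec*{\vec{v};\vec{r}}=\P'}:{\Q}$ then $\NEQ = \NEQ{\Q}$, and by \Cref{claim:tc_inversion_def} the continuation $\Q$ is typed against the same (still \wf) environment $\Session$ under an extended $\Gam$, so the induction hypothesis applies unchanged.

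The one delicate case, which I expect to be the main obstacle, is the scope clause $\P = \Scope{st}{\Q}$ binding a pair of roles $s,t$, where $\NEQ = \NEQ{\Q}\setminus\{s,t\}$ and hence $\p\in\NEQ{\Q}$ with $\p\notin\{s,t\}$. By \Cref{claim:tc_inversion_scope}, $\Q$ is typed against $\Session$ extended with two \wf\ configurations for $s$ and $t$ together with the two session-local queues, read respectively by $s$ and by $t$; this extended environment is therefore \wf, and the induction hypothesis supplies an entry ${\qp:{\lT<>};\Que}$ in it. The crux is to argue that this entry survives the removal of the scope-bound names: its reader index is $\p$, and since $\p\notin\{s,t\}$ it cannot be either of the two session-local queues (whose readers are $s$ and $t$), so it must already occur in \Session. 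This reader-index observation, read straight off the set difference $\NEQ{\Q}\setminus\{s,t\}$, is the only point requiring care; the remaining bookkeeping is a direct appeal to the Inversion Lemma.
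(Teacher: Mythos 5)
Your proposal is correct and matches the paper's argument: both hinge on locating the non-empty queue subterm ${\qp:\lv\cdot\h}$ inside $\P$ and then applying \Cref{claim:tc_inversion_neq} of~\Cref{lem:tc_inversion} to force the corresponding entry ${\qp:\lT<>;\Que}$ into $\Session$. The paper compresses your structural induction into the single observation that $\P\equiv{\qp:\lv\cdot\h}\mid\Q$, so your explicit treatment of the parallel, definition and scope clauses (in particular the reader-index argument for scope) is just a more careful rendering of the same route.
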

\begin{proof}
  By the definition of \NEQ\ in~\Cref{fig:func_wait_neq}, it must be that
  $\P\equiv{\qp:\lv\cdot\h}\mid\Q$,
  and by~\Cref{claim:tc_inversion_neq} of~\Cref{lem:tc_inversion}, we obtain our thesis.
\end{proof}

\begin{lem}\label{lem:sr_empty_queue_cfg}
  Let \Session\ be \wf.
  If $\Gam;\Timers\Entails{\P}\TypedBy\Session$
  and \NEQ0,
  then \NEQ{\Session}0.
\end{lem}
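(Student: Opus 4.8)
The plan is to argue by contraposition and reduce to Lemma~\Cref{lem:sr_neq_prc}. Throughout, I read $\NEQ{\Session}0$ as the assertion that $\Session$ records no non-empty queue, i.e.\ there is no role $\p$ with a mapping $\qp:\Que\in\Session$ such that $\Que\neq\emptyset$; dually, $\NEQ0$ says the process $\P$ exposes no endpoint carrying a non-empty inbound queue. Rather than establish $\NEQ{\Session}0$ directly, I would assume its negation and exhibit a witness $\p\in\NEQ$, contradicting the hypothesis $\NEQ0$.

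First I would unfold the assumption that $\Session$ carries a non-empty queue: there is a role $\p$ and a mapping $\qp:\Que\in\Session$ with $\Que\neq\emptyset$. Since a queue is a list of messages, a non-empty one has a head, so $\Que=\lT<>;\Que'$ for some message $\lT<>$, and $\Session$ can be written as $\Session',{\qp:\lT<>;\Que'}$ for a suitable $\Session'$. The typing hypothesis $\Gam,\Timers\Entails\P\TypedBy\Session$ then reads $\Gam,\Timers\Entails\P\TypedBy\Session',{\qp:\lT<>;\Que'}$, and the hypothesis that $\Session$ is \wf\ supplies well-formedness of exactly this environment --- these are precisely the premises of Lemma~\Cref{lem:sr_neq_prc}.

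Applying Lemma~\Cref{lem:sr_neq_prc} (with its rest-environment instantiated to my $\Session'$ and its queue head to $\lT<>$) yields $\p\in\NEQ$, that is $\mathtt{NEQ}(\P)\neq\emptyset$, which contradicts $\NEQ0$. Hence $\Session$ can contain no non-empty queue, which is the thesis $\NEQ{\Session}0$. Note that no induction is required: a single offending queue mapping in $\Session$ already forces a witness in $\mathtt{NEQ}(\P)$, so it suffices to expose one such mapping and invoke the lemma once.

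The only real difficulty is bookkeeping rather than mathematics: I must make sure the decomposition $\Session=\Session',{\qp:\lT<>;\Que'}$ matches the exact shape demanded by Lemma~\Cref{lem:sr_neq_prc} --- well-formedness of the whole environment together with a typing judgement against it --- so that the lemma applies with no extra work. If the paper instead defines $\mathtt{NEQ}(\Session)$ by the session-environment analogue of the clauses for processes in Figure~\Cref{fig:func_wait_neq}, the same decomposition still isolates the non-empty queue, and the argument goes through unchanged.
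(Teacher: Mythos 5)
Your proposal is correct, but it takes a genuinely different route from the paper's own proof. You argue by contraposition: any non-empty entry $qp:\langle l\,T\rangle;\mathtt{M}'$ in $\Delta$ lets you apply \Cref{lem:sr_neq_prc} once to obtain $p\in\mathtt{NEQ}(P)$, contradicting $\mathtt{NEQ}(P)=\emptyset$. The paper instead works forwards from the process: since $\mathtt{NEQ}(P)=\emptyset$, every queue process $qp:h$ occurring in $P$ has $h=\emptyset$; \Cref{claim:tc_inversion_par} of \Cref{lem:tc_inversion} splits $\Delta$ so that each such queue process is typed against its own sub-environment, and \Cref{claim:tc_inversion_empty_queue} then forces the corresponding entry of $\Delta$ to be the empty queue. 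The two arguments are contrapositives of one another and rest on the same underlying fact, namely that the rules $\Rule{Empty}$, $\Rule{VQue}$ and $\Rule{DQue}$ synchronise the emptiness of a process queue with that of its environment entry. Your version is slightly more economical, since \Cref{lem:sr_neq_prc} already packages the induction on the typing derivation; the paper's version has the mild advantage of making explicit the totality fact both proofs implicitly need, namely that every queue entry of $\Delta$ arises from typing some queue process of $P$ (no typing rule other than $\Rule{Empty}$, $\Rule{VQue}$, $\Rule{DQue}$ introduces a queue into $\Delta$). Your reading of $\mathtt{NEQ}(\Delta)=\emptyset$ as ``$\Delta$ contains no non-empty queue'' is the intended one: it matches how the lemma is consumed in the $\Rule{Res}$ case of \Cref{lem:sr_time_step_preserve_wf}.
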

\begin{proof}
  Since \NEQ0, it follows that $\forall \qp:\h$ such that $\P\equiv \qp:\h\mid\Q$, then $\h=\emptyset$.
  By~\Cref{claim:tc_inversion_par} in~\Cref{lem:tc_inversion}, $\Session=\Session_1,\Session_2$ and,
  \(
  \Gam;\Timers\Entails\qp:\h\TypedBy\Session_1
  \)
  and
  \(
  \Gam;\Timers\Entails\Q\TypedBy\Session_2
  \).
  By~\Cref{claim:tc_inversion_empty_queue} in~\Cref{lem:tc_inversion}, it follows that $\Session_1=\Session_1',{\qp:\Que}$ and $\Que=\emptyset$.
  We obtain our thesis:
  given that all message queues in \P\ are empty, if \P\ is \wt\ against \Session\ then, there exists corresponding queues in \Session\ that are all also empty.
\end{proof}

\subsubsection{Delayable Session Environments}

\begin{lem}\label{lem:sr_delay_session_preserve_bal}
  Let \Session\ be \wf.
  If \Session\ is \bal\
  and \del\
  and $\Session+\t$ is \wf,
  then $\Session+\t$ is \bal.
\end{lem}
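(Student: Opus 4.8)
The plan is to verify directly that $\Session+\t$ satisfies the three defining conditions of \Cref{def:bal_session}, exploiting the fact that time-shifting leaves the domain, the types, and every queue content of \Session\ untouched and merely advances all clock valuations uniformly by \t. First I would record the structural observations. Since $\Dom{\Session+\t}=\Dom{\Session}$ and $(\Session+\t)(\qp)=\Session(\qp)$ for every queue $\qp$ (adding \t\ acts only on valuations), \del ness is inherited: $\Session+\t$ is \del\ because \Session\ is. By \Cref{def:wf_session} together with \Cref{lem:tp_wf_cfg_is_live}, every configuration occurring in \Session\ and in $\Session+\t$ is \wf, hence \live. The case $\t=0$ is trivial because $\Session+0=\Session$, so I may assume $\t>0$ throughout.

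Next I would dispatch the two easy conditions. Condition \Cref{case:bal_session_recv} is vacuous: matching the pattern $\Session+\t=\Session',{\p:\Cfg},{\qp:{\Msg;\Que}}$ would require both $\p\in\Dom{\Session+\t}$ and a non-empty queue $\qp$, which contradicts the \del ness of $\Session+\t$ just established. For condition \Cref{case:bal_session_compat}, suppose $\Session+\t=\Session',{\p:\Cfg{\Val_1+\t},{\S_1}},{\qp:\Que_1},{\q:\Cfg{\Val_2+\t},{\S_2}},{\pq:\Que_2}$. Delayability forces $\Que_1=\emptyset=\Que_2$, so \Cref{cond:cfg_compat_dual} of \Cref{def:cfg_compat}, applied to the corresponding component of the balanced \Session, gives $\S_1=\D_2$ and $\Val_1=\Val_2$; since the types are unchanged by the shift and $\Val_1=\Val_2$ implies $\Val_1+\t=\Val_2+\t$, \Cref{cond:cfg_compat_dual} holds again, yielding $\Cfg{\Val_1+\t},{\S_1};{\emptyset}\Compat\Cfg{\Val_2+\t},{\S_2};{\emptyset}$ as required.

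The crux is condition \Cref{case:bal_session_missing_queue}, where $\Session+\t=\Session',{\p:\Cfg{\Val_1+\t},{\S_1}},{\qp:\Que_1},{\q:\Cfg{\Val_2+\t},{\S_2}}$ with the $\pq$ queue absent. Delayability again forces $\Que_1=\emptyset$, and balancedness of \Session\ supplies a witness \Que_2\ with $\Cfg{\Val_1},{\S_1};{\emptyset}\Compat\Cfg{\Val_2},{\S_2};{\Que_2}$. I would show that the completion can be taken with an empty queue. Form the binary system $\Cfg{\Val_1},{\S_1};{\emptyset}\mid\Cfg{\Val_2},{\S_2};{\Que_2}$, whose components are \wf\ and remain \wf\ after the shift by hypothesis. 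If $\Que_2=\Msg;\Que_2'$ were non-empty, then \Cref{cond:cfg_compat_recv_other} of \Cref{def:cfg_compat} gives $\Cfg{\Val_2},{\S_2}\Trans:{?\Msg}$, i.e.\ the completed system offers an urgent reception at the current valuation; this is incompatible with letting the positive duration \t\ elapse, since the urgency premise of rule \Rule{time} prohibits any time step while a queued message is receivable. Hence $\Que_2=\emptyset$, and then the component is exactly as in condition \Cref{case:bal_session_compat}: $\S_1=\D_2$ and $\Val_1=\Val_2$, so $\Val_1+\t=\Val_2+\t$ and $\Que_2'=\emptyset$ witnesses \Cref{case:bal_session_missing_queue} for $\Session+\t$. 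With the empty witness in hand, the delay transition of duration \t\ for $\Cfg{\Val_1},{\S_1};{\emptyset}\mid\Cfg{\Val_2},{\S_2};{\emptyset}$ is derivable through rule \Rule{wait} (the configuration steps hold by \Rule{tick}, persistency follows from liveness of the shifted \wf\ configurations, and urgency is immediate for empty queues), so \Cref{lem:tp_cfg_trans_preserve_compat} also yields the preserved compatibility directly.

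The main obstacle is precisely this reduction to an empty witness queue: one must rule out that the hypothetical $\pq$-queue supplied by balancedness blocks the passage of time through the urgency premise of \Rule{time}. The interaction of delayability, compatibility, and urgency isolated above — mirroring \Cref{lem:tp_compat_cfg_delay_empty_que}, which certifies that a genuine delay of positive duration can occur between configurations-with-queues only when both queues are empty — is the delicate point; everything else is routine bookkeeping that parallels the compatibility-preservation argument of \Cref{lem:tp_cfg_trans_preserve_compat}.
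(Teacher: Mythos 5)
Your proof has the same skeleton as the paper's: a case analysis on the three conditions of \Cref{def:bal_session}, using delayability to discharge \Cref{case:bal_session_recv} vacuously and to force the queues occurring in the other two patterns to be empty. Your treatment of \Cref{case:bal_session_compat} is actually more self-contained than the paper's (which defers to the transition-preservation lemma, \Cref{lem:tp_cfg_trans_preserve_compat}): you read off duality and equality of valuations from \Cref{cond:cfg_compat_dual} of \Cref{def:cfg_compat}, note that a uniform shift preserves both, and observe that empty-queue pairs of this shape form a relation closed under the compatibility conditions. That part is correct.

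The gap is in \Cref{case:bal_session_missing_queue}. To conclude that the existential witness queue can be taken empty, you argue that a non-empty witness would make a reception urgent and thereby be ``incompatible with letting the positive duration $t$ elapse'' by the urgency premise of rule \Rule{time}. But the lemma's hypotheses never assert that a semantic time step of duration $t$ is derivable for this (hypothetical) completed system: $\Delta+t$ is a purely syntactic shift of the valuations, and all you are given is that $\Delta$ is well-formed, balanced and delayable and that $\Delta+t$ is well-formed. Delayability (\Cref{def:delayable_session}) constrains only the queues that actually occur in the domain of $\Delta$, whereas the witness in \Cref{case:bal_session_missing_queue} of \Cref{def:bal_session} is existentially quantified and lies outside the environment; so nothing forces it to be empty, and there is no instance of rule \Rule{time} whose premises you could contradict. (Compare \Cref{lem:tp_compat_cfg_delay_empty_que}, which does draw your conclusion but explicitly \emph{assumes} the time transition exists.) To repair the case you should do what the paper does: reduce it to the both-queues-present case by taking an arbitrary witness $\mathtt{M}_2$ with $(\nu_1,S_1,\mathtt{M}_1)\Compat(\nu_2,S_2,\mathtt{M}_2)$ and show that compatibility of that completed pair is preserved under the shift (via \Cref{lem:tp_cfg_trans_preserve_compat}), rather than first normalising the witness to be empty.
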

\begin{proof}
  We show that $\Session+\t$ satisfies all conditions of \Cref{def:bal_session}:
  \begin{caseanalysis}
    %
    \item\label{case:sr_delay_session_preserve_bal_recv}
    Since \Session\ is \del, by~\Cref{def:delayable_session}, all queues in the domain of \Session\ are \emph{empty}.
    Therefore, the case where $\Session=\Session',{\p:\Cfg},{\qp:\Msg;\Que}$, conflicts with the hypothesis, and is not applicable.
    %
    \item\label{case:sr_delay_session_preserve_bal_scope}
    If $\Session=\Session',{\p:\Cfg_1},{\qp:\Que_1},{\q:\Cfg_2},{\pq:\Que_2}$,
    then $\Cfg;{\Que}_1\Compat\Cfg;{\Que}_2$.
    Since \Cfg_1\ and \Cfg_2\ are \wf,
    then by~\Cref{lem:tp_cfg_trans_preserve_wf,lem:tp_cfg_trans_preserve_compat},
    \Cfg{\Val_1+\t},{\S_1}\ and \Cfg{\Val_2+\t},{\S_2}\ are \wf,
    and $\Cfg{\Val_1+\t},{\S_1};{\Que}\Compat\Cfg{\Val_2+\t},{\S_2};{\Que_2}$.
    Therefore,
    $\Session+\t$ is \bal.
    \item\label{case:sr_delay_session_preserve_bal_missing_queue}
    If $\Session=\Session',{\p:\Cfg_1},{\qp:\Que_1},{\q:\Cfg_2}$,
    then $\exists \Que_2:\Cfg;{\Que}_1\Compat\Cfg;{\Que}_2$ and the thesis follows~\Cref{case:sr_delay_session_preserve_bal_scope}.
    \qedhere
  \end{caseanalysis}
\end{proof}

\begin{lem}\label{lem:sr_not_t_reading_session_delayable}
  If \Session\ is \bal, \wf\ and not \t-reading,
  then \Session\ is \del.
\end{lem}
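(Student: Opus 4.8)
The plan is to argue by contradiction, reducing non-delayability to the existence of an immediately enabled receive action, which is exactly what being \t-reading rules out. So I would suppose, toward a contradiction, that \Session\ is \bal\ and \wf\ but \emph{not} \del. Unfolding \Cref{def:delayable_session}, failure of delayability supplies a queue endpoint $\qp\in\Dom{\Session}$ with $\Session(\qp)\neq\emptyset$ together with the receiving role $\p\in\Dom{\Session}$. Writing the head of the non-empty queue explicitly, say $\Session(\qp)=\Msg;\Que$ and $\Session(\p)=\Cfg$, the environment then decomposes as $\Session=\Session',{\p:\Cfg},{\qp:{\Msg;\Que}}$.

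The key step is to invoke balancedness. Since $\Session\in\Bal$ and it has precisely the shape demanded by \Cref{case:bal_session_recv} of \Cref{def:bal_session}, that condition forces $\Cfg\Trans:{?\Msg}\Cfg'$; that is, the configuration sitting at role $\p$ can perform the receive $?\Msg$ already at its current valuation $\Val$.

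It then remains to match this enabled receive against \Cref{def:t_reading_session}. Taking the witness $\t'=0$ (which satisfies $\t'<\t$ whenever $\t>0$) together with the role $\p$, the transition $\Cfg\Trans:{?\Msg}$, read at $\Val+\t'=\Val$, is exactly a witness that \Session\ is \t-reading, contradicting the hypothesis. Hence \Session\ must be \del.

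I do not expect a genuine obstacle here: once \Cref{case:bal_session_recv} of \Cref{def:bal_session} is unpacked, the passage from ``not \del'' to ``\t-reading'' is immediate, and in fact \wfness\ is never consumed by the core deduction (it is carried only to keep the hypotheses aligned with neighbouring results such as \Cref{lem:sr_delay_session_preserve_bal}). The one point needing a word of care is the degenerate instance $\t=0$, for which no $\t'<\t$ exists; there the \del\ requirement is never genuinely exercised, since a zero-duration delay is a no-op under the structural congruence $\Delay{0}.\P\equiv\P$, so I would restrict the substantive argument to $\t>0$.
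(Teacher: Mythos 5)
Your proof is correct and takes essentially the same route as the paper's: argue by contradiction from \Cref{def:delayable_session}, apply \Cref{case:bal_session_recv} of \Cref{def:bal_session} to obtain an enabled receive $\Cfg\Trans:{?\Msg}$ at role \p, and contradict \Session\ not being \t-reading. Your explicit witness $\t'=0$ and the restriction to $\t>0$ are sound points of care that the paper's own proof leaves implicit.
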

\begin{proof}
  If we suppose by contradiction that \Session\ is \emph{not} \del,
  then by~\Cref{def:delayable_session}, $\exists \qp,\Msg,\Que$ such that $\qp\in\Dom{\Session}$ and $\Session(\qp)=\Msg;\Que$ and $\exists \p\in\Dom{\Session}$.
  Since \Session\ is \bal, for some $\Session(\p)=\Cfg'$
  it follows~\Cref{case:bal_session_recv} of~\Cref{def:bal_session}:
  \[
    \Session=\Session',{\p:\Cfg'},{\qp:\Msg;\Que}
    \implies
    \Cfg'\Trans:{?\Msg}\Cfg'{2}
    \quad\text{and}\quad
    \Session',{\p:\Cfg'{2}},{\qp:\Que}\in\Bal
  \]

  \noindent However, this contradicts with \Session\ not \t-reading:
  by~\Cref{def:t_reading_session}, there are no roles $\q\in\Dom{\Session}$ such that $\Session(\q)=\Cfg'{3}$ and $\Cfg{\Val'{3}+\t'},{\S'{3}}\Trans:{?\Msg'}$, for all $\t'<\t$.
  Therefore, \Session\ must be \del.
\end{proof}

\begin{lem}\label{lem:sr_wt_delay}
  Let \Session\ be \wf.
  If $\Gam;\Timers\Entails{\P}\TypedBy\Session$
  and \Time\ is \emph{defined}
  and \Session\ is \bal,
  then \Session\ is \del.
\end{lem}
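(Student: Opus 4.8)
The plan is to argue by contradiction and reduce everything to~\Cref{lem:sr_def_delay_no_recv}, which guarantees that whenever \Time\ is defined we have $\Wait\cap\NEQ=\emptyset$. So I assume \Session\ is \emph{not} \del\ and aim to exhibit a single role lying in both $\Wait$ and $\NEQ$, contradicting this disjointness. Unfolding the negation of~\Cref{def:delayable_session}, there is some $\qp\in\Dom{\Session}$ with $\Session(\qp)\neq\emptyset$ and $\p\in\Dom{\Session}$. Writing the head of that queue explicitly, $\Session=\Session',{\p:\Cfg},{\qp:\lT<>;\Que}$ for some message $\lT<>$, configuration \Cfg, and tail \Que.

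From the non-empty queue I would first extract $\p\in\NEQ$. Since \Session\ is \wf\ and $\Gam,\Timers\Entails\P\TypedBy\Session$ by hypothesis, \Cref{lem:sr_neq_prc} applied to the component ${\qp:\lT<>;\Que}$ yields exactly this: the pending message in \qp\ is witnessed at the process level by a non-empty inbound queue for \p.

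The crux is then to show that the \emph{same} role \p\ also lies in $\Wait$. Here I would use \balness: since \Session\ is \bal\ and contains both ${\p:\Cfg}$ and the non-empty queue ${\qp:\lT<>;\Que}$, \Cref{case:bal_session_recv} of~\Cref{def:bal_session} forces $\Cfg\Trans:{?\lT<>}\Cfg'$ — that is, the configuration at \p\ is genuinely able to consume the queued message. With this receive transition in hand, and using that $\Session',{\qp:\lT<>;\Que},{\p:\Cfg}$ (which is just \Session, hence \wf) satisfies the premises, \Cref{lem:sr_wt_wf_role_recv_prc_wait} delivers $\p\in\Wait$.

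Combining the two conclusions gives $\p\in\Wait\cap\NEQ$, contradicting \Cref{lem:sr_def_delay_no_recv} because \Time\ is defined; hence \Session\ must be \del. The step I expect to be the main obstacle is establishing $\p\in\Wait$, since it is the only place where information must be transported from the session-environment layer to the process layer: it is precisely \balness\ that certifies the queued message is receivable by \Cfg, supplying the transition $\Cfg\Trans:{?\lT<>}$ that \Cref{lem:sr_wt_wf_role_recv_prc_wait} requires. Everything else is a direct appeal to the cited lemmas.
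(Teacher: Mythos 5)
Your proof is correct, but it takes a genuinely different route from the paper's. The paper proves this lemma by induction on the structure of \P\ (equivalently, on the typing derivation), treating the branch-with-queue, timeout, scope, delay and parallel cases separately; in the key receive case it uses $\Wait\cap\NEQ=\emptyset$ only to force the \emph{local} queue to be empty, and then concludes delayability of the rest of the environment via \Cref{lem:sr_not_t_reading_session_delayable} (not $\e$-reading implies \del). You instead argue globally by contradiction, with no induction: a non-empty queue \qp\ together with $\p\in\Dom{\Session}$ yields $\p\in\NEQ$ by \Cref{lem:sr_neq_prc}; \balness\ (\Cref{case:bal_session_recv} of \Cref{def:bal_session}) supplies the receive transition $\Cfg\Trans:{?\Msg}$ that \Cref{lem:sr_wt_wf_role_recv_prc_wait} needs to conclude $\p\in\Wait$; and \Cref{lem:sr_def_delay_no_recv} closes the contradiction. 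This is in fact exactly the three-lemma pattern the paper itself deploys inside the \Rule{Par} case of \Cref{lem:sr_time_step}, so your observation is that this pattern already proves \Cref{lem:sr_wt_delay} outright. What your version buys is brevity and uniformity; what it costs is that the entire structural case analysis is delegated to \Cref{lem:sr_wt_wf_role_recv_prc_wait}, which your argument needs to hold for \emph{arbitrary} well-typed \P\ (parallel compositions, delays, recursive definitions), not only the head-receive and head-send forms that its written proof actually examines. That reliance is legitimate, since the lemma is stated in full generality, but it does mean your proof is only as robust as the full (unwritten) induction behind that lemma, whereas the paper's case analysis re-establishes the needed facts locally in each case.
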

\begin{proof}
  Since \Time\ is \emph{defined}, we
  assume that $\t>0$.
  We proceed by induction on the typing derivation of \P:
  \begin{caseanalysis}
    %
    \item\label{case:sr_wt_delay_branch_recv}
    If \(
    \P = \On{\p}\Recv^{\e}{\l}[\v]:{\P}_i \mid \qp:\h
    \)
    then by~\Cref{claim:tc_inversion_par} of~\Cref{lem:tc_inversion},
    $\Timers=(\Timers_1,\Timers_2)$,
    $\Session=(\Session_1,\Session_2)$ and:
    \[
      \Gam;\Timers_1\Entails\On{\p}\Recv^{\e}{\l}[\v]:{\P}_i\TypedBy\Session_1
      \qquad
      \Gam;\Timers_2\Entails\qp:\h\TypedBy\Session_2
    \]

    \noindent By~\Cref{claim:tc_inversion_branch} of~\Cref{lem:tc_inversion},
    \(
    \Session_1=\Session_1',{\p:\Cfg_1}
    \)
    where $\S_1=\Choice{\l}[\T]_j$ and
    \BranchCardinalityPremise.
    It also follows that $\forall j\in J$ such that $\Val_1\models\delta_j$, then $\Comm_j=?$
    and:
    \[
      \exists i\in I : \l_i=\l_j
      \quad\text{and}\quad
      \Gam;\Timers_1\Entails\On{\p}\Recv*^{\e}{\l}[\v]:{\P}_i\TypedBy\Session_1',{\p:\Cfg{\Val_1},{\Choice*{\l}[\T]_j}}
    \]

    \noindent By~\Cref{claim:tc_inversion_recv} of~\Cref{lem:tc_inversion},
    \Session_1'\ is not \e-reading, and $\forall\t:\Val_1+\t\models\Constraints_j\iff\t\in\e$.
    Since $\Val_1\models\Constraints_j$ and $\Comm_j=?$,
    by~\Cref{claim:tp_cfg_trans_act} of~\Cref{lem:tp_cfg_trans},
    by rule \Rule{act} %
    in~\Cref{eq:type_semantics_tuple}:
    \[
      \Cfg{\Val_1},{\Choice*{\l}[\T]_j}
      \Trans:{?\lT<>_j}
      \Cfg{\Val_1\Reset_j},{\S_j}
    \]

    \noindent By~\Cref{lem:sr_wt_wf_role_recv_prc_wait}, $\p\in\Wait$.
    By~\Cref{def:time_passing_func_ef}, $\Wait\cap\NEQ=\emptyset$.
    Therefore, $\p\notin\NEQ$ and, since $\p\notin\NEQ$, then $\qp:\h=\emptyset$.
    By~\Cref{claim:tc_inversion_empty_queue} of~\Cref{lem:tc_inversion},
    \(
    \Session_2=\Session_2',{\qp:\Que_1}
    \)
    and since $\h=\emptyset$ then by~\Cref{lem:sr_empty_queue_cfg}, $\Que_1=\emptyset$.
    It follows that \Session_2\ is \del, since there cannot be any other queues present in \Session_2'\ given that $\Gam;\Timers_2\Entails\qp:\h\TypedBy\Session_2',{\qp:\Que_1}$.
    It remains to show that \Session_1\ is \del\ in order to obtain our thesis and prove that \Session\ is \del.
    Since \Session\ is \bal, it follows \Session_1'\ is also \bal,
    Since \Session_1'\ is \bal, \wf\ and not \e-reading, the thesis follows~\Cref{lem:sr_not_t_reading_session_delayable}.
    %
    \item\label{case:sr_wt_delay_timeout_recv}
    If \(
    \P = \On{\p}\Recv^{\Diam\n}{\l}[\v]:{\P}_i~\After:{\Q} \mid \qp:\h
    \)
    then by~\Cref{claim:tc_inversion_par} of~\Cref{lem:tc_inversion},
    $\Timers=(\Timers_1,\Timers_2)$,
    $\Session=(\Session_1,\Session_2)$ and:
    \[
      \Gam;\Timers_1\Entails\On{\p}\Recv^{\Diam\n}{\l}[\v]:{\P}_i~\After:{\Q}\TypedBy\Session_1
      \qquad
      \Gam;\Timers_2\Entails\qp:\h\TypedBy\Session_2
    \]

    \noindent By~\Cref{claim:tc_inversion_timeout} of~\Cref{lem:tc_inversion},
    \(
    \Session_1=\Session_1',{\p:\Cfg_1}
    \)
    where $\S_1=\C_j$ and:
    \begin{equation}\label{eq:sr_wt_delay_timeout_branch}
      \Gam;\Timers_1\Entails\On{\p}\Recv^{\Diam\n}{\l}[\v]:{\P}_i\TypedBy\Session_1',{\p:\Cfg{\Val_1},{\C_j}}
    \end{equation}
    \begin{equation}\label{eq:sr_wt_delay_timeout_timeout}
      \Gam;\Timers_1+\n\Entails\Q\TypedBy\Session_1+\n',{\p:\Cfg{\Val_1+\n},{\C_j}}
    \end{equation}

    \noindent By~\Cref{def:time_passing_func_ef}, if $\t\Diam\n$ then the thesis follows by the induction hypothesis on~\Cref{eq:sr_wt_delay_timeout_branch}.
    (See~\Cref{case:sr_wt_delay_branch_recv}.)
    Otherwise, $\neg(\t\Diam\n)$ and $\Time{\On{\p}\Recv^{\Diam\n}{\l}[\v]:{\P}_i~\After:{\Q}}=\Time{\Q}_{\t-\n}$, and the thesis follows by induction on~\Cref{eq:sr_wt_delay_timeout_timeout}.
    %
    \item\label{case:sr_wt_delay_scope}
    If \(
    \P = \Scope{\pq}{\P'}
    \)
    then by~\Cref{claim:tc_inversion_scope} of~\Cref{lem:tc_inversion}:
    \begin{equation}\label{eq:sr_wt_delay_scope}
      \Gam;\Timers\Entails\P'\TypedBy\Session,
      ~{\p:\Cfg_1},
      ~{\qp:\Que_1},
      ~{\q:\Cfg_2},
      ~{\pq:\Que_2}
    \end{equation}

    \noindent where $\Cfg;{\Que}_1\Compat\Cfg;{\Que}_2$ and both \Cfg_1\ and \Cfg_2\ are \wf.
    By~\Cref{def:fbal_session},
    \(
    \Session,
    {\p:\Cfg_1},
    {\qp:\Que_1},
    {\q:\Cfg_2},
    {\pq:\Que_2}
    \)
    is \fbal, and therefore the thesis follows by the induction hypothesis of~\Cref{eq:sr_wt_delay_scope}.
    %
    \item\label{case:sr_wt_delay_delay}
    If \(
    \P = \Delay{\t'}.{\P'}
    \)
    then by~\Cref{claim:tc_inversion_delay} of~\Cref{lem:tc_inversion},
    \Session\ is not \t'-reading and $\Gam;\Timers+\t'\Entails\P'\Session+\t'$.
    By~\Cref{def:time_passing_func_ef}:
    \begin{caseanalysis}
      %
      \item\label{case:sr_wt_delay_delay_t_leq}
      If $\t\leq\t'$, then $\Time{\Delay{\t'}.{\P'}}=\Delay{\t'-\t}.\P'$.
      Since \Session\ is not \t'-reading, it follows \Session\ is not \t-reading also.
      The thesis follows~\Cref{lem:sr_not_t_reading_session_delayable}.
      %
      \item\label{case:sr_wt_delay_delay_t_gtr}
      If $\t>\t'$, then $\Time{\Delay{\t'}.{\P'}}=\Time{\P'}_{\t-\t'}$ and
      the thesis follows by the induction hypothesis on:
      \(
      \Gam;\Timers+\t'\Entails\Time{\P'}_{\t-\t'}\TypedBy\Session+\t'
      \).
    \end{caseanalysis}
  \end{caseanalysis}

  \noindent We only show the key cases of \P.
  The other cases of \P\ follow~\Cref{def:time_passing_func_ef}.
\end{proof}

\begin{lem}\label{lem:session_env_live}
  If $\Gam;\Timers\Entails\P\TypedBy\Session$ then $\Session$ is \fe.
\end{lem}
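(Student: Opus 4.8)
The plan is to prove the statement by induction on the derivation of $\Gam,\Timers\Entails\P\TypedBy\Session$, showing that every role configuration $\p:\Cfg$ occurring in $\Session$ is \live\ in the sense of \Cref{def:cfg_live} (i.e.\ either $\S\utu\End$, or \Cfg\ is \fe\ by \Cref{def:cfg_fe}); liveness of the whole environment is then exactly the reading of the statement via \Cref{def:live_session}. The guiding observation is that each typing rule that places a role configuration into the session of its conclusion carries a premise that already certifies that configuration is enabled now, enabled after some reachable delay, or terminated.

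First I would dispatch the cases in which the configuration is introduced or immediately consumed. For \Rule{End} and \Rule{Weak} the relevant configurations carry type \End, hence are trivially \live. For \Rule{VSend} and \Rule{DSend} the premise supplies a send branch $j$ with $\Val\models\Constraints_j$, so rule \Rule{act} fires at offset $0$ and the configuration is \fe. For \Rule{VRecv} and \Rule{DRecv} the deadline $\e$ is nonempty by the grammar of \Cref{eq:prc_syntax}, and the biconditional premise $\forall\t:\Val+\t\models\Constraints\Bimplies\t\in\e$ yields some $\t\in\e$ with $\Val+\t\models\Constraints$, so a receive is enabled after a delay of $\t$ and the configuration is again \fe. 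Rule \Rule{DQue} introduces a delegated configuration whose type meets the delegation premise of rule \Rule{choice} ($\emptyset;\FutureEnv'\Entails\S'$ with $\Constraints'\models\FutureEnv'$); since the rule also requires $\Val'\models\Constraints'$ we get $\Val'\models\FutureEnv'$, so this configuration is \wf\ and therefore \live\ by \Cref{lem:tp_wf_cfg_is_live}. The structural rules \Rule{Par}, \Rule{Timer}, \Rule{IfTrue}, the two delay rules, \Rule{Empty} and \Rule{VQue} either split $\Session$ or leave its role configurations unchanged up to a uniform time shift (which preserves liveness), so they follow directly from the induction hypothesis; \Rule{Rec} and \Rule{Var} inherit liveness from the configurations recorded in $\FormatRecCheckStyle{\Delta}$, which were validated when \Rule{Rec} populated it.

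The crux, and the step I expect to be hardest, is rule \Rule{Branch} (and the analogous \Rule{Timeout}), because there the conclusion's configuration carries a full choice type $\Choice{\l}[\T]_j$ over $J$, while the premises only type the single-option continuations $\Choice*{\l}[\T]_j$ for those $j$ with $\Val\models\Constraints_j$. When some such $j$ exists the configuration is \fe\ at offset $0$ exactly as in the send case; but when no branch is currently enabled the premises become vacuous, and liveness cannot be read off the rule alone — it amounts to $\Val\models\Past{\bigvee_{j\in J}\Constraints_j}$, a feasibility property. To close this gap I would invoke well-formedness of the configuration, which is precisely what guarantees $\Val\models\Past{\bigvee_{j\in J}\Constraints_j}$: well-formedness is seeded when the configuration is introduced by rule \Rule{Res} (its premise ``$\S_i$ well-formed against $\nu_i$'', cf.\ \Cref{def:cfg_wf}) and is preserved along the intervening typing steps by \Cref{lem:tp_cfg_trans_preserve_wf}; \Cref{lem:tp_wf_cfg_is_live} then turns well-formedness into liveness. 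Thus the branch and timeout cases reduce to tracking the well-formedness invariant down to the point where the choice-typed configuration appears, which is the only genuinely delicate bookkeeping in the argument.
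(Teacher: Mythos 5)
The paper's own proof of this lemma is just the one-line sketch ``a simple induction on the typing derivation'', so your proposal follows the same route while actually supplying the cases, and most of them (\Rule{End}, \Rule{Weak}, the send rules, \Rule{VRecv} and \Rule{DRecv}, the structural and time-sensitive rules) are handled correctly.

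The genuine gap is exactly where you place the crux, the \Rule{Branch} and \Rule{Timeout} cases, and your proposed repair does not close it. The lemma carries no \wfness\ hypothesis on \Session: the judgement is an open one, the configuration ${\p:\Cfg,{\Choice{\l}[\T]_{j}}}$ in the conclusion of \Rule{Branch} is supplied by the ambient session environment rather than built by the derivation, and if no $\Constraints_j$ is satisfied at \Val\ then the second premise of \Rule{Branch} is vacuous, so the judgement can hold for a configuration that is neither \End\ nor \fe\ (take a junk choice whose guards are all $\neg\True$, with $\left\lvert J\right\rvert\geq 2$). Your plan to ``seed'' \wfness\ at \Rule{Res} and carry it down with \Cref{lem:tp_cfg_trans_preserve_wf} does not work inside this induction: \Rule{Res} need not occur in the derivation at all, since the lemma is invoked on sub-derivations and open judgements, and \Cref{lem:tp_cfg_trans_preserve_wf} speaks about LTS transitions of systems of compatible configurations, not about passing from the conclusion of a typing rule to its premises. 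The natural fix is to add ``\Session\ is \wf'' as a hypothesis, at which point the conclusion follows pointwise from \Cref{lem:tp_wf_cfg_is_live} with no induction at all; but note that \Cref{lem:sr_time_step_preserve_wf} applies the present lemma to $\Session+\t$, whose \wfness\ is precisely what is being established there, so that circularity would still have to be broken by a direct argument that the time-shifted configurations occurring in the premises of \Rule{VRecv} remain \wf.
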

\begin{proof}
  A simple induction on the typing derivation.
\end{proof}

\begin{lem}\label{lem:session_env_wf_live}
  If $\Session$ is \wf\ and $\Session + \t$ is \live,
  then $\Session + \t$ is \wf.
\end{lem}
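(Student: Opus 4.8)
The plan is to prove this by unfolding \Cref{def:wf_session} and reducing the claim to the level of individual configurations, where \Cref{lem:tp_wf_type_fe_cfg_wf} applies directly. The engine of the argument is the observation that the well-formedness judgement $\emptyset;\Constraints\Entails\S$ concerns only the type \S\ and is entirely independent of the valuation, so a derivation obtained from the un-advanced configuration can be reused verbatim for the time-shifted one.

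First I would invoke \Cref{def:wf_session}: to show that $\Session+\t$ is \wf\ it suffices to show, for every $\p\in\Dom{\Session}$ with $\Session(\p)=\Cfg$ (so that $(\Session+\t)(\p)=\Cfg{\Val+\t}$, since advancing time leaves the type component \S\ unchanged), that $\Cfg{\Val+\t}$ is \wf. Fix such a role \p. Because $\Session$ is \wf\ by hypothesis, \Cref{def:wf_session} gives that \Cfg\ is \wf, and hence by \Cref{def:cfg_wf} there exists a constraint \Constraints\ with $\emptyset;\Constraints\Entails\S$ (and $\Val\models\Constraints$, which we will not even need). This derivation witnesses the well-formedness of the type \S, which is exactly the type component of the advanced configuration $\Cfg{\Val+\t}$.

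Next I would use the liveness hypothesis to discharge the two cases of \Cref{def:cfg_live}. Since $\Session+\t$ is \live, \Cref{def:live_session} gives that $\Cfg{\Val+\t}$ is \live, so either $\S=\End$ or $\Cfg{\Val+\t}$ is \fe. In the first case $\Cfg{\Val+\t}$ is \wf\ immediately by \Cref{lem:tp_end_wf}. In the second case I combine the derivation $\emptyset;\Constraints\Entails\S$ obtained above with the future-enabledness of $\Cfg{\Val+\t}$ and apply \Cref{lem:tp_wf_type_fe_cfg_wf}, whose conclusion is precisely that $\Cfg{\Val+\t}$ is \wf. As \p\ was arbitrary, every configuration of $\Session+\t$ is \wf, and hence so is $\Session+\t$.

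The proof is essentially immediate once the right tool is identified, so the main obstacle is conceptual rather than computational: recognising that \Cref{lem:tp_wf_type_fe_cfg_wf} is tailor-made for reconstructing well-formedness at an advanced valuation from future-enabledness alone. The only subtlety worth stating carefully is that the premise $\emptyset;\Constraints\Entails\S$ of that lemma is supplied by the \emph{original} (un-advanced) configuration's well-formedness; this works precisely because that judgement never constrains the valuation, so we do not need the (generally false) fact $\Val+\t\models\Constraints$. I would also note explicitly that $\Session+\t$ shares the types of $\Session$, which is what makes the reuse of \Constraints\ legitimate.
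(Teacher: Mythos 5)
Your proof is correct and follows essentially the same route as the paper's: reduce to a single configuration, extract the derivation $\emptyset;\Constraints\Entails\S$ from the well-formedness of the un-advanced configuration, and combine it with the future-enabledness supplied by liveness via \Cref{lem:tp_wf_type_fe_cfg_wf} (handling $\S=\End$ separately). Your treatment is in fact slightly more explicit than the paper's, which leaves the $\End$ case implicit.
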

\begin{proof}
  Suppose $\Session + \t(\p) = (\Val,~\S)$ and $\S \neq \End$. It
  must be $\Val = \Val' + \t$ with $\Session(\p) = (\Val',~\S)$ and
  hence $(\Val',~\S)$ is \wf. 
  By~\Cref{lem:tp_wf_cfg_past_closed} we have that $(\Val'+\t,~\S)$ is \wf,
  and by~\Cref{lem:tp_wf_cfg_is_live} it is \live, as required.
\end{proof}

\subsection{Reduction Steps}\label{ssec:proof:sr:steps}

\subsubsection{Time-passing Steps}\label{sssec:proof:sr:steps:time}

\begin{lem}\label{lem:sr_time_step_preserve_wf}
  Let \Session\ be \wf.
  If \(
  \Gam;\Timers\Entails\P\TypedBy\Session
  \),
  \Time\ is defined and \NEQ0,
  then $\Session+\t$ is \wf\
  and \(
  \Gam;\Timers+\t\Entails\Time\TypedBy\Session+\t
  \).
\end{lem}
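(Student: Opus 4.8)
The plan is to establish the typing conclusion first, and to derive well-formedness of $\Session+\t$ from it afterwards, thereby sidestepping any circular dependency between the two parts of the statement. First I would prove $\Gam,\Timers+\t\Entails\Time\TypedBy\Session+\t$ by structural induction on \P, following exactly the cases for which \Time\ is defined in \Cref{def:time_passing_func_ef}. In every case the strategy is uniform: invert the typing judgement using the Inversion Lemma (\Cref{lem:tc_inversion}) to expose the typing of the immediate sub-processes together with the relevant side conditions (constraint alignment, not-\t-reading, emptiness of queues), apply the induction hypothesis to each sub-process — for which \Time\ is again defined and whose \NEQ\ set is again empty — and reassemble the derivation with the matching typing rule, now stated against the shifted environments $\Timers+\t$ and $\Session+\t$.

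For the \textbf{parallel} case $\P\mid\Q$, \Cref{claim:tc_inversion_par} splits $\Session=(\Session_1,\Session_2)$ and $\Timers=(\Timers_1,\Timers_2)$; the parallel side-condition $\Wait{\P}\cap\NEQ{\Q}=\emptyset$ of \Cref{def:time_passing_func_ef} guarantees that \Time\ is defined on each component and that each component's \NEQ\ set is empty, so the induction hypothesis applies componentwise and rule \Rule{Par} recombines the results. For the \textbf{scope} case $\Scope{\pq}{\P'}$, \Cref{claim:tc_inversion_scope} yields the inner judgement together with the two premises of \Rule{Res} (compatibility $\Cfg;{\Que}_1\Compat\Cfg;{\Que}_2$ and well-formedness of each $\S_i$ against $\nu_i$); I would apply the induction hypothesis inside the scope and then re-apply \Rule{Res}, re-establishing its premises under the shift by invoking \Cref{lem:tp_cfg_trans_preserve_wf,lem:tp_cfg_trans_preserve_compat} for the \Rule{wait} transition of the bound configurations. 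The base and recursion cases ($\Term$, a free queue $\p\q:\h$, and the recursion-definition process, where \Time\ merely descends into the continuation) are immediate or follow directly by the induction hypothesis.

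The \textbf{delay} case $\Delay{\t'}.\P'$ splits according to \Cref{def:time_passing_func_ef}: by \Cref{claim:tc_inversion_delay} we have \Session\ not \t'-reading and a typing of $\P'$ against $\Session+\t'$. If $\t'\geq\t$, the result $\Delay{\t'-\t}.\P'$ is retyped by \Rule{Del}[t], using that \Session\ not \t'-reading entails $\Session+\t$ not $(\t'-\t)$-reading (no receive is enabled at any valuation reached within $\t'$, by \Cref{def:t_reading_session}); if $\t'<\t$, the result $\Time{\P'}_{\t-\t'}$ is typed by the induction hypothesis applied to $\P'$ at residual time $\t-\t'$. The \textbf{single-receive, branching, and timeout} processes (treated by \Cref{claim:tc_inversion_recv,claim:tc_inversion_branch,claim:tc_inversion_timeout}) decrement the deadline \e\ by \t\ — or enter the timeout continuation precisely when $\neg(\t\Diam\n)$ — and inversion supplies the constraint alignment $\forall\t'':\Val+\t''\models\Constraints\Bimplies\t''\in\e$ together with the not-\e-reading condition; both carry over to the reduced deadline, and the continuation typings are exactly the instances demanded by the shifted rule.

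Finally, \textbf{well-formedness of $\Session+\t$} follows at once: applying \Cref{lem:session_env_live} to the typing $\Gam,\Timers+\t\Entails\Time\TypedBy\Session+\t$ just established shows $\Session+\t$ is \live, and since \Session\ is \wf\ by hypothesis, \Cref{lem:session_env_wf_live} yields that $\Session+\t$ is \wf. The main obstacle is the scope case, where the premises of \Rule{Res} must be re-proved for the time-shifted bound configurations: here the hypothesis $\NEQ=\emptyset$ — equivalently $\NEQ{\Session}=\emptyset$ by \Cref{lem:sr_empty_queue_cfg}, so that the queues relevant to urgency are empty — together with the definedness of \Time\ supplies exactly the receive-urgency information needed to invoke \Cref{lem:tp_cfg_trans_preserve_wf,lem:tp_cfg_trans_preserve_compat}. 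Matching the urgency encoded in \Time\ (\Cref{def:time_passing_func_ef}) with the $\text{(urgency)}$ premise of rule \Rule{time} is the delicate bookkeeping that makes this case go through.
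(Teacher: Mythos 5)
Your proposal is correct and follows essentially the same route as the paper's proof: the paper inducts on the typing derivation rather than on the structure of \P, but since the rules are syntax-directed this yields the identical case analysis (receive/branch/timeout via $\t\in\e$ and the constraint-alignment premise, delay via the $\t\leq\t'$ split, parallel via environment splitting, scope via re-establishing the \Rule{Res} premises from the preservation lemmas), and it derives well-formedness of $\Session+\t$ from exactly the same pair of lemmas, \Cref{lem:session_env_live} and \Cref{lem:session_env_wf_live}. Your choice to factor the well-formedness conclusion out once at the end, rather than per case, is a harmless reorganisation of the same argument.
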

\begin{proof}
  We proceed by induction on the derivation of $\Gam;\Timers\Entails\P\TypedBy\Session$
  analysing the last rule applied of the \tc\ rules given in~\RefTypeCheck:
  \begin{caseanalysis}
    %
    \item\label{case:sr_time_step_preserve_wf_vrecv}
    If rule \Rule{VRecv},
    then \(
    \P = \On{\p}\Recv*^{\e}{\l}[\v]:{\P'}
    \)
    and following~\Cref{fig:type_checking_communication}~\Cref{eq:type_checking_recv_single}:
    \[
      \infer[\Rule{VRecv}]{
      \Gam;\Timers\Entails{\On{\p}\Recv*^{\e}{\l}[\v]:{\P}}
      \TypedBy\Session',~{\p:\Cfg,{\Option?-{\l}[\T]{\delta}[\lambda].{\S}}}%
      }{
      \begin{array}[t]{@{}c@{}}%
        \T~\text{\bt}%
        \quad%
        \Session'~\text{not\ $\e$-reading}
        \quad
        {%
          \forall\t':\Val+\t'\models\Constraints\Bimplies
          \t'\in\e
        }
        \\[\ArrayTallLineSpacing]
        \forall\t'\in\e: %
        \Gam,{\v:\T};\Timers+\t'\Entails{\P}\TypedBy\Session'+\t',%
        ~{\p:\Cfg{\Val+\t'\Reset},{\S}}%
      \end{array}%
      }%
    \]

    \noindent where by~\Cref{claim:tc_inversion_recv} of~\Cref{lem:tc_inversion}, \(
    \Session=\Session',{\p:\Cfg,{\Option?-{\l}[\T]{\delta}[\lambda].{\S}}}
    \).
    Since \Time\ is defined, by~\Cref{def:time_passing_func_ef}, it must be that $\t\in\e$.
    (If $\e=\infty$ then this always holds.)
    Coinciding with the first premise of rule \Rule{VRecv}, it follows that $\Val+\t\models\Constraints$; additionally, by the second premise it follows that the time-step preserves \wtness:
    \[
      \forall\t'\in\e:
      \Gam,\v:\T;\Timers+\t'\Entails
      \P'
      \TypedBy
      \Session'+\t',~{\p:\Cfg{\Val+\t'\Reset},{\S}}
    \]

    \noindent It remains to show that $\Session+\t$ is \wf.
    By \Cref{lem:session_env_live} $\Session+\t$ is \live.
    By \Cref{lem:session_env_wf_live} it is \wf.
    %
    \item\label{case:sr_time_step_preserve_wf_drecv}
    If rule \Rule{DRecv},
    then we obtain our thesis analogously to~\Cref{case:sr_time_step_preserve_wf_vrecv}.
    %
    \item\label{case:sr_time_step_preserve_wf_branch}
    If rule \Rule{Branch},
    then \(
    \P = \On{\p}\Recv^{\e}{\l}[\v]:{\P}_i
    \)
    and following~\Cref{fig:type_checking_communication}~\Cref{eq:type_checking_recv}:
    \[
      \infer[\Rule{Branch}]{
      \Gam;\Timers\Entails%
      \On{\p}\Recv^{\e}{\l}[\v]:{\P}_{i}%
      \TypedBy\Session',~{\p:\Cfg,{\Choice{\l}[\T]_{j}}}%
      }{
      \begin{array}[t]{@{}c@{}}%
        \neg (\left\lvert J \right\rvert = \left\lvert I \right\rvert = 1)
        \quad
        \begin{array}[t]{@{}l@{}}%
          \forall j\in J: %
          \Val\models\Constraints_j%
          \implies%
          \Comm_j = ?%
          ~\land~%
          \exists i\in I:
          \l_i=\l_j%
          ~\land~\mbox{}%
          \\[\ArrayTallLineSpacing]%
          \mbox{}\qquad
          \Gam;\Timers\Entails{\On{\p}\Recv*^{\e}{\l}[\v]:{\P}_{i}}%
          \TypedBy\Session',~{\p:\Cfg,{\Choice*{\l}[\T]_{j}}}%
        \end{array}%
      \end{array}%
      }%
    \]

    \noindent where by~\Cref{claim:tc_inversion_branch} of~\Cref{lem:tc_inversion},
    \(
    \Session = \Session',{\p:\Cfg,{\Choice{\l}[\T]_{j}}}
    \).
    Since \Time\ is defined, by~\Cref{def:time_passing_func_ef}, it must be that $\t\in\e$.
    (If $\e=\infty$ then this always holds.)
    By the premise of rule \Rule{Branch}, each enabled action $j$ in \S\ is receiving, and has a corresponding branch $i$ in \P\ where $\l_i=\l_j$ and:
    \[
      \Gam;\Timers\Entails
      \On{\p}\Recv*^{\e}{\l}[\v]:{\P}_i
      \TypedBy
      \Session',{\p:\Cfg,{\Choice*{\l}[\T]_j}}
    \]

    \noindent Therefore, the hypothesis holds by induction.
    (See~\Cref{case:sr_time_step_preserve_wf_vrecv,case:sr_time_step_preserve_wf_drecv}.)
    %
    \item\label{case:sr_time_step_preserve_wf_timeout}
    If rule \Rule{Timeout},
    then
    \(
    \On{\p}\Recv^{\Diam\n}{\l}[\v]:{\P}_{i}~\After:{\Q}
    \)
    and following~\Cref{fig:type_checking_communication}~\Cref{eq:type_checking_recv}:
    \[
      \infer[\Rule{Timeout}]{
      \Gam;\Timers\Entails%
      \On{\p}\Recv^{\Diam\n}{\l}[\v]:{\P}_{i}%
      ~\After:{\Q}%
      \TypedBy\Session',~{\p:\Cfg,{\C_j}}%
      }{
      \begin{array}[t]{@{}c@{}}%
        \begin{array}[t]{@{}l@{}}%
          \Gam;\Timers\Entails%
          \On{\p}\Recv^{\Diam\n}{\l}[\v]:{\P}_{i}%
          \TypedBy\Session',~{\p:\Cfg,{\C_{j}}}%
        \end{array}%
        \\[\ArrayTallLineSpacing]
        \begin{array}[t]{@{}l@{}}%
          \Gam;\Timers+\n\Entails{\Q}%
          \TypedBy\Session'+\n,~{\p:\Cfg{\Val+\n},{\C_{j}}}%
        \end{array}%
      \end{array}%
      }%
    \]

    \noindent where by~\Cref{claim:tc_inversion_timeout} of~\Cref{lem:tc_inversion},
    $\Session = \Session',{\p:\Cfg,{\C_j}}$.
    Since \Time\ is defined by~\Cref{def:time_passing_func_ef},
    we show for each case of \t:
    \begin{caseanalysis}
      %
      \item\label{case:sr_time_step_preserve_wf_timeout_t_diam_n}
      If $\t\Diam\n$, then
      \(
      \Time = \On{\p}\Recv^{\Diam\n - \t}{\l}[\v]:{\P}_i~\After:{\Q}
      \).
      Therefore, by the induction hypothesis:
      \[
        \infer[\Rule{Timeout}]{%
        \Gam;\Timers+\t\Entails
        \On{\p}\Recv^{\Diam\n-\t}{\l}[\v]:{\P}_i~\After:{\Q}
        \TypedBy
        \Session'+\t,~{\p:\Cfg{\Val+\t},{\C_j}}
        }{%
        \begin{array}[t]{@{}c@{}}%
          \begin{array}[t]{@{}l@{}}%
            \Gam;\Timers+\t\Entails%
            \On{\p}\Recv^{\Diam\n+\t}{\l}[\v]:{\P}_{i}%
            \TypedBy\Session'+\t,~{\p:\Cfg{\Val+\t},{\C_{j}}}%
          \end{array}%
          \\[\ArrayTallLineSpacing]
          \begin{array}[t]{@{}l@{}}%
            \Gam;\Timers+\n\Entails{\Q}%
            \TypedBy\Session'+\t+\n-\t,~{\p:\Cfg{\Val+\t+\n-\t},{\C_{j}}}%
          \end{array}%
        \end{array}%
        }
      \]

      \noindent The first premise of rule \Rule{Timeout} holds by induction on the derivation.
      (See~\Cref{case:sr_time_step_preserve_wf_branch}.)
      The second premise of rule \Rule{Timeout} coincides with the hypothesis, as $\t+\n-\t=\n$.
      %
      \item\label{case:sr_time_step_preserve_wf_timeout_t_notdiam_n}
      If $\neg(\t\Diam\n)$, then
      \(
      \Time = \Time{\Q}_{\t-\n}
      \).
      The thesis follows the induction hypothesis:
      \[
        \Gam;\Timers+\t\Entails\Time{\Q}_{\t-\n}\TypedBy\Session+\t
      \]
    \end{caseanalysis}
    %
    \item\label{case:sr_time_step_preserve_wf_delay}
    If rule \Rule{Del}[\t],
    then
    \(
    \P = \Delay{\t'}.\P'
    \) and:
    \begin{caseanalysis}
      %
      \item\label{case:sr_time_step_preserve_wf_delay_t_leq}
      If $\t\leq\t'$,
      then
      \(
      \Time=\Delay{\t'-\t}.\P'
      \)
      and following~\Cref{fig:type_checking}~\Cref{eq:type_checking_time_sensitive}:
      \[
        \infer[\Rule{Del}[\t]]{
          \Gam;\Timers\Entails{\Delay{\t'}.\P'}\TypedBy\Session%
        }{
          \Gam;\Timers+\t'\Entails{\P'}\TypedBy\Session+\t'%
          \quad%
          \Session~\text{not \t'-reading}
        }%
      \]

      \noindent Since \Session\ is not \t'-reading, by~\Cref{def:t_reading_session}, it follows that \Session\ is also not \t-reading.
      Therefore, we obtain our thesis:
      \(
      \Gam;\Timers+\t\Entails
      \Delay{\t'-\t}.\P'
      \TypedBy\Session+\t
      \).
      %
      \item\label{case:sr_time_step_preserve_wf_delay_t_gtr}
      If $\t>\t'$,
      then
      \(
      \Time=\Time{\P'}_{\t-\t'}
      \).
      The thesis follows the induction hypothesis:
      \[
        \Gam;\Timers+\t\Entails
        \Time{\P'}_{\t-\t'}
        \TypedBy\Session+\t
      \]
    \end{caseanalysis}
    %
    \item\label{case:sr_time_step_preserve_wf_par}
    If rule \Rule{Par},
    then
    \(
    \P = \P'\mid\Q
    \)
    and following~\Cref{fig:type_checking}~\Cref{eq:type_checking_standard}:
    \[
      \infer[\Rule{Par}]{
        \Gam;\Timers\Entails{\P'\mid\Q}\TypedBy(\Session_1,\Session_2)%
      }{
        \Gam;\Timers_1\Entails{\P'}\TypedBy\Session_1%
        \quad%
        \Gam;\Timers_2\Entails{\Q}\TypedBy\Session_2%
      }%
    \]

    \noindent where $\Timers=(\Timers_1,\Timers_2)$ and $\Session=(\Session_1,\Session_2)$.
    Since \Session\ is \wf, it follows both \Session_1\ and \Session_2\ are also \wf.
    By the induction hypothesis:
    \[
      \infer[\Rule{Par}]{
        \Gam;(\Timers_1+\t,\Timers_2+\t)\Entails\Time{\P'\mid\Q}\TypedBy
        (\Session_1,\Session_2)+\t%
      }{
        \Gam;\Timers_1+\t\Entails\Time{\P'}\TypedBy\Session_1+\t%
        \quad%
        \Gam;\Timers_2+\t\Entails\Time{\Q}\TypedBy\Session_2+\t%
      }%
    \]

    \noindent where $\Session+\t=(\Session_1,\Session_2)+\t=(\Session_1+\t,\Session_2+\t)$.
    (See other cases for \P'\ and \Q.)
    %
    \item\label{case:sr_time_step_preserve_wf_res}
    \mnote{Probably can be simplified. The fact $\Que_1=\emptyset=\Que_2$
      is true but the way is proved doesn't look right: the hypothesis
      \NEQ0 does not apply to $\P'$!}
    If rule \Rule{Res},
    then
    \(
    \P = \Scope{\pq}{\P'}
    \)
    and it follows~\Cref{fig:type_checking}~\Cref{eq:type_checking_standard}:
    \[
      \infer[\Rule{Res}]{
        \Gam;\Timers\Entails{\Scope{\pq}{\P'}}\TypedBy\Session%
      }{
        \begin{array}[c]{c}
          \Gam;\Timers\Entails{\P'}\TypedBy\Session,%
          ~{\p:\Cfg_1},%
          ~{\qp:\Que_1},%
          ~{\q:\Cfg_2},%
          ~{\pq:\Que_2}%
          \\[\ArrayTallLineSpacing]
          \Cfg;{\Que}_1\Compat\Cfg;{\Que}_2%
          \qquad%
          \forall i \in \{1,2\}.\, \S_i \text{ well-formed against }\nu_i
        \end{array}
      }%
    \]

    \noindent By the second premise of rule \Rule{Res}, $\Cfg;{\Que}_1\Compat\Cfg;{\Que}_2$ and both \Cfg_1\ and \Cfg_2\ are \wf.
    Since \NEQ0\ then by~\Cref{lem:sr_empty_queue_cfg},
    $\NEQ{\Session}=\emptyset$
    and therefore,
    it follows that
    $\Que_1=\emptyset=\Que_2$.
    By~\Cref{cond:cfg_compat_dual} of~\Cref{def:cfg_compat}, since $\Que_1=\emptyset=\Que_2$, then $\S_1=\D_2$ and $\Val_1=\Val_2$.
    Hereafter, we only show for \p, as the proof is similar for \q, and .
    By the premise of rule \Rule{Res}, \p\ is \wf.
    By~\Cref{lem:tp_wf_cfg_is_live}, \p\ is \live.
    By~\Cref{def:cfg_live}, either $\S_1=\End$, or \p\ is \fe.
    \begin{caseanalysis}
      %
      \item\label{case:sr_time_step_preserve_wf_res_end}
      If $\S_1=\End$, then by~\Cref{lem:tp_end_wf} we obtain our thesis as \End\ is always \wf.
      %
      \item\label{case:sr_time_step_preserve_wf_res_fe}
      Otherwise, by~\Cref{lem:tp_cfg_fe} \p\ is \fe.
      By~\Cref{def:cfg_fe}, $\exists \t'$ such that $\Cfg_1\Trans:{\t',\Comm\Msg}$.
      It follows that $\forall \t'{2}\leq\t'$ \Cfg{\Val+\t'{2}}\ is also \fe,
      since $\Cfg{\Val+\t'{2}}\Trans:{\t'-\t'{2},\Comm\Msg}$.
      Therefore, \Cfg{\Val_1+\t'},{\S_1}\ is \wf.
      (The same holds for \Cfg_2.)
      It follows that $\Cfg{\Val_1+\t'},{\S_1};{\Que_1}\Compat\Cfg{\Val_2+\t'},{\S_2};{\Que_2}$.

      It remains for us to show that $\t\leq\t'$ for the hypothesis to hold.
      We proceed by case analysis on the structure of \Comm:
      \begin{caseanalysis}
        %
        \item\label{case:sr_time_step_preserve_wf_res_fe_recv}
        If $\Comm=?$,
        then by~\Cref{lem:sr_wt_wf_role_recv_prc_wait}, $\p\in\Wait{\P}$.
        By~\Cref{fig:func_wait_neq}, \P\ must be structured $\P=\P'\mid\Q$ where
        \(
        \P'\equiv\On{\p}\Recv^{\Diam\n}{\l}[\v]:{\P}_i~\After:{\Q}
        \).
        (We only show this case as it can be applied for the other cases, as discussed in~\Cref{sssec:process_reduction_rules}.)
        By~\Cref{claim:tc_inversion_timeout} of~\Cref{lem:tc_inversion}, $\S_1=\Choice{\l}[\T]_i$.
        The thesis follows~\Cref{case:sr_time_step_preserve_wf_timeout}.
        %
        \item\label{case:sr_time_step_preserve_wf_res_fe_send}
        If $\Comm=!$,
        then since $\S_1=\D_2$, by~\Cref{lem:tp_en_cfg_dual_en}, $\Cfg{\Val_2+\t'},{\D_2}\Trans:{?\Msg}$.
        (See~\Cref{case:sr_time_step_preserve_wf_res_fe_recv}.)
        In the case that \P\ is structured
        \(
        \P=\P'\mid\Q
        \)
        where
        \(
        \P'\equiv\On{\p}\Send{\l}[\v].{\P'{2}}
        \),
        then by~\Cref{def:time_passing_func_ef}, \Time\ is not defined, as sending actions cannot be delayed.
      \end{caseanalysis}
    \end{caseanalysis}
    %
    \item\label{case:sr_time_step_preserve_wf_rec}
    If rule \Rule{Rec},
    then
    \(
    \P = \Def{\Rec*{\vec{v};\vec{r}}=\P'}:{\Q}
    \)
    and:
    \[
      \Time = \Def{\Rec*{\vec{v};\vec{r}}=\P'}:{\Time{\Q}}
    \]

    \noindent and, following~\Cref{fig:type_checking}~\Cref{eq:type_checking_standard}:
    \[
      \infer[\Rule{Rec}]{
      \Gam;\Timers\Entails{\Def{\Rec*{\vec{v};\vec{r}}=\P'}:{\Q}}\TypedBy\Session%
      }{
      \begin{array}[t]{@{}c@{}}%
        \forall (\vec{\Val},~\vec{\S})
        \in\FormatRecCheckStyle{\Delta},
        {\theta'}\in\FormatRecCheckStyle{\theta}: %
        \Gam,
        {\vec{v}:\vec{T}},%
        {\Rec|}%
        ;\Timers'
        \Entails{\P'}
        \TypedBy
        \vec{r}: (\vec{\Val},~\vec{\S})%
        \\[\ArrayLineSpacing]
        \Gam,{\Rec|};\Timers\Entails{\Q}\TypedBy\Session%
      \end{array}%
      }%
    \]

    \noindent By the induction hypothesis:
    \(
    \Gam,{\Rec|};\Timers+\t\Entails\Time{\Q}\TypedBy\Session+\t%
    \)
    where $\Session+\t$ is \wf.
    The thesis then follows by rule \Rule{Rec}:
    \[
      \infer[\Rule{Rec}]{
        \Gam;\Timers +\t\Entails{\Time}\TypedBy\Session +\t%
      }{
        \begin{array}[b]{@{}c@{}}%
          \forall (\vec{\Val},~\vec{\S})
          \in\FormatRecCheckStyle{\Delta},
          {\theta'}\in\FormatRecCheckStyle{\theta}: %
          \Gam,
          {\vec{v}:\vec{T}},%
          {\Rec|}%
          ;\Timers'
          \Entails{\P'}
          \TypedBy
          \vec{r}: (\vec{\Val},~\vec{\S})%
          \\[\ArrayLineSpacing]
          \Gam,{\Rec|};\Timers+\t\Entails\Time{\Q}\TypedBy\Session+\t \tag*{\qed}
        \end{array}%
      }%
    \]
  \end{caseanalysis}
  \renewcommand{\qed}{}
\end{proof}

\clearpage

\lemTimeStep*
\begin{proof}
  We proceed by induction on the derivation of $\Gam;\Timers\Entails\P\TypedBy\Session$, analysing the last rule applied of the \tc\ rules given in~\RefTypeCheck.
  Since \Session\ is \fbal, only rules \Rule{Res} and \Rule{Par} are applicable:
  \begin{caseanalysis}
    %
    \item\label{case:sr_time_step_res}
    If rule \Rule{Res},
    then
    \(
    \P = \Scope{\pq}{\P'}
    \)
    and following~\Cref{fig:type_checking}~\Cref{eq:type_checking_standard}:
    \[
      \infer[\Rule{Res}]{
        \Gam;\Timers\Entails{\Scope{\pq}{\P'}}\TypedBy\Session%
      }{
        \begin{array}[c]{c}
          \Gam;\Timers\Entails{\P'}\TypedBy\Session,%
          ~{\p:\Cfg_1},%
          ~{\qp:\Que_1},%
          ~{\q:\Cfg_2},%
          ~{\pq:\Que_2}%
          \\[\ArrayTallLineSpacing]
          \Cfg;{\Que}_1\Compat\Cfg;{\Que}_2%
          \qquad%
          \forall i \in \{1,2\}.\, \S_i \text{\ well-formed\ against\ }\nu_i
        \end{array}
      }%
    \]

    \noindent Let $\Session'=\Session,
      \allowbreak{\p:\Cfg_1},
      \allowbreak{\qp:\Que_1},
      \allowbreak{\q:\Cfg_2},
      \allowbreak{\pq:\Que_2}$.
    By the second premise of rule \Rule{Res}, it follows that \Session'\ is \wf.
    By~\Cref{def:fbal_session}, it follows that \Session'\ is \fbal.
    By~\Cref{def:time_passing_func_ef},
    \(
    \Time=\Scope{\pq}{\Time{\P'}}
    \).
    Therefore, by the induction hypothesis:
    \[
      \infer[\Rule{Res}]{
        \Gam;\Timers+\t\Entails
        \Time{\Scope{\pq}{\P'}}
        \TypedBy\Session+\t%
      }{
        \begin{array}[c]{c}
          \Gam;\Timers+\t\Entails\Time{\P'}\TypedBy\Session+\t,%
          ~{\p:\Cfg{\Val_1+\t},{\S_1}},%
          ~{\qp:\Que_1},%
          ~{\q:\Cfg{\Val_2+\t},{\S_2}},%
          ~{\pq:\Que_2}%
          \\[\ArrayTallLineSpacing]
          \Cfg;{\Que}_1\Compat\Cfg;{\Que}_2%
          \qquad%
          \forall i \in \{1,2\}.\, \S_i \text{\ well-formed\ against\ }\nu_i
        \end{array}
      }%
    \]

    \noindent where $\Session'+\t$ is \wf\ and \fbal.
    The second premise of rule \Rule{Res} holds:
    \begin{itemize}
      \item Since $\Session'+\t$ is \fbal, by~\Cref{def:fbal_session},
            \(
            \Cfg{\Val_1+\t},{\S_1};{\Que_1}
            \Compat
            \Cfg{\Val_2+\t},{\S_2};{\Que_2}
            \).
      \item The latter coincides with~\Cref{def:cfg_wf} and holds since $\Session'+\t$ is \wf.
    \end{itemize}

    \noindent The first premise holds by induction.
    It remains for us to show that $\Session+\t$ is \wf\ and \fbal.
    Since $\Dom{\Session+\t}\subseteq\Dom{\Session'+\t}$, it follows that $\Session+\t$ is \wf\ and \fbal, and we have obtained our thesis.
    %
    \item\label{case:sr_time_step_par}
    If rule \Rule{Par},
    then
    \(
    \P = \P'\mid\Q
    \)
    and it follows~\Cref{fig:type_checking}~\Cref{eq:type_checking_standard}:
    \[
      \infer[\Rule{Par}]{
        \Gam;\Timers\Entails{\P'\mid\Q}\TypedBy(\Session_1,\Session_2)%
      }{
        \Gam;\Timers_1\Entails{\P'}\TypedBy\Session_1%
        \qquad%
        \Gam;\Timers_2\Entails{\Q}\TypedBy\Session_2%
      }%
    \]

    \noindent where $\Timers=(\Timers_1,\Timers_2)$ and $\Session=(\Session_1,\Session_2)$.
    It follows that both \Session_1\ and \Session_2\ are \wf.
    By~\Cref{def:bal_session}, both \Session_1\ and \Session_2\ are \bal.
    (Their \fbalness\ does not matter.)
    By~\Cref{lem:sr_wt_delay}, both \Session_1\ and \Session_2\ are \del.
    By~\Cref{lem:sr_delay_session_preserve_bal}, both \Session_1\ and \Session_2\ are \bal.
    By~\Cref{def:time_passing_func_ef}, $\Time=\Time{\P'}\mid\Time{\Q}$.
    By the induction hypothesis:
    \[
      \infer[\Rule{Par}]{
        \Gam;(\Timers_1+\t,\Timers_2+\t)\Entails\Time{\P'\mid\Q}\TypedBy(\Session_1,\Session_2)+\t%
      }{
        \Gam;\Timers_1+\t\Entails\Time{\P'}\TypedBy\Session_1+\t%
        \qquad%
        \Gam;\Timers_2+\t\Entails\Time{\Q}\TypedBy\Session_2+\t%
      }%
    \]

    \noindent where $\Timers+\t=(\Timers_1+\t,\Timers_2+\t)$
    and $\Session+\t=(\Session_1,\Session_2)+\t=(\Session_1+\t,\Session_2+\t)$.
    By~\Cref{case:sr_time_step_preserve_wf_par} in~\Cref{lem:sr_time_step_preserve_wf}, $\Session+\t$ is \wf.
    It remains for us to prove that $\Session+\t$ is \fbal.
    Since \Session\ is \fbal, by~\Cref{case:fbal_session_que} of~\Cref{def:fbal_session}, it follows that $\forall \p$ such that $\p\in\Dom{\Session}$ and $\Session(\p)=\Cfg_1$, then $\exists \Session',\q,\Val_2,\S_2,\Que_1,\Que_2$ such that
    \(
    \Session=\Session',{\p:\Cfg_1},{\qp:\Que_1},{\q:\Cfg_2},{\pq:\Que_2}
    \)
    is \bal.
    (By~\Cref{case:fbal_session_que} of~\Cref{def:fbal_session}, the same analogously holds $\forall \qp$.)
    By~\Cref{def:bal_session}, $\Cfg;{\Que}_1\Compat\Cfg;{\Que}_2$.
    By~\Cref{cond:cfg_compat_dual} of~\Cref{def:cfg_compat}, if $\Que_1=\emptyset=\Que_2$, then $\S_1=\D_2$ and $\Val_1=\Val_2$.
    In such a case, it follows that
    \(
    \Session+\t=\Session'+\t,{\p:\Cfg{\Val_1+\t},{\S_1}},{\qp:\Que_1},{\q:\Cfg{\Val_2+\t},{\S_2}},{\pq:\Que_2}
    \)
    and $\Session+\t$ is \fbal.
    We now show that it must be that $\Que_1=\emptyset=\Que_2$.
    Suppose by contradiction, that $\Que_1=\Msg;\Que_1'$.
    By~\Cref{def:bal_session}, if
    \(
    \Session,{\p:\Cfg_1},{\qp:\Msg;\Que_1'}
    \)
    then
    \(
    \Cfg_1\Trans:{?\Msg}\Cfg_1'
    \)
    and
    \(
    \Session,{\p:\Cfg_1'},{\qp:\Que_1'}
    \)
    is \bal.
    Since $\Cfg_1\Trans:{?\Msg}$, by~\Cref{lem:sr_wt_wf_role_recv_prc_wait}, $\p\in\Wait$.
    By~\Cref{lem:sr_neq_prc}, $\p\in\NEQ$.
    However, since \Time\ is defined, by~\Cref{lem:sr_def_delay_no_recv}, $\Wait\cap\NEQ=\emptyset$, which is a contradiction.
    Therefore, all queues must be empty and $\Session+\t$ is \fbal.
    \qedhere
  \end{caseanalysis}
\end{proof}

\subsubsection{Action Steps}\label{sssec:proof:sr:steps:action}

\lemActionStep*
\begin{proof}
  We proceed by induction on the derivation of the reduction
  \(
  \PCfg\Reduce-\PCfg'
  \)
  analysing the last rule applied of those given in~\Cref{fig:prc_reduction}:
  \begin{caseanalysis}
    %
    \item\label{case:sr_action_step_send}
    If rule \Rule{Send},
    then
    \(
    \P = {\On{\p}\Send{\l}[\v].{\P'{2}}}
    \mid{\pq:\h}
    \)
    and it follows~\Cref{fig:prc_reduction}~\Cref{eq:prc_reduction_communication}:
    \[
      \PCfg,{
      {\On{\p}\Send{\l}[\v].{\P'{2}}}
      \mid{\pq:\h}
      }
      \Reduce-
      \PCfg,{
      {\P'{2}}
      \mid{\pq:\h\cdot\lv}
      }
      \quad\Rule{Send}
    \]

    \noindent where
    \(
    \P' = \P'{2}\mid{\pq:\h\cdot\lv}
    \)
    and $\Timers'=\Timers$.
    By~\Cref{claim:tc_inversion_par} of~\Cref{lem:tc_inversion},
    $\Timers=(\Timers_1,\Timers_2)$.
    \(
    \Session = (\Session_1,\Session_2)
    \)
    and:
    \begin{equation}\label{eq:sr_action_step_send}
      \Gam;\Timers_1\Entails
      \On{\p}\Send{\l}[\v].{\P'{2}}
      \TypedBy\Session_1
      \qquad
      \Gam;\Timers_2\Entails
      {\pq:\h}
      \TypedBy\Session_2
    \end{equation}

    \noindent By~\Cref{claim:tc_inversion_empty_queue} of~\Cref{lem:tc_inversion},
    \(
    \Session_2=\Session_2',{\pq:\Que}
    \)
    and
    if $\h=\emptyset$ then $\Que=\emptyset$.
    Conversely, if $\h\neq\emptyset$ then, by~\Cref{claim:tc_inversion_neq} of~\Cref{lem:tc_inversion}, $\Que\neq\emptyset$.
    By~\Cref{claim:tc_inversion_send} of~\Cref{lem:tc_inversion},
    \(
    \Session_1=\Session_1',{\p:\Cfg,{\Choice{\l}[\T]_i}}
    \)
    and $\exists j\in I$ \st\ %
    $\l=\l_j$, $!=\Comm_j$, $\Val\models\Constraints_j$ and:
    \begin{caseanalysis}
      %
      \item\label{case:sr_action_step_send_value}
      If \T_j\ is \bt,
      then $\Gam\Entails\v:\T_j$ and
      \(
      \Gam;\Timers_1\Entails\P'{2}\TypedBy
      \Session_1',{\p:\Cfg{\Val\Reset_j},{\S_j}}
      \).
      To obtain our thesis we must prove
      $\exists \Session':\Session\Reduce*\Session'$,
      and \Session'\ is \wf, \bal, and:
      \(
      \Gam;(\Timers_1,\Timers_2)\Entails
      {\P'{2}}
      \mid{\pq:\h\cdot\lv}
      \TypedBy\Session'
      \).
      By~\Cref{claim:tc_inversion_par} of~\Cref{lem:tc_inversion},
      \(
      \Session'= (\Session_1'{2},\Session_2'{2})
      \)
      and:
      \begin{equation}\label{eq:sr_action_step_send_reduction}
        \Gam;\Timers_1\Entails
        {\P'{2}}
        \TypedBy\Session_1'{2}
        \qquad
        \Gam;\Timers_2\Entails
        {\pq:\h\cdot\lv}
        \TypedBy\Session_2'{2}
      \end{equation}

      \noindent We proceed to show the structure of \Session'\ necessary to obtain our thesis.
      By inner induction on the derivation of each in~\Cref{eq:sr_action_step_send}, analysing the last rule applied.

      Firstly:
      \[
        \infer[\Rule{VSend}]{%
        \Gam;\Timers_1\Entails
        \On{\p}\Send{\l}[\v].{\P'{2}}
        \TypedBy\Session_1',~{\p:\Cfg,{\Choice{\l}[\T]_i}}
        }{%
        \begin{array}[t]{@{}c@{}}
          \exists j\in I: %
          (\l=\l_j)
          \land
          (\Val\models\Constraints_j)%
          \land
          (\T_j~\text{\bt})
          \land%
          (\Gam\Entails\v:\T_j)
          \land\mbox{}%
          \\[\ArrayTallLineSpacing]%
          (\Comm_j=!)
          \land
          \Gam;\Timers\Entails{\P'{2}}\TypedBy\Session_1',~{\p:\Cfg{\Val\Reset_j},{\S_j}}%
        \end{array}%
        }
      \]

      \noindent Therefore, it must be that
      \(
      \Session_1'{2} = \Session_1',{\p:\Cfg{\Val\Reset_j},{\S_j}}
      \).
      Since $\Gam\Entails\v:\T_j$
      then by~\Cref{claim:sr_wt_msg_bt} of~\Cref{lem:sr_wt_msg}
      \(
      \Gam;\Timers_2\Entails
      {\pq:\h\cdot\lv}
      \TypedBy
      \Session_2',~
      {\pq:\Que;\lT<>_j}
      \)
      and $\l=\l_j$,
      which coincides with the latter of~\Cref{eq:sr_action_step_send_reduction}.
      Therefore,
      \(
      \Session_2'{2} = \Session_2',{\pq:\Que;\lT<>_j}
      \).

      To summarise:
      \begin{itemize}
        \item \(
              \Session =
              \Session_1',~{\p:\Cfg,{\Choice{\l}[\T]_i}},~
              \Session_2',~{\pq:\Que}
              \)
        \item \(
              \Session' =
              \Session_1',~{\p:\Cfg{\Val\Reset_j},{\S_j}},~
              \Session_2',~{\pq:\Que;\lT<>_j}
              \qquad (j\in I) 
              \)
      \end{itemize}

      \noindent Such a reduction is possible via rule \Rule[\Session]{Send} in~\Cref{fig:session_reduction}:
      \[
        \infer[\Rule[\Session]{Send}]{%
        {
        \Session'{2},
        ~{\p:\Cfg,{\Choice{\l}[\T]_i}}
        ~{\pq:\Que}
        }
        \Reduce
        {
        \Session'{2},
        ~{\p:\Cfg'}
        ~{\pq:\Que;\Msg}
        }
        }{%
        \Cfg,{\Choice{\l}[\T]_i}
        \Trans:{!\Msg}
        \Cfg'
        }
      \]

      \noindent where
      $\Msg=\lT<>_j$ and $\Session'{2}=(\Session_1',\Session_2')$.
      By~\Cref{case:tp_cfg_trans_preserve_wf_com} in~\Cref{lem:tp_cfg_trans_preserve_wf}, it follows that, for some $j\in I$, $\Val\models\Constraints_j$, $\Val'=\Val\Reset_j$ and $\S'=\S_j$.
      Additionally, it follows that \p\ being \wf\ indicates that the constraints on actions are structured to preserve \wfness\ across such transitions.
      Therefore, it follows that
      \(
      \Session'=\Session'{2},~{\p:\Cfg{\Val\Reset_j},{\S_j}}
      ~{\pq:\Que;\Msg}
      \)
      is \wf.
      By~\Cref{lem:sr_wt_bal_session_reduce_preserve_bal}, \Session'\ is \bal.

      Finally, by rule \Rule{Par} (where $\l=\l_j$):
      \[
        \infer[\Rule{Par}]{%
        \Gam;(\Timers_1,\Timers_2)\Entails
        \P'{2}\mid
        {\pq:\h\cdot\lv}
        \TypedBy
        \Session_1',~{\p:\Cfg{\Val\Reset_j},{\S_j}},~%
        \Session_2',~{\pq:\Que;\lT<>_j}
        }{%
        \Gam;\Timers_1\Entails
        {\P'{2}}
        \TypedBy
        \Session_1',~{\p:\Cfg{\Val\Reset_j},{\S_j}}
        \quad
        \Gam;\Timers_2\Entails
        {\pq:\h\cdot\lv}
        \TypedBy
        \Session_2',~
        {\pq:\Que;\lT<>_j}
        }
      \]
      %
      %
      \item\label{case:sr_action_step_send_del}
      If $\T_j=\Del'$,
      then
      \(
      \Session_1'{2} = \Session_1'{3},{\b:\Cfg'}
      \)
      and, $\Val'\models\Constraints'$, $\v=\b$ and:
      \[
        \Gam;\Timers_1\Entails\P'{2}\TypedBy
        \Session_1'{3},~{\p:\Cfg{\Val\Reset_j},{\S_j}}
      \]

      \noindent The rest follows~\Cref{case:sr_action_step_send_value}.
    \end{caseanalysis}

    \noindent In~\Cref{case:sr_action_step_send_value}, we have shown that the process of a configuration yielded by a reduction via rule \Rule{Send} of~\Cref{fig:prc_reduction}~\Cref{eq:prc_reduction_communication}, is \wt\ against a session environment \Session', which itself is reachable via a reduction via the rules in~\Cref{fig:session_reduction}, and that the \wfness\ and \balness\ of \Session'\ is preserved.
    \Cref{case:sr_action_step_send_del} follows similarly, for delegations.
    %
    \item\label{case:sr_action_step_recv}
    If rule \Rule{Recv},
    then
    \(
    \P = \On{\p}\Recv^{\e}{\l}[\v]:{\P}_i
    \mid{\qp:\lv\cdot\h}
    \)
    and by~\Cref{fig:prc_reduction}~\Cref{eq:prc_reduction_communication}:
    \[
      \infer[\Rule{Recv}]{
      \PCfg,{{\On{\p}\Recv^{\e}{\l}[\v]:{\P}_i}\mid{\qp:\lv\cdot\h}}%
      \Reduce-%
      \PCfg,{{\P_j\Subst{\v}{\v_j}}\mid{\qp:\h}}%
      }{
      j\in I%
      \quad%
      \l=\l_j%
      }%
    \]

    \noindent where
    \(
    \P' = \P_j\Subst{\v}{\v_j}\mid{\qp:\h}
    \)
    and $\Timers'=\Timers$.
    By~\Cref{claim:tc_inversion_par} of~\Cref{lem:tc_inversion}, $\Timers=(\Timers_1,\Timers_2)$ and $\Session=(\Session_1,\Session_2)$ and:
    \begin{equation}\label{eq:sr_action_step_recv}
      \Gam;\Timers_1
      \Entails
      {\On{\p}\Recv^{\e}{\l}[\v]:{\P}_i}
      \TypedBy\Session_1
      \qquad
      \Gam;\Timers_2
      \Entails
      {\qp:\lv\cdot\h}
      \TypedBy\Session_2
    \end{equation}

    \noindent By~\Cref{claim:tc_inversion_neq} of~\Cref{lem:tc_inversion},
    \(
    \Session_2=\Session_2',{\pq:\lT<>;\Que}
    \).
    By~\Cref{claim:tc_inversion_recv} of~\Cref{lem:tc_inversion},
    \(
    \Session_1=\Session_1',{\p:\Cfg,{\Choice{\l}[\T]_k}}
    \)
    and $\forall k\in K$ such that $\Val\models\Constraints_k$,
    then $\Comm_k=?$ and $\exists i\in I$ such that:
    \[
      \Gam;\Timers_1\Entails
      \On{\p}\Recv*^{\e}{\l}[\v]:{\P}_i
      \TypedBy\Session_1',~{\p:\Cfg,{\Option?-{\l}[\T]{\delta}[\lambda].{\S}}}
    \]

    \noindent where $\Option?-{\l}[\T]{\delta}[\lambda].{\S}=\Option-{\l}[\T]{\delta}[\lambda].{\S}_k$.
    Hereafter we write $\Option-{\l}[\T]{\delta}[\lambda].{\S}_k$.
    It follows that in~\Cref{eq:sr_action_step_recv}, $\lv=\lv_i$.
    \mnote{I think from here the substitution lemma should be used, simplifying the proof substantially.}
    We proceed by inner induction on the derivation, analysing the last rule applied of those in~\Cref{fig:type_checking_communication}~\Cref{eq:type_checking_recv_single}:
    \begin{caseanalysis}
      %
      \item\label{case:sr_action_step_recv_value}
      If rule \Rule{VRecv},
      then:
      \begin{equation}\label{eq:sr_action_step_recv_value}
        \infer[\Rule{VRecv}]{
        \Gam;\Timers_1\Entails
        {\On{\p}\Recv*^{\e}{\l}[\v]:{\P}_i}
        \TypedBy
        \Session_1',~{\p:\Cfg,{\Option-{\l}[\T]{\delta}[\lambda].{\S}_k}}%
        }{
        \begin{array}[c]{@{}c@{}}%
          \T_k~\text{\bt}%
          \quad%
          \Session_1'~\text{not $\e$-reading}
          \quad
          {\forall\t:\Val+\t\models\Constraints_k\Bimplies\t\in\e}
          \\[\ArrayLTSPremiseTallLineSpacing]
          \forall\t\in\e:
          \Gam,{\v_i:\T_k};\Timers_1+\t\Entails
          {\P_i}
          \TypedBy
          \Session_1'+\t,%
          ~{\p:\Cfg{\Val+\t\Reset_k},{\S_k}}%
        \end{array}%
        }%
      \end{equation}

      \noindent where $\l_k=\l_i$ (implicitly).
      It remains to show that $\exists \Session'$ such that $\Session\Reduce*\Session'$, and \Session'\ is \wf, \bal, and:
      \(
      \Gam;(\Timers_1,\Timers_2)
      \Entails
      {\P_j\Subst{\v}{\v_j}\mid{\qp:\h}}
      \TypedBy
      \Session'
      \).
      By~\Cref{claim:tc_inversion_par} of~\Cref{lem:tc_inversion}, $\Timers'=(\Timers_1',\Timers_2')$ and $\Session'=(\Session_1'{2},\Session_2'{2})$ and:
      \[
        \Gam;\Timers_1'
        \Entails
        {\P_j\Subst{\v}{\v_j}}
        \TypedBy\Session_1'{2}
        \qquad
        \Gam;\Timers_2
        \Entails
        {\qp:\h}
        \TypedBy\Session_2'{2}
      \]

      \noindent Clearly, following~\Cref{eq:sr_action_step_recv_value},
      \(
      \Session_1'{2} = \Session_1',{\p:\Cfg{\Val\Reset_k},{\S_k}}
      \).
      By~\Cref{claim:tc_inversion_empty_queue,claim:tc_inversion_neq} of~\Cref{lem:tc_inversion},
      \(
      \Session_2'{2} = \Session_2',{\qp:\Que}
      \)
      and, if $\h=\emptyset$ then $\Que=\emptyset$.
      (Simiarly, if $\h\neq\emptyset$ then $\Que\neq\emptyset$.)
      Therefore, it must be that
      \(
      \Session'=
      \Session_1',
      ~{\p:\Cfg{\Val\Reset_k},{\S_k}},
      ~\Session_2',
      ~{\qp:\Que}
      \)
      and, it follows $\Session\Reduce*\Session'$ is possible via rule \Rule[\Session]{Recv} in~\Cref{fig:session_reduction}:
      \[
        \infer[\Rule[\Session]{Recv}]{
        \Session'{2},~{\p:\Cfg,{\Option-{\l}[\T]{\delta}[\lambda].{\S}_k}},
        ~{\qp:{\lT<>;\Que}}
        \Reduce
        \Session'{2},~{\p:\Cfg'},
        ~{\qp:\Que}
        }{
        \Cfg,{\Choice*{\l}[\T]_k};{\lT<>;\Que}
        \Trans:{\tau}
        \Cfg{\Val'},{\S'};{\Que}
        }
      \]

      \noindent where $\Session'{2}=(\Session_1',\Session_2')$.
      By~\Cref{case:tp_cfg_trans_preserve_wf_par} in~\Cref{lem:tp_cfg_trans_preserve_wf}, $\lT<>=\lT<>_k$ and, the \wfness\ rules in~\Cref{fig:type_wf_rules} are enough to guarantee that \wfness\ is preserved across transitions via rules
      in~\Cref{eq:type_semantics_tuple,eq:type_semantics_triple,eq:type_semantics_system}.
      Therefore, \Session'\ is \wf.
      By~\Cref{lem:sr_wt_bal_session_reduce_preserve_bal}, \Session'\ is also \bal.
      By induction on the latter derivation in~\Cref{eq:sr_action_step_recv}, analysing the last rule applied.
      Since \T_k\ is \bt, by rule \Rule{VQue}:
      \begin{equation}\label{eq:sr_action_step_recv_que}
        \infer[\Rule{VQue}]{
        \Gam;\Timers_2\Entails
        {\qp:\lv_i\cdot\h}
        \TypedBy
        \Session_2',~{\qp:\lT<>_k;\Que}%
        }{
        \T_k~\text{\bt}%
        \quad%
        \Gam\Entails{\v_i:\T_k}%
        \quad%
        \Gam;\Timers_2\Entails
        {\qp:{\h}}
        \TypedBy\Session_2',%
        ~{\qp:{\Que}}%
        }%
      \end{equation}

      \noindent It remains to show that
      \(
      \Gam;(\Timers_1,\Timers_2)
      \Entails
      {\P_j\Subst{\v}{\v_j}\mid{\qp:\h}}
      \TypedBy
      \Session'
      \).
      Since $\l_k=\l_i$ and $\lv=\lv_i$, then by reformulation of~\Cref{eq:sr_action_step_recv}:
      \[
        \Gam;\Timers_1
        \Entails
        {\On{\p}\Recv*^{\e}{\l}[\v]:{\P}_i}
        \TypedBy
        \Session_1',~{\p:\Cfg,{\Option-{\l}[\T]{\delta}[\lambda].{\S}_k}}
      \]\[
        \Gam;\Timers_2
        \Entails
        {\qp:\lv_i\cdot\h}
        \TypedBy
        \Session_2',~{\qp:\lT<>_k;\Que}
      \]

      \noindent The induction of the reformulated derivations (above) follow the same as~\Cref{eq:sr_action_step_recv_value} and~\Cref{eq:sr_action_step_recv_que} respectively.
      Therefore, we conclude by rule \Rule{Par}:
      \[
        \infer[\Rule{Par}]{%
        \Gam;(\Timers_1,\Timers_2)
        \Entails
        {\P_j\Subst{\v}{\v_j}\mid{\qp:\h}}
        \TypedBy
        \Session_1',
        ~{\p:\Cfg,{\p:\Cfg{\Val\Reset_k},{\S_k}}},
        ~\Session_2',
        ~{\qp:\h}
        }{%
        \Gam;\Timers_1
        \Entails
        \P_j\Subst{\v}{\v_j}
        \TypedBy
        \Session_1',
        ~{\p:\Cfg,{\p:\Cfg{\Val\Reset_k},{\S_k}}},
        \qquad
        \Gam;\Timers_2
        \Entails
        {\qp:\h}
        \TypedBy
        \Session_2',
        ~{\qp:\h}
        }
      \]
      %
      \item\label{case:sr_action_step_recv_del}
      If rule \Rule{DRecv},
      then it follows similarly to~\Cref{case:sr_action_step_recv_value}.
    \end{caseanalysis}

    \noindent By~\Cref{case:sr_action_step_recv_value}, a \wt\ receiving process may reduce and remain \wt\ against a session environment resulting from a corresponding reduction via the rules in~\Cref{fig:session_reduction}.
    %
    \item\label{case:sr_action_step_recv_t}
    If rule \Rule{Recv}[T]*,
    then
    \(
    \P = \On{\p}\Recv^{\Diam\n}{\l}[\v]:{\P}_i~\After:{\Q}
    \mid{\qp:\lv\cdot\h}
    \)
    and it follows~\Cref{fig:prc_reduction}~\Cref{eq:prc_reduction_communication}:
    \[
      \infer[\Rule{Recv}[T]*]{
      \PCfg,{{\On{\p}\Recv^{\Diam\n}{\l}[\v]:{\P}_i~\After:{\Q}}\mid{\qp:\lv\cdot\h}}%
      \Reduce-%
      \PCfg,{{\P_j\Subst{\v}{\v_j}}\mid{\qp:\h}}%
      }{
      j\in I%
      \quad%
      \l=\l_j%
      }%
    \]

    \noindent where
    \(
    \P' = \P_j\Subst{\v}{\v_j}\mid{\qp:\h}
    \).
    The rest follows~\Cref{case:sr_action_step_recv}.
    %
    \item\label{case:sr_action_step_set}
    If rule \Rule{Set},
    then
    \(
    \P = \Set{x}.\P'
    \)
    and following~\Cref{fig:prc_reduction}~\Cref{eq:prc_reduction_time}:
    \[
      \PCfg,{\Set{x}.\P'}%
      \Reduce-%
      \PCfg[\Timers[\circled{x}]],{\P'}%
      \hspace{0.75ex} \Rule{Set}%
    \]

    \noindent where $\Timers'=\Timers[\circled{x}]$.
    By~\Cref{claim:tc_inversion_timer} in~\Cref{lem:tc_inversion},
    it follows that $\Session=\Session'$.
    Therefore, the thesis coincides with the hypothesis.
    %
    \item\label{case:sr_action_step_det}
    If rule \Rule{Det},
    then
    \(
    \P = \Delay{\Duration}.\P'{2}
    \)
    and following~\Cref{fig:prc_reduction}~\Cref{eq:prc_reduction_time}:
    \[
      \infer[\Rule{Det}]{
      \PCfg,{\Delay{\Duration}.\P'{2}}%
      \Reduce-%
      \PCfg,{\Delay{\t}.\P'{2}}%
      }{
      \t\models\Duration\Subst{\t}{\t'}%
      }%
    \]

    \noindent where $\P' = \Delay{\t}.\P'{2}$.
    By~\Cref{claim:tc_inversion_det} in~\Cref{lem:tc_inversion}, it follows that $\Session=\Session'$.
    Therefore, the thesis coincides with the hypothesis.
    %
    \item\label{case:sr_action_step_if_t}
    If rule \Rule{If}[T]*,
    then
    \(
    \P = \If*{\Cond}~\Then{\P'{2}}~\Else{\Q}
    \)
    and following~\Cref{fig:prc_reduction}~\Cref{eq:prc_reduction_time}:
    \[
      \infer[\Rule{If}[T]*]{
      \PCfg,{\If*{\Cond}~\Then{\P'{2}}~\Else{\Q}}%
      \Reduce-%
      \PCfg,{\P'{2}}%
      }{
      \Timers\models\Cond%
      }%
    \]

    \noindent where $\P'=\P'{2}$.
    By~\Cref{claim:tc_inversion_if} in~\Cref{lem:tc_inversion},
    it follows that $\Session=\Session'$.
    Therefore, the thesis coincides with the hypothesis.
    %
    \item\label{case:sr_action_step_if_f}
    If rule \Rule{If}[F]*,
    then it follows similarly to~\Cref{case:sr_action_step_if_t},
    except $\Timers\not\models\Cond$ and $\P'=\Q$.
    %
    \item\label{case:sr_action_step_par}
    If rule \Rule{Par}[L]*,
    then
    \(
    \P = \P'{2}\mid\Q
    \)
    and following~\Cref{fig:prc_reduction}~\Cref{eq:prc_reduction_standard}:
    \[
      \infer[\Rule{Par}[L]*]{
      \PCfg[\Timers_1,\Timers_2],{\P'{2}\mid\Q}%
      \Reduce-%
      \PCfg[\Timers_1',\Timers_2],{\P'{3}\mid\Q}%
      }{
      \PCfg[\Timers_1],{\P'{2}}%
      \Reduce-%
      \PCfg[\Timers_1'],{\P'{3}}%
      }%
    \]

    \noindent where $\Timers'=(\Timers_1',\Timers_2)$ and $ \P' = \P'{3}\mid\Q $.
    By~\Cref{claim:tc_inversion_par} of~\Cref{lem:tc_inversion},
    $\Timers=(\Timers_1,\Timers_2)$ and
    \(
    \Session = (\Session_1,\Session_2)
    \)
    and:
    \[
      \Gam;\Timers_1\Entails\P'{2}\TypedBy\Session_1
      \qquad
      \Gam;\Timers_2\Entails\Q\TypedBy\Session_2
    \]

    \noindent We must show $\exists \Session'$ such that $\Session\Reduce*\Session'$ and
    \(
    \Gam;\Timers'\Entails\P'{3}\mid\Q\TypedBy\Session'
    \).
    Again, by~\Cref{claim:tc_inversion_par} of~\Cref{lem:tc_inversion},
    $\Timers'=(\Timers_1',\Timers_2)$ and
    \(
    \Session' = (\Session_1',\Session_2')
    \)
    and:
    \[
      \Gam;\Timers_1'\Entails\P'{3}\TypedBy\Session_1'
      \qquad
      \Gam;\Timers_2'\Entails\Q\TypedBy\Session_2'
    \]

    \noindent Clearly, $\Timers_2'=\Timers_2$ and $\Session_2'=\Session_2$.
    By rule \Rule[\Session]{L} in~\Cref{fig:session_reduction}:
    \[
      \infer[\Rule[\Session]{L}]{
        \Session_1,~\Session_2
        \Reduce
        \Session_1',~\Session_2
      }{
        \Session_1
        \Reduce
        \Session_1'
      }
    \]

    \noindent The thesis follows by the induction hypothesis:
    since
    \(
    \Gam;\Timers\Entails\P'{2}\TypedBy\Session_1
    \),
    if
    \(
    \PCfg[\Timers_1],{\P'{2}}%
    \Reduce-%
    \PCfg[\Timers_1'],{\P'{3}}%
    \)
    then
    $\exists \Session_1'$ such that
    \(
    \Session_1\Reduce*\Session_1'
    \)
    and
    \(
    \Gam;\Timers_1'\Entails\P'{3}\TypedBy\Session_1'
    \)
    and \Session_1'\ is \bal\ and \wf.
    %
    \item\label{case:sr_action_step_par_r}
    If rule \Rule{Par}[R]*,
    then it follows similarly to~\Cref{case:sr_action_step_par}.
    %
    \item\label{case:sr_action_step_scope}
    If rule \Rule{Scope},
    then
    \(
    \P = \Scope{\pq}{\P'{2}}
    \)
    and following~\Cref{fig:prc_reduction}~\Cref{eq:prc_reduction_standard}:
    \[
      \infer[\Rule{Scope}]{
      \PCfg,{\Scope{\pq}{\P'{2}}}%
      \Reduce-%
      \PCfg',{\Scope{\pq}{\P'{3}}}%
      }{
      \PCfg,{\P'{2}}%
      \Reduce-%
      \PCfg',{\P'{3}}%
      }%
    \]

    \noindent where $ \P' = \Scope{\pq}{\P'{3}} $.
    By~\Cref{claim:tc_inversion_scope} in~\Cref{lem:tc_inversion}, it follows that $\Session=\Session'$, as the scope restriction ensures that the inner session (corresponding to \pq) is independent of the rest of the session.
    Therefore, the thesis coincides with the hypothesis.
    %
    \item\label{case:sr_action_step_def}
    If rule \Rule{Def},
    then
    \(
    \P = \Def{\Rec*{\vec{v};\vec{r}}=\P'{2}}:{\Q}
    \)
    and following~\Cref{fig:prc_reduction}~\Cref{eq:prc_reduction_recursion}:
    \[
      \infer[\Rule{Def}]{
      \PCfg,{\Def{\Rec*{\vec{v};\vec{r}}=\P'{2}}:{\Q}}%
      \Reduce-%
      \PCfg',{\Def{\Rec*{\vec{v};\vec{r}}=\P'{2}}:{\Q'}}%
      }{
      \PCfg,{\Q}%
      \Reduce-%
      \PCfg',{\Q}'%
      }%
    \]

    \noindent where $ \P' = \Def{\Rec*{\vec{v};\vec{r}}=\P'{2}}:{\Q'} $.
    By~\Cref{claim:tc_inversion_def} in~\Cref{lem:tc_inversion}, it follows that:
    \[
      \Gam,{\Rec|};\Timers\Entails{\Q}\TypedBy\Session%
    \]

    \noindent The thesis follows by the induction hypothesis:
    since
    \(
    \Gam,{\Rec|};\Timers\Entails{\Q}\TypedBy\Session%
    \),
    if
    \(
    \PCfg,{\Q}%
    \Reduce-%
    \PCfg',{\Q}'%
    \)
    then $\exists \Session'$ such that $\Session\Reduce*\Session'$ and
    \(
    \Gam,{\Rec|};\Timers'\Entails{\Q'}\TypedBy\Session'%
    \)
    and \Session'\ is \bal\ and \wf.
    %
    \item\label{case:sr_action_step_call}
    If rule \Rule{Call},
    then
    \(
    \P = \Def{\Rec*{\vec{v'};\vec{r'}}=\P'{2}}:{\Rec\mid\Q}
    \)
    and following~\Cref{fig:prc_reduction}~\Cref{eq:prc_reduction_recursion}:
    {{\small\[
      \PCfg,{\Def{\Rec*{\vec{v'};\vec{r'}}=\P'{2}}:{\Rec\mid\Q}}%
      \Reduce-%
      \PCfg,{\Def{\Rec*{\vec{v'};\vec{r'}}=\P'{2}}:{{\P'{2}\Subst{{\vec{v};\vec{r}}}{{\vec{v'};\vec{r'}}}}\mid\Q}}%
      \quad \Rule{Call}
    \]}}

    \noindent where
    \(
    \P' = \Def{\Rec*{\vec{v'};\vec{r'}}=\P'{2}}:{{\P'{2}\Subst{{\vec{v};\vec{r}}}{{\vec{v'};\vec{r'}}}}\mid\Q}
    \)
    and $\Timers'=\Timers$.
    By~\Cref{claim:tc_inversion_def_par,claim:tc_inversion_def} of~\Cref{lem:tc_inversion}:
    $\Timers = (\Timers_1,\Timers_2)$, $\Session=(\Session_1,\Session_2)$ and:
    \begin{equation}\label{eq:sr_action_step_call_1}
      \forall (\vec{\Val},~\vec{\S})
      \in\FormatRecCheckStyle{\Delta},
      {\theta'}\in\FormatRecCheckStyle{\theta}:\quad %
      \Gam,
      {\vec{v}:\vec{T}},%
      {\Rec|}%
      ;\Timers'{2}
      \Entails{\P'{2}}
      \TypedBy
      \vec{r'}: (\vec{\Val},~\vec{\S})%
    \end{equation}
    \begin{equation}\label{eq:sr_action_step_call_2}
      \Gam,
      {\vec{v}:\vec{T}},{\Rec|};\Timers_1\Entails
      \Rec
      \TypedBy\Session_1%
      \qquad
      \Gam,
      {\vec{v}:\vec{T}},{\Rec|};\Timers_2\Entails
      \Q
      \TypedBy\Session_2%
    \end{equation}

    \noindent By~\Cref{claim:tc_inversion_rec} of~\Cref{lem:tc_inversion} on \Cref{eq:sr_action_step_call_2} we have that
    $\Session_1 = \vec {r}:(\vec{\Val},\vec{\S})$ and $(\vec{\Val},\vec{\S})
      \in \FormatRecCheckStyle{\Delta}$.
    We must show $\exists \Session'$ such that $\Session\Reduce*\Session'$ and:
    \[
      \Gam;\Timers'\Entails
      \Def{\Rec*{\vec{v'};\vec{r'}}=\P'{2}}:{{\P'{2}\Subst{{\vec{v};\vec{r}}}{{\vec{v'};\vec{r'}}}}\mid\Q}
      \TypedBy\Session'
    \]

    \noindent By~\Cref{lem:sr_wt_substitution} on~\Cref{eq:sr_action_step_call_1}:
    By~\Cref{lem:sr_wt_substitution} on~\Cref{eq:sr_action_step_call_1}:
    \[
      \forall (\vec{\Val},~\vec{\S})
      \in\FormatRecCheckStyle{\Delta},
      {\theta'}\in\FormatRecCheckStyle{\theta}:\quad %
      \Gam,
      {\vec{v'}:\vec{T}},%
      {\Rec|}%
      ;\Timers'{2}
      \Entails{\P'{2}\Subst{{\vec{v};\vec{r}}}{{\vec{v'};\vec{r'}}}}
      \TypedBy
      \vec{r}: (\vec{\Val},~\vec{\S})%
    \]

    \noindent By rule \Rule{Par}:
    \[
      \infer[\Rule{Par}]{
      \Gam,{\vec{v}:\vec{T}},{\Rec|};\Timers_1,\Timers_2\Entails{{\P'{2}\Subst{{\vec{v};\vec{r}}}{{\vec{v'};\vec{r'}}}\mid\Q}}\TypedBy\Session_1,\Session_2%
      }{
      \begin{array}{c}
        \Gam,{\vec{v}:\vec{T}},{\Rec|};\Timers_1\Entails
        \P'{2}\Subst{{\vec{v};\vec{r}}}{{\vec{v'};\vec{r'}}}
        \TypedBy\Session_1
        \\
        \Gam,{\vec{v}:\vec{T}},{\Rec|};\Timers_2\Entails
        \Q
        \TypedBy\Session_2%
      \end{array}
      }%
    \]

    \noindent By rule \Rule{Rec}:
    \[
      \infer[\Rule{Rec}]{
      \Gam;\Timers\Entails{\Def{\Rec*{\vec{v'};\vec{r'}}=\P'{2}}:{\P'{2}\Subst{{\vec{v};\vec{r}}}{{\vec{v'};\vec{r'}}}\mid\Q}}\TypedBy\Session%
      }{
      \begin{array}[t]{@{}c@{}}%
        \forall (\vec{\Val},~\vec{\S})
        \in\FormatRecCheckStyle{\Delta},
        {\theta'}\in\FormatRecCheckStyle{\theta}: %
        \Gam,
        {\vec{v}:\vec{T}},%
        {\Rec|}%
        ;\Timers'{2}
        \Entails{\P'{2}}
        \TypedBy
        \vec{r}: (\vec{\Val},~\vec{\S})%
        \\[\ArrayLineSpacing]
        \Gam,{\vec{v}:\vec{T}},{\Rec|};\Timers\Entails{{\P'{2}\Subst{{\vec{v};\vec{r}}}{{\vec{v'};\vec{r'}}}\mid\Q}}\TypedBy\Session%
      \end{array}%
      }%
    \]

    \noindent The thesis then holds with $\Session = \Session'$.
  \end{caseanalysis}

  \noindent In the cases above we have shown, given a process \P\ that is \wt\ against a session environment \Session\ that is \bal\ and \wf, any \emph{instantaneous} reduction made by the process yields a process \P'\ that is \wt\ against some session environment \Session'\ reachable from \Session\ using the rules in~\Cref{fig:session_reduction}.
\end{proof}

\subsubsection{Subject Reduction}

\thmSubjectReduction*
\proofSubjectReduction*

\clearpage

\section{Example Derivation: Ping-Pong\texorpdfstring{\ (\Cref{exa:type_check_ping_pong})}{}}\label{sec:appendix:derivation:ping_pong}
%

Recall~\Cref{eq:ex_tc_type,eq:ex_tc_prc} from~\Cref{exa:type_check_ping_pong}:
\begin{equation*}\label{eq:exa:pingpong:type:0}

                                \quad \Rule{End}
                              \end{array}
                            }
                          \end{array}
                        }
                        \\\\
                        {\Val_{6.2}[x\mapsto3.1]} \models (x>3)
                      \end{array}
                    }
                  \end{array}
                }
              \end{array}
            }
            \\\\
            {\Val_{3.1}} \models (x>3)
          \end{array}
        }
      \end{equation}

    }}

\UnloadTypeChecking

\end{document}